\documentclass[reprint,
superscriptaddress,
amsmath,amssymb,
notitlepage,
aps,
prb,
10pt, 
tightenlines
]{revtex4-2}


\usepackage{graphicx} 
\usepackage{dsfont}
\usepackage{amsthm}
\usepackage{listings}

\usepackage[dvipsnames]{xcolor}
\usepackage{tikz,tikz-cd,calc} 

\usetikzlibrary{decorations.pathreplacing,decorations.markings,calc,3d,positioning,quotes}
\usetikzlibrary{shapes.geometric}
\usetikzlibrary{decorations.pathmorphing}
\usetikzlibrary{snakes}

\newcommand\TikzScaling{1.4}
\newcommand\Shift{2.1}

\tikzset{
  tensor/.style={
    inner sep = 0.06cm,
    shape = circle,
    draw,
    fill,
    thick
  },
}

\tikzset{
  tensorempty/.style={
    inner sep = 0.06cm,
    shape = circle,
    draw
  },
}

\tikzset{
  mpo/.style={
    inner sep = 0.1cm,
    shape = rectangle,
    draw,
    fill = violet,
    thick
  },
}

\tikzset{
  circle/.style={
    inner sep = 0.06cm,
    shape = circle,
    draw,
    fill
  },
}

\tikzset{
  square/.style={
    inner sep = 0.1cm,
    shape = rectangle,
    draw,
    fill,
    thick
  },
}

\tikzset{
  rectangle/.style={
    inner sep = 0.18cm,
    shape = rectangle,
    draw,
    fill
  },
}

\tikzset{
  widetensor/.style={
    minimum width= 0.34cm,
    minimum height = 0.22cm,
    inner sep = 0,
    shape = rectangle,
    draw,
    fill
  },
}  
  \tikzset{
  simpstensor/.style={
    minimum width= 0.6cm,
    minimum height = 0.25cm,
    inner sep = 0,
    shape = rectangle,
    draw,
    fill = yellow,
    thick
  },
}
    \tikzset{
  tritensor/.style={
    inner sep = 0.06cm,
    shape = triangle,
    draw,
    fill,
    thick
  },
}
\tikzset{decoration={snake,amplitude=.4mm,segment length=2mm, post length=0mm,pre length=0mm}}


\tikzset{
every picture/.style = {
    baseline={([yshift=-.5ex]current bounding box.center)},
  }
}

\tikzset{
  fusion/.style={
    line cap=round,
    line width=1.5mm,
  },
    fusion_dagger/.style={
    line cap=rect,
    line width=1.5mm,
  },
  every picture/.style = {
    baseline={([yshift=-.5ex]current bounding box.center)},
    font=\scriptsize
  },
  irrep/.style={
    anchor=south,
    font = \tiny,
    inner sep=2pt
  },
   tiny/.style={
    font = \tiny,
    inner sep=2pt
  },
   gelL/.style={
    anchor = south east,
    font = \tiny,
    inner sep=2pt
  },
   gelR/.style={
    anchor = south west,
    font = \tiny,
    inner sep=2pt
  }
}

\tikzset{
        pics/fusL/.style args={#1/#2/#3/#4}{
        code = {
	    \draw[fusion] (0,0)--(0,#1);
	    \node[gelL] at ($0.5*(-0.1,#1)$) {$#2$};
	    \node[gelR] at (0.05,#1) {$#3$};
	    \node[gelR] at (0.05,0) {$#4$};
    }}
  }

\tikzset{
        pics/fusR/.style args={#1/#2/#3/#4}{
        code = {
	    \draw[fusion] (0,0)--(0,#1);
	    \node[gelR] at ($0.5*(0.1,#1)$) {$#4$};
	    \node[gelL] at (-0.05,#1) {$#2$};
	    \node[gelL] at (-0.05,0) {$#3$};
    }}
  }
\tikzset{
        pics/actL/.style args={#1/#2/#3/#4}{
        code = {
	    \draw[fusion, gray] (0,0)--(0,#1);
	    \node[gelL] at ($0.5*(-0.1,#1)$) {$#2$};
	    \node[gelR] at (0.05,#1) {$#3$};
	    \node[gelR] at (0.05,0) {$#4$};
    }}
  }

\tikzset{
        pics/actR/.style args={#1/#2/#3/#4}{
        code = {
	    \draw[fusion, gray] (0,0)--(0,#1);
	    \node[gelR] at ($0.5*(0.1,#1)$) {$#4$};
	    \node[gelL] at (-0.05,#1) {$#2$};
	    \node[gelL] at (-0.05,0) {$#3$};
    }}
  }
\tikzset{
        pics/fusLdag/.style args={#1/#2/#3/#4}{
        code = {
	    \draw[fusion_dagger, brightube] (0,0)--(0,#1);
	    \node[gelL] at ($0.5*(-0.1,#1)$) {$#2$};
	    \node[gelR] at (0.05,#1) {$#3$};
	    \node[gelR] at (0.05,0) {$#4$};
    }}
  }
\tikzset{
        pics/fusRdag/.style args={#1/#2/#3/#4}{
        code = {
	    \draw[fusion_dagger, brightube] (0,0)--(0,#1);
	    \node[gelR] at ($0.5*(0.1,#1)$) {$#4$};
	    \node[gelL] at (-0.05,#1) {$#2$};
	    \node[gelL] at (-0.05,0) {$#3$};
    }}
  }

  \tikzset{
        ddash/.pic={
        \draw[thick] (-0.075,-0.075) -- (0.075,0);
        \draw[thick] (-0.075,0) -- (0.075,0.075);
    }
  }

\tikzset{
      pics/V/.style args={#1/#2/#3}{
    code = {
	    \coordinate (-mid) at (-0.4,0);
	    \coordinate (-up) at (0.5,0.4);
	    \coordinate (-down) at (0.5,-0.4);
	    \coordinate (-top) at (0,0.45);
	    \coordinate (-bottom) at (0,-0.45);
	    \draw[red,thick] (0,0.4)--(-up);
	    \draw[red,thick] (0,-0.4)--(-down);
	    \draw[fusion] (0,-0.55)--(0,0.55);
	    \node[irrep] at (-0.25,0) {$#1$};
	    \node[irrep] at (0.25,0.3) {$#2$};
	    \node[irrep] at (0.25,-0.3) {$#3$};
    }
  },
  pics/W/.style args={#1/#2/#3}{
    code = {
	    \coordinate (-mid) at (0.4,0);
	    \coordinate (-up) at (-0.5,0.4);
	    \coordinate (-down) at (-0.5,-0.4);
	    \coordinate (-top) at (0,0.45);
	    \coordinate (-bottom) at (0,-0.45);
	    \draw[red,thick] (0,0.4)--(-up);
	    \draw[red,thick] (0,-0.4)--(-down);
	    \draw[fusion] (0,-0.55)--(0,0.55);
	    \node[irrep] at (0.25,0) {$#1$};
	    \node[irrep] at (-0.25,0.3) {$#2$};
	    \node[irrep] at (-0.25,-0.3) {$#3$};
    }
  }
 }

\tikzset{
        pics/V1/.style args={#1/#2/#3}{
        code = {
	    \coordinate (-mid) at (-0.4,0);
	    \coordinate (-up) at (0.5,0.4);
	    \coordinate (-down) at (0.5,-0.6);
	    \coordinate (-top) at (0,0.45);
	    \coordinate (-bottom) at (0,-0.45);
	    \draw[red,thick] (0,0.4)--(-up);
	    \draw[thick] (0,-0.6)--(-down);
        \draw[thick] (-0.5,-0.6)--(-down);
	    \draw[fusion,gray] (0,-0.55)--(0,0.55);
	    \node[irrep] at (-0.25,0) {$#1$};
	    \node[irrep] at (0.25,0.45) {$#2$};
	    \node[irrep] at (0.25,-0.4) {$#3$};
    }
  },
        pics/W1/.style args={#1/#2/#3}{
        code = {
        \coordinate (-top) at (0,0.45);
	    \coordinate (-up) at (-0.5,0.4);   
        \coordinate (-mid) at (0.4,0);
	    \coordinate (-down) at (-0.5,-0.6);
	    \coordinate (-bottom) at (0,-0.45);
	    \draw[red,thick] (0,0.4)--(-up);
	    \draw[thick] (0,-0.6)--(-down);
        \draw[thick] (-down) -- (0.5,-0.6);
	    \draw[fusion,gray] (0,-0.55)--(0,0.55);
	    \node[irrep] at (0.25,0) {$#1$};
	    \node[irrep] at (-0.15,0.45) {$#2$};
	    \node[irrep] at (-0.25,-0.4) {$#3$};
    }
  },
        pics/AL/.style args={#1}{
        code = {
	    \coordinate (-up) at (0.5,0.3);
	    \coordinate (-down) at (0.5,-0.3);
	    \draw[red,thick] (0,0.3)--(-up);
	    \draw[fusion,gray] (0,-0.35)--(0,0.35);
	    \node[irrep] at (0.25,0.3) {$#1$};
    }
  },
        pics/AR/.style args={#1}{
        code = {
	    \coordinate (-up) at (-0.5,0.3);
	    \coordinate (-down) at (-0.5,-0.3);
	    \draw[red,thick] (0,0.3)--(-up);
	    \draw[fusion,gray] (0,-0.35)--(0,0.35);
	    \node[irrep] at (-0.25,0.3) {$#1$};
    }
  },
        pics/ALrot/.style args={#1}{
        code = {
	    \coordinate (-up) at (0.5,0.3);
	    \coordinate (-down) at (0,-0.3);
	    \draw[red,thick] (-0.5,-0.3)--(-down);
	    \draw[fusion,gray] (0,-0.35)--(0,0.35);
	    \node[irrep] at (-0.2,-0.75) {$#1$};
    }
 },
        pics/FL/.style args={#1/#2/#3}{
        code = {
	    \coordinate (-mid) at (-0.4,0);
	    \coordinate (-up) at (0.5,0.3);
	    \coordinate (-down) at (0.5,-0.3);
	    \coordinate (-top) at (0,0.45);
	    \coordinate (-bottom) at (0,-0.45);
	  	    \draw[red, thick] (0,0.3)--(-up);
	    \draw[red,thick] (0,-0.3)--(-down);
	    \draw[red, thick] (0,0)--(-mid);
	      \draw[fusion] (0,-0.35)--(0,0.35);
	    \node[irrep] at (-0.25,0) {$#1$};
	    \node[irrep] at (0.25,0.3) {$#2$};
	    \node[irrep] at (0.25,-0.3) {$#3$};
    }
  },
        pics/FR/.style args={#1/#2/#3}{
        code = {
	    \coordinate (-mid) at (0.4,0);
	    \coordinate (-up) at (-0.5,0.3);
	    \coordinate (-down) at (-0.5,-0.3);
	    \coordinate (-top) at (0,0.45);
	    \coordinate (-bottom) at (0,-0.45);
	   \draw[red,thick] (0,0.3)--(-up);
	    \draw[red,thick] (0,-0.3)--(-down);
	    \draw[red,thick] (0,0)--(-mid);
	    \draw[fusion] (0,-0.35)--(0,0.35);
	    \node[irrep] at (0.25,0) {$#1$};
	    \node[irrep] at (-0.25,0.3) {$#2$};
	    \node[irrep] at (-0.25,-0.3) {$#3$};
    }
  },
   pics/FLD/.style args={#1/#2/#3}{
        code = {
	    \coordinate (-mid) at (-0.4,0);
	    \coordinate (-up) at (0.5,0.3);
	    \coordinate (-down) at (0.5,-0.3);
	    \coordinate (-top) at (0,0.45);
	    \coordinate (-bottom) at (0,-0.45);
	  	    \draw[red, thick] (0,0.3)--(-up);
	    \draw[red,thick] (0,-0.3)--(-down);
	    \draw[red, thick] (0,0)--(-mid);
	      \draw[fusion_dagger, brightube] (0,-0.35)--(0,0.35);
	    \node[irrep] at (-0.25,0) {$#1$};
	    \node[irrep] at (0.25,0.3) {$#2$};
	    \node[irrep] at (0.25,-0.3) {$#3$};
    }
  },
        pics/FRD/.style args={#1/#2/#3}{
        code = {
	    \coordinate (-mid) at (0.4,0);
	    \coordinate (-up) at (-0.5,0.3);
	    \coordinate (-down) at (-0.5,-0.3);
	    \coordinate (-top) at (0,0.45);
	    \coordinate (-bottom) at (0,-0.45);
	   \draw[red,thick] (0,0.3)--(-up);
	    \draw[red,thick] (0,-0.3)--(-down);
	    \draw[red,thick] (0,0)--(-mid);
	    \draw[fusion_dagger, brightube] (0,-0.35)--(0,0.35);
	    \node[irrep] at (0.25,0) {$#1$};
	    \node[irrep] at (-0.25,0.3) {$#2$};
	    \node[irrep] at (-0.25,-0.3) {$#3$};
    }
  },
      pics/ALl/.style args={#1/#2/#3}{
    code = {
	    \coordinate (-mid) at (-0.4,0);
	    \coordinate (-up) at (0.5,0.3);
	    \coordinate (-down) at (0.5,-0.3);
	    \draw[red,thick] (0,0.3)--(-up);
	    \draw[thick] (0,-0.3)--(-down);
	    \draw[fusion,gray] (0,-0.35)--(0,0.35);
	    \node[irrep] at (-0.25,-0.3) {$#1$};
	    \node[irrep] at (0.25,0.3) {$#2$};
	    \node[irrep] at (0.25,-0.3) {$#3$};
    }
  },
  pics/ARl/.style args={#1/#2/#3}{
    code = {
	    \coordinate (-mid) at (0.35,0);
	    \coordinate (-up) at (-0.5,0.3);
	    \coordinate (-down) at (-0.5,-0.3);
	    \draw[red,thick] (0,0.3)--(-up);
	    \draw[thick] (0,-0.3)--(-down);
	    \draw[fusion,gray] (0,-0.35)--(0,0.35);
	    \node[irrep] at (0.25,-0.3) {$#1$};
	    \node[irrep] at (-0.25,0.3) {$#2$};
	    \node[irrep] at (-0.25,-0.3) {$#3$};
    }
  }
 }

\tikzset{
  pics/W2/.style args={#1/#2/#3}{
    code = {
	    \coordinate (-mid) at (0.4,0);
	    \coordinate (-up) at (-0.5,0.4);
	    \coordinate (-down) at (-0.5,-1.3);
	    \coordinate (-top) at (0,0.45);
	    \coordinate (-bottom) at (0,-0.45);
	   \draw[red,thick] (0,0.4)--(-up);
	    \draw[thick] (-0.5,-1.3)--(-down);
       \draw[thick] (-down) -- (0.5,-1.3);
	    	    \draw[fusion,gray] (0,-1.25)--(0,0.55);
	    \node[irrep] at (0.25,0) {$#1$};
	    \node[irrep] at (-0.25,0.3) {$#2$};
	    \node[irrep] at (-0.25,-0.3) {$#3$};
    }
  }
 }

\tikzset{
 pics/ALe/.style args={#1}{
        code = {
	    \coordinate (-up) at (0.5,0.3);
	    \coordinate (-down) at (0.5,-0.3);
	    \draw[fusion,gray] (0,-0.35)--(0,0.35);
	    \node[irrep] at (0.25,0.3) {$#1$};
    }
  }
 }

 \tikzset{
 pics/ALs/.style args={#1}{
        code = {
	    \coordinate (-up) at (0.2,0.3);
	    \coordinate (-down) at (0.5,-0.3);
	    \draw[red,thick] (0,0.3)--(-up);
	    \draw[fusion,gray] (0,-0.35)--(0,0.35);
	    \node[irrep] at (0.25,0.3) {$#1$};
    }
  }
 }


\newcommand{\tr}{\mathrm{tr}}

\newcommand{\ket}[1]{|#1\rangle}
\newcommand{\bra}[1]{\langle#1|}

\newcommand{\ketbra}[2]{|#1\rangle\hspace{-0.5mm}\langle#2|}
\newcommand{\mc}[1]{\mathcal{#1}}
\newcommand{\Z}{\mathbb{Z}}
\newcommand{\R}{\mathbb{R}}
\newcommand{\C}{\mathbb{C}}

\newcommand{\hilb}{\mathcal{H}}
\newcommand{\mcG}{\mathcal{G}}

\newcommand{\ammaG}{\text{\reflectbox{$\Gamma$}}}

\newcommand{\id}{\mathrm{id}}
        
\newtheorem{proposition}{Proposition}
\newtheorem{corollary}{Corollary}
\newtheorem{lemma}{Lemma}
\newtheorem{theorem}{Theorem}

\definecolor{wisteria}{rgb}{0.79, 0.63, 0.86}
\definecolor{xanadu}{rgb}{0.45, 0.53, 0.47}
\definecolor{classicrose}{rgb}{0.98, 0.8, 0.91}
\definecolor{celadon}{rgb}{0.67, 0.88, 0.69}
\definecolor{carrotorange}{rgb}{0.93, 0.57, 0.13}
\definecolor{citrine}{rgb}{0.89, 0.82, 0.04}
\definecolor{antiquewhite}{rgb}{0.98, 0.92, 0.84}
\definecolor{brightube}{rgb}{0.82, 0.62, 0.91}

\usepackage{scalerel}

\newcommand\inv[1]{#1^{\text{-}1}}

\newcommand{\I}{\mathds{1}}

\newtheorem{appxprop}{Proposition}[section]

\usepackage[hyperindex,breaklinks]{hyperref}
\hypersetup{
           breaklinks=true,
           colorlinks=true,
           pdfusetitle=true,                  
        }


\begin{document}

\title{Symmetry defects and gauging for quantum states \\with matrix product unitary symmetries}

\author{Adrián Franco Rubio}
\email{adrian.franco.rubio@univie.ac.at}
\affiliation{%
Max-Planck-Institut f{\"u}r Quantenoptik,Hans-Kopfermann-Str.~1, Garching, 85748, Germany
}%
\affiliation{%
Munich Center for Quantum Science and Technology, Schellingstraße~4, M{\"u}nchen, 80799, Germany
}%
\affiliation{University of Vienna, Faculty of Physics, Boltzmanngasse 5, 1090 Wien, Austria}

\author{Arkadiusz Bochniak}
\email{arkadiusz.bochniak@mpq.mpg.de}
\affiliation{%
Max-Planck-Institut f{\"u}r Quantenoptik,Hans-Kopfermann-Str.~1, Garching, 85748, Germany
}%
\affiliation{%
Munich Center for Quantum Science and Technology, Schellingstraße~4, M{\"u}nchen, 80799, Germany
}%

\author{J. Ignacio Cirac}
\email{ignacio.cirac@mpq.mpg.de}
\affiliation{%
Max-Planck-Institut f{\"u}r Quantenoptik,Hans-Kopfermann-Str.~1, Garching, 85748, Germany
}%
\affiliation{%
Munich Center for Quantum Science and Technology, Schellingstraße~4, M{\"u}nchen, 80799, Germany
}%

\date{\today}

\begin{abstract}
    In this work, we examine the consequences of the existence of a finite group of matrix product unitary (MPU) symmetries for matrix product states (MPS). We generalize the well-understood picture of onsite unitary symmetries, which give rise to virtual symmetry defects given by insertions of operators in the bonds of the MPS. In the MPU case, we can define analogous defect tensors, this time sitting on lattice sites, that can be created, moved, and fused by local unitary operators. We leverage this formalism to study the gauging of MPU symmetries. We introduce a condition, \textit{block independence}, under which we can gauge the symmetries by promoting the symmetry defects to gauge degrees of freedom, yielding an MPS of the same bond dimension that supports a local version of the symmetry given by commuting gauge constraints. Whenever block independence does not hold (which happens, in particular, whenever the symmetry representation is anomalous), a modification of our method which we call \textit{state-level gauging} still gives rise to a locally symmetric MPS by promotion of the symmetry defects, at the expense of producing gauge constraints that do not commute on different sites.
\end{abstract}

\maketitle

\section{Introduction}

The concept of symmetry has provided crucial guidance for many developments in theoretical physics, particularly for the study of quantum many-body systems. The symmetry properties of a quantum many-body Hamiltonian can have an impact on the structure of its ground states, as manifested by the Lieb-Schulz-Mattis-type theorems \cite{Lieb61, Oshikawa97, Tasaki_book, Fuji16, Po17, Ogata21, Tasaki22}. Symmetries and symmetry breaking are also at the center of Landau's paradigm of phase transitions \cite{Landau}, a formalism that had to be generalized to include more exotic symmetries, like non-invertible \cite{Bhardwaj24, Bhardwaj2024a, Bhardwaj25} or higher-form symmetries \cite{Iqbal22, Liu24}, to account for newly discovered phases such as topological orders \cite{Wen90}.

In recent decades, tensor networks have emerged as a very relevant toolbox for quantum many-body physics, not only as a means of reducing computational costs but also as a powerful language to represent the entanglement structure of many-body ground and excited states \cite{Review, TN_book}. Global symmetries of a tensor network state give rise to symmetry actions on the entanglement degrees of freedom (the contracted legs of the tensor network) represented locally on a single tensor. This fact led, for example, to the classification of one-dimensional (1d) SPT phases, which are, informally, labeled by such virtual symmetry representations \cite{Chen11a, Schuch11, string_order}. An insertion of such a virtual group element can also be interpreted as a symmetry defect or twist, which will be very useful in what follows.

A key theoretical mechanism associated with symmetries is that of gauging, or the promotion of global to {\it local} symmetries, where they stop being considered physical symmetries and are rather seen as redundancies in the description of the system \cite{Straumann00}. This procedure has found application in the standard model of particle physics \cite{YangMills54, Wu75, Englert64, Higgs64}, which is a gauge theory, as well as in many effective models of condensed matter systems \cite{Anderson63, Kleinert, Fr23}. Traditionally, one thinks of gauging a globally symmetric Hamiltonian \cite{Kogut75}; however, we can also directly gauge a globally symmetric state. This was pioneered for onsite symmetries by Haegeman et al. in \cite{PhysRevX.5.011024}, where a gauging map is defined by tensoring the system with additional (gauge) degrees of freedom and projecting onto a gauge-invariant subspace. It can be seen that this projector has a natural tensor network structure. In 1d, it is a matrix product operator (MPO), and thus when applied on a matrix product state (MPS) it produces another MPS, with a potentially higher bond dimension. 

Moving on to non-onsite symmetries, Garre and Kull \cite{GarreKull} proposed a way to gauge nonanomalous group MPO algebras, and certain (trivial) gauging for anomalous ones, which always gives rise to a product state. This procedure does not trivially extend to MPU groups whose tensors satisfy weaker constraints than the MPO algebras, as we will elucidate below.

As opposed to this \textit{projection} gauging picture, where the gauging is effected in a ``top-down'' manner by applying a family of projectors onto the gauge invariant subspace, there exists a ``bottom-up approach'' where the gauging is accomplished by the \textit{promotion} of the symmetry defects associated to the global symmetry to gauge degrees of freedom, in such a way that the resulting state has local symmetry, i.e.~is gauge invariant \cite{Barkeshli19}. The tensor network states arising from such a construction for an onsite symmetry where classified in \cite{KULL2017199} for 1d systems and in \cite{Blanik_2025} for general graphs. This defect promotion picture is equivalent to the projection picture as long as the resulting gauge-invariant subspaces are the same. Recently, in \cite{Seifnashri}, a systematic procedure was proposed to gauge non-onsite symmetries at the Hamiltonian level, based on promoting symmetry defects to gauge degrees of freedom. As we shall see in the following, this approach is remarkably natural at the level of tensor networks, where these defects can be easily represented. 

In this work, we first establish a symmetry defect formalism for translation-invariant 1d matrix product states (MPS) with periodic boundary conditions (PBC) that are invariant under a matrix product unitary (MPU) representation of a finite symmetry group. Then, inspired by \cite{Seifnashri}, we provide a gauging construction that promotes these defects to gauge degrees of freedom, resulting in an MPS of the same bond dimension. This construction works whenever the initial MPU-MPS system satisfies a condition called \textit{block independence}, which is group-cohomological and gives rise to a gauge-invariant subspace defined by commuting Gauss law projectors. Whenever block independence is not satisfied, a related construction, which we call \textit{state-level gauging}, still generates an MPS with the same bond dimension that is locally invariant, however, the associated Gauss law projectors need no longer commute; hence, this construction departs from the traditional notion of gauging.

\section{Setup and summary of results}
\subsection{MPSs and MPUs}
We will work with matrix product states (MPS) with periodic boundary conditions. A translation-invariant MPS is defined by a collection of matrices $A = \{A^1, \ldots, A^d\}\subset M_D(\C)$, where $d$ is the dimension of the local Hilbert space (also called the \textit{physical} dimension) and $D$ is the \textit{bond} dimension, the dimension of the auxiliary Hilbert space, and a parameter that controls how much entanglement context an MPS can display:
\begin{equation}
    \ket{\psi[A]} = \!\!\!\!\sum_{i_1,\ldots,i_N}\!\!\!\!{\tr{\left(A^{i_1}\ldots A^{i_N}\right)}\ket{i_1\ldots i_N}}.
    \label{eq:introMPS}
\end{equation}
The set of MPS with bond dimension $D=1$ is exactly that of the product states. Tensor networks have a powerful diagrammatic language that we will use constantly in this work. The state from Eq. \eqref{eq:introMPS} is represented as
\begin{equation}\ket{\psi[A]} =
\begin{tikzpicture}[baseline]
    \def\dx{0.5}
    \def\dy{0.5}
    \def\dd{0.2}
        \draw[thick] (-\dd, 0) -- (5*\dx+\dd,0);
        \foreach \x in {0, \dx, 2*\dx, 3*\dx, 4*\dx, 5*\dx}{
        \draw[thick] (\x, 0) -- (\x, \dy);
        }
        \foreach \x in {0, 2*\dx, 4*\dx}{
        \node[tensor] at (\x, 0) {};
        }
        \foreach \x in {\dx, 3*\dx, 5*\dx}{
        \node[tensor] at (\x, 0) {};
        }
        \node[] at (-\dd+6.25*\dx, 0) {$\ldots$}; 
        \node[] at (-2.5*\dd, 0) {$\ldots$}; 
        \node[] at (-\dd+6.75*\dx,0) {$,$};
    \end{tikzpicture}
\end{equation}
where the wavefunction arises from the contraction of three-legged tensors. We can allow the tensor to be distinct at different sites, giving rise to translationally noninvariant states. For a review of the basic tensor network language, we refer the reader to \cite{Review}.

A crucial property an MPS tensor can have is \textit{injectivity}, which is defined as the map
\begin{equation}
    X\longmapsto \sum_{i}{\tr\left(X A^i\right)\ket{i}}
\end{equation}
being injective. This amounts to the existence of an inverse tensor,
\begin{equation}
\begin{tikzpicture}[baseline=0.25cm]
    \def\dx{0.3}
    \def\dy{0.6}
        \draw[thick] (-\dx, 0) -- (\dx,0);
        \draw[thick] (-\dx, \dy) -- (\dx,\dy);
        \draw[thick] (0, 0) -- (0,\dy);
        \node[tensor] at (0, 0) {};
        \node[tensor, gray] at (0, \dy) {};
    \end{tikzpicture}
    =
\begin{tikzpicture}[baseline=0.25cm]
    \def\dx{0.3}
    \def\ddx{0.1}
    \def\dy{0.6}
        \draw[thick] (-\dx, 0) -- (-\ddx,0) --  (-\ddx,\dy) -- (-\dx, \dy);
        \draw[thick] (\dx, 0) -- (\ddx,0) --  (\ddx,\dy) -- (\dx, \dy);
    \end{tikzpicture},
\end{equation}
which is a useful property in many proofs. An MPS tensor is \textit{normal} if it becomes injective after blocking together a finite number of copies, effectively enlarging the unit cell,
\begin{equation}
\begin{tikzpicture}[baseline]
   
    \def\dy{0.5}
    \def\dd{0.05}
    \def\dx{0.3}
        \draw[thick] (-\dx, 0) -- (\dx,0);
        \draw[thick] (\dd, 0) -- (\dd, \dy);
        \draw[thick] (-\dd, 0) -- (-\dd, \dy);
        \node[widetensor] at (0, 0) {};
    \end{tikzpicture}
    \equiv
\begin{tikzpicture}[baseline]
    \def\dx{0.5}
    \def\dy{0.5}
    \def\dd{0.2}
        \draw[thick] (-\dd, 0) -- (\dx+\dd,0);
        \foreach \x in {0, \dx}{
        \draw[thick] (\x, 0) -- (\x, \dy);
        \node[tensor] at (\x, 0) {};
        }
    \end{tikzpicture}.
\end{equation}
It is known \cite{Cirac17} that any MPS tensor can, potentially after blocking, be written in {\it block-injective form}. In such a form, the MPS tensor has a block structure,
\begin{equation}
    A^i=\bigoplus\limits_{k=1}^r \mu_k A_k^i
    \label{eq:block_inj}
\end{equation}
with $\mu_k\in\mathbb{C}$ and $A_k\in M_{D_k}(\C)$ injective MPS tensors \footnote{An MPS tensor that has the form \eqref{eq:block_inj} with $A_k$ normal tensors is said to be in \textit{canonical form}. Any MPS tensor can, potentially after blocking, be put in canonical form. Further blocking, if necessary, will then yield a block-injective tensor. Beware of the terminology coincidence of the \textit{blocking} operation, putting tensors together, and the \textit{ block} that make up the matrices of a tensor in canonical or in \textit{block}-injective form.}. If only one block is present ($r=1$), the tensor is called injective. Otherwise, it is noninjective. 

Similarly to MPS, we can define operators with a tensor network structure. In one dimension, these are called matrix product operators (MPO):
\begin{equation}
    \mc O[M]\!=\!\!\!\!\! \sum_{i_1,\ldots,i_N}\!\!\!\!{\tr\left(M^{i_1, j_1}\ldots M^{i_N, j_N}\right)\ket{i_1\ldots i_N}\bra{j_1, \ldots, j_N}}
    \label{eq:introMPO}
\end{equation}
and are represented graphically by 
\begin{equation}
\begin{tikzpicture}[baseline]
    \def\dx{0.5}
    \def\dy{0.3}
    \def\dd{0.2}
        \draw[thick,red] (-\dd, 0) -- (5*\dx+\dd,0);
        \foreach \x in {0, \dx, 2*\dx, 3*\dx, 4*\dx, 5*\dx}{
        \draw[thick] (\x, -\dy) -- (\x, \dy);
        }
        \foreach \x in {0, 2*\dx, 4*\dx}{
        \node[mpo] at (\x, 0) {};
        }
        \foreach \x in {\dx, 3*\dx, 5*\dx}{
        \node[mpo] at (\x, 0) {};
        }
        \node[] at (-\dd+6.25*\dx, 0) {$\ldots$}; 
        \node[] at (-2.5*\dd, 0) {$\ldots$}; 
        \node[] at (-\dd+6.75*\dx,0) {$\cdot$};
    \end{tikzpicture}
\end{equation}
If the tensor $M$ is such that $\mc{O}[M]$ is unitary for any system size $N$, the operator is called a matrix product unitary (MPU). An MPU of bond dimension $D=1$ corresponds to the tensor product of onsite unitary operators, whereas generally, for $D>1$ the action of an MPU is non-onsite. However, it is known that the set of MPUs is exactly that of quantum cellular automata in one dimension, i.e. the set of unitary actions that preserve locality \cite{MPU}, and thus constitutes a physically motivated generalization of onsite unitary symmetries.

\subsection{Defects and gauging: motivating example and summary of results}
\label{sec:onsite}
Matrix product states provide a particularly useful formalism for working with symmetry defects and gauging for onsite symmetries. We review this in the simplest case, which we intend to generalize to non-onsite symmetries. Consider a symmetry group $G$ (for simplicity $|G|<\infty$) represented unitarily onsite,
\begin{equation}
    U_g = \bigotimes_{j}{u^j_g},\qquad g\in G,
\end{equation}
and assume we are given an invariant injective MPS. Because of the fundamental theorem of MPS (see, e.g., \cite{Review} and references therein), this implies the existence of a \textit{virtual} unitary representation of $G$, $S_g$, such that
\begin{equation}
\begin{tikzpicture}[scale=\TikzScaling]
	\node[tensor] at (0,0) {};
	\draw[thick] (0,0.0) --++ (0,0.6);
    \draw[thick] (-0.7, 0) --++ (1.4,0);
    \node[square, fill=violet] at(0,0.4) {};
    \node[color = violet] at(-0.3,0.4) {$u_g$};
    
    \node[thick] at (1.0,0) {$=$};
    
	\node[tensor] at (0+\Shift,0) {};
	\draw[thick] (0+\Shift,0) --++ (0,0.6);
    \draw[thick] (-0.75+\Shift, 0) --++ (1.5,0);
    \node[square, fill = blue] at (-0.5+\Shift, 0) {};
    \node[square, fill = blue] at (0.5+\Shift, 0) {};
    \node[color = blue] at (-0.5+\Shift, 0.25) {$S^\dagger_g$};
    \node[color = blue] at (0.5+\Shift, 0.25) {$S_g^{\phantom{\dagger}}$}; 

    \node[] at (3,0) {$.$};
\end{tikzpicture}
\end{equation}

Note that $S_g$ is potentially a projective representation, that is, there exists a 2-cocycle $\Omega: G\times G\to U(1)$ such that $S_g S_h = \Omega(g,h) S_{gh}$. We can identify the presence of a symmetry defect on a particular bond with the insertion of a matrix $S_g$ in the virtual indices (we use the scalar freedom in $S_g$ to ensure that $S_{\inv{g}} = S_g^{-1} = S_g^\dagger$). Therefore, the action of $u_g$ on a single site creates a pair of defects $(g^{-1}, g)$ on the adjacent bonds. More generally, the action of the symmetry restricted to an interval will generate defects at the boundaries of the said interval:
\begin{equation}
\begin{tikzpicture}[scale=\TikzScaling]
	\node[tensor] at (0,0) {};
    \node[tensor] at (0.3,0) {};
    \node[tensor] at (0.95,0) {};
    \node[tensor] at (1.25,0) {};
    
	\draw[thick] (0,0.0) --++ (0,0.6);
    \draw[thick] (0.3,0.0) --++ (0,0.6);
    \draw[thick] (0.95,0.0) --++ (0,0.6);
    \draw[thick] (1.25,0.0) --++ (0,0.6);
    
    \draw[thick] (-0.3, 0) --++ (0.8,0);
    \node[] at (0.65,0) {$\ldots$};
    
    \node[square, fill=violet] at(0,0.4) {};
    \node[color = violet] at(-0.3,0.4) {$u_g^i$};
    \node[square, fill=violet] at(0.3,0.4) {};
    \node[square, fill=violet] at(1.25,0.4) {};
    \node[color = violet] at(1.55,0.4) {$u_g^j$};
    \node[square, fill=violet] at(0.95,0.4) {};
     \draw[thick] (0.8, 0) --++ (0.75,0);
    
    \node[thick] at (1.9,0) {$=$};
    \def\Sc{1.35};
    
	\node[tensor] at (0+\Sc*\Shift,0) {};
    \node[tensor] at (0.3+\Sc*\Shift,0) {};

	\draw[thick] (0+\Sc*\Shift,0) --++ (0,0.6);
    \draw[thick] (0.3+\Sc*\Shift,0) --++ (0,0.6); 
 
    \draw[thick] (-0.6+\Sc*\Shift, 0) --++ (1.1,0);
    \node[] at (0.65+\Sc*\Shift,0) {$\ldots$};

    \node[tensor] at (0.95+\Sc*\Shift,0) {};
    \node[tensor] at (1.25+\Sc*\Shift,0) {};

    \draw[thick] (0.95+\Sc*\Shift,0) --++ (0,0.6);
    \draw[thick] (1.25+\Sc*\Shift,0) --++ (0,0.6); 
    \draw[thick] (0.8+\Sc*\Shift, 0) --++ (1.1,0);
        
    \node[square, fill = blue] at (-0.3+\Sc*\Shift, 0) {};
    \node[square, fill = blue] at (1.55+\Sc*\Shift, 0) {};
    
    \node[color = blue] at (-0.3+\Sc*\Shift, 0.25) {$S^\dagger_g$};
    \node[color = blue] at (1.6+\Sc*\Shift, 0.25) {$S_g^{\phantom{\dagger}}$}; 

    \node[] at (0+\Sc*\Shift,-0.2) {$i$};
    \node[] at (1.25+\Sc*\Shift,-0.2) {$j$};

    \node[] at (4.85,0) {$.$};
\end{tikzpicture}
\label{eq:interval}
\end{equation}
Note that the action of a single symmetry operator can move the defects:
\begin{equation}
\begin{tikzpicture}[scale=\TikzScaling] 
	\node[tensor] at (0,0) {};
	\draw[thick] (0,0) --++ (0,0.6);
    \draw[thick] (-0.3, 0) --++ (1.6,0);
    \node[square, fill = blue] at (0.5, 0) {};
    \node[color = blue] at (0.5, 0.25) {$S_g^{\phantom{\dagger}}$}; 
    \node[tensor] at (1,0) {};
    \draw[thick] (1,0) --++ (0,0.6);
    \node[square, fill=violet] at(1,0.4) {};
    \node[color = violet] at(1.4,0.4) {$u_g^{j+1}$};
    \node[] at (0,-0.2) {$j$};
    \node[] at (1,-0.2) {$j+1$};

    \node[thick] at (1.6,0) {$=$};
    \def\Sc{1};

    \node[tensor] at (0+\Sc*\Shift,0) {}; 
    \draw[thick] (0+\Sc*\Shift,0) --++ (0,0.6);
    \draw[thick] (-0.3+\Sc*\Shift, 0) --++ (1.4,0);
    \node[tensor] at (0.4+\Sc*\Shift,0) {}; 
    \draw[thick] (0.4+\Sc*\Shift,0) --++ (0,0.6);
    
    \node[square, fill = blue] at (0.75+\Sc*\Shift, 0) {};
    \node[color = blue] at (0.75+\Sc*\Shift, 0.25) {$S_g^{\phantom{\dagger}}$}; 
    \node[] at (0+\Sc*\Shift,-0.2) {$j$};
    \node[] at (0.4+\Sc*\Shift,-0.2) {$j+1$};

      \node[] at (3.3,0) {$.$};
\end{tikzpicture}
\end{equation}
Also, two defects can be fused by the action of a symmetry operator, potentially resulting in a phase due to the projectiveness of the representation:
\begin{equation}
\begin{tikzpicture}[scale=\TikzScaling] 
	\node[tensor] at (0,0) {};
	\draw[thick] (0,0) --++ (0,0.6);
    \draw[thick] (-0.3, 0) --++ (2.1,0);
    \node[square, fill = blue] at (0.5, 0) {};
    \node[color = blue] at (0.5, 0.25) {$S_g^{\phantom{\dagger}}$}; 
    \node[tensor] at (1,0) {};
    \draw[thick] (1,0) --++ (0,0.6);
    \node[square, fill=violet] at(1,0.4) {};
    \node[color = violet] at (0.75,0.4) {$u_g^{j}$};
    \node[] at (0,-0.2) {$j-1$};
    \node[] at (1,-0.2) {$j$};
    \node[square, fill = blue] at (1.5, 0) {};
    \node[color = blue] at (1.5, 0.25) {$S_h^{\phantom{\dagger}}$};

    \node[thick] at (2.1,0) {$=$};
    \def\Sc{1.6};

    \node[] at (-0.75+\Sc*\Shift,0) {$\Omega(g,h)$};
    \node[tensor] at (0+\Sc*\Shift,0) {}; 
    \draw[thick] (0+\Sc*\Shift,0) --++ (0,0.6);
    \draw[thick] (-0.3+\Sc*\Shift, 0) --++ (1.4,0);
    \node[tensor] at (0.4+\Sc*\Shift,0) {}; 
    \draw[thick] (0.4+\Sc*\Shift,0) --++ (0,0.6);
    
    \node[square, fill = blue] at (0.75+\Sc*\Shift, 0) {};
    \node[color = blue] at (0.75+\Sc*\Shift, 0.25) {$S_{gh}^{\phantom{\dagger}}$}; 
    \node[] at (0+\Sc*\Shift,-0.2) {$j-1$};
    \node[] at (0.4+\Sc*\Shift,-0.2) {$j$};

      \node[] at (1.2+\Sc*\Shift,0) {$.$};
\end{tikzpicture}
\end{equation}
Thus, the action of the symmetry can now be reinterpreted as the result of the following sequence of unitary operations: create a pair of defects (with a creation operator), move one of them around the chain (with movement operators), and fuse them back together to the identity defect (with a fusion operator). We refer to this structure as a \textit{defect system} on the MPS. The nomenclature of movement and fusion operators is inspired by that of \cite{Seifnashri}, where defects are defined as localized modifications of a Hamiltonian.

A gauging procedure can now be stated as promoting the gauge defects to quantum degrees of freedom. Consider a new MPS tensor defined by
\begin{equation}
\begin{tikzpicture}[scale=\TikzScaling]
	\draw[thick] (0,0.0) --++ (0,0.5);
    \draw[thick] (-0.5, 0) --++ (1.0,0);
    \node[square, fill=blue] at (0,0) {};
    \node[circle] at (0,0.5) {};
    \node[] at (0.3,0.5) {$\ket{g}$};

    \node[thick] at (0.8,0) {$\equiv$};
    \def\Sc{0.75};
       
    \node[color = white] at (0.3+\Sc*\Shift,0.5) {$\ket{g}$};
    \draw[thick] (-0.5+\Sc*\Shift, 0) --++ (1.0,0);
    \node[square, fill=blue] at (0+\Sc*\Shift,0) {};
    \node[color = blue] at (0+\Sc*\Shift, 0.25) {$S_g$};
    \node[] at (2.3,0) {$,$};
    \end{tikzpicture}
    \label{eq:gaugedoftensor}
\end{equation}
and contract it alternatively with the original tensor, giving rise to a two-site unit cell MPS, with matter degrees of freedom on sites and gauge degrees of freedom on bonds:
\begin{equation}
  \begin{tikzpicture}[baseline]
    \def\dx{0.5}
    \def\dy{0.5}
    \def\dd{0.2}
        \draw[thick] (-\dd, 0) -- (5*\dx+\dd,0);
        \foreach \x in {0, \dx, 2*\dx, 3*\dx, 4*\dx, 5*\dx}{
        \draw[thick] (\x, 0) -- (\x, \dy);
        }
        \foreach \x in {0, 2*\dx, 4*\dx}{
        \node[tensor] at (\x, 0) {};
        }
        \foreach \x in {\dx, 3*\dx, 5*\dx}{
        \node[square, fill=blue] at (\x, 0) {};
        }
        \node[] at (-\dd+6.25*\dx, 0) {$\ldots$}; 
        \node[] at (-2.5*\dd, 0) {$\ldots$}; 
        \node[] at (-\dd+6.75*\dx,0) {$\cdot$};
    \end{tikzpicture}
    \label{eq:gauged_MPS}
\end{equation}
Define the corresponding local Hilbert spaces for matter $\hilb_j\cong \C^d$, and gauge degrees of freedom $\hilb^G_{(j,j+1)}\cong \C G$. We can then define a local unitary representation of the gauge group,
\begin{align}
        \mcG^j:G&\longrightarrow   \mc B\bigl(\hilb^G_{(j-1,j)}\otimes\hilb_j \otimes \hilb^G_{(j,j+1)}\bigr)\nonumber\\
    g&\longmapsto R_g^\Omega\otimes u_g \otimes L_g^\Omega,
    \label{eq:localsyms}
\end{align}
where we have used the left and right regular representations of $G$ on $\C G$, twisted by the cocycle $\Omega$ that labels the MPS symmetry-protected topological (SPT) class:
\begin{equation}
\begin{split}
    &L_g^\Omega\equiv \sum_{h\in G}{\dfrac{1}{\Omega(g, h)}\ketbra{gh}{h}},\\
    &R_g^\Omega\equiv \sum_{h\in G}{\dfrac{1}{\Omega(h, g^{-1})}\ketbra{hg^{-1}}{h}}.
\end{split}
\end{equation}
It can be seen that the newly defined MPS tensors for the gauge degrees of freedom transform nicely under the action of $L^\Omega_g, R^\Omega_g$,
\begin{equation}
\begin{tikzpicture}[scale=\TikzScaling]
    \draw[thick] (0,0.0) --++ (0,0.6);
    \draw[thick] (-0.5, 0) --++ (1.,0);
	\node[square, fill = blue] at (0,0) {};
    \node[square, fill=violet] at(0,0.4) {};
    \node[color = violet] at(-0.3,0.4) {$L^\Omega_g$};
    
    \def\Sc{0.85};
    \def\co{1.7};
    \def\coo{2.4};
    \node[] at (0.75,0.25) {$=$};
    
	\draw[thick] (0+\Sc*\Shift,0) --++ (0,0.6);
    \draw[thick] (-0.75+\Sc*\Shift, 0) -- (0.5+\Sc*\Shift,0);
    \node[square, fill = blue] at (0+\Sc*\Shift,0) {};
    \node[square, fill = blue] at (-0.5+\Sc*\Shift, 0) {};
    \node[color = blue] at (-0.5+\Sc*\Shift, 0.25) {$S^\dagger_g$};

    \node[] at (0.6+\Sc*\Shift,0) {$,$};
    
	\draw[thick] (0+\co*\Sc*\Shift,0.0) --++ (0,0.6);
    \draw[thick] (-0.5+\co*\Sc*\Shift, 0) --++ (1.,0);
    \node[square, fill = blue] at (0+\co*\Sc*\Shift,0) {};
    \node[square, fill=violet] at(0+\co*\Sc*\Shift,0.4) {};
    \node[color = violet] at(-0.3+\co*\Sc*\Shift,0.4) {$R^\Omega_g$};

    \node[] at (0.75+\co*\Sc*\Shift,0.25) {$=$};
    
	\draw[thick] (0+\coo*\Sc*\Shift,0) --++ (0,0.6);
    \draw[thick] (-0.5+\coo*\Sc*\Shift, 0) -- (0.75+\coo*\Sc*\Shift,0);
    \node[square, fill = blue] at (0+\coo*\Sc*\Shift,0) {};
    \node[square, fill = blue] at (0.5+\coo*\Sc*\Shift, 0) {};
    \node[color = blue] at (0.5+\coo*\Sc*\Shift, 0.25) {$S_g^{\phantom{\dagger}}$}; 
    \node[] at (0.85+\coo*\Sc*\Shift, 0.0) {$,$};
\end{tikzpicture}
\end{equation}
and thus combining all transformation properties of the tensors involved we have the local invariance of the gauged MPS \eqref{eq:gauged_MPS} under the local symmetries $\mcG^j(g)$ from \eqref{eq:localsyms}. It can be checked that the local symmetry operators at different sites commute (a consequence of the commutativity of the left- and right-group actions), which means that so do the projection operators onto the gauge-invariant subspace.
\begin{equation}
    P_j\equiv \dfrac{1}{|G|}\sum_{g\in G}{\mcG_g^j},\qquad [P_j, P_{j'}] = 0,\qquad \forall j, j',
\end{equation}
thus, we get a global projector on the gauge-invariant subspace of the form 
\begin{equation}
    \mc P = \prod_{j}{P_j}.
    \label{eq:projection}
\end{equation}

The simple formalism we just reviewed contains most of the features of our general framework. In this work, we will show how a similar defect system structure arises for matrix product states invariant under general MPU group representations. Whenever we act with the symmetry on a subsystem of invariant MPS (using a specific spatial truncation that is very natural for MPUs), defect tensors appear at the boundary of the region, mimicking \eqref{eq:interval}. The main difference is that, this time, the defect tensors sit on lattice sites, instead of between them. Additionally, we will give a method that allows for gauging those symmetries by promoting their associated defects to physical degrees of freedom, as in the onsite case we just reviewed, without increasing the bond dimension. As usual, the gauged MPS reduces to the original globally invariant MPS upon projection of the gauge degrees of freedom. This gauging method works whenever the relevant MPU and MPS satisfy a condition that we dub \textit{block independence}: this condition automatically holds if the MPS is injective and automatically does not hold if the MPU group representation is anomalous, while for a noninjective MPS invariant under a nonanomalous MPU symmetry, both situations are possible, even when restricting to onsite symmetries.

Whenever block independence does not hold, there is still a systematic way to find a local representation of the symmetry that leaves the gauged MPS invariant, possibly after blocking so that we approach a renormalization group fixed point. However, the resulting modified gauge constraints do not necessarily give rise to commuting projectors, even for nonanomalous symmetries. It may be the case that the dimension of the associated gauge-invariant Hilbert space, typically thought of as the space of physical states, is smaller or even bounded, in the limit of large system sizes. Thus, to distinguish this procedure from general bona fide gaugings that produce commuting projectors, we call it ``state-level'' gauging. Figure~\ref{fig:summary} summarizes the exposition and the cases just mentioned. 

\begin{figure}
    \begin{tikzpicture}
        \draw[ultra thick] (-3,0)--++(6,0);
        \draw[ultra thick, -stealth] (0,0)--++(0,4);
        \draw[ultra thick] (1,0)--++(0,4);
        \draw[ultra thick] (2,0)--++(0,4);
        \draw[ultra thick] (1,4)--++(1,0);
        \draw[ultra thick] (1,2.6)--++(1,0);
        \draw[ultra thick] (-1,0)--++(0,4);
        \draw[ultra thick] (-2,0)--++(0,4);
        \draw[ultra thick] (-2,4)--++(1,0);
        \draw[ultra thick] (-2,3.2)--++(1,0);
        \draw[ultra thick] (-0.2,2)--++(0.4,0);
        \draw[thick, dashed, red] (-2.5,2.6)--++(5,0);
        \node[] at (-1.55,3.6) {\large{\color{blue}$A$}};
        \node[] at (-1.55, 1.5) {{\color{blue}\large{$NA$}}};
        \node[] at (0.2,1.5) {{\color{blue}\large{$I$}}};
        \node[] at (0.35,3.6) {{\color{blue}\large{$NI$}}};
        \node[] at (1.45,1.5) {{\color{blue}\large$BI$}};
        \node[] at (1.45,3.6) {{\color{blue}\large$BD$}};
    \end{tikzpicture}
    \caption{Possible scenarios for injective (I) and noninjective (NI) cases. The first column illustrates the range of anomalous (A) versus nonanomalous (NA) situations, while the second one differentiates between block-dependent (BD) and block-independent (BI) scenarios. It can be seen that BI is a stronger condition than NA and that it always holds for injective MPS. The NA cases correspond to the situations in which the projective gauging is applicable; the BI sector agrees with the range of applicability of the gauging based on defect promotion. In the BD sector, we can use a more limited procedure we call ``state-level'' gauging.}
    \label{fig:summary}
\end{figure}
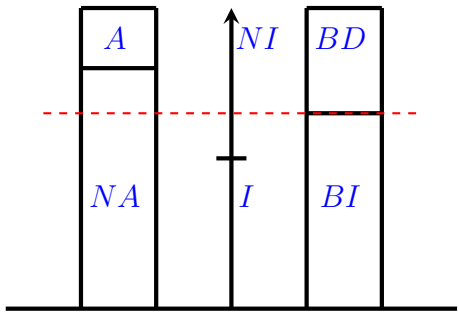

Our promotion gauging is not completely unrelated to the projection gauging introduced in \cite{PhysRevX.5.011024}. One can see our construction as a systematic way to generate the commuting projectors necessary for this construction. We prove that these projectors commute whenever there is no anomaly, even when block independence does not hold. Thus, in the nonanomalous but block-dependent case, we can generate Gauss law projectors to perform projective gauging, paying the price of an increased bond dimension, even if the bond dimension-preserving promotion gauging does not work.

The remainder of the paper is structured as follows. In Section~\ref{sec:MPUs} we have gathered the results about matrix product unitaries and their action as symmetries of MPS, both known and novel, which we will use later in this work. Then in Section~\ref{sec:defs} we show how we can define defect systems on MPU-invariant MPS, in a similar fashion to those in \cite{Seifnashri}. Section~\ref{sec:true_gauging} presents a gauging procedure based on promoting these defects, provided that block independence is satisfied, and Section~\ref{sec:state_level} describes the state-level gauging mechanism to produce a locally invariant gauged MPS whenever block independence is not satisfied. Finally, Section~\ref{sec:projHam} reflects on the connections of our methods with projection-based gauging and gauging implemented at the Hamiltonian level. We conclude with a discussion in Section~\ref{sec:outlook}.

\section{Review of technical background}
\label{sec:MPUs}
In this section, we collect important results on MPUs and their action as symmetries of MPSs. These are mostly known from the literature, but we also present some of their extensions that will be crucial in the remaining parts of the paper.

\subsection{Structure of MPUs}
\label{sec:MPUsstr}
Most of the results in this section can be found in the seminal paper \cite{MPU}. It is known that any tensor $\mc U$ generating MPU is normal and can be put in canonical form II (CFII), meaning that we have used the gauge freedom to set the left and right fixed points of the transfer matrix to the identity, and a positive diagonal matrix $\rho$, respectively: 
\begin{equation}
\begin{tikzpicture}[scale=\TikzScaling]

         \draw[thick,red] (0.758, 0.014) --++ (0.2,0);
         \draw[thick,red] (0.758, 0.585) --++ (0.2,0);
         \draw[thick,red] (0.758, 0.014) --++ (-0.2,0);
         \draw[thick,red] (0.758, 0.585) --++ (-0.2,0);
        
         \draw[thick] (0.775, 0.0) --++ (0,0.6);
         \draw[thick] (0.825, 0.0) --++ (0,0.6);
         \node[mpo] at (0.8,0) {};
         \node[mpo] at (0.8,0.6) {};     

         \draw[thick, red] (0.565, 0.014) --++ (0,0.571);  

         \node[] at (0.95,-0.2) {$\mathcal{U}$};
         \node[] at (0.95,0.8) {$\mc U^\dagger$};
          
         \node[thick] at (1.1,0.3) {$=$};

         \def\Sc{0.6};

         \draw[thick, red] (0.0+\Sc*\Shift,0.0) --++ (0,0.6);
         \draw[thick, red] (0.0+\Sc*\Shift,0.0) --++ (0.15,0.0);
         \draw[thick, red] (0.0+\Sc*\Shift,0.6) --++ (0.15,0.0);
         \node[thick] at (1.6,0.25) {$,$};

         \def\co{0.8}

         \draw[thick,red] (0.758+\co*\Shift, 0.014) --++ (0.3,0);
         \draw[thick,red] (0.758+\co*\Shift, 0.585) --++ (0.3,0);
         \draw[thick,red] (0.758+\co*\Shift, 0.014) --++ (-0.15,0);
         \draw[thick,red] (0.758+\co*\Shift, 0.585) --++ (-0.15,0);
        
         \draw[thick] (0.825+\co*\Shift, 0.0) --++ (0,0.6);
         \draw[thick] (0.775+\co*\Shift, 0.0) --++ (0,0.6);
         \node[mpo] at (0.8+\co*\Shift,0) {};
         \node[mpo] at (0.8+\co*\Shift,0.6) {};     

         \draw[thick, red] (1.065+\co*\Shift, 0.014) --++ (0,0.571);  

         \node[] at (0.95+\co*\Shift,-0.2) {$\mathcal{U}$};
         \node[] at (0.95+\co*\Shift,0.8) {$\mc U^\dagger$};

         \node[square, fill=white] at (1.065+\co*\Shift,0.3) {};
         \node[] at (1.25+\co*\Shift,0.3) {$\rho$};

         \node[thick] at (1.5+\co*\Shift,0.3) {$=$};

         \draw[thick, red] (1.8+\co*\Shift, 0.014) --++ (0,0.571);  
         \node[square, fill=white] at (1.8+\co*\Shift,0.3) {};
         \node[] at (2.0+\co*\Shift,0.3) {$\rho$};

         \draw[thick,red] (1.8+\co*\Shift, 0.014) --++ (-0.15,0);
         \draw[thick,red] (1.8+\co*\Shift, 0.585) --++ (-0.15,0);
         \node[] at (2.2+\co*\Shift, 0.3) {$\cdot$};
    \end{tikzpicture}
    \label{eq:cfii}
\end{equation}
Note that if $\mc U$ is in CFII, so is $\mc U^\dagger$. We will from now on always assume that our MPU tensors are in CFII.

A tensor generating MPU is called \textit{simple} if it satisfies the following properties:
\begin{subequations}
\begin{equation}
     \begin{tikzpicture}[scale=\TikzScaling]
    \draw[thick,red] (-0.3, 0) --++ (0.9,0);
    \draw[thick,red] (-0.3, 0.6) --++ (0.9,0);
	
    \draw[thick] (0,-0.2) --++ (0,1);
    \node[mpo] at (0,0) {};
    \node[mpo] at (0,0.6) {};

    \draw[thick] (0.3,-0.2) --++ (0,1);
    \node[mpo] at (0.3,0) {};
    \node[mpo] at (0.3,0.6) {};
   
    \node[] at (0.45,-0.2) {$\mathcal{U}$};
    \node[] at (0.45,0.8) {$\mc U^\dagger$};

    \node[] at (-0.15,-0.2) {$\mathcal{U}$};
    \node[] at (-0.15,0.8) {$\mc U^\dagger$};
    
    \node[thick] at (1,0.25) {$=$};

    \def\Sc{0.75};

    \draw[thick,red] (0+\Sc*\Shift, 0.014) --++ (0.45,0);
    \draw[thick,red] (0+\Sc*\Shift, 0.585) --++ (0.45,0);
    \draw[thick,red] (0+\Sc*\Shift, 0.014) --++ (-0.35,0);
    \draw[thick,red] (0+\Sc*\Shift, 0.585) --++ (-0.35,0);

    \draw[thick,red] (1.1+\Sc*\Shift, 0.014) --++ (0.45,0);
    \draw[thick,red] (1.1+\Sc*\Shift, 0.585) --++ (0.45,0);
    \draw[thick,red] (1.1+\Sc*\Shift, 0.014) --++ (-0.35,0);
    \draw[thick,red] (1.1+\Sc*\Shift, 0.585) --++ (-0.35,0);

    \draw[thick] (0+\Sc*\Shift,-0.2) --++ (0,1);
    \node[mpo] at (0+\Sc*\Shift,0) {};
    \node[mpo] at (0+\Sc*\Shift,0.6) {};

    \draw[thick] (1.2+\Sc*\Shift,-0.2) --++ (0,1);
    \node[mpo] at (1.2+\Sc*\Shift,0) {};
    \node[mpo] at (1.2+\Sc*\Shift,0.6) {};

    \draw[thick,red] (0.437+\Sc*\Shift,0) --++ (0,0.595);
    \draw[thick,red] (0.75+\Sc*\Shift,0) --++ (0,0.595);

    \node[square, fill=xanadu] at (0.437+\Sc*\Shift,0.2855) {$a$};
    \node[square, fill=xanadu] at (0.75+\Sc*\Shift,0.2855) {$b$};

    \node[] at (1.35+\Sc*\Shift,-0.2) {$\mathcal{U}$};
    \node[] at (1.35+\Sc*\Shift,0.8) {$\mc U^\dagger$};

    \node[] at (-0.15+\Sc*\Shift,-0.2) {$\mathcal{U}$};
    \node[] at (-0.15+\Sc*\Shift,0.8) {$\mc U^\dagger$}; 
    \end{tikzpicture}
    \end{equation}
\begin{equation}
 \begin{tikzpicture}[scale=\TikzScaling]

         \draw[thick,red] (0.758, 0.014) --++ (0.4,0);
         \draw[thick,red] (0.758, 0.585) --++ (0.4,0);
         \draw[thick,red] (0.758, 0.014) --++ (-0.3,0);
         \draw[thick,red] (0.758, 0.585) --++ (-0.3,0);
        
         \draw[thick] (0.8,-0.2) --++ (0,1);
         \node[mpo] at (0.8,0) {};
         \node[mpo] at (0.8,0.6) {};     

         \draw[thick,red] (0.465, 0.014) --++ (0,0.571);  
         \draw[thick,red] (1.155, 0.014) --++ (0,0.571);  

         \node[square, fill=xanadu] at (0.465,0.2855) {$a$};
         \node[square, fill=xanadu] at (1.1555,0.2855) {$b$};

         \node[] at (0.95,-0.2) {$\mathcal{U}$};
         \node[] at (0.95,0.8) {$\mc U^\dagger$};
          
         \node[thick] at (1.5,0.25) {$=$};

         \def\Sc{0.8};

         \draw[thick] (0.0+\Sc*\Shift,-0.2) --++ (0,1);
         \node[thick] at (1.8,0.25) {$,$};
         
    \end{tikzpicture}
\end{equation}
\end{subequations}
for some tensors $a,b$, which can be identified as the fixed points of the transfer matrix. Any tensor that generates MPU becomes simple after blocking a finite number of times, therefore we can always assume that we are dealing with a simple MPU.

The main structural result for MPU tensors can now be summarized in the following.
\begin{theorem}\cite[Section~3]{MPU}
\label{thm:mpu}
    For any simple MPU in CFII, there exist left-invertible tensors $X_1, X_2$ and right-invertible ones $Y_1, Y_2$ such that
    \begin{itemize}
        \item[(a)] the MPU tensor can be decomposed as 
        \begin{equation}
\begin{tikzpicture}[scale=\TikzScaling]
	\draw[thick] (0,0.0) --++ (0,0.8);
    \draw[thick,red] (-0.4,0.4) --++ (0.8,0);
    \node[square, fill=violet] at(0,0.4) {};

    \def\Sc{0.2};
    \node[thick] at (0.4+\Sc*\Shift,0.4) {$=$};
    \node[tensorempty,thick] at (1+\Sc*\Shift,0.2) {};
    \node[tensorempty,thick] at (1+\Sc*\Shift,0.6) {};

    \tikzset{decoration={snake,amplitude=.4mm,segment length=2mm,
                       post length=0mm,pre length=0mm}}
                       
    \draw[decorate,thick] (1+\Sc*\Shift,0.25) -- (1+\Sc*\Shift,0.55);
    \draw[thick,red] (0.95+\Sc*\Shift,0.2) --++ (-0.25,0.0);
    \draw[thick] (1+\Sc*\Shift,0.15) --++ (0.0,-0.25);
    \draw[thick,red] (1.05+\Sc*\Shift,0.6) --++ (0.25,0.0);
    \draw[thick] (1+\Sc*\Shift,0.65) --++ (0.0,0.25);

    \node[] at (1.2+\Sc*\Shift,0.2) {$Y_1$};
    \node[] at (0.8+\Sc*\Shift,0.6) {$X_1$};

    \node[thick] at (0.4+4*\Sc*\Shift,0.4) {$=$};
    \node[tensor,thick] at (1+4*\Sc*\Shift,0.2) {};
    \node[tensor,thick] at (1+4*\Sc*\Shift,0.6) {};
                       
    \draw[ultra thick] (1+4*\Sc*\Shift,0.25) -- (1+4*\Sc*\Shift,0.55);
    \draw[thick,red] (1.05+4*\Sc*\Shift,0.2) --++ (0.25,0.0);
    \draw[thick] (1+4*\Sc*\Shift,0.15) --++ (0.0,-0.25);
    \draw[thick,red] (0.95+4*\Sc*\Shift,0.6) --++ (-0.25,0.0);
    \draw[thick] (1+4*\Sc*\Shift,0.65) --++ (0.0,0.25);

    \node[] at (0.8+4*\Sc*\Shift,0.2) {$Y_2$};
    \node[] at (1.2+4*\Sc*\Shift,0.6) {$X_2$};

    \node[] at (1.4+4*\Sc*\Shift, 0.4) {$,$};
\end{tikzpicture}
\label{eq:3-leg}
\end{equation}
\item[(b)] the following combinations are unitary:
\begin{equation}
        \begin{tikzpicture}[scale=\TikzScaling]
\tikzset{decoration={snake,amplitude=.4mm,segment length=2mm, post length=0mm, pre length=0mm}}
                       
       \draw[ultra thick] (-0.1,0.4) --++ (0,0.4);
       \draw[decorate,thick] (0.1,0.4) --++ (0,0.4);
       \draw[thick] (-0.1,0.4) --++ (0,-0.4);
       \draw[thick] (0.1,0.4) --++ (0,-0.4);
       
       \node[rectangle, fill=classicrose] at (0,0.4) {$u$};

       \def\Sc{0.1};
       \node[thick] at (0.15+\Sc*\Shift,0.4) {$=$}; 

       \draw[thick,red] (0.6+\Sc*\Shift,0.4) --++ (0.55,0.0);
       \draw[ultra thick] (0.6+\Sc*\Shift,0.45) --++ (0.0, 0.4);
       \draw[decorate, thick] (1.2+\Sc*\Shift,0.45) --++ (0.0, 0.4);

      \draw[thick] (0.6+\Sc*\Shift,0.35) --++ (0.0, -0.4);
       \draw[thick] (1.2+\Sc*\Shift,0.35) --++ (0.0, -0.4);
       
       \node[tensor, thick] at (0.6+\Sc*\Shift,0.4) {};
       \node[tensorempty, thick] at (1.2+\Sc*\Shift,0.4) {};

        \node[] at (0.38+\Sc*\Shift, 0.4) {$Y_2$};
        \node[] at (1.4+\Sc*\Shift, 0.4) {$Y_1$};

        \node[] at (1.6+\Sc*\Shift, 0.35) {$,$};

      \draw[thick] (-0.1+12*\Sc*\Shift,0.4) --++ (0,0.4);
       \draw[thick] (0.1+12*\Sc*\Shift,0.4) --++ (0,0.4);
       \draw[decorate, thick] (-0.1+12*\Sc*\Shift,0.4) --++ (0,-0.4);
       \draw[ultra thick] (0.1+12*\Sc*\Shift,0.4) --++ (0,-0.4);
       
         \node[rectangle, fill=celadon] at (0+12*\Sc*\Shift,0.4) {$v$};

         \node[thick] at (0.15+13*\Sc*\Shift,0.4) {$=$}; 

       \draw[thick,red] (0.65+13*\Sc*\Shift,0.4) --++ (0.55,0.0);
       \draw[thick] (0.6+13*\Sc*\Shift,0.45) --++ (0.0, 0.4);
       \draw[thick] (1.2+13*\Sc*\Shift,0.45) --++ (0.0, 0.4);

      \draw[decorate, thick] (0.6+13*\Sc*\Shift,0.35) --++ (0.0, -0.4);
       \draw[ultra thick] (1.2+13*\Sc*\Shift,0.35) --++ (0.0, -0.4);
       
       \node[tensorempty, thick] at (0.6+13*\Sc*\Shift,0.4) {};
       \node[tensor, thick] at (1.2+13*\Sc*\Shift,0.4) {};

        \node[] at (0.38+13*\Sc*\Shift, 0.4) {$X_1$};
        \node[] at (1.4+13*\Sc*\Shift, 0.4) {$X_2$};

        \node[] at (1.6+13*\Sc*\Shift, 0.35) {$,$};
       
        \end{tikzpicture}
        \label{eq:uv}
    \end{equation}
    \item[(c)] the following holds
    \begin{equation}
    \label{eq:thm1c}
        \begin{tikzpicture}
        \tikzset{decoration={snake,amplitude=.4mm,segment length=2mm, post length=0mm, pre length=0mm}}
        
        \draw[thick] (0.0,-0.3) --++ (0.0,0.6);
        
        \draw[ultra thick] (0.0,-0.35) --++ (0.0, -0.3);
        \draw[ultra thick] (0.0,0.35) --++ (0.0, 0.3);
        \draw[red, thick] (0.0,0.3) --++ (-0.4,0.0);
        \draw[red, thick] (0.0,-0.3) --++ (-0.4,0.0);
        \draw[red, thick] (-0.4,-0.3)--++(0.0,0.6);

        \node[tensor] at (0.0,0.3) {};
        \node[tensor] at (0.0,-0.3) {};

        \node[] at (0.3,-0.3) {$X_2$};
        \node[] at (0.3,0.3) {$X_2^\dagger$};
        \node[] at (0.6,0.0) {$=$};
        \draw[ultra thick] (1.0,-0.4) --++ (0.0,0.8);
        \node[] at (1.2,0.0) {$,$};

        \def\Sc{1.2};

        \draw[thick] (0.0+\Sc*\Shift,-0.3) --++ (0.0,0.6);
        
        \draw[decorate, thick] (0.0+\Sc*\Shift,-0.35) --++ (0.0, -0.3);
        \draw[decorate, thick] (0.0+\Sc*\Shift,0.35) --++ (0.0, 0.3);
        \draw[red, thick] (0.0+\Sc*\Shift,0.3) --++ (0.4,0.0);
        \draw[red, thick] (0.0+\Sc*\Shift,-0.3) --++ (0.4,0.0);
        \draw[red, thick] (0.4+\Sc*\Shift,-0.3)--++(0.0,0.6);

        \node[tensor, fill=white] at (0.0+\Sc*\Shift,0.3) {};
        \node[tensor, fill=white] at (0.0+\Sc*\Shift,-0.3) {};
        \node[square, fill=white] at (0.4+\Sc*\Shift,0.0) {};

        \node[] at (-0.3+\Sc*\Shift,-0.3) {$X_1$}; 
        \node[] at (-0.3+\Sc*\Shift,0.3) {$X_1^\dagger$}; 
        \node[] at (0.7+\Sc*\Shift,0.0) {$\rho$};
        
        \node[] at (1.0+\Sc*\Shift,0.0) {$=$};
        \draw[decorate, thick] (1.4+\Sc*\Shift,-0.4) --++ (0.0,0.8);
        \node[] at (1.6+\Sc*\Shift,0.0) {$.$};
        \end{tikzpicture}
    \end{equation}
    \end{itemize}
\end{theorem}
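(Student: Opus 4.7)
The plan is to follow the approach of the original MPU structure theorem \cite{MPU}: use the simpleness conditions to establish that $\mathcal{U}$ implements a one-dimensional quantum cellular automaton, then extract a depth-two circuit decomposition whose ``half-gates'' split across each site to give the tensors $X_i, Y_i$. First I would interpret the two simpleness identities as rank bounds on the horizontal red cut. Read from right to left, the first simpleness equation says that $\mathcal{U}^\dagger \mathcal{U}$ contracted over two adjacent sites factorizes across that cut as a tensor product of operators $a \otimes b$, and the second equation isolates these contributions on a single site. Combined with the unitarity $\mathcal{U}^\dagger\mathcal{U} = \id$ on the ring, this factorization forces the Heisenberg image of any single-site operator under $\mathcal{U}$ to be strictly localized in a neighborhood of bounded size, which is the defining property of a 1D quantum cellular automaton. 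The dimensions of the subspaces on which $a$ and $b$ are supported turn out to be precisely the dimensions of the ``squiggly'' and ``ultra thick'' intermediate wires in (a).

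I would then extract the decomposition in (a) via a singular value decomposition of a single-site MPU tensor across the horizontal red line. Simpleness caps the SVD rank by the product of the two sector dimensions identified above, and the tensor-product structure of the fixed-point space allows me to split the SVD intermediate index into two sub-indices, yielding both decompositions: $(X_1, Y_1)$ with a squiggly intermediate wire and $(X_2, Y_2)$ with an ultra thick one. Left-invertibility of $X_i$ and right-invertibility of $Y_i$ are automatic from the SVD since the corresponding singular vectors form isometries on each side, up to a scalar rescaling I absorb into the intermediate wires. For (b), I define $u$ by contracting $Y_2$ with $Y_1$ along their shared red leg and $v$ by contracting $X_1$ with $X_2$, and use the CFII normalization to collapse $uu^\dagger$ and $vv^\dagger$ to the identity: the left fixed point $\id$ of the transfer matrix of $\mathcal{U}^\dagger \mathcal{U}$ handles one composition, and the right fixed point $\rho$ handles the other, with the $\rho$-dressing absorbed transparently because $\rho$ is diagonal and positive. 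The identities in (c) then follow from the same computation restricted to a one-sided contraction on a single site: the $\id$ fixed point produces the bare identity on the thick wire, and the $\rho$ fixed point produces the $\rho$-dressed identity on the squiggly wire.

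The main obstacle will be the dimensional matching between the two equivalent SVD decompositions in (a): one has to show that the intermediate indices really pair up across two adjacent sites into the correct tensor-product structure required for $u$ and $v$ to be unitary in (b). This step relies essentially on the CFII gauge, which pins the fixed points to be exactly $\id$ and $\rho$ rather than more general positive operators, and on the fact that simpleness is preserved under taking the adjoint, so the rank-one fixed-point analysis applies symmetrically on both sides of the horizontal red cut. Once this alignment is in place, the remaining identities reduce to essentially formal manipulations of the tensor network diagrams, guided by the constraints already derived.
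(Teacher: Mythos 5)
The paper does not actually prove this statement: Theorem~\ref{thm:mpu} is imported verbatim from Ref.~\cite{MPU} (Section~3) and used as a black box, so there is no in-paper proof to compare your attempt against. What follows is therefore an assessment of your sketch as a reconstruction of the cited proof.

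Your overall strategy (simpleness $\Rightarrow$ rank constraints $\Rightarrow$ SVD-type splittings $\Rightarrow$ two-layer circuit $\Rightarrow$ unitarity of the gates) is the right one, but two steps are off in ways that matter. First, the bipartition: the decompositions in (a) are \emph{not} SVDs ``across the horizontal red line'' (which would separate either left from right virtual legs, or input from output). The tensors $X_1,Y_1$ and $X_2,Y_2$ arise from full-rank decompositions across the two \emph{diagonal} groupings, namely $\{\text{left virtual},\text{output}\}$ versus $\{\text{right virtual},\text{input}\}$ and $\{\text{right virtual},\text{output}\}$ versus $\{\text{left virtual},\text{input}\}$; simpleness is what guarantees that the two corresponding ranks $d_l, d_r$ multiply to $d^2$, which is the dimensional matching you correctly flag as the crux. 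Second, and more seriously, your argument for (b) is under-justified and risks circularity: the identities you would need to ``collapse $uu^\dagger$'' via the fixed points (e.g.\ $Y_1^\dagger Y_1=\mathds{1}$ and $Y_2^\dagger\rho\, Y_2=\mathds{1}$) are \emph{consequences} of the theorem — they are exactly what this paper derives in Corollary~\ref{cor:mpu} and Proposition~\ref{prop:MPUsplus} \emph{from} Theorem~\ref{thm:mpu}. The standard route is instead to show that the contraction of adjacent MPU tensors reorganizes the $N$-site unitary into a two-layer brickwork circuit of $u$'s and $v$'s; unitarity of the whole operator for all $N$, combined with the one-sided invertibility of the $X_i,Y_i$, then forces $u\otimes v$ (hence each gate, up to a scalar fixed by normalization) to be unitary — the same ``unitary tensor product implies unitary factors up to reciprocal scalars'' argument the paper itself uses for $w_L\otimes w_R$ in the proof of Corollary~\ref{cor:mpu}(a). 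Your treatment of (c) from the CFII fixed-point conditions is fine once (a) and (b) are in place, and the QCA detour, while true, is not needed for the decomposition.
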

For further purposes, we denote by $d_l$ the dimension of space corresponding to the thick leg, by $d_r$ the dimension of the curly one, and by $d$ the physical dimension of the MPU. For any simple MPU, $d_ld_r=d^2$.

As an immediate consequence of the above Theorem, we have the following
\begin{corollary}
\label{cor:mpu}
    Suppose an MPU satisfying the assumptions of Theorem~\ref{thm:mpu} is given. Then 
    \begin{itemize}
        \item[(a)] the following combinations of $X_i, Y_i$ ($i=1,2$) operators are unitary: 
        \begin{equation}
        \begin{tikzpicture}[scale=\TikzScaling]
\tikzset{decoration={snake,amplitude=.4mm,segment length=2mm, post length=0mm, pre length=0mm}}
                       
       \draw[ultra thick] (-0.1,0.4) --++ (0,0.4);
       \draw[thick] (0.1,0.4) --++ (0,0.4);
       \draw[thick] (-0.1,0.4) --++ (0,-0.4);
       \draw[ultra thick] (0.1,0.4) --++ (0,-0.4);
       
       \node[rectangle, fill=carrotorange] at (0,0.4) {$w_L$};

       \def\Sc{0.1};
       \node[thick] at (0.175+\Sc*\Shift,0.4) {$=$}; 

       \draw[thick, red] (0.6+\Sc*\Shift,0.4) --++ (0.55,0.0);
       \draw[ultra thick] (0.6+\Sc*\Shift,0.45) --++ (0.0, 0.4);
       \draw[thick] (1.2+\Sc*\Shift,0.45) --++ (0.0, 0.4);

      \draw[thick] (0.6+\Sc*\Shift,0.35) --++ (0.0, -0.4);
       \draw[ultra thick] (1.2+\Sc*\Shift,0.35) --++ (0.0, -0.4);
       
       \node[tensor, thick] at (0.6+\Sc*\Shift,0.4) {};
       \node[tensor, thick] at (1.2+\Sc*\Shift,0.4) {};

        \node[] at (0.38+\Sc*\Shift, 0.4) {$Y_2$};
        \node[] at (1.4+\Sc*\Shift, 0.4) {$X_2$};

        \node[] at (1.6+\Sc*\Shift, 0.35) {$,$};

      \draw[thick] (-0.1+12*\Sc*\Shift,0.4) --++ (0,0.4);
       \draw[decorate, thick] (0.1+12*\Sc*\Shift,0.4) --++ (0,0.4);
       \draw[decorate, thick] (-0.1+12*\Sc*\Shift,0.4) --++ (0,-0.4);
       \draw[thick] (0.1+12*\Sc*\Shift,0.4) --++ (0,-0.4);
       
         \node[rectangle, fill=citrine] at (0+12*\Sc*\Shift,0.4) {$w_R$};

         \node[thick] at (0.175+13*\Sc*\Shift,0.4) {$=$}; 

       \draw[thick, red] (0.65+13*\Sc*\Shift,0.4) --++ (0.5,0.0);
       \draw[thick] (0.6+13*\Sc*\Shift,0.45) --++ (0.0, 0.4);
       \draw[decorate, thick] (1.2+13*\Sc*\Shift,0.45) --++ (0.0, 0.4);

      \draw[decorate, thick] (0.6+13*\Sc*\Shift,0.35) --++ (0.0, -0.4);
       \draw[thick] (1.2+13*\Sc*\Shift,0.35) --++ (0.0, -0.4);
       
       \node[tensorempty, thick] at (0.6+13*\Sc*\Shift,0.4) {};
       \node[tensorempty, thick] at (1.2+13*\Sc*\Shift,0.4) {};

        \node[] at (0.38+13*\Sc*\Shift, 0.4) {$X_1$};
        \node[] at (1.4+13*\Sc*\Shift, 0.4) {$Y_1$};    
         \node[] at (1.6+13*\Sc*\Shift,0.4) {$,$};   \end{tikzpicture}
        \label{eq:wLR}
    \end{equation}
    \item[(b)] the following identities are fulfilled:
    \begin{equation}
        \begin{tikzpicture}
        \tikzset{decoration={snake,amplitude=.4mm,segment length=2mm, post length=0mm, pre length=0mm}}
            
        \draw[ultra thick] (0.0,-0.3) --++ (0.0,0.6);
        
        \draw[thick] (0.0,-0.35) --++ (0.0, -0.3);
        \draw[thick] (0.0,0.35) --++ (0.0, 0.3);
        \draw[red, thick] (0.0,0.3) --++ (0.4,0.0);
        \draw[red, thick] (0.0,-0.3) --++ (0.4,0.0);
        \draw[thick,red] (0.4,-0.3)--++(0.0,0.6);

        \node[tensor] at (0.0,0.3) {};
        \node[tensor] at (0.0,-0.3) {};
        \node[square, fill=white] at (0.4,0.0) {};

        \node[] at (-0.3,-0.3) {$Y_2$}; 
        \node[] at (-0.3,0.3) {$Y_2^\dagger$}; 
        \node[] at (0.7,0.0) {$\rho$};
        
        \node[] at (1.0,0.0) {$=$};
        \draw[thick] (1.4,-0.4) --++ (0.0,0.8);
        \node[] at (1.6,0.0) {$,$};

        \def\Sc{1.2};
        
        \draw[decorate, thick] (0.0+\Sc*\Shift,-0.3) --++ (0.0,0.6);
        
        \draw[thick] (0.0+\Sc*\Shift,-0.35) --++ (0.0, -0.3);
        \draw[thick] (0.0+\Sc*\Shift,0.35) --++ (0.0, 0.3);
        \draw[red, thick] (0.0+\Sc*\Shift,0.3) --++ (-0.4,0.0);
        \draw[red, thick] (0.0+\Sc*\Shift,-0.3) --++ (-0.4,0.0);
        \draw[thick,red] (-0.4+\Sc*\Shift,-0.3)--++(0.0,0.6);

        \node[tensor, fill=white] at (0.0+\Sc*\Shift,0.3) {};
        \node[tensor, fill=white] at (0.0+\Sc*\Shift,-0.3) {};

        \node[] at (0.3+\Sc*\Shift,-0.3) {$Y_1$};
        \node[] at (0.3+\Sc*\Shift,0.3) {$Y_1^\dagger$};
        \node[] at (0.6+\Sc*\Shift,0.0) {$=$};
        \draw[thick] (1.0+\Sc*\Shift,-0.4) --++ (0.0,0.8);
        \node[] at (1.2+\Sc*\Shift,0.0) {$.$};
        \end{tikzpicture}
        \label{eq:MPUnice2}
    \end{equation}
    \end{itemize}
\end{corollary}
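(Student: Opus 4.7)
My plan is to derive Corollary~\ref{cor:mpu} from Theorem~\ref{thm:mpu} in two stages: first establish the two identities of part~(b), and then use them together with Theorem~\ref{thm:mpu}(c) to deduce the unitarity of $w_L,w_R$ in part~(a). For part~(b), I focus on the first identity; the second is analogous. The key input is $u^\dagger u=\mathds{1}_{p\otimes p}$, where $u=Y_2 Y_1$ is unitary by Theorem~\ref{thm:mpu}(b). Tracing out one of the two physical legs of this identity---specifically the one attached to the $Y_1$ piece of $u$---collapses the $Y_1,Y_1^\dagger$ half of the diagram into a transfer-matrix-like object $\tilde N(a',a)=\sum_{p,c}(Y_1)^*_{a',p,c}(Y_1)_{a,p,c}$ on the red indices, leaving a $Y_2^\dagger Y_2$ structure (with ultra-thick contracted) weighted against $\tilde N$ on the red legs. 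The remaining task is to identify $\tilde N$ with $\rho$ (up to a constant absorbed into the normalization). This is done by expanding the CFII right-fixed-point equation \eqref{eq:cfii} using the $X_1,Y_1$ decomposition from \eqref{eq:3-leg}, and applying the second identity of Theorem~\ref{thm:mpu}(c) to collapse the $X_1$-dependent portion of the transfer matrix to $\mathds{1}$ on the curly index. Substituting $\tilde N=\rho$ back yields the first identity of \eqref{eq:MPUnice2}. The second identity of~(b) follows by the symmetric argument: trace over the other physical leg of $u^\dagger u=\mathds{1}$ and identify the $Y_2$-transfer matrix as $\mathds{1}$ via the CFII left-fixed-point equation in the $X_2, Y_2$ decomposition combined with the first identity of Theorem~\ref{thm:mpu}(c).

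For part~(a), to prove that $w_L=Y_2 X_2$ (red-contracted) is unitary, I compute $w_L^\dagger w_L$ as a closed tensor network of $Y_2, X_2, X_2^\dagger, Y_2^\dagger$ with an internal ultra-thick contraction between $Y_2$ and $Y_2^\dagger$, an internal physical contraction between $X_2$ and $X_2^\dagger$, and red contractions inside each copy of $w_L$. I apply the first identity of Theorem~\ref{thm:mpu}(c) to the $X_2 X_2^\dagger$ subdiagram---whose red legs already form a closed loop through the surrounding $Y_2, Y_2^\dagger$---collapsing it to the identity on the ultra-thick leg, and then apply the first identity of~(b) to the remaining $Y_2^\dagger\rho Y_2$ piece to obtain identity on the physical leg. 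Together these yield $w_L^\dagger w_L=\mathds{1}_{p\otimes d_l}$; an analogous computation gives $w_L w_L^\dagger=\mathds{1}_{d_l\otimes p}$. The unitarity of $w_R=X_1 Y_1$ then follows from the same plan using the second identities of Theorem~\ref{thm:mpu}(c) and \eqref{eq:MPUnice2} in place of the first.

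The hardest part is the careful bookkeeping of where $\rho$ enters and exits the diagrams. Theorem~\ref{thm:mpu}(c) mixes a bare (first) and a $\rho$-weighted (second) identity, and the corresponding identities in \eqref{eq:MPUnice2} exhibit the same asymmetry, so in the $w_L^\dagger w_L$ calculation the two must be applied in the order that makes the $\rho$ factor absent from the $X_2 X_2^\dagger$ step exactly the one supplied by the red loop surrounding $Y_2^\dagger Y_2$. This compatibility ultimately rests on $\rho$ being the right fixed point of the MPU transfer matrix and on the CFII condition holding for both $\mathcal{U}$ and $\mathcal{U}^\dagger$, and requires tracking the overall normalization factors (for example factors of $d$ produced by physical-leg tracings) that must cancel to leave the clean identities claimed in the corollary.
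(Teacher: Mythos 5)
Your argument has genuine gaps in both parts, and they share a common cause: you repeatedly try to apply the ``collapse'' identities of Theorem~\ref{thm:mpu}(c) (and of part~(b) itself) to sub-diagrams whose red legs are \emph{open}, i.e.\ contracted with neighbouring tensors, whereas those identities only say something about the specific closed contractions (red legs traced directly, or traced through $\rho$). In part~(b), after tracing out the $Y_1$ physical leg of $u^\dagger u=\mathds{1}$ you are left with the object $\tilde N_{r'r}=\sum_{p,c}(Y_1)^*_{r'pc}(Y_1)_{rpc}$ on the red index, and everything hinges on showing $\tilde N\propto\rho$. Your proposed justification does not deliver this: in the CFII right-fixed-point equation expanded in the $X_1Y_1$ decomposition, the $\rho$ sits on the $Y_1$ (right) virtual legs, while the second identity of Theorem~\ref{thm:mpu}(c) needs it on the $X_1$ legs; the $X_1$-part of the transfer matrix that actually appears is $X_1^\dagger X_1$ \emph{without} the $\rho$ weighting, which is controlled by nothing in Theorem~\ref{thm:mpu}. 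An identity pinning down a trace of $Y_1^\dagger Y_1$ over the physical and curly legs is of the same nature as Eqs.~\eqref{eq:MPUnice3}--\eqref{eq:MPUnice4}, which the paper only establishes under the \emph{additional} hypothesis that $\mc U^\dagger$ is simple --- not available here. In part~(a), writing out $w_L^\dagger w_L$ gives $\sum_{r,r'}\bigl[\sum_\alpha (Y_2)^*_{\alpha p'r'}(Y_2)_{\alpha pr}\bigr]\bigl[\sum_{p_2}(X_2)^*_{r'p_2\beta'}(X_2)_{rp_2\beta}\bigr]$: the $X_2^\dagger X_2$ factor has \emph{free} red indices $r,r'$, so Theorem~\ref{thm:mpu}(c)(i) (which concerns only the red-traced combination $\sum_r$) cannot be applied to it, and no $\rho$ ever appears in this diagram for identity~(b) to act on. The sentence about the red loop ``supplying'' the missing $\rho$ describes a mechanism that does not exist.

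The paper's proof sidesteps both issues. For~(a) it shows $w_L\otimes w_R=(\id\otimes v\otimes\id)(u\otimes u)(\id\otimes u^\dagger\otimes\id)$ by a purely combinatorial rearrangement of the three-leg tensors, so $w_L\otimes w_R$ is unitary; this only gives $w_L^\dagger w_L=\delta\mathds{1}$, $w_R^\dagger w_R=\delta^{-1}\mathds{1}$ for some $\delta>0$, and a second explicitly unitary combination forces $\delta=1$. For~(b) it starts not from $u^\dagger u=\mathds{1}$ but from the single-site simplicity relation (boundary fixed points $\mathds{1}$ and $\rho$ inserted into $\mc U^\dagger\mc U$ give the physical identity), decomposes both MPU tensors as $X_2Y_2$, and applies Theorem~\ref{thm:mpu}(c)(i) to the inner $X_2^\dagger X_2$ pair --- whose red legs in \emph{that} diagram really are traced against each other with no $\rho$, exactly as the identity requires --- leaving $Y_2^\dagger\rho Y_2=\mathds{1}$ directly. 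If you want to salvage your outline, you would need to first prove an analogue of Proposition~\ref{prop:MPUsplus} for the contractions you generate, which is more work than the corollary itself.
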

\begin{proof}
To prove part (a), first notice that
\begin{equation}
    \begin{tikzpicture}
                \tikzset{decoration={snake,amplitude=.4mm,segment length=2mm, post length=0mm, pre length=0mm}}

                \draw[thick] (0.0,-0.6) --++ (0.0,0.6);
                \draw[ultra thick] (0.0,0.0)--++(0.0,0.6);
                \draw[thick] (0.6,-0.6) --++ (0.0,0.6);
                \draw[decorate, thick] (0.6,0.0)--++ (0.0,0.6);
                \draw[thick] (0.6,0.6)--++(0.0,0.6);
                \draw[thick] (1.2,-0.6)--++(0.0,0.6);
                \draw[ultra thick] (1.2,0.0)--++(0.0,0.6);
                \draw[thick] (1.2,0.6)--++(0.0,0.6);
                \draw[thick] (1.8,-0.6)--++(0.0,0.6);
                \draw[decorate, thick] (1.8,0.0)--++(0.0,0.6);
                \draw[thick] (0.0,0.0)--++(0.6,0.0);
                \draw[thick] (1.2,0.0)--++(0.6,0.0);
                \draw[thick] (0.6,0.6)--++(0.6,0.0);
                
                \node[tensor] at (0.0,0.0) {};
                \node[tensor, fill=white] at (0.6,0.0) {};
                \node[tensor] at (1.2,0.0) {};
                \node[tensor, fill=white] at (1.8,0.0) {};
                \node[tensor, fill=white] at (0.6,0.6) {};
                \node[tensor] at (1.2,0.6) {};

                \node[] at (2.2,0.0) {$=$};

                \draw[thick] (2.6,-0.6)--++(0.0,0.6);
                \draw[ultra thick] (2.6,0.0)--++(0.0,0.6);
                \draw[thick] (2.6,0.0)--++(0.6,0.0);
                \draw[thick] (3.2,-1.2)--++(0.0,0.6);
                \draw[ultra thick] (3.2,-0.6)--++ (0.0,0.6);
                \draw[thick] (3.2,0.0) --++ (0.0,0.6);
                \draw[thick] (3.2,-0.6) --++ (0.6,0.0);
                \draw[thick] (3.8,-1.2) --++ (0.0,0.6);
                \draw[decorate, thick] (3.8,-0.6) --++ (0.0,0.6);
                \draw[thick] (3.8,0.0)--++(0.0,0.6);
                \draw[thick] (3.8,0.0)--++(0.6,0.0);
                \draw[thick] (4.4,-0.6)--++(0.0,0.6);
                \draw[decorate, thick] (4.4,0.0)--++(0.0,0.6);

                \node[tensor] at (2.6,0.0) {};
                \node[tensor] at (3.2,0.0) {};
                \node[tensor] at (3.2,-0.6) {};
                \node[tensor,fill=white] at (3.8,-0.6) {};
                \node[tensor, fill=white] at (3.8, 0.0) {};
                \node[tensor, fill=white] at (4.4,0.0) {};

                \node[] at (4.6,0.0) {$,$};
    \end{tikzpicture}
\end{equation}
so that
\begin{equation}
    w_L\otimes w_R= (\id\otimes v\otimes \id)(u\otimes u)(\id \otimes u^\dagger \otimes \id).
\end{equation}
This demonstrates that $w_L\otimes w_R$ is a unitary operator. Therefore, there exists $\delta>0$ such that $w_L^\dagger w_L=\delta$ and $w_R^\dagger w_R =\frac{1}{\delta}$. It remains to show that $\delta=1$. This follows since
\begin{equation}
    \begin{tikzpicture}
        \tikzset{decoration={snake,amplitude=.4mm,segment length=2mm, post length=0mm, pre length=0mm}}
        \draw[thick] (-0.6,-0.6) --++ (0.0,1.2);
        \draw[ultra thick] (-0.6,0.6) --++ (0.0,0.6);
        \draw[thick] (-0.6,0.6)--++(0.6,0.0);
        \draw[thick] (0.0,0.0) --++ (0.6,0.0);
        \draw[thick] (0.0,-0.6) --++ (0.0,0.6);
        \draw[ultra thick] (0.0,0.0) --++ (0.0,0.6);
        \draw[thick] (0.0,0.6) --++ (0.0,0.6);
        \draw[ultra thick] (0.6,-0.6) --++ (0.0,0.6);
        \draw[thick] (0.6,0.0) --++ (0.0,1.2);
    
        \node[tensor] at (-0.6,0.6) {};
        \node[tensor] at (0.0,0.0) {};
        \node[tensor] at (0.0,0.6) {};
        \node[tensor] at (0.6,0.0) {};

        \node[] at (1.0, 0.3) {$=$};

        \draw[thick] (1.4,-0.6) --++ (0.0,0.6);
        \draw[ultra thick] (1.4,0.0) --++ (0.0,1.2);
        \draw[thick] (1.4,0.0) --++ (0.6,0.0);
        \draw[thick] (2.0,-0.6) --++ (0.0,0.6);
        \draw[decorate, thick] (2.0,0.0) --++ (0.0,0.6);
        \draw[thick] (2.0,0.6)--++(0.0,0.6);
        \draw[thick] (2.0,0.6)--++(0.6,0.0);
        \draw[ultra thick] (2.6,-0.6)--++(0.0,1.2);
        \draw[thick] (2.6,0.6)--++(0.0,0.6);
         
        \node[tensor] at (1.4,0.0) {};
        \node[tensor, fill=white] at (2.0,0.0) {};
        \node[tensor, fill=white] at (2.0,0.6) {};
        \node[tensor] at (2.6,0.6) {};
    \end{tikzpicture}
\end{equation}
is unitary, implying that $\frac{1}{\delta^2}=1$.

For part(b), from the simplicity of the MPU tensor, we know that 
\begin{equation}
    \begin{tikzpicture}
        \draw[thick] (0.0,-0.4)--++(0.0,1.4);
        \draw[thick,red] (-0.4,0.0)--++(0.8,0.0);
        \draw[thick,red] (-0.4,0.6)--++(0.8,0.0);
        \draw[thick,red] (-0.4,0.0)--++(0.0,0.6);
        \draw[thick,red] (0.4,0.0)--++(0.0,0.6);
        \node[mpo] at (0.0,0.0) {};
        \node[mpo] at (0.0,0.6) {};
        \node[] at (0.2,0.7) {$\dagger$};
        \node[square, fill=white] at (0.4,0.3) {};
        \node[] at (0.6,0.3) {$\rho$};
    \end{tikzpicture}
\end{equation}
is the identity tensor. On the other hand, from the decomposition involving the $X_2$ tensors, we see that the above combination can be rewritten as
\begin{equation}
    \begin{tikzpicture}
        \draw[thick] (0.0,-1.3)--++(0.0,2.6);
        \draw[ultra thick] (0.0,-0.9)--++(0.0,0.6);
        \draw[ultra thick] (0.0,0.3)--++(0.0,0.6);
        \draw[thick, red] (-0.4,-0.3) --++ (0.4,0.0);
        \draw[thick, red] (-0.4,0.3)--++(0.4,0.0);
        \draw[thick, red] (0.0,0.9) --++ (0.4,0.0);
        \draw[thick,red] (0.0,-0.9)--++(0.4,0.0);
        \draw[red, thick] (-0.4,-0.3)--++(0.0,0.6);
        \draw[red, thick] (0.4,-0.9)--++(0.0,1.8);
        \node[tensor] at (0.0,-0.9) {};
        \node[tensor] at (0.0,-0.3) {};
        \node[tensor] at (0.0,0.3) {};
        \node[tensor] at (0.0,0.9) {};
        \node[square, fill=white] at (0.4,0.0) {};
        \node[] at (0.6, 0.0) {$\rho$};
        \node[] at (0.8, 0.0) {$,$};
    \end{tikzpicture}
\end{equation}
which after using Eq.~\eqref{eq:thm1c}, leads to the first of the claimed equalities. The second one follows from analogous computation but this time with using the decomposition of the MPU involving the $X_1$ tensors.
\end{proof}
Furthermore, we can also characterize different decompositions of a given MPU tensor. We have
\begin{lemma}
    \label{lem:deco}
    Let $X_i, Y_i$ and $\widetilde{X}_i, \widetilde{Y}_i$, with $i=1,2$, be two decompositions of the MPU tensor into correspondingly left- and right- invertible three-legged tensors, which satisfy conditions (a) and (b) in Theorem~\ref{thm:mpu}. Then there exist $\delta_i\in \C$ and unitaries $W_i$, $i=1,2$, such that \begin{equation}         
    \widetilde{X}_i=\delta_i X_i W_i, \qquad \widetilde{Y}_i=\frac{1}{\delta_i} W_i^\dagger Y_i.
    \end{equation}
    Furthermore, $\delta_1 \delta_{2}=1$.
\end{lemma}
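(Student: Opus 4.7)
The plan is to first produce invertible ``change of basis'' operators $M_i$ on each internal bond implementing the passage between the two decompositions, and then to exploit the unitarity of the maps $v$ and $\widetilde{v}$ from condition~(b) of Theorem~\ref{thm:mpu} to show that each $M_i$ factors as a scalar times a unitary, with the scalars multiplying to $1$.

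For the first step, I regard $X_i$ and $\widetilde{X}_i$ as linear maps out of the internal (curly or thick) bond space, and $Y_i$, $\widetilde{Y}_i$ as maps into it. Right-invertibility (surjectivity) of $Y_i$ and $\widetilde{Y}_i$ gives $\mathrm{Im}(X_i) = \mathrm{Im}(\mathcal{U}) = \mathrm{Im}(\widetilde{X}_i)$; combined with left-invertibility (injectivity) of both $X_i$ and $\widetilde{X}_i$, this yields a unique invertible operator $M_i$ on the internal bond satisfying $\widetilde{X}_i = X_i M_i$. Inserting this into $X_i Y_i = \widetilde{X}_i \widetilde{Y}_i$ and applying a left inverse of $X_i$ then forces $\widetilde{Y}_i = M_i^{-1} Y_i$, so the \emph{same} $M_i$ implements both substitutions. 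At this stage only invertibility of $M_i$ has been used; unitarity will come from the second step.

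For the second step, I recompute the unitary $v = X_1 \cdot X_2$ for the tilded decomposition. Because each $M_i$ acts on the internal (curly or thick) leg while the contraction inside $v$ is over a different (red virtual) leg, the $M_i$ commute past the contraction, giving $\widetilde{v} = v\,(M_1 \otimes M_2)$. Unitarity of both $v$ and $\widetilde{v}$ forces $M_1 \otimes M_2$ to be unitary; expanding $(M_1^\dagger M_1) \otimes (M_2^\dagger M_2) = I$ and invoking positivity of each factor, $M_i^\dagger M_i$ must be a positive scalar multiple of the identity, so $M_i = \delta_i W_i$ with $W_i$ unitary and $|\delta_1 \delta_2| = 1$. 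Any residual overall phase is absorbed into one of the $W_i$ to arrange $\delta_1 \delta_2 = 1$. I expect the only real obstacle to be the careful diagrammatic bookkeeping in the first step that ensures a single $M_i$ per bond rather than two independent matrices; a convenient sanity check is to repeat the whole argument using $u$ in place of $v$, which produces $\widetilde{u} = (M_2^{-1} \otimes M_1^{-1})\,u$ and yields an equivalent conclusion.
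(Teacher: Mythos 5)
Your proof is correct and follows essentially the same route as the paper: you construct the transition operators $M_i$ on the internal bonds (the paper writes them directly as $K_i=\underline{\widetilde{X}_i}X_i=\widetilde{Y}_i\underline{Y_i}$ using the one-sided inverses, which is your image argument in compressed form), and then use the unitarity of one of the two-layer gates to conclude that $M_1\otimes M_2$ is unitary, hence each $M_i$ is a scalar multiple of a unitary with the scalars' product equal to one. The only cosmetic difference is that you invoke the $v$ gate where the paper invokes $u$; both give the same conclusion.
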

\begin{proof}
For $i=1,2$, we define $\underline{X_i}$ as the left inverse of $X_i$ and $\underline{Y_i}$ as the right inverse of $Y_i$ (and similarly for $\widetilde{X_i}$ and $\widetilde{Y}_i$). For the two decompositions of the MPU tensor, we define 
\begin{equation}
    K_i\equiv \underline{\widetilde{X}_i}X_i=\widetilde{Y}_i \underline{Y_i}, \quad i=1,2,
\end{equation}
so that $X_i=\widetilde{X}_i K_i$ and $\widetilde{Y}_i=K_i Y_i$, $i=1,2$. The operator $\widetilde{u}$, defined as in \eqref{eq:uv} but for the operators $\widetilde{Y}_i$, is unitary, and hence $K_1\otimes K_2$ is so. Therefore, there exists a positive scalar $\delta$ such that the operators $W_1=\frac{1}{\delta}K_1$ and $W_2=\delta K_2$ are unitary. This, together with the definition of the $K_i$ operators, completes the proof.
\end{proof}
For completeness, we provide another set of conditions in the next proposition, analogous to those of Theorem~\ref{thm:mpu} and Corollary~\ref{cor:mpu}. 
\begin{proposition}
    \label{prop:MPUsplus}
    Let $\mathcal{U}$ be an MPU tensor satisfying the assumptions from Theorem~\ref{thm:mpu} and such that $\mathcal{U}^\dagger$ is also simple. Then
    \begin{subequations}
        \begin{equation}
        \begin{tikzpicture}
        \tikzset{decoration={snake,amplitude=.4mm,segment length=2mm, post length=0mm, pre length=0mm}}
            
        \draw[thick] (0.0,-0.3) --++ (0.0,0.6);
        
        \draw[decorate, thick] (0.0,-0.35) --++ (0.0, -0.3);
        \draw[decorate, thick] (0.0,0.35) --++ (0.0, 0.3);
        \draw[red, thick] (0.0,0.3) --++ (-0.4,0.0);
        \draw[red, thick] (0.0,-0.3) --++ (-0.4,0.0);
        \draw[red, thick] (-0.4,-0.3)--++(0.0,0.6);

        \node[tensor, fill=white] at (0.0,0.3) {};
        \node[tensor, fill=white] at (0.0,-0.3) {};

        \node[] at (0.3,-0.3) {$Y_1^\dagger$};
        \node[] at (0.3,0.3) {$Y_1$};
        \node[] at (0.8,0.0) {$= \frac{d}{d_r}$};
        \draw[decorate, thick] (1.3,-0.4) --++ (0.0,0.8);
        \node[] at (1.5,0.0) {$,$};
        
        \def\Sc{1.2};

        \draw[thick] (0.0+\Sc*\Shift,-0.3) --++ (0.0,0.6);
        
        \draw[ultra thick] (0.0+\Sc*\Shift,-0.35) --++ (0.0, -0.3);
        \draw[ultra thick] (0.0+\Sc*\Shift,0.35) --++ (0.0, 0.3);
        \draw[red, thick] (0.0+\Sc*\Shift,0.3) --++ (0.4,0.0);
        \draw[red, thick] (0.0+\Sc*\Shift,-0.3) --++ (0.4,0.0);
        \draw[red, thick] (0.4+\Sc*\Shift,-0.3)--++(0.0,0.6);

        \node[tensor] at (0.0+\Sc*\Shift,0.3) {};
        \node[tensor] at (0.0+\Sc*\Shift,-0.3) {};
        \node[square, fill=white] at (0.4+\Sc*\Shift,0.0) {};

        \node[] at (-0.3+\Sc*\Shift,-0.3) {$Y_2^\dagger$}; 
        \node[] at (-0.3+\Sc*\Shift,0.3) {$Y_2$}; 
        \node[] at (0.65+\Sc*\Shift,0.0) {$\rho$};
        
        \node[] at (1.2+\Sc*\Shift,0.0) {$= \frac{d}{d_l}$};
        \draw[ultra thick] (1.7+\Sc*\Shift,-0.4) --++ (0.0,0.8);
        \node[] at (1.9+\Sc*\Shift,0.0) {$,$};
        \end{tikzpicture}
        \label{eq:MPUnice3}
        \end{equation}
        \begin{equation}
        \label{eq:MPUnice4}
            \begin{tikzpicture}
                \tikzset{decoration={snake,amplitude=.4mm,segment length=2mm, post length=0mm, pre length=0mm}}
        \draw[decorate, thick] (0.0,-0.3) --++ (0.0,0.6);
        
        \draw[thick] (0.0,-0.35) --++ (0.0, -0.3);
        \draw[thick] (0.0,0.35) --++ (0.0, 0.3);
        \draw[red, thick] (0.0,0.3) --++ (0.4,0.0);
        \draw[red, thick] (0.0,-0.3) --++ (0.4,0.0);
        \draw[red, thick] (0.4,-0.3)--++(0.0,0.6);

        \node[tensor, fill=white] at (0.0,0.3) {};
        \node[tensor, fill=white] at (0.0,-0.3) {};
        \node[square, fill=white] at (0.4,0.0) {};

        \node[] at (-0.3,-0.3) {$X_1^\dagger$}; 
        \node[] at (-0.3,0.3) {$X_1$}; 
        \node[] at (0.65,0.0) {$\rho$};
        
        \node[] at (1.2,0.0) {$= \frac{d_r}{d}$};
        \draw[thick] (1.6,-0.4) --++ (0.0,0.8);
        \node[] at (1.8,0.0) {$,$};
        
        \def\Sc{1.3};

        \draw[ultra thick] (0.0+\Sc*\Shift,-0.3) --++ (0.0,0.6);
        
        \draw[thick] (0.0+\Sc*\Shift,-0.35) --++ (0.0, -0.3);
        \draw[thick] (0.0+\Sc*\Shift,0.35) --++ (0.0, 0.3);
        \draw[red, thick] (0.0+\Sc*\Shift,0.3) --++ (-0.4,0.0);
        \draw[red, thick] (0.0+\Sc*\Shift,-0.3) --++ (-0.4,0.0);
        \draw[red, thick] (-0.4+\Sc*\Shift,-0.3)--++(0.0,0.6);

        \node[tensor] at (0.0+\Sc*\Shift,0.3) {};
        \node[tensor] at (0.0+\Sc*\Shift,-0.3) {};

        \node[] at (0.3+\Sc*\Shift,-0.3) {$X_2^\dagger$};
        \node[] at (0.3+\Sc*\Shift,0.3) {$X_2$};
        \node[] at (1.0+\Sc*\Shift,0.0) {$=\frac{d_l}{d}$};
        \draw[thick] (1.4+\Sc*\Shift,-0.4) --++ (0.0,0.8);
        \node[] at (1.6+\Sc*\Shift,0.0) {$.$};
       
            \end{tikzpicture}
        \end{equation}
    \end{subequations}
\end{proposition}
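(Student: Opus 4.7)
My plan is to apply Theorem~\ref{thm:mpu}(c) and Corollary~\ref{cor:mpu}(b) not to $\mc U$ itself but to its adjoint $\mc U^\dagger$, which is simple by hypothesis and inherits CFII from $\mc U$ (as noted after Eq.~\eqref{eq:cfii}), and then translate the resulting identities back to the tensors $X_i, Y_i$ of $\mc U$. The key observation is that the up-down reflection together with complex conjugation that sends $\mc U$ to $\mc U^\dagger$ interchanges the two decomposition channels of Theorem~\ref{thm:mpu}(a): the curly channel of $\mc U$ (dimension $d_r$) becomes the ultra-thick channel of $\mc U^\dagger$, and vice versa. Combining this with the gauge freedom of Lemma~\ref{lem:deco}, one may choose a decomposition $(\tilde X_1, \tilde Y_1, \tilde X_2, \tilde Y_2)$ of $\mc U^\dagger$ satisfying
\begin{equation*}
    \tilde{X}_2 = \lambda\, Y_1^\dagger,\ \tilde{Y}_2 = \lambda^{-1}\, X_1^\dagger,\ \tilde{X}_1 = \mu\, Y_2^\dagger,\ \tilde{Y}_1 = \mu^{-1}\, X_2^\dagger,
\end{equation*}
for some scalars $\lambda, \mu$; the unitary part $W_i$ of the gauge freedom from Lemma~\ref{lem:deco} drops out of the final identities since these pair tensors with their own adjoints.

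Each of the four identities in the proposition then follows by applying the corresponding one of Theorem~\ref{thm:mpu}(c) and Corollary~\ref{cor:mpu}(b) to $\mc U^\dagger$ and substituting. To illustrate on the $X_2$ equation of~\eqref{eq:MPUnice4}: the second identity of Corollary~\ref{cor:mpu}(b), applied to $\mc U^\dagger$, reads $\tilde{Y}_1 \tilde{Y}_1^\dagger = \id$ on physical (with the curly leg of $\mc U^\dagger$ contracted and a LEFT red loop). Substituting $\tilde{Y}_1 = \mu^{-1} X_2^\dagger$ and recognizing that the curly leg of $\mc U^\dagger$ is geometrically identified with the ultra-thick leg of $\mc U$, this becomes $|\mu|^{-2}$ times the LHS of the claimed identity, set equal to $\id_d$. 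The other three claims follow by the same recipe; any $\rho$-insertions in right-loop statements for $\mc U^\dagger$ translate correctly provided the CFII right fixed point of $\mc U^\dagger$ matches that of $\mc U$ up to the gauge scalars, while LEFT/RIGHT orientations of the red loop are preserved by the reflection. The prefactor $|\mu|^{-2}$ is then pinned down by tracing both sides: $|\mu|^{-2}\tr(X_2^\dagger X_2) = d$, and since Theorem~\ref{thm:mpu}(c) first identity gives $\tr_D(X_2^\dagger X_2) = \id_{d_l}$ and hence $\tr(X_2^\dagger X_2) = d_l$, one finds $|\mu|^2 = d_l/d$, reproducing exactly the claimed prefactor. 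The remaining three prefactors follow analogously, using $\tr(Y_i^\dagger Y_i) = d$ (from $Y_1^\dagger Y_1 = \id_d$ in Corollary~\ref{cor:mpu}(b)) together with the relation $d^2 = d_l d_r$.

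The main obstacle I anticipate is to verify rigorously that the right fixed point of $\mc U^\dagger$'s CFII transfer matrix coincides with $\rho$ of $\mc U$ (up to the gauge scalars absorbed in $\lambda, \mu$), so that the $\rho$-weighted red-loop identities for $\mc U^\dagger$ indeed translate into the $\rho$-weighted identities in~\eqref{eq:MPUnice3} and~\eqref{eq:MPUnice4}. This reduces to a symmetry check on the CFII fixed-point equations of Eq.~\eqref{eq:cfii} under swap of the two physical indices on $\mc U$, together with the positive-diagonality of $\rho$ guaranteed by the CFII convention.
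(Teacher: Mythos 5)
Your proposal is correct, and it takes a genuinely different route from the paper. The paper builds an auxiliary three-layer network $\mc U\,\mc U^\dagger\,\mc U$ with particular virtual-leg closures and a $\rho$ insertion, reduces it once via the simplicity of $\mc U$ and once via the simplicity of $\mc U^\dagger$, concludes that the resulting two-leg tensor $s$ commutes with the MPO tensor, invokes normality/injectivity to get $s=\delta\,\mc U$, and only then unpacks into the $X,Y$ tensors; the prefactor is fixed by tracing, exactly as you do. You instead observe that $(Y_2^\dagger, X_2^\dagger, Y_1^\dagger, X_1^\dagger)$ is a valid decomposition of $\mc U^\dagger$ satisfying Theorem~\ref{thm:mpu}(a)--(b) with the two internal channels interchanged, and then use Lemma~\ref{lem:deco} to relate it to a decomposition of $\mc U^\dagger$ that also satisfies (c) and Corollary~\ref{cor:mpu}(b); the unitaries $W_i$ indeed drop out (either by conjugating a multiple of the identity or via $W_i^\dagger W_i=\I$ inside the contraction), and your trace argument for the four prefactors is the same as the paper's and gives the right constants using $d^2=d_ld_r$. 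What your route buys is structural economy: it reuses the uniqueness of decompositions rather than re-deriving a commutant statement from scratch, and it makes explicit the dagger-duality between the curly and thick channels, which the paper leaves implicit. What it costs is exactly the point you flag: you must check that $\mc U^\dagger$ is in CFII with the \emph{same} left and right fixed points $\I$ and $\rho$; this is immediate because the transfer matrix of $\mc U^\dagger$ is the complex conjugate of that of $\mc U$ (with the two layers swapped) and $\I,\rho$ are real and diagonal, and the paper already asserts the CFII part in the remark after Eq.~\eqref{eq:cfii}. Both proofs use the simplicity of $\mc U^\dagger$ in an essential way, yours to invoke Theorem~\ref{thm:mpu} for $\mc U^\dagger$, the paper's in the second reduction of its auxiliary diagram.
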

\begin{proof}
We consider the following composition of tensors:
\begin{equation}
    \begin{tikzpicture}
        \draw[thick, red] (-0.4,0.0) --++ (1.4,0.0);
        \draw[thick, red] (-0.4,0.6) --++ (1.4,0.0);
        \draw[thick, red] (-0.4,-0.6) --++ (1.4,0.0);
        \draw[thick] (0.0,-0.8) --++ (0.0, 1.6);
        \draw[thick] (0.6,-0.8) --++ (0.0, 1.6);
        \draw[thick, red] (-0.4,0.0) --++ (0.0, 0.6);
        \draw[thick, red] (1.0,-0.6) --++ (0.0,0.6);
        \node[square, fill=white] at (1.0,-0.3) {};
        \node[mpo]  at (0.0,0.0) {};
        \node[mpo]  at (0.6,0.0) {};
        \node[mpo]  at (0.0,-0.6) {};
        \node[mpo]  at (0.6,-0.6) {};
        \node[mpo]  at (0.0,0.6) {};
        \node[mpo]  at (0.6,0.6) {};
        \node[] at (0.2,0.1) {$\dagger$};
        \node[] at (0.8,0.1) {$\dagger$};
        \node[] at (1.2,-0.3) {$\rho$};
        \node[] at (1.3,0.0) {$.$};
    \end{tikzpicture}
\end{equation}
Using the simplicity of $\mc U$, this diagram can be reduced to
\begin{equation}
    \begin{tikzpicture}
         \draw[thick, red] (-0.4,-0.6) --++ (1.4,0.0);
         \draw[thick, red] (0.3,0.0) --++ (0.7,0.0);
         \draw[thick, red] (0.3,0.6) --++ (0.7,0.0);
         \draw[thick] (0.6,-0.8) --++ (0.0,1.6);
         \draw[thick, red] (0.3,0.0) --++ (0.0,0.6);
         \draw[thick, red] (1.0,-0.6) --++ (0.0,0.6);
         \draw[thick] (0.0,-0.8) --++ (0.0,0.4);

         \node[square, fill=white] at (1.0,-0.3) {};
         \node[mpo]  at (0.0,-0.6) {};
        \node[mpo]  at (0.6,-0.6) {};
        \node[mpo] at (0.6,0.0) {};
        \node[mpo] at (0.6,0.6) {};
        \node[] at (1.2,-0.3) {$\rho$};
        \node[] at (0.8,0.1) {$\dagger$};

        \node[] at (1.5, 0.0) {$\equiv$};

        \def\Sc{1.1};
        \def\dy{0.6};

        \draw[thick, red] (-0.4+\Sc*\Shift,-0.6+\dy) --++ (1.4,0.0);
        \draw[thick] (0.0+\Sc*\Shift,-0.8+\dy) --++ (0.0,0.4);
        \draw[thick] (0.6+\Sc*\Shift,-0.8+\dy) --++ (0.0,0.4);
        \node[mpo]  at (0.0+\Sc*\Shift,-0.6+\dy) {};
        \node[draw, diamond, fill=white, inner sep=1.5pt]  at (0.6+\Sc*\Shift,-0.6+\dy) {};
        \node[] at (0.75 +\Sc*\Shift, -0.5+\dy) {$s$};
        \node[] at (1.2+\Sc*\Shift,0){$,$};    
    \end{tikzpicture}
\end{equation}
while using the simplicity of $\mc U^\dagger$ we end up with
\begin{equation}
    \begin{tikzpicture}
         \draw[thick, red] (-0.4,0.6) --++ (1.4,0.0);
         \draw[thick, red] (-0.4,0.0) --++ (0.9,0.0);
         \draw[thick, red] (-0.4,-0.6) --++ (0.9,0.0);
         \draw[thick, red] (-0.4,0.0) --++ (0.0,0.6);
         \draw[thick] (0.0,-0.8) --++ (0.0,1.6);
         \draw[thick] (0.6,0.4) --++ (0.0,0.4);
         \draw[thick, red] (0.5,-0.6) --++ (0.0,0.6);

         \node[mpo] at (0.0,0.0) {};
         \node[mpo] at (0.0,0.6) {};
         \node[mpo] at (0.6,0.6) {};
         \node[mpo] at (0.0,-0.6) {};
         \node[square, fill=white] at (0.5,-0.3) {};

         \node[] at (0.2,0.1) {$\dagger$};
         \node[] at (0.7,-0.3) {$\rho$};

        \node[] at (1.5, 0.0) {$\equiv$};

        \def\Sc{1.1};
        \def\dy{0.6};

        \draw[thick, red] (-0.4+\Sc*\Shift,-0.6+\dy) --++ (1.4,0.0);
        \draw[thick] (0.0+\Sc*\Shift,-0.8+\dy) --++ (0.0,0.4);
        \draw[thick] (0.6+\Sc*\Shift,-0.8+\dy) --++ (0.0,0.4);
        \node[mpo]  at (0.6+\Sc*\Shift,-0.6+\dy) {};
        \node[draw, diamond, fill=white, inner sep=1.5pt]  at (0.0+\Sc*\Shift,-0.6+\dy) {};
        \node[] at (0.15 +\Sc*\Shift, -0.5+\dy) {$s$};
        \node[] at (1.2+\Sc*\Shift,0){$.$}; 
    \end{tikzpicture}
\end{equation}
Thus, the tensor labeled $s$ commutes with the MPU tensor $\mc U$. Because $\mc U$ is normal, this implies that there exists $\delta\in\C$ such that
\begin{equation}
    \begin{tikzpicture}
        \draw[thick, red] (-0.3,0.0) --++ (0.6,0.0);
        \draw[thick] (0.0,-0.3) --++ (0.0,0.6); 
        \node[draw, diamond, fill=white, inner sep=1.5pt]  at (0.0,0.0) {};
        \node[] at (0.15,0.15) {$s$};
        \node[] at (0.6,0.0) {$= \delta$};
        \draw[thick, red] (0.9,0.0) --++ (0.6,0.0);
        \draw[thick] (1.2,-0.3) --++ (0.0,0.6); 
        \node[mpo] at (1.2,0.0) {};
        \node[] at (1.6,0.0) {$.$};
    \end{tikzpicture}
\end{equation}
Indeed, assume that $\mc U$ was injective so that it had an inverse. Then,
\begin{equation}
    \begin{tikzpicture}
    \def\dx{0.5}
    \def\ddx{0.2}
    \def\dy{0.3}
    \draw[thick, red] (-\dx-\ddx,0.0) -- (\dx+\ddx,0.0);
    \draw[thick] (-\dx,-\dy) --++ (0,2*\dy);
    \draw[thick] (0,-\dy) --++ (0,2*\dy);
    \draw[thick] (\dx,-\dy) --++ (0,2*\dy);
    \node[mpo] at (\dx,0.0) {};
    \node[mpo] at (0,0.0) {};
    \node[draw, diamond, fill=white, inner sep=1.5pt]  at (-\dx,0.0) {};
    \node[] at (-\dx+0.15,0.15) {$s$};
    \end{tikzpicture}
    =
    \begin{tikzpicture}
    \def\dx{0.5}
    \def\ddx{0.2}
    \def\dy{0.3}
    \draw[thick, red] (-\dx-\ddx,0.0) -- (\dx+\ddx,0.0);
    \draw[thick] (-\dx,-\dy) --++ (0,2*\dy);
    \draw[thick] (0,-\dy) --++ (0,2*\dy);
    \draw[thick] (\dx,-\dy) --++ (0,2*\dy);
    \node[mpo] at (-\dx,0.0) {};
    \node[mpo] at (0,0.0) {};
    \node[draw, diamond, fill=white, inner sep=1.5pt]  at (\dx,0.0) {};
    \node[] at (\dx+0.15,0.15) {$s$};
    \end{tikzpicture}
    \implies 
    \begin{tikzpicture}
    \def\dx{0.35}
    \def\ddx{0.25}
    \def\dy{0.3}
    \draw[thick, red] (-\dx-\ddx,0.0) -- (-\dx+\ddx,0.0);
    \draw[thick, red] (\dx-\ddx,0.0) -- (\dx+\ddx,0.0);
    \draw[thick] (-\dx,-\dy) --++ (0,2*\dy);
    \draw[thick] (\dx,-\dy) --++ (0,2*\dy);
    \node[mpo] at (\dx,0.0) {};
    \node[draw, diamond, fill=white, inner sep=1.5pt]  at (-\dx,0.0) {};
    \node[] at (-\dx+0.15,0.15) {$s$};
    \end{tikzpicture}
    =
    \begin{tikzpicture}
    \def\dx{-0.35}
    \def\ddx{0.25}
    \def\dy{0.3}
    \draw[thick, red] (-\dx-\ddx,0.0) -- (-\dx+\ddx,0.0);
    \draw[thick, red] (\dx-\ddx,0.0) -- (\dx+\ddx,0.0);
    \draw[thick] (-\dx,-\dy) --++ (0,2*\dy);
    \draw[thick] (\dx,-\dy) --++ (0,2*\dy);
    \node[mpo] at (\dx,0.0) {};
    \node[draw, diamond, fill=white, inner sep=1.5pt]  at (-\dx,0.0) {};
    \node[] at (-\dx+0.15,0.15) {$s$};
    \end{tikzpicture},
\end{equation}
from applying the inverse of the middle tensor. In case $\mc U$ is not injective, we just need to replace the middle tensor with enough copies so that their blocking is injective, which we can do due to normality. Now, in terms of the $X,Y$ tensors, we have
\begin{equation}
    \begin{tikzpicture}
        \tikzset{decoration={snake,amplitude=.4mm,segment length=2mm, post length=0mm, pre length=0mm}}

        \draw[thick] (0.0,-1.9)--++(0.0,0.4);
        \draw[decorate, thick] (0.0,-1.5)--++(0.0,0.6);
        \draw[thick] (0.0,-0.9) --++ (0.0,0.6);
        \draw[decorate,thick] (0.0,-0.3)--++(0.0,0.6);
        \draw[thick] (0.0,0.3)--++(0.0,0.6);
        \draw[decorate,thick] (0.0,0.9)--++(0.0,0.6);
        \draw[thick] (0.0,1.5)--++(0.0,0.4);
        \draw[thick,red] (0.0,1.5) --++ (0.4,0.0);
        \draw[thick,red] (0.0,0.9) --++ (-0.4,0.0);
        \draw[thick,red] (0.0,0.3) --++ (-0.4,0.0); 
        \draw[thick, red] (-0.4,0.3) --++ (0.0,0.6);
        \draw[thick,red] (0.0,-0.3)--++(0.4,0.0);
        \draw[thick,red] (0.0,-0.9)--++(0.4,0.0);
        \draw[thick, red] (0.4,-0.9)--++(0.0,0.6);
        \draw[thick,red] (0.0,-1.5)--++(-0.4,0.0);

        \node[tensor, fill=white] at (0.0,1.5) {};
        \node[tensor, fill=white] at (0.0,0.9) {};
        \node[tensor, fill=white] at (0.0,0.3) {};
        \node[tensor, fill=white] at (0.0,-0.3) {};
        \node[tensor, fill=white] at (0.0,-0.9) {};
        \node[tensor, fill=white] at (0.0,-1.5) {};
        \node[square, fill=white] at (0.4,-0.6) {};
        \node[] at (0.6,-0.6) {$\rho$};
        \node[] at (0.3,-1.5) {$Y_1$};
        \node[] at (-0.3,-0.9) {$X_1$};
        \node[] at (-0.3,-0.3) {$X_1^\dagger$};
        \node[] at (0.3,0.3) {$Y_1^\dagger$};
        \node[] at (0.3,0.9) {$Y_1$};
        \node[] at (-0.3,1.5) {$X_1$};

        \node[] at (1.0,0.0) {$=\delta$};
        \draw[decorate, thick] (2.0,-0.3)--++(0.0,0.6);
        \draw[thick, red] (2.0,-0.3)--++(-0.4,0.0);
        \draw[thick, red] (2.0,0.3)--++(0.4,0.0);
        \draw[thick] (2.0,0.3)--++(0.0,0.4);
        \draw[thick] (2.0,-0.3)--++(0.0,-0.4);

        \node[tensor, fill=white] at (2.0,-0.3) {};
        \node[tensor, fill=white] at (2.0,0.3) {};
        \node[] at (2.3,-0.3) {$Y_1$};
        \node[] at (1.7,0.3) {$X_1$};
        \node[] at (2.8,0.0) {$.$};
    \end{tikzpicture}
\end{equation}
Using the (one-sided) invertibility of $X_1, Y_1$, and Theorem~\ref{thm:mpu}, we end up with the conclusion that
\begin{equation}
    \begin{tikzpicture}
        \tikzset{decoration={snake,amplitude=.4mm,segment length=2mm, post length=0mm, pre length=0mm}}
            
        \draw[thick] (0.0,-0.3) --++ (0.0,0.6);
        
        \draw[decorate, thick] (0.0,-0.35) --++ (0.0, -0.3);
        \draw[decorate, thick] (0.0,0.35) --++ (0.0, 0.3);
        \draw[red, thick] (0.0,0.3) --++ (-0.4,0.0);
        \draw[red, thick] (0.0,-0.3) --++ (-0.4,0.0);
        \draw[thick, red] (-0.4,-0.3)--++(0.0,0.6);

        \node[tensor, fill=white] at (0.0,0.3) {};
        \node[tensor, fill=white] at (0.0,-0.3) {};

        \node[] at (0.3,-0.3) {$Y_1^\dagger$};
        \node[] at (0.3,0.3) {$Y_1$};
        \node[] at (0.8,0.0) {$= \delta$};
        \draw[decorate, thick] (1.3,-0.4) --++ (0.0,0.8);
        \node[] at (1.5,0.0) {$.$};
    \end{tikzpicture}
\end{equation}
Taking the trace of both sides of this equation and using Corollary \ref{cor:mpu}, we infer that $\delta=\frac{d}{d_r}$. The remaining equalities follow from analogous considerations.
\end{proof}

The decomposition in terms of three-legged tensors allows us to represent any simple MPU as a two-layer quantum circuit made of $u$ and $v$ gates, or as a ``staircase circuit'' of $w$ gates. Also, a topological index $\log(d_r/d_l)$ can be defined for an MPU in terms of the quotient of dimensions of the virtual legs in these decompositions, measuring the disparity of the ``information flow'' from left to right and from right to left. Finally, this decomposition will be the key to our construction, as it gives a natural way to ``truncate'' the MPU to a finite region, preserving its unitarity. This will give rise to defects in the boundary that can be moved by unitary movement operators and re-fused by unitary fusion operators, thus effectuating the action of the symmetry.

\subsection{MPU group representations}

We now consider a group $G$ and a (linear) representation $U_g$ of $G$ made of \textit{simple injective} matrix product unitaries, each given by a four-legged tensor $\mc U_g$. Thanks to the results from \cite{semiinjective}, from the group property $U_gU_h = U_{gh}$ and the injectivity follows the existence of fusion tensors $F^<_{g,h}, F^>_{g,h}$, such that
\begin{equation}
\begin{tikzpicture}[baseline=0.2cm]
    \def\dx{0.5}
    \def\ddx{0.3}
    \def\dy{0.5}
    \def\ddy{0.2}
    \def\dddy{0.3}
    \draw[thick,red] (-\ddx, 0) --++ (5*\dx+2*\ddx, 0);
    \draw[thick,red] (-\ddx, \dy) --++ (5*\dx+2*\ddx, 0);
    \foreach \x in {0, \dx, 2*\dx, 3*\dx, 4*\dx, 5*\dx}{
    \draw[thick] (\x, -\ddy) --++ (0, \dy+2*\ddy);
    \foreach \y in {0,\dy}{
    \node[mpo] at (\x,\y) {};};}
    \node[] at (-\ddx-0.2,\dy) {$g$};
    \node[] at (-\ddx-0.2,0) {$h$}; 
    \node[] at (0.5*\dx, \dy + \ddy + \dddy) {$\overbrace{\hspace{.8 cm}}^{m}$};
    \node[] at (2.5*\dx, \dy + \ddy + \dddy) {$\overbrace{\hspace{.8 cm}}^{n+1}$};
    \node[] at (4.5*\dx, \dy + \ddy + \dddy) {$\overbrace{\hspace{.8 cm}}^{m}$};
\end{tikzpicture}
=
\begin{tikzpicture}[baseline=0.2cm]
    \def\dx{0.5}
    \def\dxb{0.4}
    \def\ddx{0.3}
    \def\dy{0.5}
    \def\ddy{0.2}
    \def\dddy{0.3}
    \draw[thick,red] (-\ddx, 0) -- (\dx+\dxb, 0);
    \draw[thick,red] (2*\dx+3*\dxb, 0) -- (3*\dx+4*\dxb+\ddx, 0);
    \draw[thick,red] (\dx+\dxb, \dy/2)-- (2*\dx+3*\dxb, \dy/2);
    \draw[thick,red] (-\ddx, \dy) -- (\dx+\dxb, \dy);
    \draw[thick,red] (2*\dx+3*\dxb, \dy) -- (3*\dx+4*\dxb+\ddx, \dy);
    \foreach \x in {0, \dx, 2*\dx+4*\dxb, 3*\dx+4*\dxb}{
    \draw[thick] (\x, -\ddy) --++ (0, \dy+2*\ddy);
    \foreach \y in {0,\dy}{
    \node[mpo] at (\x,\y) {};};}
    \foreach \x in {\dx+2*\dxb, 2*\dx+2*\dxb}{
    \draw[thick] (\x, \dy/2-\ddy) --++ (0, 2*\ddy);
    \node[mpo] at (\x,\dy/2) {};}
    \pic () at (\dx+\dxb,0.) {fusR=\dy///};
    \pic () at (2*\dx+3*\dxb,0.) {fusL=\dy///};
    \node[] at (-\ddx-0.2,\dy) {$g$};
    \node[] at (-\ddx-0.2,0) {$h$};
    \node[] at (1.5*\dx+2*\dxb,\dy/2-0.35) {$gh$};
    \node[] at (\dx+\dxb,-0.35) {$F^>_{g,h}$};
    \node[] at (2*\dx+3*\dxb,-0.35) {$F^<_{g,h}$};
    \node[] at (0.5*\dx, \dy + \ddy + \dddy) {$\overbrace{\hspace{.8 cm}}^{m}$};
    \node[] at (1.5*\dx+2*\dxb, \dy + \ddy + \dddy) {$\overbrace{\hspace{.8 cm}}^{n+1}$};
    \node[] at (2.5*\dx+4*\dxb, \dy + \ddy + \dddy) {$\overbrace{\hspace{.8 cm}}^{m}$};
\end{tikzpicture}  
\label{eq:fusion_1}
\end{equation}
and
\begin{equation}
    \begin{tikzpicture}[scale=\TikzScaling]
         \draw[thick, red] (0.1,0.3) -- (0.4,0.3);
         \draw[thick, red] (1.6,0.3) -- (1.9,0.3);
         \pic (v) at (0.4,0.3) {V=//};
         \pic (w) at (1.6,0.3) {W=//};
         \draw[thick,red] (0.758, 0.014) --++ (0.485,0);
         \draw[thick,red] (0.758, 0.585) --++ (0.485,0);
         \node[anchor=north,inner sep=6pt] at (w-bottom) {$F^>_{g,h}$};
         \node[anchor=north,inner sep=6pt] at (v-bottom) {$F^<_{g,h}$};
         \draw[thick] (0.8,-0.2) --++ (0,1);
         \node[mpo] at (0.8,0) {};
         \node[mpo] at (0.8,0.6) {};     
         \draw[thick] (1.2,-0.2) --++ (0,1);
         \node[mpo] at (1.2,0) {};
         \node[mpo] at (1.2,0.6) {};  

         \node[] at (1, 1) {$\overbrace{}^{n}$};
         
         \node[thick] at (2.1,0.28) {$=$};

         \def\Sc{1.30};

         \node[] at (-1.8+\Sc*\Shift,0.75) {$g$};
         \node[] at (-1.8+\Sc*\Shift,0.15) {$h$}; 
         
         \draw[thick] (0.0+\Sc*\Shift,-0.2) --++ (0,1);
         \draw[thick] (0.4+\Sc*\Shift,-0.2) --++ (0,1);

         \draw[thick, red] (-0.3+\Sc*\Shift,0.3) -- (0.7+\Sc*\Shift,0.3);
         \node[square, fill=violet] at (0.0+\Sc*\Shift,0.3) {};
         \node[square, fill=violet] at (0.4+\Sc*\Shift,0.3) {}; 
         
         \node[] at (0.2+\Sc*\Shift, 1) {$\overbrace{}^{n}$};
         \node[] at (0.2+\Sc*\Shift,0.5) {$gh$};
         
    \end{tikzpicture}
\label{eq:fusion_2}
\end{equation}
for all $n\geq 0, m\geq \ell$, for some $\ell$ called the \textit{nilpotency length}. Incidentally, the fact that the latter can be nonzero is one of the reasons why the procedure outlined in \cite{GarreKull} does not straightforwardly work for MPUs \footnote{We thank José Garre-Rubio for clarifying this point}. However, $\ell$ can always be reduced to 1 by blocking the MPUs a finite number of times, which we will assume from now on that we did. 

The fusion tensors are defined up to a scalar $\beta_{g,h}$ which constitutes their gauge freedom \footnote{The reader should be cautioned that the word \textit{gauge} is going to appear in different contexts in this work since we are dealing with the procedure of gauging a physical symmetry as well as with the redundancies in the definition of the tensors that arise naturally in tensor networks. We hope that in every instance it will be clear what we are referring to.},
\begin{equation}
    F^{>}_{g,h}\to \beta_{g,h}F^{>}_{g,h}, \qquad F^{<}_{g,h}\to\dfrac{1}{\beta_{g,h}}F^{<}_{g,h}.
    \label{eq:scalar_fus_ten}
\end{equation}

There is an additional important piece of information that characterizes an MPU representation of a group, and it is cohomological in nature. Given an MPU group and its fusion tensors, we can define a scalar function $\omega:G^{\times 3}\to \C^\times$, such that the following holds \cite{Chen11,semiinjective},
\begin{equation}
    \begin{tikzpicture}
\node[irrep] at (-0.35,0.9) {$g$};
\node[irrep] at (-0.35,0.4) {$h$};
\node[irrep] at (-0.35,-0.1) {$k$};
      \pic[scale=1.25] at (1.75,0.625) {FR=ghk/g/hk};
      \pic[scale=0.83] at (1.2,0.25 ) {FR=/h/k};
      \draw[red,thick] (1,1) -- (1.2,1);
\foreach \y in {0,0.5,1}{
		  \foreach \x in {0,0.4,0.8}{
		  \draw[thick] (\x,\y-0.25) -- (\x,\y+0.25);
		    \draw[red,thick] (\x-0.25,\y) -- (\x+0.25,\y);
		    \node[square, fill=violet] at (\x,\y) {};   }}
      \def\Sc{2.1};
      \node[] at (3.2,0.5) {$=\omega(g,h,k)$};
\node[irrep] at (-0.35+\Sc*\Shift,0.9) {$g$};
\node[irrep] at (-0.35+\Sc*\Shift,0.4) {$h$};
\node[irrep] at (-0.35+\Sc*\Shift,-0.1) {$k$};
 \pic[scale=1.25] at (1.75+\Sc*\Shift,0.375) {FR=ghk/gh/k};
      \pic[scale=0.83] at (1.2+\Sc*\Shift,0.75) {FR=/g/h};
      \draw[red,thick] (1+\Sc*\Shift,0) -- (1.2+\Sc*\Shift,0);
\foreach \y in {0,0.5,1}{
		  \foreach \x in {0+\Sc*\Shift,0.4+\Sc*\Shift,0.8+\Sc*\Shift}{
		  \draw[thick] (\x,\y-0.25) -- (\x,\y+0.25);
		    \draw[red,thick] (\x-0.25,\y) -- (\x+0.25,\y);
		    \node[square, fill=violet] at (\x,\y) {};   }}
\node[] at (7, 0.5) {$.$};      
\end{tikzpicture}
\end{equation}
From the equivalence of the different ways of relating two fusion trees, it can be seen that $\omega$ must be a 3-cocycle, that is
\begin{equation}
    \omega(gh, k, l)\omega(g, h, kl) = \omega(g, h, k)\omega(g, hk, l)\omega(h,k,l),
    \label{eq:3-cocycle}
\end{equation}
for any $g,h,k,l,\in G$. Adjusting the scalar freedom of the fusion tensors \eqref{eq:scalar_fus_ten} amounts to modifying the 3-cocycle by a 3-coboundary,
\begin{equation}  \omega(g,h,k)\rightarrow\dfrac{\beta_{g,hk}\beta_{h,k}}{\beta_{g,h}\beta_{gh,k}}\,\omega(g,h,k),
    \label{eq:omegagauge}
\end{equation}
thus the physical information is contained in the cohomology class $[\omega]\in H^3(G,\mathbb{C}^\times)$. An MPU symmetry group with nontrivial $\omega$ cannot have an invariant injective MPS and is called an \textit{anomalous} representation. This anomaly is usually regarded as an obstruction to gauging, in the sense that it leads to non-commuting Gauss law projectors \cite{GarreKull, Seifnashri}.

As a remark, note that for an MPU to be part of a unitary representation of a finite group, it must have a finite order, and thus its topological index (which is real-valued, additive upon taking products, and equals zero for the identity MPU) must vanish, leading to $d_r=d_l=d$. We will, however, keep the notation of the thick and squiggly lines of the three-legged tensors from \eqref{eq:3-leg} as an aid to help identify their orientation.

\subsubsection{Compatibility between Hermitian and group structures}
\label{sec:daggerinverse}

In any MPU group representation, there exists a relation between the dagger operation and the group structure, in particular the group inverse. We assume that we have blocked enough so that all tensors $\mc U_g$ are simple, injective, and in canonical form. Consider a group element $g\in G$ and the associated tensor $\mc U_g$, which generates the unitary $U_g$. The daggered tensor $\mc U_g^\dagger$ generates $U_{\inv{g}} = U_g^\dagger$, thus, because of the injectivity, $\mc U^\dagger_g$ and $\mc U_{\inv{g}}$ are related by a gauge transformation
\begin{equation}
    \mc U_g^\dagger = T_g^\dagger\,\mc U_{\inv{g}} T_g,
    \label{eq:defT}
\end{equation}
where the unitarity of $T_g$ comes from relating two tensors in canonical form. Taking the complex conjugate of \eqref{eq:defT} and using the injectivity of the tensors, it can be shown that
\begin{equation}
    T_gT^*_{\inv g}=\sigma_g\mathds{1},
    \label{eq:intro_sigma}
\end{equation}
where $T_g^\ast$ denotes the complex conjugation of $T_g$, and $\sigma_g\in U(1)$ satisfy $\sigma_g\sigma_{\inv g}=1$.
As we shall see below, whenever $g\neq\inv{g}$, we can simply absorb the gauge transformation into the definition of the tensor $\mc U_{\inv{g}}$ so that $T_g=\mathds{1}$ and $\sigma_g=1$. However, if $g^{-1}=g$, then by definition $\mc U_{\inv g} = \mc U_g$, and Eq.~\eqref{eq:defT} reads
\begin{equation}
    \mc U_{g}^\dagger = T_g^{\dagger}\,\mc U_g T_g.
\end{equation}
This time, there may not be a gauge choice such that $T_g=\mathds{1}$. Eq.~\eqref{eq:intro_sigma} reads $T^{\phantom{*}}_gT_g^*=\sigma_g \mathds{1}$, with $\sigma_g=\pm 1$, and $\sigma_g$ acts as a topological index for the time reversal SPT phase of $\mc U_g$ \cite[Section~III.A.1]{Review}. Whenever $\sigma_g=1$, there is a gauge transformation that makes the MPU tensor equal to its Hermitian conjugate, and we can assume that we have made that gauge choice, and thus $T_g=\I$. Whenever $\sigma_g = -1$, this is not possible, $T_g$ will be non-trivial and should be carried out in computations. We show in Appendix \ref{app:Z2} that $\sigma_g=-1$ happens exactly whenever the $\Z_2$ representation generated by $U_g$ is anomalous.

This compatibility of the MPU Hermitian conjugation with the group structure gives rise to a relation between the fusion operators, according to the following.
\begin{proposition}
    There exist scalars $\zeta_{g,h}\in \C^\times$ such that
   \begin{subequations}
       \begin{equation}
           \begin{tikzpicture}
              \draw[red,thick] (-0.6,0.0)--++(0.6,0.0);
              \draw[red,thick] (0.35,0.3)--++(0.65,0.0);
              \draw[red,thick] (0.35,-0.3)--++(0.65,0.0);
              \draw[thick] (0.75,-0.5)--++(0.0,1.0);
              
              \pic[scale=1.0] at (0.0,0.0) {FLD=//}; 

              \node[tensor, red] at (-0.35,0.0) {};
              \node[tensor, red] at (0.35,-0.3) {};
              \node[tensor, red] at (0.35,0.3) {};
              \node[mpo] at (0.75,0.3) {};
              \node[mpo] at (0.75,-0.3) {};

              \node[red] at (-0.48,0.3) {$T_{\inv{(gh)}}$};
              \node[red] at (0.39,0.70) {$T_{\inv g}^\dagger$};
              \node[red] at (0.39,-0.65) {$T_{\inv h}^\dagger$};

              \node[] at (1.5,0.0) {$= \zeta_{g,h}$};

              \draw[red, thick] (2.9,0.3)--++(0.3,0.0); 
              \draw[red, thick] (2.9,-0.3)--++(0.3,0.0); 
              \draw[thick] (2.9,-0.5)--++(0.0,1.0);
              \pic[scale=1.0] at (2.4,0.0) {FL=gh/g/h}; 
              \node[mpo] at (2.9,0.3) {};
              \node[mpo] at (2.9,-0.3) {};
              \node[] at (3.4,0.0) {$,$};
              
           \end{tikzpicture}
       \end{equation}
       \begin{equation}
           \begin{tikzpicture}
               \draw[red,thick] (0.6,0.0)--++(-0.6,0.0);
              \draw[red,thick] (-0.35,0.3)--++(-0.65,0.0);
              \draw[red,thick] (-0.35,-0.3)--++(-0.65,0.0);
              \draw[thick] (-0.75,-0.5)--++(0.0,1.0);
              
              \pic[scale=1.0] at (0.0,0.0) {FRD=//}; 

              \node[tensor, red] at (0.35,0.0) {};
              \node[tensor, red] at (-0.35,-0.3) {};
              \node[tensor, red] at (-0.35,0.3) {};
              \node[mpo] at (-0.75,0.3) {};
              \node[mpo] at (-0.75,-0.3) {};

              \node[red] at (0.55,0.38) {$T_{\inv{(gh)}}^\dagger$};
              \node[red] at (-0.35,0.65) {$T_{\inv{g}}$};
              \node[red] at (-0.35,-0.65) {$T_{\inv{h}}$};

              \node[] at (1.4,0.0) {$=\frac{1}{\zeta_{g,h}}$};

              \def\Sc{0.4};

              \draw[red, thick] (1.5+\Sc,0.3)--++(0.3,0.0); 
              \draw[red, thick] (1.5+\Sc,-0.3)--++(0.3,0.0); 
              \draw[thick] (1.8+\Sc,-0.5)--++(0.0,1.0);
              \pic[scale=1.0] at (2.3+\Sc,0.0) {FR=gh/g/h}; 
              \node[mpo] at (1.8+\Sc,0.3) {};
              \node[mpo] at (1.8+\Sc,-0.3) {};
              \node[] at (2.9+\Sc,0.0) {$,$};
              
           \end{tikzpicture}
           \label{eq:defzeta}
       \end{equation}
   \end{subequations}
    where 
    \begin{subequations}
\begin{equation}
    \begin{tikzpicture}
    \node[irrep] at (0.35,0.3) {$g$};
    \node[irrep] at (0.35,-0.3) {$h$};
    \node[irrep] at (-0.35,0.0) {$gh$};
        \pic[scale=1.0] at (0.0,0.0) {FLD=//};
            \node[] at (1.75,0.0) {$\equiv \bigl(F^{<}_{h^{-1},g^{-1}}\bigr)^{\dagger}$};
        \node[] at (2.75,0.0) {$,$};
    \end{tikzpicture}
\end{equation}
    \begin{equation}
    \begin{tikzpicture}
    \node[irrep] at (-0.3,0.3) {$g$};
    \node[irrep] at (-0.3,-0.3) {$h$};
    \node[irrep] at (0.3,0.0) {$gh$};
        \pic[scale=1.0] at (0.0,0.0) {FRD=//};
            \node[] at (1.75,0.0) {$:=\bigl(F^{>}_{h^{-1},g^{-1}}\bigr)^{\dagger}$};
        \node[] at (2.75,0.0) {$,$};
    \end{tikzpicture}
\end{equation}
\end{subequations}
and the dagger operation here requires complex conjugating every entry and swapping the legs on the same side (as if reflecting the tensor along a horizontal axis).
These scalars satisfy
    \begin{equation}
        \zeta^{\phantom{*}}_{g,h}\zeta^*_{h^{-1},g^{-1}}=\dfrac{\sigma_g\sigma_h}{\sigma_{gh}}.
        \label{eq:zeta}
    \end{equation}
and 
\begin{equation}
        \omega(k^{-1},h^{-1}, g^{-1})^*\omega(g,h,k) = \dfrac{\zeta_{g,hk}\zeta_{h,k}}{\zeta_{g,h}\zeta_{gh,k}}.
        \label{eq:inv_omega}
    \end{equation}
    \label{prop:zetas}
\end{proposition}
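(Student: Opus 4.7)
The plan is to prove existence of $\zeta_{g,h}$ first, then derive each of the two identities \eqref{eq:zeta} and \eqref{eq:inv_omega}. For existence, I would start with the MPU fusion equation \eqref{eq:fusion_1} applied to the pair $(h^{-1},g^{-1})$, namely $\mc U_{h^{-1}}\mc U_{g^{-1}} = F^{>}_{h^{-1},g^{-1}}\,\mc U_{(gh)^{-1}}\,F^{<}_{h^{-1},g^{-1}}$, and take its Hermitian conjugate. Applying Eq.~\eqref{eq:defT} to each of $\mc U_{g^{-1}}^{\dagger}$, $\mc U_{h^{-1}}^{\dagger}$ and $\mc U_{(gh)^{-1}}^{\dagger}$ converts the result into a fusion-like identity for $\mc U_{g}\mc U_{h}\to \mc U_{gh}$, whose fusion tensors are now the daggered $F$'s sandwiched with insertions of $T_{\inv{g}}^{\dagger}$, $T_{\inv{h}}^{\dagger}$ and $T_{\inv{(gh)}}$ on their three legs, matching precisely the left-hand sides of (a) and (b). By the uniqueness (up to scalar) of fusion tensors for an injective MPU group representation, this combination must coincide with the ordinary fusion tensors $F^{>}_{g,h}$, $F^{<}_{g,h}$ up to a single overall scalar. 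Denoting the scalar on the $F^{<}$ side by $\zeta_{g,h}$ forces the scalar on the $F^{>}$ side to be $1/\zeta_{g,h}$, because rescaling $F^{>}\to \beta F^{>}$, $F^{<}\to \beta^{-1}F^{<}$ leaves \eqref{eq:fusion_1} invariant (cf.\ \eqref{eq:scalar_fus_ten}).

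For Eq.~\eqref{eq:zeta}, I would iterate the relation \eqref{eq:defzeta}: first apply it as stated to $F^{>}_{g,h}$, then apply the complex conjugate of the same relation with $(g,h)$ replaced by $(h^{-1},g^{-1})$. The double application returns $F^{>}_{g,h}$ to itself, picking up an overall scalar $\zeta_{g,h}\zeta^{*}_{h^{-1},g^{-1}}$ and three combinations of the form $T T^{*}$, one on each leg. By Eq.~\eqref{eq:intro_sigma}, each such combination collapses to a $\sigma$ factor times the identity. The $g$ and $h$ legs produce $\sigma_{g}$ and $\sigma_{h}$, while the $gh$ leg contributes $1/\sigma_{gh}$ (the inverse arises because the $T_{\inv{(gh)}}$ insertions in the two applications appear with opposite orientations). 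Equating scalars yields \eqref{eq:zeta}.

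For Eq.~\eqref{eq:inv_omega}, I would use the associativity that defines $\omega$: the two ways of rebracketing three stacked MPUs via pairs of $F^{>}$ tensors differ by the scalar $\omega$. Writing this pentagon-like relation for the triple $(k^{-1},h^{-1},g^{-1})$ with associator $\omega(k^{-1},h^{-1},g^{-1})$, then Hermitian-conjugating, produces an associativity for daggered $F^{>}$'s with associator $\omega(k^{-1},h^{-1},g^{-1})^{*}$. I would then substitute each daggered tensor using \eqref{eq:defzeta}, converting the relation back to one involving ordinary $F^{>}$'s, now carrying four $\zeta$ factors and a collection of $T$ insertions. The $T$'s on internal (contracted) legs cancel between the two sides of the pentagon (the same $T_{\inv{\cdot}}^{\dagger}T_{\inv{\cdot}}$ pairings appear identically on each side), and those on external legs cancel by the same matching argument. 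What remains is precisely $\omega(g,h,k) = \omega(k^{-1},h^{-1},g^{-1})^{*}\cdot \zeta_{g,hk}\zeta_{h,k}/(\zeta_{g,h}\zeta_{gh,k})$, i.e.\ \eqref{eq:inv_omega}.

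The main obstacle I anticipate is the bookkeeping in the third step: five fusion tensors, each contributing $T$'s on three legs upon daggering, must be tracked through the associativity diagram, and the nontrivial part is verifying that all $T$-insertions cancel and only the claimed ratio of $\zeta$'s survives in the scalar identity. The 3-cocycle condition \eqref{eq:3-cocycle} for both $\omega$ and the inverted-argument version, together with the unitarity of the $T_g$'s and the relation $T_g T_{\inv g}^{*} = \sigma_{g}\I$, should enforce this cancellation, but the diagrammatic verification is the bulk of the technical content.
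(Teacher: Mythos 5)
Your proposal follows essentially the same route as the paper's proof: define $\zeta_{g,h}$ by daggering the fusion equations for $(h^{-1},g^{-1})$, inserting the $T$'s via Eq.~\eqref{eq:defT}, and invoking uniqueness of fusion tensors up to scalar; obtain Eq.~\eqref{eq:zeta} by applying the duality twice together with $T_gT^*_{\inv g}=\sigma_g\I$; and obtain Eq.~\eqref{eq:inv_omega} by Hermitian-conjugating the associativity relation for $(k^{-1},h^{-1},g^{-1})$ and computing the resulting scalar factor in two ways. The argument is correct and the anticipated bookkeeping of the $T$-insertions is exactly where the paper's diagrammatic work lies.
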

\begin{proof}
    The result is a consequence of Theorem 1 in \cite{GarreSchuch}, which compiles the results of \cite{semiinjective} for the case of nilpotency length 1. From taking the dagger of Eqs.~\eqref{eq:fusion_1}--\eqref{eq:fusion_2}, and using Eq.~\eqref{eq:defT} it follows that $\left(F^{<,>}_{h^{-1},g^{-1}}\right)^\dagger$, up to multiplication by the corresponding $T_g$, is a valid fusion tensor for $\mc U_g,\mc U_h$, hence it is related by a scalar to our original choice $F^{<,>}_{g, h}$. Eq. \eqref{eq:zeta} follows by applying this duality twice, together with Eq.~\eqref{eq:intro_sigma}. Finally, Eq. \eqref{eq:inv_omega} follows from taking the Hermitian conjugate of the following equation
   \begin{equation}
       \begin{tikzpicture}
       \def\sh{0.2};
           \pic[scale=1.0] at (0.0,0.0) {FL=//};
           \pic[scale=0.665] at (0.75,0.3) {FL=//};
           \draw[thick,red] (0.5,-0.3) --++ (0.8,0.0);
           \pic[scale=0.665] at (1.4+\sh,-0.1) {FR=//};
           \pic[scale=1.0] at (2.15+\sh,0.2) {FR=//};
           \draw[thick,red] (0.8+\sh,0.5)--++(0.9,0.0);
           \draw[thick,red] (0.8+\sh,0.1)--++(0.3,0.0);
           \draw[thick] (0.35,-0.5)--++(0.0,1.2);
           \draw[thick] (1.1+\sh,-0.5)--++(0.0,1.2);
           \draw[thick] (1.8+\sh,-0.5)--++(0.0,1.2);
           \node[mpo] at (0.35,-0.3) {};
           \node[mpo] at (1.1+\sh,-0.3) {};
           \node[mpo] at (0.35,0.3) {};
           \node[mpo] at (1.1+\sh,0.1) {};
           \node[mpo] at (1.1+\sh,0.5) {};
           \node[mpo] at (1.8+\sh,-0.1) {};
           \node[mpo] at (1.8+\sh,0.5) {};
           \node[irrep] at (0.85+\sh,0.5) {$k^{\text{-}1}$};
           \node[irrep] at (0.85+\sh,0.1) {$h^{\text{-}1}$};
           \node[irrep] at (0.85+\sh,-0.35) {$g^{\text{-}1}$};
           \node[] at (3.8+\sh,0.0) {$=\omega(k^{-1},h^{-1},g^{-1})$};
           \draw[thick,red] (5.2+\sh,0.0)--++(1.4,0.0);
           \draw[thick] (5.5+\sh,-0.3)--++(0.0,0.6);
           \draw[thick] (5.9+\sh,-0.3)--++(0.0,0.6);
           \draw[thick] (6.3+\sh,-0.3)--++(0.0,0.6);
           \node[mpo] at (5.5+\sh,0.0) {};
           \node[mpo] at (5.9+\sh,0.0) {};
           \node[mpo] at (6.3+\sh,0.0) {};
           \node[irrep] at (6.75+\sh,0.05) {$(ghk)^{\text{-}1}$};
       \end{tikzpicture}
   \end{equation}
    to get
    \begin{equation}
        \begin{tikzpicture}
            \draw[red,thick] (-0.5,0.0)--++(0.2,0.0);
            \draw[red,thick] (0.4,0.3)--++(2.5,0.0);
            \draw[red,thick] (0.4,-0.3)--++(0.75,0.0);
            \draw[red,thick] (1.9,-0.5)--++(2.5,0.0);
            \draw[red,thick] (1.9,-0.1)--++(1.0,0.0);
            \draw[red,thick] (3.3,0.1)--++(1.0,0.0);
            \draw[red,thick] (4.8,-0.2)--++(0.2,0.0);
            \draw[thick] (0.7,-0.8)--++(0.0,1.4);
            \draw[thick] (2.25,-0.8)--++(0.0,1.4);
            \draw[thick] (3.7,-0.8)--++(0.0,1.4);

            \pic[scale=1.0] at (0.0,0.0) {FLD=//};
            \pic[scale=0.665] at (1.45,-0.3) {FLD=//};
            \pic[scale=0.665] at (3.0,0.1) {FRD=//};
            \pic[scale=1.0] at (4.5,-0.2) {FRD=//};
            
            \node[tensor, red] at (-0.3,0.0) {};
            \node[tensor, red] at (0.3,-0.3) {};
            \node[tensor, red] at (0.3,0.3) {};
            \node[mpo] at (0.7,-0.3) {};
            \node[mpo] at (0.7,0.3) {};
            \node[tensor, red] at (1.1,-0.3) {};
            \node[tensor, red] at (1.85,-0.5) {};
            \node[tensor, red] at (1.85,-0.1) {};
            \node[mpo] at (2.25,-0.1) {};
            \node[mpo] at (2.25,-0.5) {};
            \node[mpo] at (2.25,0.3) {};
            \node[tensor, red] at (2.65,0.3) {};
            \node[tensor, red] at (2.65,-0.1) {};
            \node[tensor, red] at (3.3,0.1) {};
            \node[mpo] at (3.7,0.1) {};
            \node[mpo] at (3.7,-0.5) {};
            \node[tensor, red] at (4.1,0.1) {};
            \node[tensor, red] at (4.1,-0.5) {};
            \node[tensor, red] at (4.8,-0.2) {};

            \node[red] at (0.4,0.5) {$\dagger$};
            \node[red] at (1.95,0.1) {$\dagger$};
            \node[red] at (3.4,0.3) {$\dagger$};
            \node[red] at (4.9,0.0) {$\dagger$};
            \node[red] at (0.4,-0.15) {$\dagger$};
            \node[red] at (1.95,-0.33) {$\dagger$};

            \node[irrep] at (1.62,0.3) {$g$};
            \node[irrep] at (1.62,-0.1) {$h$};
            \node[irrep] at (1.62,-0.5) {$k$};

            \node[] at (5.6,-0.1) {$= \phi_{g,h,k}$};

            \draw[red,thick] (6.15,-0.1)--++(1.3,0.0);
            \draw[thick] (6.4,-0.4)--++(0.0,0.6);
            \node[mpo] at (6.4,-0.1) {};
            \draw[thick] (6.8,-0.4)--++(0.0,0.6);
            \node[mpo] at (6.8,-0.1) {};
            \draw[thick] (7.2,-0.4)--++(0.0,0.6);
            \node[mpo] at (7.2,-0.1) {};
            \node[] at (7.5,-0.1) {$,$};

            \node[irrep] at (7.0,0.05) {$g\!h\!k$};
        \end{tikzpicture}
    \end{equation}
    where we have omitted some labels for compactness, and the scalar factor can be computed in two ways that must yield the same result: 
    \begin{equation}
        \phi_{g,h,k}=\dfrac{\zeta_{g,hk}\zeta_{h,k}}{\zeta_{g,h}\zeta_{gh,k}}\cdot \frac{1}{\omega(g,h,k)}=\omega(k^{-1},h^{-1},g^{-1})^\ast.
    \end{equation}
\end{proof}

\subsection{MPU action on MPSs}
\label{sec:MPUonMPS}

We consider the case where we have injective MPS tensors labeled by elements $x$ of a set $\mathsf{X}$ on which $G$ acts transitively ($\forall\, x,y\in \mathsf{X}\ \exists\, g\in G \,:\ y = gx$) by permutation. The assumption of transitivity comes from the fact that otherwise the set $\mathsf{X}$ breaks into disjoint orbits, and we can reduce our study to one of those orbits \footnote{See also \cite{GarreLootensMolnar} and references therein for yet another motivation, based on the analysis of perturbations of parent Hamiltonian, to consider only transitive actions.}. Acting by permutation allows us to build an invariant block-injective MPS by superposing all MPSs in $\mathsf{X}$. In the injective case (cf. Section~\ref{sec:noninj}), the set $\mathsf{X}$ will have only one element, and therefore we will omit the corresponding label.

From the injectivity of the MPSs, once again using the results from \cite{semiinjective}, there exists a collection of action tensors $A_{g,x}^{\Gamma}, A_{g,x}^{\ammaG}$, defined by 
\begin{equation}
\begin{tikzpicture}[baseline=0.2cm]
    \def\dx{0.5}
    \def\ddx{0.3}
    \def\dy{0.5}
    \def\ddy{0.2}
    \def\dddy{0.3}
    \draw[thick] (-\ddx, 0) --++ (5*\dx+2*\ddx, 0);
    \draw[thick,red] (-\ddx, \dy) --++ (5*\dx+2*\ddx, 0);
    \foreach \x in {0, \dx, 2*\dx, 3*\dx, 4*\dx, 5*\dx}{
    \draw[thick] (\x, 0) --++ (0, \dy+\ddy);
    \node[tensor] at (\x,0) {};
    \node[mpo] at (\x,\dy) {};}
    \node[] at (0.5*\dx, \dy + \ddy + \dddy) {$\overbrace{\hspace{.8 cm}}^{m}$};
    \node[] at (2.5*\dx, \dy + \ddy + \dddy) {$\overbrace{\hspace{.8 cm}}^{n+1}$};
    \node[] at (4.5*\dx, \dy + \ddy + \dddy) {$\overbrace{\hspace{.8 cm}}^{m}$};
\end{tikzpicture}
=
\begin{tikzpicture}[baseline=0.2cm]
    \def\dx{0.5}
    \def\dxb{0.4}
    \def\ddx{0.3}
    \def\dy{0.5}
    \def\ddy{0.2}
    \def\dddy{0.3}
    \draw[thick] (-\ddx, 0) -- (\dx+\dxb, 0);
    \draw[thick] (2*\dx+3*\dxb, 0) -- (3*\dx+4*\dxb+\ddx, 0);
    \draw[thick] (\dx+\dxb, 0)-- (2*\dx+3*\dxb, 0);
    \draw[thick,red] (-\ddx, \dy) -- (\dx+\dxb, \dy);
    \draw[thick,red] (2*\dx+3*\dxb, \dy) -- (3*\dx+4*\dxb+\ddx, \dy);
    \foreach \x in {0, \dx, 2*\dx+4*\dxb, 3*\dx+4*\dxb}{
    \draw[thick] (\x, 0) --++ (0, \dy+\ddy);
    \node[tensor] at (\x,0) {};
    \node[mpo] at (\x,\dy) {};}
    \foreach \x in {\dx+2*\dxb, 2*\dx+2*\dxb}{
    \draw[thick] (\x, \dy/2-\ddy) --++ (0, 2*\ddy);
    \node[tensor] at (\x,0) {};}
    \pic () at (\dx+\dxb,0.) {actR=\dy///};
    \pic () at (2*\dx+3*\dxb,0.) {actL=\dy///};
    \node[irrep] at (1.5*\dx+2*\dxb,0) {$gx$};
    \node[] at (\dx+\dxb,-0.35) {$A^{\ammaG}_{g,x}$};
    \node[] at (2*\dx+3*\dxb,-0.35) {$A^{\Gamma}_{g,x}$};
    \node[] at (0.5*\dx, \dy + \ddy + \dddy) {$\overbrace{\hspace{.8 cm}}^{m}$};
    \node[] at (1.5*\dx+2*\dxb, \dy + \ddy + \dddy) {$\overbrace{\hspace{.8 cm}}^{n+1}$};
    \node[] at (2.5*\dx+4*\dxb, \dy + \ddy + \dddy) {$\overbrace{\hspace{.8 cm}}^{m}$};
    \node[irrep] at (3*\dx+4*\dxb+\ddx,0.00) {$x$};
    \node[irrep] at (3*\dx+4*\dxb+\ddx,\dy) {$g$};
    \node[irrep] at (-\ddx,0.00) {$x$};
    \node[irrep] at (-\ddx,\dy) {$g$};
\end{tikzpicture}  
\label{eq:action_exterior}
\end{equation}
and
\begin{equation}
    \begin{tikzpicture}
    \node at (0.25,0.45) {$\overbrace{\hspace{0.8cm}}^{n}$};
        \draw[thick] (-0.7,-0.6) -- (1.2,-0.6);
    \draw[red,thick] (0,0) -- (1,0);
     \pic at (-0.5,-0.3) {ALl=y/g/x};
      \pic at (1,-0.3) {ARl=y/g/x};
          \draw[thick] (0,-0.6) -- (0,0.3);
        \draw[thick] (0.5,-0.6) -- (0.5,0.3);
    \node[mpo] (t) at (0,0) {};
    \node[tensor] (t) at (0,-0.6) {};
        \node[mpo] (t) at (0.5,0) {};
    \node[tensor] (t) at (0.5,-0.6) {};
    \node[irrep] at (-0.4,-1.2) {$A^\Gamma_{g,x}$};
    \node[irrep] at (1.0,-1.2) {$A^\ammaG_{g,x}$};
    \def\Sc{1.15};
    \def\Sv{0.6};
    \node[] at (1.75,-0.25) {$=$};
      \node at (0.25+\Sc*\Shift,0.65-\Sv) {$\overbrace{\hspace{0.8cm}}^{n}$};
      \draw[thick] (0+\Sc*\Shift,0-\Sv) -- (0+\Sc*\Shift,0.5-\Sv);
      \draw[thick] (0.5+\Sc*\Shift,0-\Sv) -- (0.5+\Sc*\Shift,0.5-\Sv);
      \draw[thick](-0.3+\Sc*\Shift,0-\Sv) -- (0.8+\Sc*\Shift,0-\Sv);
    \node[tensor] (t) at (0+\Sc*\Shift,0-\Sv) {};
    \node[tensor] (t) at (0.5+\Sc*\Shift,0-\Sv) {};
   \node[irrep] at (0.7+\Sc*\Shift,0.05-\Sv) {$gx$};
   \node[] at (1.0+\Sc*\Shift,-0.35) {$,$};
    \end{tikzpicture}
    \label{eq:action_interior}
\end{equation}
where $n\geq0$ and $m\geq \ell$ for some nilpotency length $\ell$. We will once again assume that we have blocked until all nilpotency lengths $\ell$ are 1. Note that, as in the case of the fusion tensors, the action tensors enjoy gauge freedom given by a scalar $\gamma_{g,x}$
\begin{equation}
        A^{\ammaG}_{g,x}\to \gamma_{g,x}A^{\ammaG}_{g,x}, \quad  A^{\Gamma}_{g,x}\to\dfrac{1}{\gamma_{g,x}}A^{\Gamma}_{g,x}.
    \label{eq:scalar_act_ten}
\end{equation}
The compatibility of MPU fusion and their action on MPS gives rise to another family of scalars, the $L$-symbols $L^x_{g,h}$, which are defined by \cite{GarreLootensMolnar}
\begin{equation}
    \begin{tikzpicture}
\node[irrep] at (-0.35,0.9) {$g$};
\node[irrep] at (-0.35,0.4) {$h$};
\node[irrep] at (-0.35,-0.1) {$x$};

      \draw[thick](-0.2,0) -- (2.0,0);
 \pic[scale=1.25] at (1.75,0.375) {ARl=ghx/gh/x};
      \pic[scale=0.83] at (1.2,0.75) {FR=/g/h};

\foreach \y in {0.5,1}{
		  \foreach \x in {0,0.4,0.8}{
		  \draw[thick](\x,\y-0.25) -- (\x,\y+0.25);
		    \draw[red,thick] (\x-0.25,\y) -- (\x+0.25,\y);
		    \node[square, fill=violet] at (\x,\y) {};
		    \node[tensor] at (\x,0) {};  
		     \draw[thick](\x,0) -- (\x,+0.25); }}
       \def\Sc{1.8};  
        \node at (2.75,0.5) {$={L}^x_{g,h}$};
        \node[irrep] at (-0.35+\Sc*\Shift,0.9) {$g$};
        \node[irrep] at (-0.35+\Sc*\Shift,0.4) {$h$};
        \node[irrep] at (-0.35+\Sc*\Shift,-0.1) {$x$};
      
      \draw[red,thick] (1+\Sc*\Shift,1) -- (1.6+\Sc*\Shift,1);
      \draw[thick](-0.2+\Sc*\Shift,0)--(1.125+\Sc*\Shift,0);
      \draw[thick](1.67+\Sc*\Shift,0.0)--(2.175+\Sc*\Shift,0.0);
      \pic[scale=1.67] at (1.85+\Sc*\Shift,0.5) {ARl=ghx/g/hx};
      \pic[scale=0.83] at (1.2+\Sc*\Shift,0.25 ) {ARl=/h/x};
        \foreach \y in {0.5,1}{
		  \foreach \x in {0+\Sc*\Shift,0.4+\Sc*\Shift,0.8+\Sc*\Shift}{
		      \draw[thick](\x,\y-0.25) -- (\x,\y+0.25);
		    \draw[red,thick] (\x-0.25,\y) -- (\x+0.25,\y);
		    \node[square, fill=violet] at (\x,\y) {};
		    \node[tensor] at (\x,0) {};  
		     \draw[thick](\x,0) -- (\x,+0.25); }}
       \node[] at (6.45,0.0) {$.$};
\end{tikzpicture}
\label{eq:defL}
\end{equation}
These $L$-symbols are subject to the scalar gauge freedom deriving from both the fusion and action tensors,
\begin{equation}
    L^x_{g,h}\rightarrow \dfrac{\gamma_{gh,x}\beta_{g,h}}{\gamma_{g, hx}\gamma_{h,x}}L^x_{g,h},
    \label{eq:Lsymbgauge}
\end{equation}
with $\beta_{g,h}$ as in \eqref{eq:scalar_fus_ten} and $\gamma_{g,x}$ as in \eqref{eq:scalar_act_ten}. Given a choice of gauge for the fusion and action tensors, the corresponding anomaly 3-cocycle and the $L$-symbols are related by
\begin{equation}
     L^x_{g, hk} L^x_{h,k} = \omega(g,h,k) L^{kx}_{g, h} L^{x}_{gh, k}.
     \label{eq:omega_and_Ls}
\end{equation}
In particular, whenever $|\mathsf{X}|=1$ (i.e. we have an injective invariant MPS), \eqref{eq:omega_and_Ls} reduces to the 3-coboundary condition for $\omega$, and there is a choice of gauge such that $\omega(g,h,k) = L_{g,h}=1$.

The $L$-symbols were introduced in \cite{GarreLootensMolnar} as a tool to classify the SPT phases with matrix product operator symmetry, with one such phase existing for any equivalence class of solutions of \eqref{eq:omega_and_Ls}, modulo the gauge freedom $\gamma_{g,x}$ derived from that of the action tensors. Whenever the matrix product operator symmetry represents a finite group $G$, these solutions are labeled by pairs $(H, \psi)$ where $H\leq G$ is a subgroup that trivializes the anomaly (i.e., $[\omega|_H]=1$), and $\psi\in H^2(H, \C^\times)$ is a $2$-cocycle representing a $2$-cohomology class of $H$. $H$ is the unbroken symmetry group, so that the ground state degeneracy of a Hamiltonian in the corresponding phase, i.e. the minimal number of injective blocks of an invariant MPS, is 
\begin{equation}
    |\mathsf{X}|=\left|\dfrac{G}{H}\right|.
\end{equation}
For example, a group $G$ represented on-site has a trivial anomaly, so we can choose $H=G$. The resulting phases with fully unbroken symmetry are then labeled by the second cohomology group $H^2(G, \C^\times)$ or, equivalently, by $H^2(G,U(1))$ \footnote{Since $U(1)$ is an abelian topological group (equipped with the standard topology) that is homotopically equivalent to $\C^\times$ (also understood as an abelian topological group), there is an induced isomorphism in sheaf cohomology, $H^2(X,\C^\times)\cong H^2(X,U(1))$ for $X$ being a CW-complex, cf. \cite{Brown}}.

\subsection{Assumptions for the rest of the paper}
\label{sec:assumptions}
In the rest of this work, we will make certain choices for the gauges of the MPU tensors $\mc U_g$, fusion tensors $F^{<,>}_{g,h}$ and action tensors $A^{\Gamma, \ammaG}_{g,x}$ that will simplify the formalism. For the convenience of the reader, we present them all here. 

To begin with, we will always assume that the tensor representing the neutral element $e$ of the group is the identity $\mathcal{U}_e=\I$, with bond dimension $1$, and any fusion or action tensor involving $e$ should be trivial. As a consequence thereof, the $3$-cocycle $\omega(g,h,k)$ and the $L$-symbols $L_{g,h}^x$ are normalized, that is
\begin{align}
    \omega(g, h, e) = \omega(g, h, e) = \omega(g, h, e) &= 1,\qquad \forall g,h,\label{eq:triv_omegas}\\
    L^x_{g,e} = L^x_{e, g} &= 1,\qquad \forall g.\label{eq:triv_Ls}
\end{align}

Moving on to the compatibility between the dagger and the inverse (cf. Section \ref{sec:daggerinverse}), whenever we have $g\neq g^{-1}$, we will assume that we have made a gauge choice for the tensor such that $T_g=\mathds{1}$, i.e. $\mc U_g^\dagger = \mc U_{\inv{g}}$. Moreover, we will choose the $X$ and $Y$ tensors in such a way that 
\begin{equation}
X_{1}^{\inv{g}}=\bigl(Y_{2}^g\bigr)^\dagger,\qquad Y_{1}^{\inv{g}}=\bigl(X_{2}^g\bigr)^\dagger.
    \label{eq:compat}
\end{equation}
This, for example, implies that the $(u_{\inv{g}},v_{\inv{g}})$ gates that give the characterization of the two-layer circuit of $U_{\inv g}$ are given by $(v^\dagger_{g},u^\dagger_{g})$.

If $g$ is of order $2$, i.e. $g^{-1} = g$, and $\sigma_g=1$, there is a gauge transformation that makes the MPU tensor equal to its Hermitian conjugate, so we will once again assume that we have made that gauge choice and thus $T_g=\I$. Whenever $\sigma_g = -1$, this is not possible, $T_g$ will be nontrivial and has to be carried along in computations. Note that with our definitions,
\begin{equation}
    T_g = T_{\inv g} = \sigma_g T^T_{g}.
\end{equation}
In Appendix \ref{app:Z2} we show that we can then pick the $X,Y$ tensors so that
\begin{equation}
    \begin{tikzpicture}[baseline=-1mm]
    \tikzset{decoration={snake,amplitude=.4mm,segment length=2mm, post length=0mm, pre length=0mm}}
    \draw[red, thick] (0,0)--++(0.5,0.);
    \draw[thick] (0,0)--++(0.0,0.4);
    \draw[decorate, thick] (0,0)--++(0.0,-0.4);
    \node[tensor, fill=white] at (0,0) {};
    \node[] at (-0.45, 0) {$X^{\inv g}_1$};
    \end{tikzpicture}
    =
    \begin{tikzpicture}[baseline=-1mm]
    \draw[red, thick] (0,0)--++(0.7,0.);
    \draw[thick] (0,0)--++(0.0,0.4);
    \draw[ultra thick] (0,0)--++(0.0,-0.4);
    \node[tensor] at (0,0) {};
    \node[tensor, red] at (0.4,0) {};
    \node[] at (-0.5, 0) {$(Y^g_2)^\dagger$};
    \node[] at (0.15, 0.15) {$\dagger$};
    \node[red] at (0.45, 0.3) {$T^\dagger_g$};
    \end{tikzpicture}
    ,\qquad
    \begin{tikzpicture}[baseline=-1mm]
    \tikzset{decoration={snake,amplitude=.4mm,segment length=2mm, post length=0mm, pre length=0mm}}
    \draw[red, thick] (0,0)--++(-0.5,0.);
    \draw[thick] (0,0)--++(0.0,0.4);
    \draw[decorate, thick] (0,0)--++(0.0,-0.4);
    \node[tensor, fill=white] at (0,0) {};
    \node[] at (0.45, 0) {$Y^{\inv g}_1$};
    \end{tikzpicture}
    =
    \begin{tikzpicture}[baseline=-1mm]
    \draw[red, thick] (0,0)--++(-0.6,0.);
    \draw[thick] (0,0)--++(0.0,0.4);
    \draw[ultra thick] (0,0)--++(0.0,-0.4);
    \node[tensor] at (0,0) {};
    \node[] at (0.6, 0) {$(X^g_2)^\dagger$};
    \node[tensor, red] at (-0.35,0) {};
    \node[] at (0.15, 0.15) {$\dagger$};
    \node[red] at (-0.4, 0.27) {$T_g$};
    \end{tikzpicture}
    \label{eq:modif_XY}
    ,
\end{equation}
which generalizes \eqref{eq:compat} to include $g^{-1} = g$, while keeping all the pleasant properties of the $X$ and $Y$ tensors listed in Section \ref{sec:MPUsstr}.

Once we have $T_g$ as above, we will make the choice
\begin{equation}
    \begin{tikzpicture}
        \pic[scale=0.83] at (0.0,0.0) {V=//};
        \node[irrep] at (0.25,0.3) {$g$};
        \node[irrep] at (0.3,-0.35) {$\inv{g}$};
        \node[] at (0.7,0.0) {$=$};
        \def\sx{1.0};
        \pic[scale=0.83] at (0.0+\sx,0.0) {V=//};
        \node[irrep] at (0.3+\sx,0.3) {$\inv{g}$};
        \node[irrep] at (0.25+\sx,-0.35) {$g$};
        \node[] at (0.7+\sx,0.0) {$=$};
        \draw[red, thick] (0.0+2*\sx,-0.35)--++(0.0,0.7);
        \draw[red, thick] (0.0+2*\sx,-0.35)--++(0.4,0.0);
        \draw[red,thick] (0.0+2*\sx,0.35)--++(0.4,0.0);
        \node[tensor,red] at (0.2+2*\sx,0.35) {};
        \node[red] at (0.4+2*\sx,0.55) {$T_g^\dagger$};
        \node[] at (0.5+2*\sx,0.0) {$,$};
         
        \pic[scale=0.83] at (0.0+3.5*\sx,0.0) {W=//};
        \node[irrep] at (-0.25+3.5*\sx,0.3) {$g$};
        \node[irrep] at (-0.3+3.5*\sx,-0.35) {$\inv{g}$};
        \node[] at (0.4+3.5*\sx,0.0) {$=$};
        \pic[scale=0.83] at (0.0+4.5*\sx,0.0) {W=//};
        \node[irrep] at (-0.25+4.5*\sx,0.3) {$\inv{g}$};
        \node[irrep] at (-0.3+4.5*\sx,-0.35) {$g$};
        \node[] at (0.4+4.5*\sx,0.0) {$=$};
        \draw[red, thick] (0.0+5.5*\sx,-0.35)--++(0.0,0.7);
        \draw[red, thick] (0.0+5.5*\sx,-0.35)--++(-0.4,0.0);
        \draw[red,thick] (0.0+5.5*\sx,0.35)--++(-0.4,0.0);
         \node[square, fill=white] at (0.0+5.5*\sx,0.0) {};
         \node[] at (0.2+5.5*\sx,0.0) {$\rho$};
         \node[red,tensor] at (-0.2+5.5*\sx,0.35) {};
         \node[red] at (0.0+5.5*\sx,0.5) {$T_g$};
        \node[] at (0.4+5.5*\sx,0.0) {$,$};
        \end{tikzpicture}
        \label{eq:Fgg}
\end{equation}
thus fixing the fusion tensors $F^{<,>}_{g, g^{-1}}$. As a consequence, we have,
\begin{align}
    \omega(g, g^{-1}, g) = \dfrac{L^{x}_{\inv g, g}}{L^{gx}_{g, \inv g}} = \sigma_g,\qquad \forall g,
    \label{eq:wggg}
\end{align}
which is proven in Appendix \ref{app:Z2}.
The choices we have made so far for the fusion tensors also result in 
\begin{equation}
    \zeta_{e, g} = \zeta_{g, e} = 1, \qquad \zeta_{g,\inv{g}} = \sigma_g.
    \label{eq:zetas_in_gauge}
\end{equation}

Later in this paper, we will make subsequent gauge choices to ensure convenient properties such as the well-definedness of the defect tensors and the unitarity of the defect fusion operators. For these conditions to be compatible with the ones made in this section, we will from now on always restrict to gauge transformations satisfying
\begin{align}
    \beta_{g,e} = \beta_{e,g} = \beta_{g,g^{-1}} &= 1, \qquad \forall g,\\
    \gamma_{e,x} &= 1, \qquad \forall x.
\end{align}

\section{Defect systems}
In this Section, we will explain how a defect system, i.e. a collection of defect tensors together with unitary operators that move them and fuse them, arises from an MPU-invariant MPS. To ease the presentation, we will first study the injective MPS case and later move on to the noninjective case, where defects can be interpreted as domain walls separating different ground states of the parent Hamiltonian.

\label{sec:defs}
\subsection{Injective MPS} \label{sec:inj}
As in the motivating example of the onsite symmetry in Section \ref{sec:onsite}, we introduce symmetry defects as the modifications caused at the boundary of a region by the action of the symmetry restricted to that region. Thus we need spatially truncated versions of our MPUs. We achieve this by defining:
\begin{equation}
    \begin{tikzpicture}[scale=\TikzScaling]
   \tikzset{decoration={snake,amplitude=.4mm,segment length=2mm, post length=0mm,pre length=0mm}}

    \node[] at (-1,0.6) {$U_g^L$};
    \node[] at (-0.65,0.6) {$=$};
    
    \draw[thick,red] (-0.3, 0.6) --++ (0.75,0);
    \draw[thick,red] (0.65, 0.6) --++ (0.7,0);
    \node[thick, red] at (0.5,0.6) {$\ldots$};
	
    \draw[thick] (0,0.3) --++ (0,0.6);
    \node[square, fill=violet] at (0,0.6) {};

    \draw[thick] (0.3,0.3) --++ (0,0.6);
    \node[square, fill=violet] at (0.3,0.6) {};
    
   	\draw[thick] (0.8,0.3) --++ (0,0.6);
    \node[square, fill=violet] at (0.8,0.6) {};

   	\draw[thick] (1.1,0.3) --++ (0,0.6);
    \node[square, fill=violet] at (1.1,0.6) {};

    \node[tensor] at (-0.3,0.6) {};
    \node[tensorempty,thick] at (1.4,0.6) {};

    \draw[ultra thick] (-0.3,0.6) --++ (0,0.3) ;
    \draw[thick] (-0.3,0.6) --++ (0,-0.3) ;

    \draw[decorate, thick] (1.4,0.65) --++ (0,0.3) ;
    \draw[thick] (1.4,0.55) --++ (0,-0.3) ;
    
    \node[] at (0.55, 1.1) {$\overbrace{ \hspace{1.8cm} }^{L}$};
    
    \end{tikzpicture}
    \label{eq:truncsym}
\end{equation}
It can be checked that $U^L_g$ is a unitary, and it satisfies the conditions for it to be a truncated version of the symmetry (stated in \cite{Seifnashri} as the fact that it acts on local operators inside the region and far from the boundaries as the global symmetry would do). Note the growth of these operators happens via the movement operators:
\begin{equation}
    \begin{tikzpicture}[scale=\TikzScaling]

    \tikzset{decoration={snake,amplitude=.4mm,segment length=2mm, post length=0mm,pre length=0mm}}
    
    \draw[thick,red] (-0.3, 0.6) --++ (0.75,0);
    \draw[thick,red] (0.65, 0.6) --++ (0.7,0);
    \node[thick, red] at (0.5,0.6) {$\ldots$};
	
    \draw[thick] (0,0.3) --++ (0,0.6);
    \node[square, fill=violet] at (0,0.6) {};

    \draw[thick] (0.3,0.3) --++ (0,0.6);
    \node[square, fill=violet] at (0.3,0.6) {};
    
   	\draw[thick] (0.8,0.3) --++ (0,0.6);
    \node[square, fill=violet] at (0.8,0.6) {};

   	\draw[thick] (1.1,0.3) --++ (0,0.6);
    \node[square, fill=violet] at (1.1,0.6) {};

    \node[tensor] at (-0.3,0.6) {};
    \node[tensorempty,thick] at (1.4,0.6) {};

    \draw[ultra thick] (-0.3,0.6) --++ (0,0.3) ;
    \draw[thick] (-0.3,0.6) --++ (0,-0.3) ;

    \draw[decorate, thick] (1.4,0.65) --++ (0,0.3) ;
    \draw[thick] (1.4,0.55) --++ (0,-0.3) ;

    \draw[ultra thick] (-0.5,1) --++ (0,0.35) ;
    \draw[thick] (-0.3,1) --++ (0,0.35) ;
    \draw[thick] (-0.5,1) --++ (0,-0.35) ;

    \draw[decorate, thick] (1.6,1) --++ (0,0.35) ;
    \draw[thick] (1.4,1.) --++ (0,0.35) ;
    \draw[thick] (1.6,1) --++ (0,-0.35) ;
    \node[] at (0.55, 1.1) {$\overbrace{\hspace{1.8cm}}^{L}$};

    \node[rectangle, fill=citrine] at (1.5,1) {$w_R$};

    \node[rectangle, fill=carrotorange] at (-0.4,1) {$w_L$};

    \def\Sc{1.2};
    \node[] at (1.9, 0.65) {$=$};
    
    \draw[thick,red] (-0.3+\Sc*\Shift, 0.6) --++ (0.75,0);
    \draw[thick,red] (0.65+\Sc*\Shift, 0.6) --++ (0.7,0);
    \node[thick, red] at (0.5+\Sc*\Shift,0.6) {$\ldots$};
	
    \draw[thick] (0+\Sc*\Shift,0.3) --++ (0,0.6);
    \node[square, fill=violet] at (0+\Sc*\Shift,0.6) {};

    \draw[thick] (0.3+\Sc*\Shift,0.3) --++ (0,0.6);
    \node[square, fill=violet] at (0.3+\Sc*\Shift,0.6) {};
    
   	\draw[thick] (0.8+\Sc*\Shift,0.3) --++ (0,0.6);
    \node[square, fill=violet] at (0.8+\Sc*\Shift,0.6) {};

   	\draw[thick] (1.1+\Sc*\Shift,0.3) --++ (0,0.6);
    \node[square, fill=violet] at (1.1+\Sc*\Shift,0.6) {};

    \node[tensor] at (-0.3+\Sc*\Shift,0.6) {};
    \node[tensorempty,thick] at (1.4+\Sc*\Shift,0.6) {};

    \draw[ultra thick] (-0.3+\Sc*\Shift,0.6) --++ (0,0.3) ;
    \draw[thick] (-0.3+\Sc*\Shift,0.6) --++ (0,-0.3) ;

    \draw[decorate, thick] (1.4+\Sc*\Shift,0.65) --++ (0,0.3) ;
    \draw[thick] (1.4+\Sc*\Shift,0.55) --++ (0,-0.3) ;
    
    \node[] at (0.55+\Sc*\Shift, 1.1) {$\overbrace{\hspace{1.8cm}}^{L+2}$};
    
    \end{tikzpicture}
    \label{eq:move_trunc_sym}
\end{equation}
Thanks to our choice \eqref{eq:compat} we can also ensure that 
\begin{equation}
    U^L_g U^\infty_{g^{-1}} = U^{\infty-L}_{g^{-1}},
    \label{eq:complementary}
\end{equation}
where $U^{\infty}_g$ denotes the MPU group representation on the whole chain with periodic boundary conditions and $U^{\infty-L}_{\inv g}$ denotes the truncated symmetry supported in the complement of $U_g^L$. This will be a useful property later.

Let us now show that these spatially truncated symmetries give rise to well-defined defect tensors, this time sitting on sites instead of bonds.
\begin{proposition}
There exists a (non-unique) choice of gauge for the action tensors such that the defect tensors defined by
\begin{equation}
    \begin{tikzpicture}[baseline]
	\draw[thick] (0,0.0) --++ (0,0.6);
    \draw[thick] (-0.4, 0) --++ (0.8,0);
    \node[square] at (0,0) {};
    \node[] at (0.0,-0.3) {$g$};
    \end{tikzpicture}
    \equiv
    \begin{tikzpicture}[baseline]
    \tikzset{decoration={snake,amplitude=.4mm,segment length=2mm, post length=0mm,pre length=0mm}}
    \def\dx{0.3}
    \def\dxb{0.6}
    \def\dy{0.5}
    \def\dyb{1.0}
	\draw[thick] (-\dxb,0.0) --++ (2*\dxb, 0);
    \draw[thick] (\dx, 0) --++ (0.,\dy);
    \draw[thick, decorate] (\dx,\dy) -- (\dx,\dyb);
    \draw[thick, red] (-\dx, \dy) --++ (2*\dx,0);
    \pic [] at (-\dx,0.) {actL=\dy//g/};
    \node[tensor, thick, fill=white] at (\dx,\dy) {};
    \node[tensor, thick] at (\dx,0) {};
    \end{tikzpicture}
\end{equation}
satisfy

\begin{equation}
    \begin{tikzpicture}[baseline]
	\draw[thick] (0,0.0) --++ (0,0.6);
    \draw[thick] (-0.4, 0) --++ (0.8,0);
    \node[square] at (0,0) {};
    \node[] at (0.0,-0.3) {$g$};
    \end{tikzpicture}
    =
    \begin{tikzpicture}[baseline]
    \def\dx{-0.30}
    \def\dxb{0.6}
    \def\dy{0.5}
    \def\dyb{1.0}
	\draw[thick] (-\dxb,0.0) --++ (2*\dxb, 0);
    \draw[thick] (\dx, 0) --++ (0.,\dy);
    \draw[ultra thick] (\dx,\dy) -- (\dx,\dyb);
    \draw[thick, red] (-\dx, \dy) --++ (2*\dx,0);
    \pic [] at (-\dx,0.) {actR=\dy/\quad\inv{g}//};
    \node[tensor, thick] at (\dx,\dy) {};
    \node[tensor, thick] at (\dx,0) {};
    \end{tikzpicture}
\end{equation}
and, for any length $L\geq 0$,
\begin{equation}
    \begin{tikzpicture}[scale=\TikzScaling]
    \def\dx{1.2}
    \tikzset{decoration={snake,amplitude=.4mm,segment length=2mm, post length=0mm,pre length=0mm}}
    
    \draw[thick] (-\dx, 0) -- (\dx,0);
    \node[tensor] at (-0.8,0) {};
    \node[tensor] at (-0.4,0) {};
    \node[tensor] at (0,0) {};
    \node[tensor] at (0.4,0) {};
    \node[tensor] at (0.8,0) {};

    \draw[thick, red] (-0.8, 0.4) -- (0.75, 0.4);

    \draw[thick] (-0.8,0) --++ (0,0.35);
    \draw[ultra thick] (-0.8, 0.45) --++ (0,0.25);
    \draw[thick] (-0.4,0) --++ (0,0.7);
    \draw[thick] (0.0,0) --++ (0,0.7);
    \draw[thick] (0.4,0) --++ (0,0.7);
    \draw[thick] (0.8,0) --++ (0,0.35);
    \draw[decorate, thick] (0.8,0.45) --++ (0,0.25);

    \node[tensor] at (-0.8,0.4) {};
    \node[tensorempty,thick] at (0.8,0.4) {};

    \node[square, fill=violet] at (-0.4,0.4) {};
    \node[square, fill=violet] at (0,0.4) {};
    \node[square, fill=violet] at (0.4,0.4) {};
    \node at (0.,0.9) {$\overbrace{\hspace{1.5cm}}^{L}$};

    \def\Sc{0.5};

    \node[] at (-\dx-0.3, 0-\Sc*\Shift) {$=$};
     
    \draw[thick] (-\dx, 0-\Sc*\Shift) --++ (2*\dx,0);
    \node[square] at (-0.8,0-\Sc*\Shift) {};
    \node[tensor] at (-0.4,0-\Sc*\Shift) {};
    \node[tensor] at (0,0-\Sc*\Shift) {};
    \node[tensor] at (0.4,0-\Sc*\Shift) {};
    \node[square] at (0.8,0-\Sc*\Shift) {};

    \draw[thick] (-0.8,0-\Sc*\Shift) --++ (0,0.5);
    \draw[thick] (-0.4,0-\Sc*\Shift) --++ (0,0.5);
    \draw[thick] (0.0,0-\Sc*\Shift) --++ (0,0.5);
    \draw[thick] (0.4,0-\Sc*\Shift) --++ (0,0.5);
    \draw[thick] (0.8,0-\Sc*\Shift) --++ (0,0.5);

    \node[] at (-0.8,-0.2-\Sc*\Shift) {$\inv{g}$};
    \node[] at (0.8,-0.2-\Sc*\Shift) {$g$};

    \node[] at (\dx+0.1, 0-\Sc*\Shift) {$.$}; 
    \end{tikzpicture}
\end{equation}
\label{prop:1}
\end{proposition}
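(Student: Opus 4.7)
My plan is to split the proof of Proposition~\ref{prop:1} into two logically independent parts: first, showing that the truncated symmetry $U_g^L$ acting on the invariant injective MPS produces the claimed pattern of defects at its boundaries; and second, fixing the gauge of the action tensors so that the two candidate definitions of $\square_g$ coincide.

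For the second claim (truncated symmetry produces defects), I would substitute the decomposition \eqref{eq:truncsym} of $U_g^L$ into $U_g^L|\psi[A]\rangle$ and apply \eqref{eq:action_interior} in the bulk of the support (after blocking so that the nilpotency length equals one). Because the MPS is injective and $|\mathsf{X}|=1$, each acted bulk MPS tensor reduces to the original $A$, leaving exactly one residual action tensor $A^\ammaG_{g,x}$ at the inner left boundary and one $A^\Gamma_{g,x}$ at the inner right boundary of the support. These join with the truncation tensors $Y_2^g$ and $Y_1^g$ from \eqref{eq:truncsym}, producing precisely the defect tensors of Def~2 (with label $\inv g$) at the left end and Def~1 (with label $g$) at the right end. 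Outside the support, the MPS tensors are manifestly unchanged.

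For the first claim (equivalence of the two definitions of $\square_g$), I would invoke the compatibility between Hermitian conjugation and the group structure developed in Section~\ref{sec:daggerinverse}. Under the gauge assumptions of Section~\ref{sec:assumptions} we have $\mathcal{U}_g^\dagger=\mathcal{U}_{\inv g}$ (possibly dressed by $T_g$ for involutive $g$), and \eqref{eq:compat}--\eqref{eq:modif_XY} give $Y_1^g=(X_2^{\inv g})^\dagger$. The analogous identification between the action tensors, $(A^\Gamma_{g,x})^\dagger\propto A^\ammaG_{\inv g, x}$ up to $T_g$-dressing and a scalar $\nu_{g,x}$, follows by taking the Hermitian conjugate of \eqref{eq:action_exterior} written for $\mathcal{U}_g$ and matching it to the same equation written for $\mathcal{U}_{\inv g}$, using the injectivity of the MPS tensor to pin down the fusion to a scalar. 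Substituting both identifications into Def~1 converts it into $\nu_{g,x}$ times Def~2, and the residual scalar can be absorbed using the gauge freedom \eqref{eq:scalar_act_ten}, choosing $\gamma_{g,x}$ so that the rescaled action tensors satisfy Def~1 $=$ Def~2 exactly.

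The main obstacle will be part~1: verifying that the required rescaling $\gamma_{g,x}$ is globally consistent, i.e.\ that a single scalar assignment per pair $(g,x)$ simultaneously absorbs $\nu_{g,x}$ for all $g,x$ while respecting the constraints of Section~\ref{sec:assumptions} (in particular $\gamma_{e,x}=1$ and the pre-fixed choice of $F^<_{g,\inv g}$). For the involutive elements $g=\inv g$ the $T_g$ factors and the sign $\sigma_g$ enter nontrivially; the identities \eqref{eq:zeta}--\eqref{eq:inv_omega} and \eqref{eq:wggg} collected in Proposition~\ref{prop:zetas} should ensure that no cohomological obstruction arises and that the rescaling closes up into an admissible gauge choice.
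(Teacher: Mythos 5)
Your two-part decomposition matches the shape of the statement, but both mechanisms you propose have genuine gaps, and you are missing the ingredient the paper's argument actually turns on, namely Lemma~\ref{lemma:1} (an impurity tensor sandwiched between injective MPS tensors is determined up to a scalar). For the equality of the two defect definitions, substituting $Y_1^g=(X_2^{\inv g})^\dagger$ and $(A^\Gamma_{g})^\dagger\propto A^{\ammaG}_{\inv g}$ into the first definition does not produce the second one: it produces a network built from the \emph{daggered} tensors $(X_2^{\inv g})^\dagger$ and $(A^{\ammaG}_{\inv g})^\dagger$, whereas the second definition contains the undaggered $Y_2^{\inv g}$ and $A^{\ammaG}_{\inv g}$, and there is no identity of the form $(X_2)^\dagger\propto Y_2$ (that would force $\mc U=X_2Y_2$ to be proportional to a positive map). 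The intermediate identification $(A^\Gamma_{g})^\dagger\propto A^{\ammaG}_{\inv g}$ is itself unjustified: taking the Hermitian conjugate of \eqref{eq:action_exterior} conjugates every MPS tensor as well, so what you obtain are action tensors for the conjugate MPS, not for $A$. The paper never compares the two definitions tensor by tensor. It computes $U^L_g\ket{\psi[A]}$ in two independent ways --- from \eqref{eq:action_interior} after decomposing the boundary MPU tensors and recognizing the whole top layer as $(U^L_{\inv g})^\dagger$ via \eqref{eq:compat}, and from \eqref{eq:action_exterior} together with the left inverses of the $X_i$ --- so that both candidate defect tensors appear as boundary insertions producing the \emph{same} state; Lemma~\ref{lemma:1} then forces them to agree up to scalars $\alpha_g$ with $\alpha_g=\alpha_{\inv g}=L_{g,\inv g}$, which is exactly what makes the rescaling by $\sqrt{\alpha_g}$ compatible with the conventions of Section~\ref{sec:assumptions}. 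The global-consistency issue you single out as the main obstacle is therefore essentially automatic once the lemma is in place; the hard part is getting the proportionality in the first place.

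For the defect-creation identity, you cannot ``apply \eqref{eq:action_interior} in the bulk'' of $U^L_g\ket{\psi[A]}$: that identity replaces the pattern $A^\Gamma_{g}\,(\mc U_g A)^{n}\,A^{\ammaG}_{g}$ by $n$ copies of $A$, and no action tensors are present in $U^L_g\ket{\psi[A]}$ to begin with, so there is no occurrence of the pattern to substitute. The network you actually have, from \eqref{eq:truncsym}, is $(Y_2^gA)(\mc U_gA)^{L-2}(Y_1^gA)$, and extracting the boundary structure requires either inverting the complementary truncated symmetry $(U^L_{\inv g})^\dagger$ as above, or invoking \eqref{eq:action_exterior} and stripping the $X_i$ tensors with their one-sided inverses. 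Your sketch skips precisely these steps.
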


    To prove the above proposition, we will need the following.
    \begin{lemma}
    \label{lemma:1}
        Let $\{A_i\}_{i=1}^n$ be injective MPS tensors, and let 
\begin{equation}
    \begin{tikzpicture}[scale=\TikzScaling]
    
    \tikzset{decoration={snake,amplitude=.4mm,segment length=2mm, post length=0mm,pre length=0mm}}
    
    \draw[thick] (-1.8, 0) --++ (1.6,0);
    \draw[thick] (0.2,0) --++ (0.8,0);
    \node[tensor] at (-1.6,0) {};
    \node[tensor] at (-1.2,0) {};
    \node[tensor] at (-0.8,0) {};
    \node[tensor] at (-0.4,0) {};
    \node[] at (0,0) {$\ldots$};
    \node[tensor] at (0.4,0) {};
    \node[tensor] at (0.8,0) {};

    \draw[thick] (-1.6,0) --++ (0,0.3);
    \draw[thick] (-1.2,0) --++ (0,0.3);
    \draw[thick] (-0.8,0) --++ (0,0.3);
    \draw[thick] (-0.4,0) --++ (0,0.3);
    \draw[thick] (0.4,0) --++ (0,0.3);
    \draw[thick] (0.8,0) --++ (0,0.3);
  
    \node[] at (-1.6,-0.2) {$A_1$};
    \node[] at (-1.2,-0.2) {$B_1$};
    \node[] at (-0.8,-0.2) {$A_2$};
    \node[] at (-0.4,-0.2) {$B_2$};
    \node[] at (0.4,-0.2) {$A_n$};
    \node[] at (0.8,-0.2) {$B_n$};
     
    \def\Sc{0.35};

    \node[] at (-2.0, 0-\Sc*\Shift) {$=$};
     
    \draw[thick] (-1.8, 0-\Sc*\Shift) --++ (1.6,0);
    \draw[thick] (0.2,0-\Sc*\Shift) --++ (0.8,0);
    \node[tensor] at (-1.6,0-\Sc*\Shift) {};
    \node[tensor] at (-1.2,0-\Sc*\Shift) {};
    \node[tensor] at (-0.8,0-\Sc*\Shift) {};
    \node[tensor] at (-0.4,0-\Sc*\Shift) {};
    \node[] at (0,0-\Sc*\Shift) {$\ldots$};
    \node[tensor] at (0.4,0-\Sc*\Shift) {};
    \node[tensor] at (0.8,0-\Sc*\Shift) {};

    \draw[thick] (-1.6,0-\Sc*\Shift) --++ (0,0.3);
    \draw[thick] (-1.2,0-\Sc*\Shift) --++ (0,0.3);
    \draw[thick] (-0.8,0-\Sc*\Shift) --++ (0,0.3);
    \draw[thick] (-0.4,0-\Sc*\Shift) --++ (0,0.3);
    \draw[thick] (0.4,0-\Sc*\Shift) --++ (0,0.3);
    \draw[thick] (0.8,0-\Sc*\Shift) --++ (0,0.3);
  
    \node[] at (-1.6,-0.2-\Sc*\Shift) {$A_1$};
    \node[] at (-1.2,-0.2-\Sc*\Shift) {$B_1'$};
    \node[] at (-0.8,-0.2-\Sc*\Shift) {$A_2$};
    \node[] at (-0.4,-0.2-\Sc*\Shift) {$B_2'$};
    \node[] at (0.4,-0.2-\Sc*\Shift) {$A_n$};
    \node[] at (0.8,-0.2-\Sc*\Shift) {$B_n'$};

    \node[] at (1.1, 0-\Sc*\Shift) {$.$}; 
    \end{tikzpicture}
    \label{eq:lemma_equality}
\end{equation}
        Then there exist scalars $\beta_i$, $i=1,\ldots, n$, $\prod_{i=1}^n{\beta_i}=1$, such that $B_i = \beta_i B'_i$.
    \end{lemma}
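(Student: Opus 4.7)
The plan is to exploit the injectivity of each $A_i$ by applying its physical-leg inverse to ``peel'' every $A_i$ off the network in Eq.~\eqref{eq:lemma_equality}, thereby reducing the statement to a standard uniqueness property of tensor products.

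By injectivity, for each $A_i$ there is an inverse tensor $\underline{A_i}$ such that contracting $\underline{A_i}$ on top of $A_i$ through their shared physical leg produces the identity $\delta_{\alpha\alpha'}\delta_{\beta\beta'}$ on the bond indices (the two U-shaped wires in the diagram defining injectivity). I would apply $\underline{A_i}$ to the physical leg of $A_i$ on both sides of Eq.~\eqref{eq:lemma_equality} for every $i = 1, \ldots, n$. Each such operation is local and cuts the chain at the position of $A_i$, because it replaces $A_i$ by a pair of independent identity wires on its former bonds. After doing this for every $i$, the network collapses into a disjoint product of the $B_i$'s, each standing alone with its original physical leg and two free bond indices (inherited from the neighboring $\underline{A_i}$'s). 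Hence Eq.~\eqref{eq:lemma_equality} becomes
\begin{equation}
    B_1 \otimes B_2 \otimes \cdots \otimes B_n = B'_1 \otimes B'_2 \otimes \cdots \otimes B'_n,
\end{equation}
understood as an equality of tensors in the tensor product of the respective physical-and-bond spaces of the $B_i$'s.

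To conclude, I would invoke the standard uniqueness-up-to-scalar property of tensor product decompositions: if $\bigotimes_i v_i = \bigotimes_i v'_i$ in $\bigotimes_i V_i$ with every $v_i, v'_i \neq 0$, then there exist scalars $\beta_i \in \C^\times$ with $\prod_i \beta_i = 1$ such that $v_i = \beta_i v'_i$. This is a short induction: evaluate both sides on a linear functional on $V_1$ that does not annihilate $v'_1$ to extract $\beta_1$, then iterate on the remaining factors. Specializing to our situation immediately yields the claim.

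The only mild obstacle is the degenerate case in which some $B_i$ vanishes: this forces the left-hand side to be zero, which in turn forces some $B'_j$ to vanish, and the proportionality statement becomes essentially trivial. In the downstream applications of this lemma the relevant $B_i$, $B'_i$ are all nonzero (they will come from genuine action-tensor or movement-operator insertions), so this pathology does not arise.
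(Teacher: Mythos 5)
Your proposal is correct and follows essentially the same route as the paper: apply the injectivity inverses of the $A_i$ to collapse Eq.~\eqref{eq:lemma_equality} into $\bigotimes_i B_i = \bigotimes_i B'_i$ and then invoke the uniqueness-up-to-scalars of tensor product decompositions. The paper leaves that last step and the degenerate (vanishing-$B_i$) case implicit, so your write-up is if anything slightly more complete.
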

    \begin{proof}
        Since the $A_i$ are injective, they have inverse tensors such that
       \begin{equation}
            \begin{tikzpicture}[scale=\TikzScaling]
            \def\y{0.6}
            \draw[thick] (-0.4,0) --++ (0.8,0);
            \draw[thick] (-0.4,\y) --++ (0.8,0);
            \draw[thick] (0,0) --++ (0,\y);
            \node[tensor] at (0,0) {};
            \node[tensor] at (0,\y) {};
            \node[] at (0.1, -0.2) {$A_i^{\phantom{-1}}$};
            \node[] at (0.1, \y+0.25) {$A^{-1}_i$};
            \node[] at (0.65,0.3) {$=$};
            \def\Sc{0.6}
            \def\y{0.6}
            \def\dx{0.05}
            \draw[thick] (-0.4+\Sc*\Shift,0) --++ (0.4-\dx,0);
            \draw[thick] (0.4+\Sc*\Shift,0) --++ (-0.4+\dx,0);
            \draw[thick] (-0.4+\Sc*\Shift,\y) --++ (0.35, 0);
            \draw[thick] (0.05+\Sc*\Shift,\y) --++ (0.35, 0);
            \draw[thick] (-0.05+\Sc*\Shift,0) --++ (0,\y);
            \draw[thick] (0.05+\Sc*\Shift,0) --++ (0,\y);
            \node[] at (3*\Sc, 0.5*\y) {$.$};
        \end{tikzpicture}
        \end{equation}
        Applying them to both sides of \eqref{eq:lemma_equality} we have
        \begin{equation}
            \bigotimes_{i=1}{B_i} = \bigotimes_{i=1}{B'_i}
        \end{equation}
        and the conclusion follows.
    \end{proof}
    We now proceed with the proof of the proposition.
    \begin{proof}[Proof of Proposition~\ref{prop:1}] Consider first Eq. \eqref{eq:action_interior}, and decompose the tensors at the boundaries:
    \begin{equation}
        \begin{tikzpicture}[baseline=+3mm]
        \tikzset{decoration={snake,amplitude=.4mm,segment length=2mm, post length=0mm,pre length=0mm}}
        \def\dy{0.5}
            \draw[thick] (0.3,0.) --++ (1.7,0.);
            \draw[thick] (-0.3,0.) --++ (-1.7,0.);
            \node[] at (0.,0.0) {$\ldots$};
            \foreach \x in {-0.9,-0.5,0.5,0.9}{
                \node[tensor] at (\x, 0) {}; }
            \draw[thick] (-0.9,0.) --++ (0.,0.6);
            \draw[thick] (0.9,0.) --++ (0.,0.6);
            \draw[ultra thick] (0.9,0.6) --++ (0.0,\dy);
            \draw[thick, decorate] (-0.9,0.6) --++ (0.,\dy);
            \draw[thick] (-0.9,0.6+\dy) --++ (0.,0.3);
            \draw[thick] (0.9,0.6+\dy) --++ (0.,0.3);
            \draw[thick] (-0.5, 0.) -- (-0.5, 0.6+\dy+0.3);
            \draw[thick] (0.5, 0.) -- (0.5, 0.6+\dy+0.3);
            \draw[thick, red] (0.3,0.6+\dy) --++ (0.6,0.);
            \draw[thick, red] (-0.9,0.6+\dy) --++ (0.6,0.);
            
            \pic at (-1.4,0.3) {AL=g};
            \pic at (1.4,0.3) {AR=g};
            \node[red] at (0.,0.6+\dy) {$\ldots$};
            \node[square, fill=violet] at (0.5, 0.6+\dy) {};
            \node[square, fill=violet] at (-0.5, 0.6+\dy) {};
            \node[tensor, fill=white] at (-0.9, .6) {};
            \node[tensor, fill=white] at (-0.9, 0.6+\dy) {};
            \node[tensor] at (0.9, .6) {};
            \node[tensor] at (0.9, 0.6+\dy) {};
        \end{tikzpicture}
=
  \begin{tikzpicture}[baseline=3mm]
        \tikzset{decoration={snake,amplitude=.4mm,segment length=2mm, post length=0mm,pre length=0mm}}
        \def\dy{0.5}
            \draw[thick] (0.3,0.) --++ (1.,0.);
            \draw[thick] (-0.3,0.) --++ (-1.,0.);
            \node[] at (0.,0.0) {$\ldots$};
            \foreach \x in {-0.9,-0.5,0.5,0.9}{
                \node[tensor] at (\x, 0) {}; }
            \draw[thick] (-0.9,0.) --++ (0.,0.6);
            \draw[thick] (0.9,0.) --++ (0.,0.6);
            \draw[thick] (-0.5,0.) --++ (0.,0.6);
            \draw[thick] (0.5,0.) --++ (0.,0.6);
            \node[] at (1.4,0.) {$.$};
     \end{tikzpicture}
    \end{equation}
Thanks to \eqref{eq:compat}, we can identify the tensors on top as the inverse of the truncated symmetry of $g^{-1}$:
\begin{equation}
(U^L_{g^{-1}})^\dagger = 
    \begin{tikzpicture}[baseline=+3mm]
        \tikzset{decoration={snake,amplitude=.4mm,segment length=2mm, post length=0mm,pre length=0mm}}
        \def\dy{0.4}
            \draw[ultra thick] (0.9,0.) --++ (0.0,\dy);
            \draw[thick, decorate] (-0.9,0.) --++ (0.,\dy);
            \draw[thick] (-0.9,0.+\dy) --++ (0.,0.3);
            \draw[thick] (0.9,0.+\dy) --++ (0.,0.3);
            \draw[thick] (-0.5, 0.) -- (-0.5, 0.+\dy+0.3);
            \draw[thick] (0.5, 0.) -- (0.5, 0.+\dy+0.3);
            \draw[thick, red] (0.3,0.+\dy) --++ (0.6,0.);
            \draw[thick, red] (-0.9,0.+\dy) --++ (0.6,0.);

            \node[red] at (0.,0.+\dy) {$\ldots$};
            \node[square, fill=violet] at (0.5, 0.+\dy) {};
            \node[square, fill=violet] at (-0.5, 0.+\dy) {};
            \node[tensor, fill=white] at (-0.9, 0.+\dy) {};
            \node[tensor] at (0.9, 0.+\dy) {};
        \end{tikzpicture}
\end{equation}
resulting in 
\begin{equation}
    \begin{tikzpicture}
        \def\dy{0.6}
        \tikzset{decoration={snake,amplitude=.4mm,segment length=2mm, post length=0mm,pre length=0mm}}
        \draw[thick] (0.3,0.) --++ (1.,0.);
        \draw[thick] (-0.3,0.) --++ (-1.,0.);
        \node[] at (0.,0.0) {$\ldots$};
        \foreach \x in {-0.9,-0.5,0.5,0.9}{
            \node[tensor] at (\x, 0) {}; }
        \draw[thick] (-0.9,0.) --++ (0.,\dy);
        \draw[thick] (0.9,0.) --++ (0.,\dy);
        \draw[thick] (-0.5,0.) --++ (0.,\dy+0.3);
        \draw[thick] (0.5,0.) --++ (0.,\dy+0.3);
        \draw[ultra thick] (-0.9,\dy) --++ (0.0,0.3);
        \draw[thick, decorate] (+0.9,\dy) --++ (0.,0.3);
        \draw[thick, red] (0.3,\dy) --++ (0.6,0.);
        \draw[thick, red] (-0.9,\dy) --++ (0.6,0.);
        \node[red] at (0.,\dy) {$\ldots$};
        \node[square,fill=violet] at (0.5, \dy) {};
        \node[square,fill=violet] at (-0.5, \dy) {};
        \node[tensor, fill=white] at (0.9, \dy) {};
        \node[tensor] at (-0.9, \dy) {};
        \node[] at (-1.1, \dy) {$g$};
    \end{tikzpicture}
    =
    \begin{tikzpicture}
        \def\dy{0.6}
        \def\dx{0.9}
        \def\dxb{1.4}
        \tikzset{decoration={snake,amplitude=.4mm,segment length=2mm, post length=0mm,pre length=0mm}}
        \draw[thick] (0.3,0.) --++ (1.3,0.);
        \draw[thick] (-0.3,0.) --++ (-1.3,0.);
        \node[] at (0.,0.0) {$\ldots$};
        \foreach \x in {-\dx,-0.5,0.5,\dx}{
            \node[tensor] at (\x, 0) {}; }
        \draw[thick] (-\dx,0.) --++ (0.,\dy);
        \draw[thick] (\dx,0.) --++ (0.,\dy);
        \draw[thick] (-0.5,0.) --++ (0.,\dy+0.1);
        \draw[thick] (0.5,0.) --++ (0.,\dy+0.1);
        \draw[ultra thick] (\dx,\dy) --++ (0.0,0.3);
        \draw[thick, decorate] (-\dx,\dy) --++ (0.,0.3);

        \pic at (\dxb,0.3) {AR=g^{-1}};
        \pic at (-\dxb,0.3) {AL=g^{-1}};

        \node[tensor, fill=white] at (-\dx, \dy) {};
        \node[tensor] at (\dx, \dy) {};
    \end{tikzpicture},
\end{equation}
where we have exchanged $g$ and $g^{-1}$, for any $L\geq 2$. For $L\geq 3$, we also have, using \eqref{eq:action_exterior}, and the fact that the $X_i$ tensors have (left) inverses,
\begin{equation}
    \begin{tikzpicture}
        \def\dy{0.6}
        \tikzset{decoration={snake,amplitude=.4mm,segment length=2mm, post length=0mm,pre length=0mm}}
        \draw[thick] (0.3,0.) --++ (1.,0.);
        \draw[thick] (-0.3,0.) --++ (-1.,0.);
        \node[] at (0.,0.0) {$\ldots$};
        \foreach \x in {-0.9,-0.5,0.5,0.9}{
            \node[tensor] at (\x, 0) {}; }
        \draw[thick] (-0.9,0.) --++ (0.,\dy);
        \draw[thick] (0.9,0.) --++ (0.,\dy);
        \draw[thick] (-0.5,0.) --++ (0.,\dy+0.3);
        \draw[thick] (0.5,0.) --++ (0.,\dy+0.3);
        \draw[ultra thick] (-0.9,\dy) --++ (0.0,0.3);
        \draw[thick, decorate] (+0.9,\dy) --++ (0.,0.3);

        \draw[thick, red] (0.3,\dy) --++ (0.6,0.);
        \draw[thick, red] (-0.9,\dy) --++ (0.6,0.);
        \node[red] at (0.,\dy) {$\ldots$};
        \node[square,fill=violet] at (0.5, \dy) {};
        \node[square,fill=violet] at (-0.5, \dy) {};
        \node[tensor, fill=white] at (0.9, \dy) {};
        \node[tensor] at (-0.9, \dy) {};
        \node[] at (-1.1, \dy) {$g$};
    \end{tikzpicture}
    =
    \begin{tikzpicture}
        \def\dy{0.6}
        \def\dx{1.4}
        \tikzset{decoration={snake,amplitude=.4mm,segment length=2mm, post length=0mm,pre length=0mm}}
        \draw[thick] (0.3,0.) --++ (1.3,0.);
        \draw[thick] (-0.3,0.) --++ (-1.3,0.);
        \node[] at (0.,0.0) {$\ldots$};
        \foreach \x in {-\dx,-0.5,0.5,\dx}{
            \node[tensor] at (\x, 0) {}; }
        \draw[thick] (-\dx,0.) --++ (0.,\dy);
        \draw[thick] (\dx,0.) --++ (0.,\dy);
        \draw[thick] (-0.5,0.) --++ (0.,\dy+0.1);
        \draw[thick] (0.5,0.) --++ (0.,\dy+0.1);
        \draw[ultra thick] (-\dx,\dy) --++ (0.0,0.3);
        \draw[thick, decorate] (\dx,\dy) --++ (0.,0.3);

        \pic at (-0.9,0.3) {AR=g};
        \pic at (0.9,0.3) {AL=g};

        \node[tensor, fill=white] at (\dx, \dy) {};
        \node[tensor] at (-\dx, \dy) {};
    \end{tikzpicture}.
\end{equation}
Now it is time to use Lemma \ref{lemma:1} to conclude there are scalars $\alpha_g, g\in G$, with $\alpha_g=\alpha_{g^{-1}}$, such that
\begin{align}
    \begin{tikzpicture}
        \tikzset{decoration={snake,amplitude=.4mm,segment length=2mm, post length=0mm,pre length=0mm}}
        \draw[thick] (-0.5, 0) --++ (1.5, 0);
        \draw[thick] (0.5, 0) --++(0.0, 0.6);
        \draw[thick, decorate] (0.5, 0.6) --++ (0.0, 0.3);
        \pic at (0.0,0.3) {AL=g};
        \node[tensor] at (0.5, 0.0) {};
        \node[tensor, fill=white] at (0.5, 0.6) {};
    \end{tikzpicture}
    = \alpha_g
        \begin{tikzpicture}
        \draw[thick] (-0.5, 0) --++ (1.5, 0);
        \draw[thick] (0., 0) --++(0.0, 0.6);
        \draw[ultra thick] (0., 0.6) --++ (0.0, 0.3);
        \pic at (0.5,0.3) {AR=g^{-1}};
        \node[tensor] at (0., 0.0) {};
        \node[tensor] at (0., 0.6) {};
    \end{tikzpicture},
\end{align}
and we can absorb this scalar into the gauge of the action tensors, for instance by doing $A^\ammaG_g\to\sqrt{\alpha_g}A^\ammaG_g$ and $A^\Gamma_g\to A^\Gamma_g/\sqrt{\alpha_g}$. This allows us to define the defect tensors and it finishes the proof. We can recognize $\alpha_g=L_{g, g^{-1}}$, and our partial gauge choice for the action tensors amounts to setting $L_{g,g^{-1}}=1$, together with the rest of assumptions made in Section \ref{sec:assumptions}. We explain this in more detail in the proof of Prop.~\ref{prop:def_tens_anomalous}.
\end{proof}
Equation \eqref{eq:move_trunc_sym} suggests that these defects should move by the operators $w_R, w_L$ defined from the MPU tensors
\begin{equation}
        \begin{tikzpicture}[scale=\TikzScaling]
            \draw[thick] (-0.2,-0.65) --++ (0,0.8);
            \draw[thick] (0.2,-0.65) --++ (0,0.8);

            \node[rectangle, fill=carrotorange] at (0,-0.2) {$w_L$};
            \draw[thick] (-0.4,-0.65) --++ (0.8,0);
            \node[tensor] at (-0.2,-0.65) {};
            \node[square] at (0.2,-0.65) {};
            \node at (0.2,-0.85) {$g$};

            \node[] at (0.6, -0.65) {$=$};

            \def\Sc{0.6};
             
            \draw[thick] (-0.4+\Sc*\Shift,-0.65) --++ (0.8,0);
            \node[square] at (-0.2+\Sc*\Shift,-0.65) {};
            \node[tensor] at (0.2+\Sc*\Shift,-0.65) {};
            \node at (-0.2+\Sc*\Shift,-0.85) {$g$};

            \draw[thick] (-0.2+\Sc*\Shift,-0.65) --++ (0,0.4);
            \draw[thick] (0.2+\Sc*\Shift,-0.65) --++ (0,0.4);

            \node[] at (0.5+\Sc*\Shift,-0.65) {$,$};
            
        \end{tikzpicture}\qquad
        \begin{tikzpicture}[scale=\TikzScaling]
            \draw[thick] (-0.2,-0.65) --++ (0,0.8);
            \draw[thick] (0.2,-0.65) --++ (0,0.8);

            \node[rectangle, fill=citrine] at (0,-0.2) {$w_R$};
            \draw[thick] (-0.4,-0.65) --++ (0.8,0);
            \node[square] at (-0.2,-0.65) {};
            \node[tensor] at (0.2,-0.65) {};
            \node at (-0.2,-0.85) {$g$};

            \node[] at (0.6, -0.65) {$=$};

            \def\Sc{0.6};
             
            \draw[thick] (-0.4+\Sc*\Shift,-0.65) --++ (0.8,0);
            \node[tensor] at (-0.2+\Sc*\Shift,-0.65) {};
            \node[square] at (0.2+\Sc*\Shift,-0.65) {};
            \node at (0.2+\Sc*\Shift,-0.85) {$g$};

            \draw[thick] (-0.2+\Sc*\Shift,-0.65) --++ (0,0.4);
            \draw[thick] (0.2+\Sc*\Shift,-0.65) --++ (0,0.4);

            \node[] at (0.5+\Sc*\Shift,-0.65) {$.$};
            
        \end{tikzpicture}
\end{equation}
More generally, there exist fusion operators as in the following
\begin{proposition}
\label{prop:3}
    Consider defect tensors defined as in Proposition~\ref{prop:1}. Then there exists a choice of the scalar freedom of the fusion and action tensors such that the operators 
    \begin{subequations}
        \begin{equation}
          \begin{tikzpicture}[scale=\TikzScaling]
              \node[] at (-0.45, 0.285) {$\lambda^L_{g,h} \equiv$};
              \draw[thick, red] (0.05,0.0) --++ (0.4,0.0); 
              \draw[thick, red] (0.05,0.0)--++ (0.0,0.5855);
              \pic (w2) at (0.4,0) {V=//};
              \draw[thick, red] (0.05,0.5855) --++ (0.7,0);

              \draw[thick,red] (0.75,0.2865) --++ (0.4,0);
              \node[tensor] at (0.8,0.5855) {};
              \node[tensor] at (0.8,-0.285) {};
              \node[tensor] at (1.2,0.2865) {};
              \draw[thick] (0.8,0.5855) -- (0.8,-0.285);
              \draw[ultra thick] (0.8,0.5855) --++ (0,0.4);
              \draw[ultra thick] (0.8,-0.285) --++ (0,-0.4);
              \draw[thick] (1.2,0.2865) --++ (0, 0.4);
              \draw[ultra thick] (1.2,0.2865) --++ (0, -0.4);
              \node[square, fill=violet] at (0.8,0.285) {};
              \node[] at (0.3, -0.5) {$F^<_{h^{\! -\!1}\!,g^{\!-\!1}}$};
              \node[] at (1.1, -0.3) {$X_2^{g^{\!-\!1}}$};
              \node[] at (1.05, 0.4) {$h^{\!-\!1}$};
              \node[] at (1.5,0.4) {$X_2^{h^{\!-\!1}}$};
              \node[] at (1.35, 0.8) {$\bigl(X_2^{h^{\!-\! 1}\! g^{\!-\!1}}\bigr)^{\dagger}$};
              \node[] at (1.6, 0.2) {$,$};
          \end{tikzpicture}          
        \end{equation}
      \begin{equation}
        \begin{tikzpicture}[scale=\TikzScaling]
        \tikzset{decoration={snake,amplitude=.4mm,segment length=2mm, post length=0mm,pre length=0mm}}

             \node[] at (-0.15, 0.285) {$\lambda^R_{g,h} \equiv$};
              \def\Sc{0.05};

            \draw[thick, red] (1.25+\Sc*\Shift,0.0)--++(0.35,0.0);
             \draw[thick,red] (1.6+\Sc*\Shift,0.0)--++(0.0,0.585);
             \pic (w2) at (1.25+\Sc*\Shift,0) {W=//};
             \draw[thick, red] (0.9+\Sc*\Shift,0.585) --++ (0.7,0);
             \node[tensor, fill=white] at (0.85+\Sc*\Shift,0.585) {};
             \node[tensor, fill=white] at (0.85+\Sc*\Shift,-0.28) {};
              \draw[thick, red] (0.85+\Sc*\Shift, 0.285) -- (0.55+\Sc*\Shift, 0.285);
             \draw[thick] (0.85+\Sc*\Shift,0.535) -- (0.85+\Sc*\Shift,-0.23);
             \node[tensor, fill=white] at (0.5+\Sc*\Shift,0.285) {};
             \draw[decorate, thick] (0.85+\Sc*\Shift,0.635) -- (0.85+\Sc*\Shift, 1.0);
             \draw[decorate, thick] (0.85+\Sc*\Shift,-0.33) -- (0.85+\Sc*\Shift, -0.7);
             \draw[decorate, thick] (0.5+\Sc*\Shift,0.23) -- (0.5+\Sc*\Shift, -0.2);
             \draw[thick] (0.5+\Sc*\Shift,0.34) -- (0.5+\Sc*\Shift,0.75);
             \node[square, fill=violet] at (0.85+\Sc*\Shift,0.285) {};
             \node[] at (0.3+\Sc*\Shift, 0.4) {$X_1^g$};
             \node[] at (1+\Sc*\Shift, 0.4) {$g$};
             \node[] at (0.65+\Sc*\Shift, -0.2) {$X_1^h$};
             \node[] at (1.3+\Sc*\Shift, 0.75) {$(X^{gh}_1)^\dagger$};
             \node[square, fill=white] at (1.6+\Sc*\Shift, 0.285) {};
             \node[] at (1.85+\Sc*\Shift, 0.285) {$\rho_g$};
             \node[] at (1.5+\Sc*\Shift, -0.4) {$F^>_{g,h}$};
             \end{tikzpicture}
    \label{eq:lambda_R}
    \end{equation}
    \label{def:lambdas}
    \end{subequations}
    are unitary and satisfy
    \begin{subequations}
      \begin{equation}
        \begin{tikzpicture}[scale=\TikzScaling]
        
            \node[rectangle, fill=white] at (0,0) {$\lambda^L_{g,h}$};
            
            \draw[thick] (-0.2,0.25) --++ (0,0.2);
            \draw[thick] (0.2,0.25) --++ (0,0.2);
            \draw[thick] (-0.2,-0.25) --++ (0,-0.4);
            \draw[thick] (0.2,-0.25) --++ (0,-0.4);

            \draw[thick] (-0.4,-0.65) --++ (0.8,0);
            \node[square] at (-0.2,-0.65) {};
            \node[square] at (0.2,-0.65) {};
            \node at (-0.2,-0.85) {$g$};
            \node at (0.2,-0.85) {$h$};

            \node[] at (0.6, -0.65) {$=$};

            \def\Sc{0.6};
             
            \draw[thick] (-0.4+\Sc*\Shift,-0.65) --++ (0.8,0);
            \node[square] at (-0.2+\Sc*\Shift,-0.65) {};
            \node[tensor] at (0.2+\Sc*\Shift,-0.65) {};
            \node at (-0.2+\Sc*\Shift,-0.85) {$gh$};

            \draw[thick] (-0.2+\Sc*\Shift,-0.65) --++ (0,0.4);
            \draw[thick] (0.2+\Sc*\Shift,-0.65) --++ (0,0.4);

            \node[] at (0.5+\Sc*\Shift,-0.65) {$,$};
            
        \end{tikzpicture}
        \label{eq:fus_def_1}
    \end{equation}
    \begin{equation}
        \begin{tikzpicture}[scale=\TikzScaling]
        
            \node[rectangle, fill=white] at (0,0) {$\lambda^R_{g,h}$};
            
            \draw[thick] (-0.2,0.25) --++ (0,0.2);
            \draw[thick] (0.2,0.25) --++ (0,0.2);
            \draw[thick] (-0.2,-0.25) --++ (0,-0.4);
            \draw[thick] (0.2,-0.25) --++ (0,-0.4);

            \draw[thick] (-0.4,-0.65) --++ (0.8,0);
            \node[square] at (-0.2,-0.65) {};
            \node[square] at (0.2,-0.65) {};
            \node at (-0.2,-0.85) {$g$};
            \node at (0.2,-0.85) {$h$};

            \node[] at (0.6, -0.65) {$=$};

            \def\Sc{0.6};
             
            \draw[thick] (-0.4+\Sc*\Shift,-0.65) --++ (0.8,0);
            \node[square] at (0.2+\Sc*\Shift,-0.65) {};
            \node[tensor] at (-0.2+\Sc*\Shift,-0.65) {};
            \node at (0.2+\Sc*\Shift,-0.85) {$gh$};

            \draw[thick] (-0.2+\Sc*\Shift,-0.65) --++ (0,0.4);
            \draw[thick] (0.2+\Sc*\Shift,-0.65) --++ (0,0.4);
            \node[] at (0.5+\Sc*\Shift,-0.65) {$.$};

        \end{tikzpicture}
        \label{eq:fus_def_2}
    \end{equation}
    \end{subequations}
    In particular, $\lambda^L_{g,e} = \lambda^R_{e,g} = \mathds{1}$, $\lambda^L_{e,g}=w_L^g$, and $\lambda^R_{g,e}=w_R^g$. This choice of gauge also satisfies $\omega(g,h,k)=L_{g,h} = 1$, and is compatible with the one made in Proposition \ref{prop:1} and all our assumptions (Section \ref{sec:assumptions}).
    \label{prop:2}
\end{proposition}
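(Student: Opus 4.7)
The plan is to verify the fusion identities \eqref{eq:fus_def_1}--\eqref{eq:fus_def_2} as diagrammatic consequences of the MPU fusion relation \eqref{eq:fusion_2} and the defect tensor definition from Proposition~\ref{prop:1}; deduce unitarity of $\lambda^{L/R}_{g,h}$ from the elementary building-block unitaries of Theorem~\ref{thm:mpu} and Corollary~\ref{cor:mpu}; and finally determine the scalar gauge needed to make these statements sharp, checking compatibility with the assumptions from Section~\ref{sec:assumptions} and Proposition~\ref{prop:1}.

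For \eqref{eq:fus_def_1}, I would expand each defect on the left-hand side by its definition in Proposition~\ref{prop:1}: the $g$-defect is an action tensor $A^{\Gamma}_{g}$ followed by an $X_2^{\inv g}$ leg sticking out of the (truncated) MPU, and similarly for $h$. Stacking $\lambda^L_{g,h}$ on top brings the $X_2^{\inv g}$ and $X_2^{\inv h}$ factors of $\lambda^L_{g,h}$ into contact with the $X_2$ legs of the two defect tensors. Using Corollary~\ref{cor:mpu}(b), each $X_2^{\dagger}X_2$ collapses up to a $\rho$ factor, which combines with the outgoing physical leg to rebuild two full MPU columns (for $\inv g$ and $\inv h$) acting on the MPS. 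The bottom $F^<_{\inv h, \inv g}$ tensor then invokes the MPU fusion relation \eqref{eq:fusion_2}, merging those two columns into a single column labeled $\inv h\,\inv g = (gh)^{-1}$; the surviving $(X_2^{(gh)^{-1}})^\dagger$ at the top is precisely the $X$-leg of the new defect tensor for $gh$, yielding the right-hand side. The argument for \eqref{eq:fus_def_2} is entirely parallel, with $X_2$ replaced by $X_1$, the explicit $\rho_g$ factor of \eqref{eq:lambda_R} compensating (via Corollary~\ref{cor:mpu}(b)) for the opposite orientation, and $F^>_{g,h}$ playing the role of the fusion tensor.

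Unitarity follows by a direct diagrammatic computation. For $\lambda^L_{g,h}$, contracting it with its adjoint aligns three pairs $X_2 X_2^{\dagger}$, which by Corollary~\ref{cor:mpu}(b) collapse to identities (modulo the $\rho$'s), leaving behind only $(F^<_{\inv h, \inv g})^\dagger F^<_{\inv h, \inv g}$ against a fragment of the $(gh)^{-1}$ MPU tensor; by the fusion relation and the injectivity of the MPU, this whole combination is the identity on the defect legs. The opposite direction $\lambda^L(\lambda^L)^\dagger = \I$ is checked symmetrically, using the unitarity of the $w_L$ gate in \eqref{eq:wLR} and the $X_2$-identity in \eqref{eq:thm1c}. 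An analogous argument, now using the $X_1,Y_1$ identities \eqref{eq:MPUnice2} and the built-in $\rho_g$, handles $\lambda^R_{g,h}$.

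The main obstacle is pinning down the scalar gauge. In the injective case $|\mathsf{X}|=1$, relation \eqref{eq:omega_and_Ls} forces $\omega$ to be a 3-coboundary, and the action-tensor gauge of \eqref{eq:Lsymbgauge} (together with a compatible adjustment of $\beta_{g,h}$) suffices to trivialize $\omega(g,h,k)$ and $L_{g,h}$ simultaneously. I would first fix $\gamma$ and $\beta$ to achieve $\omega = L = 1$, and then verify that the residual freedom still accommodates the normalizations of Section~\ref{sec:assumptions}—in particular $\beta_{g,e}=\beta_{e,g}=\beta_{g,\inv g}=1$, $\gamma_{e,x}=1$, the $X,Y$ identifications in \eqref{eq:compat} and \eqref{eq:modif_XY}, and the $F^{<,>}_{g,\inv g}$ choices of \eqref{eq:Fgg}—which is precisely where the cohomological delicacy lies. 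The special evaluations $\lambda^L_{g,e}=\lambda^R_{e,g}=\I$ and $\lambda^L_{e,g}=w_L^g$, $\lambda^R_{g,e}=w_R^g$ are then immediate: when $h=e$ (resp.\ $g=e$), the fusion tensor becomes trivial and $\lambda$ degenerates either to the identity or to the $w$-gate defined in \eqref{eq:wLR}, thanks to the normalization $\mc U_e = \I$ and $\beta_{g,e}=\beta_{e,g}=1$.
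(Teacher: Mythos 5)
Your derivation of \eqref{eq:fus_def_1}--\eqref{eq:fus_def_2} and of the special evaluations follows essentially the route the paper takes in Appendix \ref{app:proofprops} (there packaged as Lemma \ref{lemma:action_L} together with Eq.~\eqref{eq:thm1c}), and is fine modulo the $L$-symbol factors that appear in a general gauge. The genuine gap is in the unitarity argument. When you contract $\lambda^{L}_{g,h}$ with its adjoint, the surviving combination $\bigl(F^{<}_{\inv{h},\inv{g}}\bigr)^{\dagger}F^{<}_{\inv{h},\inv{g}}$ against a fragment of the $(gh)^{-1}$ MPU is indeed proportional to the identity by injectivity, but the proportionality constant is \emph{not} $1$ in a generic gauge: the fusion tensors carry the scalar freedom \eqref{eq:scalar_fus_ten}, and the explicit trace computation \eqref{eq:trlambda} shows that $\tfrac{1}{d^2}\,\mathrm{tr}\bigl(\lambda^{R\dagger}_{g,h}\lambda^{R}_{g,h}\bigr)=\tfrac{\sigma_g\sigma_h}{\sigma_{gh}}\,\zeta_{\inv{h},\inv{g}}^{-1}\,\omega_{gh,\inv{h},\inv{g}}/\omega_{g,h,\inv{h}}$, a positive number built from the dagger-compatibility scalars $\zeta_{g,h}$ of Proposition \ref{prop:zetas}. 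Your proposal never mentions $\zeta$, yet this is precisely the quantity that must be set to $1$ for $\lambda^{L,R}$ to be unitary rather than merely proportional to unitaries.

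Consequently the gauge-fixing step is harder than ``trivialize $\omega$ and $L$, then check the normalizations survive.'' One must simultaneously achieve $L=\omega=1$, $\zeta=1$, and the constraints $\beta_{g,e}=\beta_{e,g}=\beta_{g,\inv{g}}=1$, $\gamma_e=1$, $\gamma_g\gamma_{\inv{g}}=1$. After fixing $L=1$ the residual freedom is of coboundary type, and one still has to remove $\zeta$, which in that gauge satisfies $d\zeta=1$, $\zeta\hat\zeta=1$ and, crucially, $\zeta>0$ (positivity of the trace \eqref{eq:trlambda}, cf.~\eqref{eq:positive_zeta}); the paper then invokes the fact that a positive $2$-cocycle of a finite group is the coboundary of a positive function (Lemma \ref{lemma:G_cocycle}, Corollary \ref{cor:positive_trivial}) to write $\zeta=d\chi$ with $\chi>0$, $\chi\hat\chi=1$, and applies $\beta=(d\chi)^{1/2}$, $\gamma=\chi^{1/2}$. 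Without this positive-cocycle-trivialization step your argument does not establish that a gauge with all the stated properties exists. (A smaller remark: the paper obtains ``unitary up to a scalar'' more cheaply by observing that $\lambda^{L}_{\inv{h},\inv{g}}\otimes\lambda^{R}_{g,h}$ maps the unitary $U^{L-2}_{g}U^{L}_{h}$ to the unitary $U^{L}_{gh}$; your direct contraction can be made to work, but then the $\sigma$-$\zeta$-$\omega$ bookkeeping above is unavoidable.)
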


The proof of this proposition is included in Appendix \ref{app:proofprops}, together with that of Proposition \ref{prop:unitarity_and_gauges}, which tackles the more general noninjective case. We note that the choice of fusion and action tensors in Proposition~\ref{prop:3} is uniquely specified up to a rather constrained choice of phases.

\begin{corollary}
For any $a,b,c,d\in G$ such that $ab = cd$, there exists a unitary $\lambda_{a,b}^{c,d}$ such that 
\begin{equation}
        \begin{tikzpicture}[scale=\TikzScaling]
        
            \node[rectangle, fill=white] at (0,0) {$\lambda^{c,d}_{a,b}$};
            
            \draw[thick] (-0.2,0.25) --++ (0,0.2);
            \draw[thick] (0.2,0.25) --++ (0,0.2);
            \draw[thick] (-0.2,-0.25) --++ (0,-0.4);
            \draw[thick] (0.2,-0.25) --++ (0,-0.4);

            \draw[thick] (-0.4,-0.65) --++ (0.8,0);
            \node[square] at (-0.2,-0.65) {};
            \node[square] at (0.2,-0.65) {};
            \node at (-0.2,-0.85) {$a$};
            \node at (0.2,-0.85) {$b$};

            \node[] at (0.6, -0.65) {$=$};

            \def\Sc{0.6};
             
            \draw[thick] (-0.4+\Sc*\Shift,-0.65) --++ (0.8,0);
            \node[square] at (-0.2+\Sc*\Shift,-0.65) {};
            \node[square] at (0.2+\Sc*\Shift,-0.65) {};
            \node at (-0.2+\Sc*\Shift,-0.85) {$c$};
            \node at (0.2+\Sc*\Shift,-0.85) {$d$};

            \draw[thick] (-0.2+\Sc*\Shift,-0.65) --++ (0,0.4);
            \draw[thick] (0.2+\Sc*\Shift,-0.65) --++ (0,0.4);

            \node[] at (0.5+\Sc*\Shift,-0.65) {$,$};
            
        \end{tikzpicture}
    \end{equation}
and we have $\lambda^R_{g,h} = \lambda^{e, gh}_{g, h}$, $\lambda^L_{g,h}=\lambda^{gh,e}_{g,h}$. Furthermore, for any $a,b,c,d,e,f\in G$ that satisfies $ab = cd = fg$, we have $\lambda^{f,g}_{a,b} = \lambda_{c,d}^{f,g}\lambda_{a,b}^{c,d}$.
\end{corollary}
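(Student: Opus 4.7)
The plan is to construct $\lambda^{c,d}_{a,b}$ by ``fusing through'' a canonical single-defect configuration and then ``splitting'' into the desired one. Concretely, I would define
\begin{equation*}
\lambda^{c,d}_{a,b} := (\lambda^L_{c,d})^\dagger \, \lambda^L_{a,b}.
\end{equation*}
Unitarity is immediate since both factors are unitary by Proposition~\ref{prop:3}. The prescribed action on the defect labels follows by applying Eq.~\eqref{eq:fus_def_1} twice together with the hypothesis $ab = cd$: the state with defects $(a,b)$ is sent by $\lambda^L_{a,b}$ to one with defects $(ab, e) = (cd, e)$, and $(\lambda^L_{c,d})^\dagger$ then sends it to $(c, d)$. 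The composition rule drops out for free: whenever $ab = cd = fh$, one gets
\begin{equation*}
\lambda^{f,h}_{c,d} \, \lambda^{c,d}_{a,b} = (\lambda^L_{f,h})^\dagger \, \lambda^L_{c,d} \, (\lambda^L_{c,d})^\dagger \, \lambda^L_{a,b} = (\lambda^L_{f,h})^\dagger \, \lambda^L_{a,b} = \lambda^{f,h}_{a,b}.
\end{equation*}

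To verify the two distinguished identifications I would proceed as follows. For $\lambda^{gh,e}_{g,h} = \lambda^L_{g,h}$, note that $\lambda^L_{gh,e} = \I$ by Proposition~\ref{prop:3} (since $h = e$ makes the fusion trivial), so the claim is immediate. The identification $\lambda^{e,gh}_{g,h} = \lambda^R_{g,h}$ is more subtle: using $\lambda^L_{e, gh} = w_L^{gh}$ from Proposition~\ref{prop:3}, it reduces to the operator identity
\begin{equation*}
\lambda^R_{g, h} = (w_L^{gh})^\dagger \, \lambda^L_{g, h},
\end{equation*}
i.e., that first fusing to the left and then moving the resulting single defect back to the right site (using the left-movement operator in reverse) agrees with directly fusing to the right.

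The main obstacle is precisely this identity. Both sides act identically on the defect labels, sending $(g, h) \mapsto (e, gh)$, but equality as operators on the full two-site Hilbert space requires a diagrammatic argument. I would tackle it by substituting the explicit expressions for $\lambda^L_{g,h}$, $\lambda^R_{g,h}$ and $w_L^{gh}$ from the pictures in Proposition~\ref{prop:3}, and then simplifying using the $X, Y$-tensor relations \eqref{eq:wLR} and \eqref{eq:MPUnice2} from Corollary~\ref{cor:mpu} to collapse the appearances of $X_2^{h^{-1}}(X_2^{h^{-1}})^\dagger$-type contractions against the weight $\rho$. The remaining scalar ambiguity is controlled by the gauge choices made in Section~\ref{sec:assumptions} and reinforced in Proposition~\ref{prop:3}: in particular $L^x_{g,h} = \omega = 1$, together with \eqref{eq:zetas_in_gauge}, pin down the fusion tensors $F^<_{g,h}, F^>_{g,h}$ rigidly enough that the two diagrams must match on the nose, not merely up to a phase. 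A useful sanity check is to specialize to $h = e$, where both sides reduce to $w_R^g$, and to $g = e$, where one recovers a clean relation between $w_L^h$ and $w_R^h$ that is already built into the MPU structure via Theorem~\ref{thm:mpu}.
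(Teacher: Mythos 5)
Your construction is essentially the paper's, mirrored: the paper defines $\lambda^{c,d}_{a,b}\equiv(\lambda^R_{c,d})^\dagger\lambda^R_{a,b}$, so unitarity and the telescoping composition rule go through exactly as in your argument, but the identification $\lambda^R_{g,h}=\lambda^{e,gh}_{g,h}$ becomes the tautology (from $\lambda^R_{e,g}=\mathds{1}$) while $\lambda^L_{g,h}=\lambda^{gh,e}_{g,h}$ is the one requiring comment. The ``main obstacle'' you isolate, the operator identity $\lambda^R_{g,h}=(w_L^{gh})^\dagger\lambda^L_{g,h}$, is thus an artifact of building from $\lambda^L$ rather than $\lambda^R$; the paper's one-line proof faces the mirror-image issue and silently resolves it by reading the corollary at the level of the defect-configuration equations \eqref{eq:fus_def_1}--\eqref{eq:fus_def_2}, which only fix $\lambda^{c,d}_{a,b}$ on the states $v(a,b,x)$ and on which both candidate operators manifestly agree. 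If you adopt that same reading, your proof is complete without the diagrammatic verification you sketch; if you insist on equality as operators on the full two-site Hilbert space, neither your construction nor the paper's delivers both named identifications for free, and that stronger claim is not actually needed anywhere downstream.
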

\begin{proof}
    $\lambda_{a,b}^{c,d}\equiv (\lambda_{c,d}^{R})^\dagger\lambda_{a,b}^{R}$ are a family of unitaries that meet the required conditions.
\end{proof}
Thus, we have mimicked the defect formalism from the previous section in the case where the symmetries are no longer onsite, at the cost of having defects lie on sites, instead of being purely virtual.

\subsection{Noninjective MPS}
\label{sec:noninj}
We now turn our attention to the case where the invariant state is not injective but is made up of several injective blocks labeled $x\in \mathsf{X}$, with $|\mathsf{X}|>1$, in the notation of Section \ref{sec:MPUonMPS}. Now, defects can be domain walls that intertwine different ground states of a symmetry-broken state \cite{PhysRevB.85.100408}. Our construction of defects in what follows, in particular, gives a constructive instance of the formalism of \cite{GarreSchuch}.

The definition of spatially truncated symmetries proceeds as in the previous section. We will see that defect tensors arise similarly at the boundaries when acting with a truncated symmetry. However, in the particular case where there is a $\Z_2$ anomaly, $\sigma_g=-1$, (which could not have appeared in the previous section, as an anomalous symmetry cannot have an injective invariant MPS), additional minus signs appear in the definition and properties of the defect tensors, which cannot be gauged away.
\begin{proposition}
There exists a (non-unique) choice of gauge for the action tensors such that the defect tensors defined by
\begin{equation}
    \begin{tikzpicture}[baseline=10]
    \node[square] at (0,0) {};
	\draw[thick] (0,0.0) --++ (0,0.8);
    \draw[thick] (-0.4, 0) --++ (0.8,0);
    \node[] at (0.,-0.3) {$g[x]$};
    \end{tikzpicture}
\equiv
    \begin{tikzpicture}[baseline=10]
    \tikzset{decoration={snake,amplitude=.4mm,segment length=2mm, post length=0mm,pre length=0mm}}
    \draw[thick] (1.2,0.) --++ (0.7,0);
    \draw[thick] (1.7,0.) --++ (0,0.6);
    \draw[thick, decorate] (1.7,0.6) --++ (0,0.3);
     \draw[thick] (0.8, 0) --++ (0.4, 0);
    \pic  at (1.2,0.3) {ALl=gx/g/x};
    \node[tensor, fill=white] at (1.7,0.6) {};
    \node[tensor] at (1.7,0) {};
    \end{tikzpicture}
\end{equation}
satisfy, for $\sigma_g=1$,
\begin{equation}
    \begin{tikzpicture}[baseline=10]
    \node[square] at (0,0) {};
	\draw[thick] (0,0.0) --++ (0,0.8);
    \draw[thick] (-0.4, 0) --++ (0.8,0);
    \node[] at (0.,-0.3) {$g[x]$};
    \end{tikzpicture}
=
    \begin{tikzpicture}[baseline=10]
    \draw[thick] (2.5,0.7) --++ (0,-0.7);
    \draw[thick] (2.3,0) --++ (0.7,0);
    \draw[ultra thick] (2.5,0.6) --++ (0,0.4);
    \draw[thick] (3.0, 0) --++ (0.4, 0);
    \pic  at (3.0,0.3){ARl=x/\;\inv{g}/gx};
    \node[tensor] at (2.5,0.6) {};
    \node[tensor] at (2.5,0.0) {};
    \end{tikzpicture}
\end{equation}
and for any $L\geq 0$,
\begin{equation}
    \begin{tikzpicture}[baseline]
        \def\dy{0.6}
        \tikzset{decoration={snake,amplitude=.4mm,segment length=2mm, post length=0mm,pre length=0mm}}
        \draw[thick] (-1.3,0.) -- (1.3,0.);
        \foreach \x in {-0.9,-0.45,0, 0.45,0.9}{
            \node[tensor] at (\x, 0) {}; }
        \draw[thick] (-0.9,0.) --++ (0.,\dy);
        \draw[thick] (0.9,0.) --++ (0.,\dy);
        \draw[thick] (0,0.) --++ (0.,\dy+0.3);
        \draw[thick] (-0.45,0.) --++ (0.,\dy+0.3);
        \draw[thick] (0.45,0.) --++ (0.,\dy+0.3);
        \draw[ultra thick] (-0.9,\dy) --++ (0.0,0.3);
        \draw[thick, decorate] (+0.9,\dy) --++ (0.,0.3);

        \draw[thick, red] (-0.9,\dy) -- (0.9,\dy);
        \node[square, fill=violet] at (0.45, \dy) {};
        \node[square, fill=violet] at (0, \dy) {};
        \node[square, fill=violet] at (-0.45, \dy) {};
        \node[tensor, fill=white] at (0.9, \dy) {};
        \node[tensor] at (-0.9, \dy) {};
        \node[] at (-1.1, \dy) {$g$};
        \node[] at (-1.5, 0) {$x$};
        \node at (0.,\dy+0.5) {$\overbrace{\hspace{1.2cm}}^{L}$};
    \end{tikzpicture}
    =
    \begin{tikzpicture}[baseline]
        \def\dy{0.6}
        \draw[thick] (-1.3,0.) -- (1.3,0.);
        \foreach \x in {-0.45,0,0.45}{
            \node[tensor] at (\x, 0) {}; }
        \draw[thick] (-0.9,0.) --++ (0.,\dy);
        \draw[thick] (0.9,0.) --++ (0.,\dy);
        \draw[thick] (-0.45,0.) --++ (0.,\dy);
        \draw[thick] (0,0.) --++ (0.,\dy);
        \draw[thick] (0.45,0.) --++ (0.,\dy);
        \node[square] at (-0.9,0.) {};
        \node[square] at (0.9,0.) {};
        \node[] at (-0.9, -0.3) {$\inv{g}[gx]$};
        \node[] at (0.9, -0.3) {$g[x]$};
        \node[] at (0.0, -0.3) {$gx$};
    \end{tikzpicture};
\end{equation}
while for $\sigma_g=-1$,
\begin{equation}
    \begin{tikzpicture}[baseline=10]
    \node[square] at (0,0) {};
	\draw[thick] (0,0.0) --++ (0,0.8);
    \draw[thick] (-0.4, 0) --++ (0.8,0);
    \node[] at (0.,-0.3) {$g[x]$};
    \end{tikzpicture}
= \xi_g(x)
    \begin{tikzpicture}[baseline=10]
    \draw[thick] (2.5,0.7) --++ (0,-0.7);
    \draw[thick] (2.3,0) --++ (0.7,0);
    \draw[ultra thick] (2.5,0.6) --++ (0,0.4);
    \draw[thick] (3.0, 0) --++ (0.4, 0);
    
    \pic  at (3.0,0.3){ARl=x/\;g/gx};
    \node[tensor] at (2.5,0.6) {};
    \node[tensor] at (2.5,0.0) {};
    \end{tikzpicture}
    \label{eq:defects_Z2}
\end{equation}
and for any $L\geq 0$,
\begin{equation}
    \begin{tikzpicture}[baseline]
        \def\dy{0.6}
        \tikzset{decoration={snake,amplitude=.4mm,segment length=2mm, post length=0mm,pre length=0mm}}
        \draw[thick] (-1.3,0.) -- (1.3,0.);
        \foreach \x in {-0.9,-0.45,0, 0.45,0.9}{
            \node[tensor] at (\x, 0) {}; }
        \draw[thick] (-0.9,0.) --++ (0.,\dy);
        \draw[thick] (0.9,0.) --++ (0.,\dy);
        \draw[thick] (0,0.) --++ (0.,\dy+0.3);
        \draw[thick] (-0.45,0.) --++ (0.,\dy+0.3);
        \draw[thick] (0.45,0.) --++ (0.,\dy+0.3);
        \draw[ultra thick] (-0.9,\dy) --++ (0.0,0.3);
        \draw[thick, decorate] (+0.9,\dy) --++ (0.,0.3);

        \draw[thick, red] (-0.9,\dy) -- (0.9,\dy);
        \node[square, fill=violet] at (0.45, \dy) {};
        \node[square, fill=violet] at (0, \dy) {};
        \node[square, fill=violet] at (-0.45, \dy) {};
        \node[tensor, fill=white] at (0.9, \dy) {};
        \node[tensor] at (-0.9, \dy) {};
        \node[] at (-1.1, \dy) {$g$};
        \node[] at (-1.5, 0) {$x$};
        \node at (0.,\dy+0.5) {$\overbrace{\hspace{1.2cm}}^{L}$};
    \end{tikzpicture}
    =
    \xi_g(gx)\begin{tikzpicture}[baseline]
        \def\dy{0.6}
        \draw[thick] (-1.3,0.) -- (1.3,0.);
        \foreach \x in {-0.45,0,0.45}{
            \node[tensor] at (\x, 0) {}; }
        \draw[thick] (-0.9,0.) --++ (0.,\dy);
        \draw[thick] (0.9,0.) --++ (0.,\dy);
        \draw[thick] (-0.45,0.) --++ (0.,\dy);
        \draw[thick] (0,0.) --++ (0.,\dy);
        \draw[thick] (0.45,0.) --++ (0.,\dy);
        \node[square] at (-0.9,0.) {};
        \node[square] at (0.9,0.) {};
        \node[] at (-0.9, -0.3) {$\inv{g}[gx]$};
        \node[] at (0.9, -0.3) {$g[x]$};
        \node[] at (0.0, -0.3) {$gx$};
    \end{tikzpicture},
    \label{eq:defects_Z2b}
\end{equation}
where $\xi_g:\mathsf{X}\to\{\pm 1\}$ is an arbitrary function satisfying $\xi_g(x)\xi_{\inv g}(gx)=-1$. 
\label{prop:def_tens_anomalous}
\end{proposition}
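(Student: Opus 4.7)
The plan is to adapt the strategy of Proposition~\ref{prop:1} to the noninjective case, with the orbit label $x\in\mathsf{X}$ tracked through and careful bookkeeping of $T_g$ insertions when $\sigma_g=-1$. First I apply \eqref{eq:action_interior} to the action of the truncated symmetry $U_g^L$ on the injective block labeled $x$: this pushes $A^{\ammaG}_{g,x}$ and $A^{\Gamma}_{g,x}$ to the boundaries, with block $gx$ in the interior. I then decompose $U_g^L$ via the three-leg form \eqref{eq:3-leg}, so the boundary regions become fixed combinations of $X_i,Y_i$ tensors contracted with action tensors, exactly as in the injective argument.

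Next, using \eqref{eq:compat} when $g\neq\inv{g}$ (where the gauge choice of Section~\ref{sec:assumptions} gives $T_g=\I$) or \eqref{eq:modif_XY} when $g=\inv{g}$, I rewrite the boundary $X_1^{\inv g}, Y_1^{\inv g}$ tensors as daggered $X_2^g, Y_2^g$ tensors, modulo $T_g$ factors on the outer legs. Reading the result backward through \eqref{eq:action_interior} with block $gx$ in the interior yields the mirrored defect picture. In the $\sigma_g=+1$ sector the $T_g$'s are trivial, and the argument reduces to the injective one. In the $\sigma_g=-1$ sector, pairs of $T_g$'s collide via \eqref{eq:intro_sigma}, leaving a single scalar $\sigma_g=-1$ localized at the boundary.

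Lemma~\ref{lemma:1} then implies that the two presentations differ only by boundary scalars whose product is $1$ for $\sigma_g=+1$ and $-1$ for $\sigma_g=-1$; these are identified via \eqref{eq:defL} and \eqref{eq:wggg} with ratios of $L$-symbols $L^{gx}_{g,\inv g}/L^x_{\inv g,g}$. When $\sigma_g=+1$ both scalars can be trivialized simultaneously by absorbing them into the action-tensor gauge \eqref{eq:scalar_act_ten}, giving the clean reflection and symmetry-action identities. When $\sigma_g=-1$ the product $-1$ cannot be gauged away; one must distribute the sign as $\xi_g(x)$ on the reflection side and $\xi_g(gx)$ on the symmetry-action side, under the constraint $\xi_g(x)\xi_{\inv g}(gx)=-1$. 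A function $\xi_g:\mathsf{X}\to\{\pm 1\}$ satisfying this always exists: assign $\xi_g$ freely on one element of each orbit pair $\{x,gx\}$ and set the other value to enforce the product $-1$.

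The main obstacle is the $T_g$ bookkeeping in the $\sigma_g=-1$ case: showing that all interior $T_g$ insertions telescope via the unitarity relations in Theorem~\ref{thm:mpu} and \eqref{eq:modif_XY}, and that the only surviving factor is the single boundary sign predicted by \eqref{eq:intro_sigma}. Identity \eqref{eq:wggg} then pins down the location of this unremovable sign and forces exactly the $\xi_g$ constraint in the statement.
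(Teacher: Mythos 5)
Your proposal follows essentially the same route as the paper's proof: push the action tensors to the boundary via Eq.~\eqref{eq:action_interior}, use the three-leg decomposition and the compatibility relations \eqref{eq:compat}/\eqref{eq:modif_XY} to recognize the complementary truncated symmetry up to the sign $\sigma_g$, invoke Lemma~\ref{lemma:1} to reduce the discrepancy to boundary scalars $\alpha_{g,x}$ satisfying $\alpha_{g,x}=\sigma_g\alpha_{\inv g,gx}$, identify these with $L$-symbols via \eqref{eq:wggg}, and absorb them into the action-tensor gauge, with the residual sign for $\sigma_g=-1$ distributed through the function $\xi_g$. This matches the paper's argument in both structure and detail.
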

\begin{proof}
   The proof proceeds very much in the same way as that of Proposition \ref{prop:1}, but this time keeping track of MPS labels $x\in \mathsf{X}$, and potential $\Z_2$ anomalies. 
   Firstly,
    \begin{equation}
        \begin{tikzpicture}[baseline=+3mm]
        \tikzset{decoration={snake,amplitude=.4mm,segment length=2mm, post length=0mm,pre length=0mm}}
        \def\dy{0.5}
            \draw[thick] (0.3,0.) --++ (1.7,0.);
            \draw[thick] (-0.3,0.) --++ (-1.7,0.);
            \node[] at (0.,0.0) {$\ldots$};
            \foreach \x in {-0.9,-0.5,0.5,0.9}{
                \node[tensor] at (\x, 0) {}; }
            \draw[thick] (-0.9,0.) --++ (0.,0.6);
            \draw[thick] (0.9,0.) --++ (0.,0.6);
            \draw[ultra thick] (0.9,0.6) --++ (0.0,\dy);
            \draw[thick, decorate] (-0.9,0.6) --++ (0.,\dy);
            \draw[thick] (-0.9,0.6+\dy) --++ (0.,0.3);
            \draw[thick] (0.9,0.6+\dy) --++ (0.,0.3);
            \draw[thick] (-0.5, 0.) -- (-0.5, 0.6+\dy+0.3);
            \draw[thick] (0.5, 0.) -- (0.5, 0.6+\dy+0.3);
            \draw[thick, red] (0.3,0.6+\dy) --++ (0.6,0.);
            \draw[thick, red] (-0.9,0.6+\dy) --++ (0.6,0.);
            
            \pic at (-1.4,0.3) {ALl=gx/g/x};
            \pic at (1.4,0.3) {ARl=gx/g/x};
            \node[red] at (0.,0.6+\dy) {$\ldots$};
            \node[square, fill=violet] at (0.5, 0.6+\dy) {};
            \node[square, fill=violet] at (-0.5, 0.6+\dy) {};
            \node[tensor, fill=white] at (-0.9, .6) {};
            \node[tensor, fill=white] at (-0.9, 0.6+\dy) {};
            \node[tensor] at (0.9, .6) {};
            \node[tensor] at (0.9, 0.6+\dy) {};
        \end{tikzpicture}
=
  \begin{tikzpicture}[baseline=3mm]
        \tikzset{decoration={snake,amplitude=.4mm,segment length=2mm, post length=0mm,pre length=0mm}}
        \def\dy{0.5}
            \draw[thick] (0.3,0.) --++ (1.,0.);
            \draw[thick] (-0.3,0.) --++ (-1.,0.);
            \node[] at (0.,0.0) {$\ldots$};
            \foreach \x in {-0.9,-0.5,0.5,0.9}{
                \node[tensor] at (\x, 0) {}; }
            \draw[thick] (-0.9,0.) --++ (0.,0.6);
            \draw[thick] (0.9,0.) --++ (0.,0.6);
            \draw[thick] (-0.5,0.) --++ (0.,0.6);
            \draw[thick] (0.5,0.) --++ (0.,0.6);
            \node[irrep] at (1.3, 0.0) {$gx$};
            \node[] at (1.5,0.0) {$.$};
     \end{tikzpicture}
    \end{equation}
We can identify the tensors on top as the inverse of the truncated symmetry of $g^{-1}$, up to a sign $\sigma_g$ that is only negative for an anomalous $\Z_2$ generator (see Appendix  \ref{app:Z2}):
\begin{equation}
    \begin{tikzpicture}[baseline=+3mm]
        \tikzset{decoration={snake,amplitude=.4mm,segment length=2mm, post length=0mm,pre length=0mm}}
        \def\dy{0.4}
            \draw[ultra thick] (0.9,0.) --++ (0.0,\dy);
            \draw[thick, decorate] (-0.9,0.) --++ (0.,\dy);
            \draw[thick] (-0.9,0.+\dy) --++ (0.,0.3);
            \draw[thick] (0.9,0.+\dy) --++ (0.,0.3);
            \draw[thick] (-0.5, 0.) -- (-0.5, 0.+\dy+0.3);
            \draw[thick] (0.5, 0.) -- (0.5, 0.+\dy+0.3);
            \draw[thick, red] (0.3,0.+\dy) --++ (0.6,0.);
            \draw[thick, red] (-0.9,0.+\dy) --++ (0.6,0.);

            \node[red] at (0.,0.+\dy) {$\ldots$};
            \node[square, fill=violet] at (0.5, 0.+\dy) {};
            \node[square, fill=violet] at (-0.5, 0.+\dy) {};
            \node[tensor, fill=white] at (-0.9, 0.+\dy) {};
            \node[tensor] at (0.9, 0.+\dy) {};
        \end{tikzpicture}=\sigma_g (U^L_{g^{-1}})^\dagger.
\end{equation}
Therefore, we have
\begin{equation}
    \begin{tikzpicture}
        \def\dy{0.6}
        \tikzset{decoration={snake,amplitude=.4mm,segment length=2mm, post length=0mm,pre length=0mm}}
        \draw[thick] (0.3,0.) --++ (1.,0.);
        \draw[thick] (-0.3,0.) --++ (-1.,0.);
        \node[] at (0.,0.0) {$\ldots$};
        \foreach \x in {-0.9,-0.5,0.5,0.9}{
            \node[tensor] at (\x, 0) {}; }
        \draw[thick] (-0.9,0.) --++ (0.,\dy);
        \draw[thick] (0.9,0.) --++ (0.,\dy);
        \draw[thick] (-0.5,0.) --++ (0.,\dy+0.3);
        \draw[thick] (0.5,0.) --++ (0.,\dy+0.3);
        \draw[ultra thick] (-0.9,\dy) --++ (0.0,0.3);
        \draw[thick, decorate] (+0.9,\dy) --++ (0.,0.3);

        \draw[thick, red] (0.3,\dy) --++ (0.6,0.);
        \draw[thick, red] (-0.9,\dy) --++ (0.6,0.);
        \node[red] at (0.,\dy) {$\ldots$};
        \node[square,fill=violet] at (0.5, \dy) {};
        \node[square,fill=violet] at (-0.5, \dy) {};
        \node[tensor, fill=white] at (0.9, \dy) {};
        \node[tensor] at (-0.9, \dy) {};
        \node[] at (-1.1, \dy) {$g$};
        \node[irrep] at (-1.2, 0) {$x$};
    \end{tikzpicture}
    =\sigma_g\;
    \begin{tikzpicture}
        \def\dy{0.6}
        \def\dx{0.9}
        \def\dxb{1.4}
        \tikzset{decoration={snake,amplitude=.4mm,segment length=2mm, post length=0mm,pre length=0mm}}
        \draw[thick] (0.3,0.) --++ (1.3,0.);
        \draw[thick] (-0.3,0.) --++ (-1.3,0.);
        \node[] at (0.,0.0) {$\ldots$};
        \foreach \x in {-\dx,-0.5,0.5,\dx}{
            \node[tensor] at (\x, 0) {}; }
        \draw[thick] (-\dx,0.) --++ (0.,\dy);
        \draw[thick] (\dx,0.) --++ (0.,\dy);
        \draw[thick] (-0.5,0.) --++ (0.,\dy+0.1);
        \draw[thick] (0.5,0.) --++ (0.,\dy+0.1);
        \draw[ultra thick] (\dx,\dy) --++ (0.0,0.3);
        \draw[thick, decorate] (-\dx,\dy) --++ (0.,0.3);

        \pic at (\dxb,0.3) {ARl=x/g^{-1}/gx};
        \pic at (-\dxb,0.3) {ALl=x/g^{-1}/gx};

        \node[tensor, fill=white] at (-\dx, \dy) {};
        \node[tensor] at (\dx, \dy) {};
    \end{tikzpicture}
\end{equation}
for any $L\geq 2$. For $L\geq 3$, we also have, using \eqref{eq:action_exterior}, and the fact that the $X_i$ tensors have (left) inverses,
\begin{equation}
    \begin{tikzpicture}
        \def\dy{0.6}
        \tikzset{decoration={snake,amplitude=.4mm,segment length=2mm, post length=0mm,pre length=0mm}}
        \draw[thick] (0.3,0.) --++ (1.,0.);
        \draw[thick] (-0.3,0.) --++ (-1.,0.);
        \node[] at (0.,0.0) {$\ldots$};
        \foreach \x in {-0.9,-0.5,0.5,0.9}{
            \node[tensor] at (\x, 0) {}; }
        \draw[thick] (-0.9,0.) --++ (0.,\dy);
        \draw[thick] (0.9,0.) --++ (0.,\dy);
        \draw[thick] (-0.5,0.) --++ (0.,\dy+0.3);
        \draw[thick] (0.5,0.) --++ (0.,\dy+0.3);
        \draw[ultra thick] (-0.9,\dy) --++ (0.0,0.3);
        \draw[thick, decorate] (+0.9,\dy) --++ (0.,0.3);

        \draw[thick, red] (0.3,\dy) --++ (0.6,0.);
        \draw[thick, red] (-0.9,\dy) --++ (0.6,0.);
        \node[red] at (0.,\dy) {$\ldots$};
        \node[square,fill=violet] at (0.5, \dy) {};
        \node[square,fill=violet] at (-0.5, \dy) {};
        \node[tensor, fill=white] at (0.9, \dy) {};
        \node[tensor] at (-0.9, \dy) {};
        \node[] at (-1.1, \dy) {$g$};
    \end{tikzpicture}
    =
    \begin{tikzpicture}
        \def\dy{0.6}
        \def\dx{1.4}
        \tikzset{decoration={snake,amplitude=.4mm,segment length=2mm, post length=0mm,pre length=0mm}}
        \draw[thick] (0.3,0.) --++ (1.3,0.);
        \draw[thick] (-0.3,0.) --++ (-1.3,0.);
        \node[] at (0.,0.0) {$\ldots$};
        \foreach \x in {-\dx,-0.5,0.5,\dx}{
            \node[tensor] at (\x, 0) {}; }
        \draw[thick] (-\dx,0.) --++ (0.,\dy);
        \draw[thick] (\dx,0.) --++ (0.,\dy);
        \draw[thick] (-0.5,0.) --++ (0.,\dy+0.1);
        \draw[thick] (0.5,0.) --++ (0.,\dy+0.1);
        \draw[ultra thick] (-\dx,\dy) --++ (0.0,0.3);
        \draw[thick, decorate] (\dx,\dy) --++ (0.,0.3);

        \pic at (-0.9,0.3) {ARl=gx/g/x};
        \pic at (0.9,0.3) {ALl=gx/g/x};

        \node[tensor, fill=white] at (\dx, \dy) {};
        \node[tensor] at (-\dx, \dy) {};
    \end{tikzpicture}.
\end{equation}
Now, using Lemma~\ref{lemma:1} we conclude existence of scalars $\alpha_{g,x}$, $g\in G$, with $\alpha_{g,x}=\sigma_g\alpha_{g^{-1},gx}$, such that
\begin{align}
    \begin{tikzpicture}
        \tikzset{decoration={snake,amplitude=.4mm,segment length=2mm, post length=0mm,pre length=0mm}}
        \draw[thick] (-0.5, 0) --++ (1.5, 0);
        \draw[thick] (0.5, 0) --++(0.0, 0.6);
        \draw[thick, decorate] (0.5, 0.6) --++ (0.0, 0.3);
        \pic at (0.0,0.3) {ALl=gx/g/x};
        \node[tensor] at (0.5, 0.0) {};
        \node[tensor, fill=white] at (0.5, 0.6) {};
    \end{tikzpicture}
    = \alpha_{g,x}
        \begin{tikzpicture}
        \draw[thick] (-0.5, 0) --++ (1.5, 0);
        \draw[thick] (0., 0) --++(0.0, 0.6);
        \draw[ultra thick] (0., 0.6) --++ (0.0, 0.3);
        \pic at (0.5,0.3) {ARl=x/g^{-1}/gx};
        \node[tensor] at (0., 0.0) {};
        \node[tensor] at (0., 0.6) {};
        \node[] at (1.2,0.0) {$.$};
    \end{tikzpicture}
\end{align}
We can in fact recognise $\alpha_{g,x}=L^{gx}_{\inv{g},g}$, using
\begin{align}
    \begin{tikzpicture}[baseline]
    \tikzset{decoration={snake,amplitude=.4mm,segment length=2mm, post length=0mm,pre length=0mm}}
    \def\dx{0.5}
    \def\dxb{0.9}
    \def\ddx{0.4}
    \def\dy{0.5}
    \def\dyb{0.9}
    \def\ddy{0.2}
    \draw[thick] (-\ddx, 0) -- (\dxb+\ddx, 0);
    \draw[thick, red] (0, \dy) -- (\dx, \dy);
    \draw[thick, red] (0, \dyb) -- (\dxb, \dyb);
    \draw[thick] (0, 0) -- (0, \dy);
    \draw[thick, decorate] (0, \dy) -- (0, \dyb);
    \draw[thick] (0, \dyb) -- (0, \dyb+\ddy);
    \pic at (\dx,0) {actR=\dy///};
    \pic at (\dxb,0) {actR=\dyb///};
    \node[tensor] at (0, \dy) {};
    \node[tensor] at (0, 0) {};
    \node[tensor, fill=white] at (0, \dyb) {};
    \node[tiny] at (0,-0.25) {$gx$};
    \node[tiny] at (0.5*\dx+0.5*\dxb,-0.23) {$x$};
    \node[irrep] at (\dx/2+0.1,\dy) {$\inv{g}$};
    \node[irrep] at (\dx/2,\dyb) {$g$};
    \node[tiny] at (-\ddx,\dy) {$Y_2^{\inv{g}}$};
    \node[tiny] at (-\ddx,\dyb) {$X_1^g$};
    \end{tikzpicture}
    &=
    \begin{tikzpicture}[baseline]
    \tikzset{decoration={snake,amplitude=.4mm,segment length=2mm, post length=0mm,pre length=0mm}}
    \def\dx{0.6}
    \def\dxb{1.0}
    \def\ddx{0.3}
    \def\dy{0.5}
    \def\dyb{0.9}
    \def\ddy{0.2}
    \draw[thick] (-\ddx, 0) -- (\dxb+\ddx, 0);
    \draw[thick, red] (0, \dy) -- (\dx, \dy);
    \draw[thick, red] (0, \dyb) -- (\dxb, \dyb);
    \draw[thick] (0, 0) -- (0, \dy);
    \draw[thick, decorate] (0, \dy) -- (0, \dyb);
    \draw[thick] (0, \dyb) -- (0, \dyb+\ddy);
    \pic at (\dx,0) {actR=\dy///};
    \pic at (\dxb,0) {actR=\dyb///};
    \node[tiny, red] at (\dx/2,0.25) {$T^*_g$};
    \node[tiny] at (-\ddx-0.2,\dy) {$(X_1^{g})^\dagger$};
    \node[tensor, fill=white] at (0, \dy) {};
    \node[tensor] at (0, 0) {};
    \node[tensor, fill=white] at (0, \dyb) {};
    \node[tensor, red] at (\dx/2, \dy) {};
    \end{tikzpicture}
    =
    \begin{tikzpicture}[baseline]
    \def\dx{0.5}
    \def\dxb{0.9}
    \def\ddx{0.3}
    \def\dy{0.5}
    \def\dyb{0.9}
    \def\ddy{0.2}
    \draw[thick] (-\ddx-\dx, 0) -- (\dxb+\ddx, 0);
    \draw[thick, red] (-\dx, \dy) -- (\dx, \dy);
    \draw[thick, red] (-\dx, \dyb) -- (\dxb, \dyb);
    \draw[thick] (0, 0) -- (0, \dy);
    \draw[thick] (0, \dy) -- (0, \dyb);
    \draw[thick] (0, \dyb) -- (0, \dyb+\ddy);
    \pic at (\dx,0) {actR=\dy///};
    \pic at (\dxb,0) {actR=\dyb///};
    \pic at (-\dx,\dy) {fusL=\dyb-\dy///};
    \node[mpo] at (0, \dy) {};
    \node[mpo] at (0, \dyb) {};
    \node[tensor] at (0, 0) {};
    \node[irrep] at (\dx/2+0.1,\dy) {$\inv{g}$};
    \node[irrep] at (\dx/2,\dyb) {$g$};
    \end{tikzpicture}\\
    &=\dfrac{1}{L^{gx}_{g,\inv{g}}}
    \begin{tikzpicture}[baseline]
    \def\ddx{0.3}
    \def\dy{0.5}
    \draw[thick] (-\ddx, 0) -- (\ddx, 0);
    \draw[thick] (0, 0) -- (0, \dy);
    \node[tensor] at (0, 0) {};
    \node[tiny] at (0,-0.25) {$gx$};
    \end{tikzpicture}.
\end{align}
If $\sigma_g=1$, we can once again absorb this scalar into the gauge of the action tensors, for instance by doing $A^R_{g,x}\to\sqrt{\alpha_{g,x}}A^R_{g,x}, A^L_{g,x}\to A^L_{g,x}/\sqrt{\alpha_{g,x}}$, which corresponds to setting $L^x_{\inv{g}, g}=1$. If $\sigma_g=-1$, however, this does not work and in general, the best we can do is choose an arbitrary function $\xi_g(x)$ as in the statement so that 
$A^R_{g,x}\to\sqrt{\xi_g(x)\alpha_{g,x}}A^R_{g,x}, A^L_{g,x}\to A^L_{g,x}/\sqrt{\xi_g(x)\alpha_{g,x}}$, which leads to Eqs. \eqref{eq:defects_Z2}-\eqref{eq:defects_Z2b}, and to $L^x_{\inv{g},g}\in\{\pm1\}$.
\end{proof}
Thus, our truncated symmetries still generate pairs of defects (domain walls), but sometimes there may be a minus sign involved. The fusion tensors are the same since they only depend on the MPU representation, but their action on the defects now depends on $L$-symbols that may or may not be trivializable.
\begin{proposition}The operators defined as in \eqref{def:lambdas} satisfy
     \begin{subequations}
      \begin{equation}
        \begin{tikzpicture}[scale=\TikzScaling, baseline=-10]
        
            \node[rectangle, fill=white] at (0,0) {$\lambda^L_{g,h}$};
            
            \draw[thick] (-0.2,0.25) --++ (0,0.2);
            \draw[thick] (0.2,0.25) --++ (0,0.2);
            \draw[thick] (-0.2,-0.25) --++ (0,-0.4);
            \draw[thick] (0.2,-0.25) --++ (0,-0.4);

            \draw[thick] (-0.4,-0.65) --++ (0.8,0);
            \node[square] at (-0.2,-0.65) {};
            \node[square] at (0.2,-0.65) {};
            \node at (-0.3,-0.85) {$g[hx]$};
            \node at (0.3,-0.85) {$h[x]$};
        \end{tikzpicture}    
            = \dfrac{1}{L^x_{\inv{h},\inv{g}}}
        \begin{tikzpicture}[scale=\TikzScaling, baseline = -20]
             
            \draw[thick] (-0.4,-0.65) --++ (0.8,0);
            \node[square] at (-0.2,-0.65) {};
            \node[tensor] at (0.2,-0.65) {};
            \node at (-0.2,-0.85) {$gh[x]$};

            \draw[thick] (-0.2,-0.65) --++ (0,0.4);
            \draw[thick] (0.2,-0.65) --++ (0,0.4);

            \node[] at (0.5,-0.65) {$,$};
            
        \end{tikzpicture}
        \label{eq:fusop1}
    \end{equation}
    \begin{equation}
        \begin{tikzpicture}[scale=\TikzScaling, baseline = -10]
        
            \node[rectangle, fill=white] at (0,0) {$\lambda^R_{g,h}$};
            
            \draw[thick] (-0.2,0.25) --++ (0,0.2);
            \draw[thick] (0.2,0.25) --++ (0,0.2);
            \draw[thick] (-0.2,-0.25) --++ (0,-0.4);
            \draw[thick] (0.2,-0.25) --++ (0,-0.4);

            \draw[thick] (-0.4,-0.65) --++ (0.8,0);
            \node[square] at (-0.2,-0.65) {};
            \node[square] at (0.2,-0.65) {};
            \node at (-0.3,-0.85) {$g[hx]$};
            \node at (0.3,-0.85) {$h[x]$};
        \end{tikzpicture}
        =L^x_{g,h}\;
        \begin{tikzpicture}[scale=\TikzScaling, baseline = -20]
             
            \draw[thick] (-0.4,-0.65) --++ (0.8,0);
            \node[square] at (0.2,-0.65) {};
            \node[tensor] at (-0.2,-0.65) {};
            \node at (0.2,-0.85) {$gh[x]$};

            \draw[thick] (-0.2,-0.65) --++ (0,0.4);
            \draw[thick] (0.2,-0.65) --++ (0,0.4);
            \node[] at (0.5,-0.65) {$.$};

        \end{tikzpicture}
        \label{eq:fusop2}
    \end{equation}
    \end{subequations}
    and 
    \begin{equation}
    \lambda^L_{g,e} = \lambda^R_{e,g} = \mathds{1},\quad \lambda^L_{e,g} = w^L_g, \quad \lambda^R_{g,e} = w^R_g.
    \label{eq:simplecases}
    \end{equation}
    Moreover, there exists a choice of gauge for the fusion tensors and action tensors, compatible with the ones made in Section \ref{sec:assumptions} and Proposition \ref{prop:def_tens_anomalous}, such that 
    \begin{itemize}
        \item[(i)]  $\omega(g,h,k), L^x_{g,h}\in U(1)$ and if the MPU representation is nonanomalous, $\omega(g,h,k)=1$,
        \item[(ii)] $\lambda^L_{g,h},\lambda_{g,h}^R$ are unitary.
    \end{itemize}
    \label{prop:unitarity_and_gauges}
\end{proposition}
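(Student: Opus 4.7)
The plan is to first verify the fusion identities \eqref{eq:fusop1}--\eqref{eq:fusop2} by direct diagrammatic computation, then check the boundary cases \eqref{eq:simplecases}, and finally exhibit a gauge choice that secures (i) and (ii) while remaining compatible with all normalizations already fixed in Section~\ref{sec:assumptions} and Proposition~\ref{prop:def_tens_anomalous}.

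For \eqref{eq:fusop2}, I would substitute the definition of each defect tensor (the rightmost form in Proposition~\ref{prop:def_tens_anomalous}, i.e.\ $A^{\ammaG}_{g,hx}$ and $A^{\ammaG}_{h,x}$ each capped with an $X_1$ tensor of the appropriate group element on top) into the left-hand side, and then recognize that the tensor $\lambda^R_{g,h}$ from \eqref{eq:lambda_R} supplies exactly a fusion tensor $F^{>}_{g,h}$, an $X_1^{gh}$-cap after the dagger, and the right $\rho$-dressing. The resulting contraction of two action tensors with a fusion tensor is precisely the left-hand side of the defining equation \eqref{eq:defL} of the $L$-symbols, so after one application it collapses to $L^x_{g,h}$ times a single action tensor $A^{\ammaG}_{gh,x}$ with an $X_1^{gh}$-cap, i.e.\ a single $gh[x]$-defect. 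Equation \eqref{eq:fusop1} is proven analogously, rewriting both defects in the $\inv{g}$-form via Proposition~\ref{prop:def_tens_anomalous} and using the analogue of \eqref{eq:defL} for $A^\Gamma$, which produces $L^x_{\inv{h},\inv{g}}$ in the denominator. The cases \eqref{eq:simplecases} then follow immediately because $\mc U_e = \I$ has bond dimension $1$ and all associated fusion, action, and $X/Y$ tensors trivialize; reading $\lambda^L_{e,g}$ and $\lambda^R_{g,e}$ off \eqref{def:lambdas} under this simplification reproduces the $w_L^g$ and $w_R^g$ pictures defined in Corollary~\ref{cor:mpu}.

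For (i) and (ii), I would use the combined scalar gauge freedom \eqref{eq:Lsymbgauge} from the action and fusion tensors. Whenever the representation is nonanomalous, $[\omega]=1\in H^3(G,\C^\times)$ and there exist scalars $\beta_{g,h}$ realizing \eqref{eq:omegagauge} so that $\omega(g,h,k)=1$; to keep the assumptions of Section~\ref{sec:assumptions} intact, I would first restrict to $2$-cochains with $\beta_{g,e}=\beta_{e,g}=\beta_{g,\inv{g}}=1$, which is possible since the trivialization $\beta$ can always be modified by a $1$-coboundary to enforce these normalizations. Independently of anomaly, I would then use the $\gamma_{g,x}$ freedom alone (which does not affect $\omega$ but rescales $L^x_{g,h}$ by $\gamma_{gh,x}/(\gamma_{g,hx}\gamma_{h,x})$) to force $|L^x_{g,h}|=1$ for all $g,h,x$; the point is that \eqref{eq:Lsymbgauge} shows the modulus $|L^x_{g,h}|$ transforms as a multiplicative $G$-$\mathsf{X}$ cocycle with values in $\R_{>0}$, and $\R_{>0}$ is divisible, so one can solve for positive $\gamma_{g,x}$ setting every modulus to one, while preserving $\gamma_{e,x}=1$ and the already-fixed values $L^x_{\inv{g},g}\in\{\pm 1\}$ from Proposition~\ref{prop:def_tens_anomalous}. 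With these phases in hand, unitarity of $\lambda^{L,R}_{g,h}$ follows because each diagram in \eqref{def:lambdas} is a composition of the unitary building blocks of Theorem~\ref{thm:mpu} and Corollary~\ref{cor:mpu} (namely the $X_i$, $Y_i$, $u$, $v$, $w_L$, $w_R$ maps and the fusion tensors which, on their supports, are isometries up to a phase fixed by the modulus of $L^x_{g,h}$).

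The main obstacle is bookkeeping the compatibility of the cumulative gauge choices: Section~\ref{sec:assumptions} already fixes $\beta_{g,\inv{g}}=1$, $\gamma_{e,x}=1$, the diagrams \eqref{eq:Fgg} for $F^{<,>}_{g,\inv{g}}$, and the $T_g$-dressed $X/Y$ tensors of \eqref{eq:modif_XY}, while Proposition~\ref{prop:def_tens_anomalous} pins down $L^x_{\inv{g},g}=\xi_g(gx)$. I therefore have to check that the residual freedom in $(\beta_{g,h},\gamma_{g,x})$ restricted to group elements with $h\neq g^{-1}$ and $g\neq e$ is still large enough to simultaneously (a) trivialize $\omega$ in the nonanomalous case, and (b) normalize the moduli of the remaining $L$-symbols; this is where the argument is delicate, but the divisibility of $\R_{>0}$ and the fact that the constraints fix only finitely many $L$-values to specific phases (not moduli, since $\xi_g\in\{\pm 1\}$ already has modulus one) should make the construction go through without conflict.
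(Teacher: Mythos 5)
Your treatment of the fusion identities \eqref{eq:fusop1}--\eqref{eq:fusop2} and of the boundary cases \eqref{eq:simplecases} matches the paper's route (the paper packages the substitution step as Lemma~\ref{lemma:action_L}, a slightly strengthened form of the defining relation \eqref{eq:defL}, and then reads off the result). The gap is in your unitarity argument. You assert that $\lambda^{L,R}_{g,h}$ are unitary because the diagrams in \eqref{def:lambdas} are ``compositions of unitary building blocks,'' with the fusion tensors being ``isometries up to a phase fixed by the modulus of $L^x_{g,h}$.'' This is not true: the individual $X_i$, $Y_i$ are only one-sided invertible (only the specific combinations $u$, $v$, $w_L$, $w_R$ are unitary), and the fusion tensors $F^{<,>}_{g,h}$ carry an unconstrained scalar normalization $\beta_{g,h}$ and are in no sense isometric on their supports. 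The paper instead proves that $\lambda^{L}_{\inv{h},\inv{g}}\otimes\lambda^R_{g,h}$ is unitary by showing it intertwines truncated symmetries, $\bigl(\lambda^{L}_{\inv h,\inv g}\otimes\lambda^R_{g,h}\bigr)U^{L-2}_{g}U^{L}_{h}=U^{L}_{gh}$, so each factor is unitary only up to a scalar; that scalar is then computed by a trace and equals
\begin{equation*}
\dfrac{\sigma_{g}\sigma_{h}}{\sigma_{gh}}\,\dfrac{1}{\zeta_{\inv{h},\inv{g}}}\,\dfrac{\omega(gh,\inv h,\inv g)}{\omega(g,h,\inv h)}.
\end{equation*}

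This formula is the crux you are missing: making $\lambda^{L,R}$ unitary requires controlling the $\zeta_{g,h}$ scalars of Proposition~\ref{prop:zetas} (relating fusion tensors to their daggers), not just $|L^x_{g,h}|$ and $\omega$. Your proposed gauge fixing addresses only $\omega$ (via $\beta$) and $|L|$ (via positive $\gamma$), so even granting those steps, you have no argument that the residual scalar above is $1$. The paper needs an extra round of gauge transformations built from a trivialization $\zeta=d\chi$ with $\chi>0$ and $\chi\hat\chi=1$, which rests on the fact that positive $2$-cocycles of a finite group are coboundaries (proved there by a determinant argument on the regular representation); your appeal to divisibility of $\R_{>0}$ is the right ingredient for the $|L|$ step but is never applied to $\zeta$, because $\zeta$ never enters your argument. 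Without first establishing ``unitary up to scalar'' by the intertwining relation and then computing that scalar, you cannot know which gauge conditions suffice, and the ones you impose do not.
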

The proof of this proposition is given in Appendix \ref{app:proofprops}. From now on, unless explicitly mentioned, we will assume that we have chosen a gauge from this proposition.

\section{Gauging MPU symmetries}
\label{sec:true_gauging}
In this section, we show how to define a gauged MPS tensor and a local representation of the symmetry generalizing the gauging of the onsite symmetry by promoting the defects reviewed in Section~\ref{sec:onsite}. We begin with the injective case, where we will see that this can always be done. In the noninjective case, however, our procedure relies on an assumption on the trivializability of the $L$-symbols that we call block independence, for which the absence of an anomaly is a necessary but not sufficient condition.

\subsection{Injective MPS}
Define a gauged MPS tensor by
\begin{equation}
    \begin{tikzpicture}[baseline = +2.0mm]
        \def\dx{0.5};
        \def\ddx{0.08};
        \def\dy{0.6};
        \draw[thick] (-\dx,0) -- (\dx,0);
        \draw[thick, dotted] (\ddx,0) --++ (0,\dy);
        \draw[thick] (-\ddx, 0) --++ (0,\dy);
        \node[widetensor] at (0,0) {};
    \end{tikzpicture}
    \equiv
    \sum_{g\in G}{\;\;
    \begin{tikzpicture}[baseline = +2.0mm]
        \def\dx{0.45};
        \def\dy{0.6};        
    	\draw[thick] (0,0.0) --++ (0,\dy);
        \draw[thick] (-\dx, 0) --++ (2*\dx,0);
        \node[square] at (0,0) {};
        \node[] at (0.0,-0.3) {$g$};
    \end{tikzpicture}\otimes \ket{g}}
    \label{eq:gauged_tensor}
\end{equation}
Note that this tensor has two physical legs, one corresponding to the matter degree of freedom and the other one, which we depict with a dotted line, to the gauge degree of freedom, taking values in $\C G$. We could split it via a singular value decomposition to go back to the picture we had in the onsite case, however, there is no reason we should in that way recover the same bond dimension or the same MPS tensor that we had originally.

Let us now define our local unitary group representation that will give rise to gauge constraints (Gauss laws). Note that this is the same definition as in \cite{Seifnashri}.

\begin{proposition}
\label{prop:gausslaws}
    The mapping $\mcG:G\to \mc B{\left(\hilb^{\otimes 2}\otimes \C G^{\otimes 2}\right)}$ given by
    \begin{equation}
        \mcG_g\equiv\mcG(g) = \sum_{a,b\in G}{\lambda^{ag^{-1},gb}_{a,b}\otimes \ketbra{ag^{-1},gb}{a,b}},
        \label{eq:mcG}
    \end{equation}
    is a representation of $G$ and leaves the contraction of two copies of the gauged tensor \eqref{eq:gauged_tensor} invariant, i.e.
    \begin{equation}
        \begin{tikzpicture}
            \def\dx{0.5}
            \def\ddx{0.08}
            \def\dxb{1.0}
            \def\dy{0.5}
            \def\dyb{0.9}
            \draw[thick] (-\dxb, 0) --++ (2*\dxb,0);
        	\draw[thick] (-\dx-\ddx,0) --++ (0,\dyb);
            \draw[thick, dotted] (-\dx+\ddx,0) --++ (0,\dyb);
            \draw[thick] (\dx-\ddx,0) --++ (0,\dyb);
            \draw[thick, dotted] (\dx+\ddx,0) --++ (0,\dyb);
            \node[widetensor] at (-\dx,0) {};
            \node[widetensor] at (\dx,0) {};
            \draw[fill=white] (-\dx-0.2,\dy-0.2) rectangle ++ (2*\dx+0.4,0.4);
            \node[] at (-\dx-0.4, \dy) {$\mcG_g$};
        \end{tikzpicture}
    =
        \begin{tikzpicture}
            \def\dx{0.5}
            \def\ddx{0.08}
            \def\dxb{1.0}
            \def\dy{0.5}
            \def\dyb{0.9}
            \draw[thick] (-\dxb, 0) --++ (2*\dxb,0);
        	\draw[thick] (-\dx-\ddx,0) --++ (0,\dyb);
            \draw[thick, dotted] (-\dx+\ddx,0) --++ (0,\dyb);
            \draw[thick] (\dx-\ddx,0) --++ (0,\dyb);
            \draw[thick, dotted] (\dx+\ddx,0) --++ (0,\dyb);
            \node[widetensor] at (-\dx,0) {};
            \node[widetensor] at (\dx,0) {};
        \end{tikzpicture}.
    \end{equation}
\end{proposition}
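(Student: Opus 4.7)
The plan is to verify the two claims separately: first that $\mcG$ is a group homomorphism, and then that it leaves the two-site contraction of gauged tensors invariant. Both hinge on properties of the fusion operators $\lambda^{c,d}_{a,b}$ established in the corollary just preceding this proposition, together with their defining action on the defect tensors.

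For the invariance, I would write the two-site contraction as
\begin{equation}
    \sum_{a,b\in G}
    \begin{tikzpicture}[baseline=-1mm]
        \def\dx{0.45}
        \draw[thick] (-\dx-0.3,0)--(\dx+0.3,0);
        \draw[thick] (-\dx,0)--++(0,0.5);
        \draw[thick] (\dx,0)--++(0,0.5);
        \node[square] at (-\dx,0) {};
        \node[square] at (\dx,0) {};
        \node[] at (-\dx,-0.25) {$a$};
        \node[] at (\dx,-0.25) {$b$};
    \end{tikzpicture}
    \otimes \ket{a,b},
\end{equation}
apply $\mcG_g$, and use the defining property of $\lambda^{c,d}_{a,b}$ (which, by its definition $\lambda^{c,d}_{a,b}\equiv(\lambda^R_{c,d})^\dagger\lambda^R_{a,b}$ combined with Eqs.~\eqref{eq:fus_def_1}--\eqref{eq:fus_def_2}, or \eqref{eq:fusop1}--\eqref{eq:fusop2} in the noninjective case) to convert the two-site defect pattern for $(a,b)$ into the one for $(ag^{-1},gb)$ (noting $a\cdot b=(ag^{-1})(gb)$). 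After this substitution the summand becomes the defect pattern for $(ag^{-1},gb)$ tensored with $\ket{ag^{-1},gb}$, and a change of variables $c=ag^{-1}$, $d=gb$ recovers the original sum. The fact that the (noninjective) $L$-symbol phases in \eqref{eq:fusop1}--\eqref{eq:fusop2} cancel between the two factors $(\lambda^R_{c,d})^\dagger$ and $\lambda^R_{a,b}$ will be important — this is exactly what the corollary guarantees.

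For the homomorphism property $\mcG_g\mcG_h=\mcG_{gh}$, I would compose the two operators directly. Acting on $\ket{a,b}$, $\mcG_h$ gives $\ket{ah^{-1},hb}$ with matter factor $\lambda^{ah^{-1},hb}_{a,b}$, and then $\mcG_g$ sends this to $\ket{a(gh)^{-1},(gh)b}$ with an additional factor $\lambda^{a(gh)^{-1},ghb}_{ah^{-1},hb}$. Using the cocycle-like identity $\lambda^{f,g'}_{a,b}=\lambda^{f,g'}_{c,d}\lambda^{c,d}_{a,b}$ from the corollary (applied with $c=ah^{-1}$, $d=hb$, $f=a(gh)^{-1}$, $g'=gh\cdot b$, all compatible because $ab=(ah^{-1})(hb)=(a(gh)^{-1})(gh\cdot b)$), the product of the two matter factors collapses to $\lambda^{a(gh)^{-1},(gh)b}_{a,b}$, which is exactly the matter factor of $\mcG_{gh}$ acting on $\ket{a,b}$.

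I do not expect a substantial obstacle here: once the fusion operators $\lambda^{c,d}_{a,b}$ have been constructed and shown to satisfy the composition law, the proposition is essentially a bookkeeping exercise. The only subtlety is ensuring the arguments also work in the noninjective setting, where Eqs.~\eqref{eq:fusop1}--\eqref{eq:fusop2} carry explicit $L$-symbol phases. These phases do cancel in every expression above because $\lambda^{c,d}_{a,b}$ is defined as a product of a fusion and an inverse fusion with matching indices, so no $L$-symbols survive. Finally, the boundary cases $\mcG_e=\mathds{1}$ follow from $\lambda^{a,b}_{a,b}=\mathds{1}$, consistent with \eqref{eq:simplecases}.
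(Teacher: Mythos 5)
Your proof is correct for the stated proposition and follows essentially the same route as the paper: the invariance is the paper's displayed computation (expand the contraction as a sum over defect pairs $(a,b)$, apply the fusion operators to relabel to $(ag^{-1},gb)$, and re-index the sum), and the homomorphism property is the ``direct computation'' the paper leaves implicit, carried out via the composition law $\lambda^{f,g'}_{a,b}=\lambda^{f,g'}_{c,d}\lambda^{c,d}_{a,b}$ from the preceding corollary. One caveat on your final paragraph: the claim that the $L$-symbol phases cancel in $\lambda^{c,d}_{a,b}=(\lambda^R_{c,d})^\dagger\lambda^R_{a,b}$ ``because of matching indices'' is not right in the noninjective setting --- the two factors contribute $L^x_{a,b}$ and $1/L^x_{ag^{-1},gb}$ respectively, and their ratio is precisely the quantity $\ell^x_{a,b;g}$ of Eq.~\eqref{eq:scalarfactor} whose possible block dependence is the obstruction studied in Section~\ref{subsec:true_gauging_noninj}. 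This does not affect the proposition as stated, since it concerns the injective gauged tensor \eqref{eq:gauged_tensor}, for which the gauge of Proposition~\ref{prop:2} already sets $L_{g,h}=1$; but the cancellation you invoke is exactly what fails in general and motivates the block-independence condition.
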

\begin{proof}
The first part follows from the direct computation. For the remaining one, we have
    \begin{align}
    &\mcG_g\left(
        \begin{tikzpicture}
            \def\dx{0.5}
            \def\ddx{0.08}
            \def\dxb{1.0}
            \def\dy{0.5}
            \def\dyb{0.8}
            \draw[thick] (-\dxb, 0) --++ (2*\dxb,0);
        	\draw[thick] (-\dx-\ddx,0) --++ (0,\dyb);
            \draw[thick, dotted] (-\dx+\ddx,0) --++ (0,\dyb);
            \draw[thick] (\dx-\ddx,0) --++ (0,\dyb);
            \draw[thick, dotted] (\dx+\ddx,0) --++ (0,\dyb);
            \node[widetensor] at (-\dx,0) {};
            \node[widetensor] at (\dx,0) {};
        \end{tikzpicture}
    \right)=
        \mcG_g\left(\sum_{a,b\in G}{
        \begin{tikzpicture}[baseline = 2mm]
            \draw[thick] (-0.6,0) -- (0.6,0);
            \node[square] at (-0.3,0) {};
            \node[square] at (0.3,0) {};
            \node[] at (-0.3, -0.3) {$a$};
            \node[] at (0.3, -0.3) {$b$};
            \draw[thick] (-0.3,0) --++ (0., 0.6);
            \draw[thick] (0.3,0) --++ (0., 0.6);
            \end{tikzpicture}
        \otimes \ket{a,b}}\right)\nonumber\\
                &=\!\!\sum_{a,b\in G}{
        \begin{tikzpicture}[baseline = 2mm]
            \draw[thick] (-0.6,0) -- (0.6,0);
            \node[square] at (-0.3,0) {};
            \node[square] at (0.3,0) {};
            \node[] at (-0.3, -0.3) {$a$};
            \node[] at (0.3, -0.3) {$b$};
            \draw[thick] (-0.3,0) --++ (0., 0.9);
            \draw[thick] (0.3,0) --++ (0., 0.9);
            \draw[] (-0.3,0.3) rectangle ++ (0.6,0.4);
            \node[] at (.0,0.5) {$\lambda$};
            \end{tikzpicture}
        \!\!\otimes \ket{ag^{-1},gb}}\!=\!\!
        \sum_{a,b\in G}{
        \begin{tikzpicture}[baseline = 2mm]
            \draw[thick] (-0.6,0) -- (0.6,0);
            \node[square] at (-0.3,0) {};
            \node[square] at (0.3,0) {};
            \node[] at (-0.3, -0.3) {$ag^{-1}$};
            \node[] at (0.3, -0.3) {$gb$};
            \draw[thick] (-0.3,0) --++ (0., 0.6);
            \draw[thick] (0.3,0) --++ (0., 0.6);
            \end{tikzpicture}
        \!\!\otimes \ket{ag^{-1},gb}}  \nonumber\\
        &=
            \begin{tikzpicture}
            \def\dx{0.5}
            \def\ddx{0.08}
            \def\dxb{1.0}
            \def\dy{0.5}
            \def\dyb{0.8}
            \draw[thick] (-\dxb, 0) --++ (2*\dxb,0);
        	\draw[thick] (-\dx-\ddx,0) --++ (0,\dyb);
            \draw[thick, dotted] (-\dx+\ddx,0) --++ (0,\dyb);
            \draw[thick] (\dx-\ddx,0) --++ (0,\dyb);
            \draw[thick, dotted] (\dx+\ddx,0) --++ (0,\dyb);
            \node[widetensor] at (-\dx,0) {};
            \node[widetensor] at (\dx,0) {};
        \end{tikzpicture}.
    \end{align}
\end{proof}

The original MPS can be recovered by projecting all the gauge degrees of freedom of the gauged MPS onto the neutral element $\ket{e}$. We can define the projectors on the gauge invariant subspace in the usual way
\begin{equation}
    P_j=\dfrac{1}{|G|}\sum_{g\in G}{\mcG_g^{j,j+1}}
    \label{eq:gauge_projs}
\end{equation}
where we have introduced a position index $j$ along the chain. We will soon see in Proposition \ref{prop:comm_proj} that the projectors $P_j, P_j'$ commute with each other in the absence of an anomaly, which is always the case when the invariant MPS is injective. The parent Hamiltonian of the gauged MPS thus commutes with the gauge constraints and therefore constitutes a gauging of the parent Hamiltonian of the original MPS.

\subsection{Noninjective MPS}
\label{subsec:true_gauging_noninj}
For a noninjective invariant MPS, the gauged MPS tensor is a generalization of Eq. \eqref{eq:gauged_tensor}. Similarly to a globally invariant MPS (cf. \cite{Schuch11}), it will have a block structure, and the virtual Hilbert space will be given by
\begin{equation}
    \mc H_v = \bigoplus_{x}{\C^{\otimes \chi_x}},
\end{equation}
where $\chi_x$ is the bond dimension of the MPS labeled by $x$. We will additionally denote by
\begin{equation}
    I_x: \C^{\otimes \chi_x}\rightarrow \mc H_v
\end{equation}
the natural isometric embedding of the $\C^{\otimes \chi_x}$ summand into $\mc H_v$, which we will use to access specific blocks in the MPS. With that notation, our gauged MPS is given by
\begin{equation}
    \begin{tikzpicture}[baseline=1mm]
    	\draw[thick] (-0.1,0) --++ (0,0.6);
        \draw[thick, dotted] (0.1,0) --++ (0,0.6);
        \draw[thick] (-0.6, 0) --++ (1.2,0);
        \draw[fill=black] (-0.2,-0.11) rectangle ++ (0.4,0.22);
        \node[] at (-0.8,0.0) {$I^\dagger_y$};
        \node[] at (0.8,0.0) {$I_x$};
    \end{tikzpicture}
    \equiv
    \sum_{g\in G}{\;\;
    \begin{tikzpicture}[baseline=1mm]
        \draw[thick] (0,0.0) --++ (0,0.6);
        \draw[thick] (-0.4, 0) --++ (0.8,0);
        \node[square] at (0,0) {};
        \node[] at (0.0,-0.3) {$g[x]$};
    \end{tikzpicture}\otimes \ket{g}\, .}
    \label{eq:gauged_tensor_dw}
\end{equation}
This time, the action of the $\mc G_g$ operators on the different MPS blocks gives to scalar factors:
\begin{align}
    &\begin{tikzpicture}
        \def\dx{0.4}
        \def\dxx{1.}
        \def\dy{0.8}
        \def\ddy{0.35}
        \def\dd{0.1}
    	\draw[thick] (\dx-\dd,0) --++ (0,\dy);
        \draw[thick, dotted] (\dx+\dd,0) --++ (0,\dy);
        \draw[thick] (-\dx-\dd,0) --++ (0,\dy);
        \draw[thick, dotted] (-\dx+\dd,0) --++ (0,\dy);
        \draw[thick] (-\dxx, 0) --++ (2*\dxx,0);
        \node[widetensor] at (\dx, 0) {};
        \node[widetensor] at (-\dx, 0) {};
        \draw (-\dx-2*\dd, \ddy) rectangle (\dx+2*\dd, \ddy+0.3) [fill=white, thick];
        \node[] at (-\dxx-0.2,0.0) {$I^\dagger_y$};
        \node[] at (\dxx+0.2,0.0) {$I_x$};
        \node[] at (-\dx-\dd-0.4,\ddy+0.15) {$\mc G_g$};
    \end{tikzpicture}=\nonumber\\
    &=\mc G_g\left(\sum_{\substack{h_1,h_2\\h_1h_2x = y}}{
    \begin{tikzpicture}
        \def\dx{0.6}
        \def\dxx{1.2}
        \def\dy{0.6}
    	\draw[thick] (\dx,0) --++ (0,\dy);
        \draw[thick] (-\dx,0) --++ (0,\dy);
        \draw[thick] (-\dxx, 0) --++ (2*\dxx,0);
        \node[square] at (\dx, 0) {};
        \node[square] at (-\dx, 0) {};
        \node[] at (\dx,-0.25) {$h_2[x]$};
        \node[] at (-\dx,-0.25) {$h_1[h_2x]$};
    \end{tikzpicture}
    \otimes\ket{h_1,h_2}
    }\right)\nonumber\\
    &=\sum_{\substack{h_1,h_2\\h_1h_2x = y}}{\dfrac{L^x_{h_1,h_2}}{L^x_{h_1g^{-1},gh_2}}
    \begin{tikzpicture}
        \def\dx{0.6}
        \def\dxx{1.2}
        \def\dy{0.6}
    	\draw[thick] (\dx,0) --++ (0,\dy);
        \draw[thick] (-\dx,0) --++ (0,\dy);
        \draw[thick] (-\dxx, 0) --++ (2*\dxx,0);
        \node[square] at (\dx, 0) {};
        \node[square] at (-\dx, 0) {};
        \node[] at (\dx,-0.3) {$gh_2[x]$};
        \node[] at (-\dx,-0.3) {$h_1\inv{g}[gh_2x]$};
    \end{tikzpicture}
    \otimes\ket{h_1g^{-1},gh_2}
    }\nonumber\\
    &=\sum_{\substack{h_1,h_2\\h_1h_2x = y}}{\dfrac{L^x_{h_1g,g^{-1}h_2}}{L^x_{h_1,h_2}}
    \begin{tikzpicture}
        \def\dx{0.6}
        \def\dxx{1.2}
        \def\dy{0.6}
    	\draw[thick] (\dx,0) --++ (0,\dy);
        \draw[thick] (-\dx,0) --++ (0,\dy);
        \draw[thick] (-\dxx, 0) --++ (2*\dxx,0);
        \node[square] at (\dx, 0) {};
        \node[square] at (-\dx, 0) {};
        \node[] at (\dx,-0.25) {$h_2[x]$};
        \node[] at (-\dx,-0.25) {$h_1[h_2x]$};
    \end{tikzpicture}
    \otimes\ket{h_1,h_2}.
    }
\end{align}
In the absence of $x$-dependence, the factor 
\begin{equation}
    \ell_{h_1,h_2;g}^x\equiv\dfrac{L^x_{h_1g,g^{-1}h_2}}{L^x_{h_1,h_2}}
\label{eq:scalarfactor}
\end{equation} could be absorbed into a redefinition of $\mc G_g$ and we would get a locally invariant MPS. We can thus define the condition of \textit{block independence} (BI) as the existence of a gauge where \eqref{eq:scalarfactor} is independent of the block label $x$, i.e. $\ell_{h_1,h_2;g}^x=\ell_{h_1,h_2;g}^y, \forall x,y$. We prove in Appendix~\ref{app:block_indep} that this condition is equivalent to the existence of a gauge where all $L$-symbols are equal to 1, and in Appendix~\ref{app:proofprops} that such a gauge can be found while respecting all of the previous gauge choices we have made (those encoded in Proposition \ref{prop:unitarity_and_gauges}). This already implies that anomalous MPUs cannot satisfy the BI condition (as in the corresponding gauge the anomaly would be trivial due to \eqref{eq:omega_and_Ls}). However, in the absence of an anomaly, and even restricting to onsite symmetries, both BI and non-BI cases exist (for onsite symmetries, BI can be phrased as the extendability of the $2$-cocycle of the virtual representation of the unbroken symmetry group; see the examples below and Appendix~\ref{app:block_indep} for details). Thus, for a noninjective invariant MPS satisfying BI, we can promote the symmetry defects to quantum degrees of freedom, resulting in a gauged MPS satisfying local gauge constraints. Moreover, the resulting projectors on the gauge-invariant subspace will commute on different sites, due to the following
\begin{proposition}
Let $P_j$ be given by \eqref{eq:gauge_projs}. Then $[P_j, P_{j'}]=0$, for all $j, j'$, if and only if the symmetry is nonanomalous.
    \label{prop:comm_proj}
\end{proposition}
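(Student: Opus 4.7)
The plan is to reduce the commutator $[P_j,P_{j'}]$ to a single scalar identity involving only the 3-cocycle $\omega$, and then match that identity against the condition $[\omega]=1$. First, $P_j$ and $P_{j'}$ commute trivially whenever $|j-j'|\geq 2$, since their supports (two consecutive matter sites together with their gauge legs) are disjoint. So only adjacent projectors need analysis, and without loss of generality I consider the pair $(P_j, P_{j+1})$. Acting with either ordering on a basis state $|a,b,c;x\rangle$ (defect labels $a,b,c$ on sites $j,j+1,j+2$ together with a fixed right-block label $x$), both compositions yield a sum $\sum_{g,h\in G} S(g,h)\,|ag^{-1},gbh^{-1},hc;x\rangle$; since the map $(g,h)\mapsto(ag^{-1},gbh^{-1},hc)$ is a bijection of $G^{\times 2}$ for fixed $(a,b,c)$, the output basis vectors are linearly independent, and $[P_j,P_{j+1}]=0$ is equivalent to the pointwise identity $\mathcal{G}_g^{j,j+1}\mathcal{G}_h^{j+1,j+2}=\mathcal{G}_h^{j+1,j+2}\mathcal{G}_g^{j,j+1}$ for every $g,h\in G$.

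Next I would compute the two compositions explicitly. Iterating the scalar factor of Eq.~\eqref{eq:scalarfactor} yields, for $\mathcal{G}_h^{j+1,j+2}\mathcal{G}_g^{j,j+1}$, the coefficient $S_A(g,h)=(L^{cx}_{a,b}/L^{cx}_{ag^{-1},gb})(L^{x}_{gb,c}/L^{x}_{gbh^{-1},hc})$, and for $\mathcal{G}_g^{j,j+1}\mathcal{G}_h^{j+1,j+2}$ the coefficient $S_B(g,h)=(L^{x}_{b,c}/L^{x}_{bh^{-1},hc})(L^{hcx}_{a,bh^{-1}}/L^{hcx}_{ag^{-1},gbh^{-1}})$. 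The key step is to rewrite every $L^{cx}$ and $L^{hcx}$ in terms of $L^{x}$ using Eq.~\eqref{eq:omega_and_Ls} in the form $L^{ky}_{g,h}=L^{y}_{g,hk}\,L^{y}_{h,k}/\bigl(\omega(g,h,k)\,L^{y}_{gh,k}\bigr)$. After this substitution every $L$-symbol cancels and one arrives at the compact identity
\[
\frac{S_A(g,h)}{S_B(g,h)}\;=\;\frac{\omega(ag^{-1},gb,c)\,\omega(a,bh^{-1},hc)}{\omega(a,b,c)\,\omega(ag^{-1},gbh^{-1},hc)}.
\]
The crucial message of this step is that the commutator depends only on the anomaly class and not on the $L$-symbols, which is also why the result does not require block independence.

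Both implications then follow in a few lines. For the forward direction, if the representation is nonanomalous I adjust the scalar gauge of the fusion tensors~\eqref{eq:omegagauge} to a representative with $\omega\equiv 1$, making the ratio trivially $1$. For the converse, assuming the ratio equals $1$ for every $(a,b,c,g,h)$, I specialize to $a=e$ and use $\omega(e,\cdot,\cdot)=1$ from~\eqref{eq:triv_omegas} to obtain $\omega(g^{-1},gb,c)=\omega(g^{-1},gbh^{-1},hc)$ for all $g,b,c,h$; setting $h=gb$ and invoking $\omega(\cdot,e,\cdot)=1$ then forces $\omega(g^{-1},gb,c)=1$ for all $g,b,c$, so $\omega\equiv 1$ in the current gauge and the representation is nonanomalous. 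I anticipate the main technical obstacle to be the cancellation in the second step: one must invoke the cocycle identity four times with carefully chosen arguments and verify that the residual $\omega$-factors assemble into the compact form displayed above; the rest of the argument is essentially a specialization of variables.
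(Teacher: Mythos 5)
Your reduction to adjacent pairs, the computation of the scalars $S_A,S_B$ on defect configurations, the cancellation via $\omega=dL$ (Eq.~\eqref{eq:omega_and_Ls}), and the specialization $a=e$, $h=gb$ in the converse direction are all correct — the last step is essentially the same cocycle-specialization trick the paper uses. The problem is the forward direction. The Gauss laws $\mcG_g$ are operators on the full space $\hilb^{\otimes 2}\otimes\C G^{\otimes 2}$, built from the unitaries $\lambda^{c,d}_{a,b}=(\lambda^R_{c,d})^\dagger\lambda^R_{a,b}$, and $[P_j,P_{j'}]=0$ is an operator identity. Your argument only tests it on the states $\ket{a,b,c;x}$ whose matter part is a contraction of defect tensors. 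These do not in general span $\hilb^{\otimes 3}$ — the paper relies on exactly this fact in Lemma~\ref{lemma:modif}, where it notes that $\tilde\lambda_{g,h}$ is not fixed by its action on the $v(g,h,x)$ unless they span. So agreement of $S_A$ and $S_B$ shows at most that the commutator annihilates the defect subspace; it says nothing about its action on the orthogonal complement, which your computation never probes. (For the converse the restriction is harmless: vanishing of the full commutator implies vanishing on the defect states, which is all you use there.) A secondary imprecision: "adjusting the gauge so that $\omega\equiv 1$" rescales the $\lambda^R$ and hence changes the $P_j$ themselves; the proposition is stated in the gauge already fixed by Proposition~\ref{prop:unitarity_and_gauges}, in which $\omega\equiv1$ holds precisely in the nonanomalous case, so no further adjustment is available or needed.

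To close the gap you need an operator-level statement, and this is where the real work of the paper's proof lies: one shows that the two fusion trees built from the $\lambda^R$ operators differ by a scalar, Eq.~\eqref{eq:assoc}, and that this scalar is precisely $F(g,h,k)=\omega(g,h,k)$ — an identity between unitaries on $\hilb^{\otimes 3}$ established by the lengthy diagrammatic computation of Appendix~\ref{app:associativity}. Granting \eqref{eq:assoc} and the unitarity of the $\lambda^R$, the commutator condition collapses to a scalar condition on $F$, and only then does $\omega\equiv 1$ imply $[P_j,P_{j-1}]=0$ on the whole Hilbert space. Your identity
\begin{equation*}
\frac{S_A(g,h)}{S_B(g,h)}=\frac{\omega(ag^{-1},gb,c)\,\omega(a,bh^{-1},hc)}{\omega(a,b,c)\,\omega(ag^{-1},gbh^{-1},hc)}
\end{equation*}
is the correct scalar shadow of that operator identity, but it does not substitute for it.
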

\begin{proof}
The proof is analogous to the one in \cite[Section~3.2]{Seifnashri}. By locality, it is enough to consider the commutators $[P_j, P_{j-1}]$. Then it is shown that if there exists a scalar function $F: G^3 \rightarrow \mathbb{C}$ relating the two fusion paths for three consecutive defects,
\begin{equation}
\begin{tikzpicture}
    \def\dx{1.0}
    \def\dya{.5}
    \def\dyb{1.25}
    \def\dyc{2.0}
    \def\dyd{2.5}
    \def\ddx{0.1}
    \def\ddy{0.3}
    \draw[thick] (0,0) --++ (0,\dyd);
    \draw[thick] (-\dx,0) --++ (0,\dyd);
    \draw[thick] (\dx,0) --++ (0,\dyd);
    \draw[fill=white] (0-\ddx,\dya-\ddy) rectangle (\dx+\ddx, \dya + \ddy);
    \draw[fill=white] (-\dx-\ddx,\dyb-\ddy) rectangle (\ddx, \dyb + \ddy);
    \draw[fill=white] (-\ddx,\dyc-\ddy) rectangle (\dx+\ddx, \dyc + \ddy);
    \node[] at (\dx/2,\dya) {$\lambda^R_{g_2,g_3}$};
    \node[] at (\dx/2,\dyc) {$\lambda^R_{g_1,g_2g_3}$};
    \node[] at (-\dx/2,\dyb) {$\lambda^R_{g_1,e}$};
\end{tikzpicture}
=
F(g_1,g_2,g_3)\;
\begin{tikzpicture}
    \def\dx{1.0}
    \def\dyb{2.5/3}
    \def\dyc{5/3}
    \def\dyd{2.5}
    \def\ddx{0.1}
    \def\ddy{0.3}
    \draw[thick] (0,0) --++ (0,\dyd);
    \draw[thick] (-\dx,0) --++ (0,\dyd);
    \draw[thick] (\dx,0) --++ (0,\dyd);
    \draw[fill=white] (-\dx-\ddx,\dyb-\ddy) rectangle (\ddx, \dyb + \ddy);
    \draw[fill=white] (-\ddx,\dyc-\ddy) rectangle (\dx+\ddx, \dyc + \ddy);
    \node[] at (\dx/2,\dyc) {$\lambda^R_{g_1,g_2}$};
    \node[] at (-\dx/2,\dyb) {$\lambda^R_{g_1g_2,g_3}$};
\end{tikzpicture}
,
\label{eq:assoc}
\end{equation}
then the commutativity condition $[P_j,P_{j-1}]=0$ reads
\begin{equation}
\frac{ F(h_1,h_2h_3g_3^{-1},g_3)}{F(g_1,g_2,g_3)}=\frac{F(h_1,h_2,h_3)}{F(g_1,g_{1}^{-1}h_1h_2,h_3)},
\end{equation}
for all $h_1,h_2,h_3,g_1,g_2,g_3\in G$ (this relies on the fact that the fusion operators $\lambda^R$ can be chosen to be unitary). Picking $h_1=e$, $h_2=g_1$ and $h_3=g_2g_3$, we get
\begin{equation}F(g_1,g_2,g_3)=\frac{F(g_1,e,g_2g_3)F(e,g_1g_2,g_3)}{F(e,g_1,g_2g_3)}.
\end{equation}
In Appendix~\ref{app:associativity} we prove that our fusion operators indeed satisfy \eqref{eq:assoc} with $F(g, h, k)=\omega(g,h,k)$. With our gauge choices, we also have $\omega(e,g,h)=\omega(g,e,h)=1$, for all $g,h\in G$. Therefore, $[P_j,P_{j-1}]=0$ if and only if $\omega$ is trivial, i.e., if the anomaly vanishes.
\end{proof}

\subsection{Examples}
\subsubsection{Onsite symmetries}
\label{subsec:onsite}
As a first example, we reproduce the formalism for onsite symmetries from Section~\ref{sec:onsite}. Onsite global symmetry groups constitute MPUs with bond dimension $1$, thus all fusion tensors are scalars. We need to choose $X_i\text{--}Y_i$ decomposition for our MPUs, which we do as follows:
\begin{equation}
    X^g_1 = Y^g_2 = u_g,\qquad Y^g_1 = X^g_2 = \mathds{1}.
    \label{eq:XYonsite}
\end{equation}
We also initially pick the fusion tensors to be trivial and the action tensors to be given by the virtual representations:
\begin{equation}
    F^{<}_{g,h} = F^>_{g,h} = 1,\qquad A^\Gamma_g = S_g, \qquad A^\ammaG_g = S^\dagger_g.
    \label{eq:onsitechoice}
\end{equation}
This way our truncated symmetries read
\begin{equation}
    U^L_g = \overbrace{u_g\otimes\ldots\otimes u_g}^{L-1}\otimes\mathds{1},
    \label{eq:ULonsite}
\end{equation}
and the defect tensors are given by 
\begin{equation}
    \begin{tikzpicture}[baseline]
        \def\dx{0.3}
        \def\dy{0.5}
        \draw[thick] (0,0) --++ (0,\dy);
        \draw[thick] (-\dx, 0) --++ (2*\dx,0);
        \node[square] at (0,0) {};
        \node[] at (-0.25,0.25) {$g$};
    \end{tikzpicture}
    =\begin{tikzpicture}[baseline]
        \def\dx{0.2}
        \def\dxx{0.5}
        \def\dy{0.5}
        \draw[thick] (\dx,0) --++ (0,\dy);
        \draw[thick] (-\dxx, 0) --++ (2*\dxx,0);
        \node[square, fill=blue] at (-\dx,0) {};
        \node[tensor] at (\dx,0) {};
        \node[blue] at (-\dx,0.3) {$S_g$};
        \node[] at (3.5*\dx,0) {$.$};
    \end{tikzpicture}
\end{equation}
The defect fusion operators in this gauge read $\lambda^R_{g,h}=u_g\otimes\mathds{1}$, $\lambda_{g,h}^L= u_{h}^\dagger\otimes\mathds{1}$. They are unitary and satisfy
\begin{equation}
            \begin{tikzpicture}[baseline=6]
            \def\dx{0.3}
            \def\dxx{0.7}
            \def\dy{0.8}
            \def\ddy{0.3}
            \draw[thick] (-\dx,0) --++ (0,\dy);
            \draw[thick] (\dx,0) --++ (0,\dy);
            \draw[thick] (-\dxx, 0) --++ (2*\dxx,0);
            \node[square] at (-\dx,0) {};
            \node[square] at (\dx,0) {};
            \draw[fill=white] (-\dx-0.1,\ddy) rectangle (\dx +0.1, \ddy + 0.3);
            \node[] at (\dx,-0.25) {$h$};
            \node[] at (-\dx,-0.25) {$g$};
            \node[] at (-\dx-0.45,\ddy+0.15) {$\lambda^R_{g,h}$};
        \end{tikzpicture}
            =\Omega(g,h)\;
            \begin{tikzpicture}[baseline=6]
            \def\dx{0.3}
            \def\dxx{0.7}
            \def\dy{0.7}
            \draw[thick] (-\dx,0) --++ (0,\dy);
            \draw[thick] (\dx,0) --++ (0,\dy);
            \draw[thick] (-\dxx, 0) --++ (2*\dxx,0);
            \node[tensor] at (-\dx,0) {};
            \node[square] at (\dx,0) {};
            \node[] at (\dx,-0.25) {$gh$};
            \node[] at (3*\dx,0) {$,$};
        \end{tikzpicture}
\end{equation}
\begin{equation}
            \begin{tikzpicture}[baseline=6]
            \def\dx{0.3}
            \def\dxx{0.7}
            \def\dy{0.8}
            \def\ddy{0.3}
            \draw[thick] (-\dx,0) --++ (0,\dy);
            \draw[thick] (\dx,0) --++ (0,\dy);
            \draw[thick] (-\dxx, 0) --++ (2*\dxx,0);
            \node[square] at (-\dx,0) {};
            \node[square] at (\dx,0) {};
            \draw[fill=white] (-\dx-0.1,\ddy) rectangle (\dx +0.1, \ddy + 0.3);
            \node[] at (\dx,-0.25) {$h$};
            \node[] at (-\dx,-0.25) {$g$};
            \node[] at (-\dx-0.45,\ddy+0.15) {$\lambda^L_{g,h}$};
        \end{tikzpicture}
            =\Omega(g,h)\;
            \begin{tikzpicture}[baseline=6]
            \def\dx{0.3}
            \def\dxx{0.7}
            \def\dy{0.7}
            \draw[thick] (-\dx,0) --++ (0,\dy);
            \draw[thick] (\dx,0) --++ (0,\dy);
            \draw[thick] (-\dxx, 0) --++ (2*\dxx,0);
            \node[square] at (-\dx,0) {};
            \node[tensor] at (\dx,0) {};
            \node[] at (-\dx,-0.25) {$gh$};
            \node[] at (3*\dx,0) {$,$};
        \end{tikzpicture}
\end{equation}
since the $L$-symbols in this gauge are given by the $2$-cocycle $\Omega(g,h)$. We can regauge the fusion tensors so that the $L$-symbols become one, which results in fusion operators
\begin{equation}
    \lambda^R_{g,h} = \dfrac{u_g\otimes\mathds{1}}{\Omega(g,h)},\qquad \lambda^L_{g,h} = \dfrac{u^\dagger_h\otimes\mathds{1}}{\Omega(g,h)},
\end{equation}
and Gauss laws
\begin{align}
    \mc G_g&=\sum_{a,b\in G}{\dfrac{\Omega(ag^{-1},gb)}{\Omega(a,b)}(u_g\otimes\mathds{1})\otimes\ketbra{ag^{-1},gb}{a,b}}\nonumber\\
    &=\sum_{a,b\in G}{\dfrac{u_g\otimes\mathds{1}}{\Omega(a,g^{-1})\Omega(g,b)}\otimes\ketbra{ag^{-1},gb}{a,b}}\nonumber\\
    &= u_g\otimes\mathds{1}\otimes R^{\Omega}_g\otimes  L^{\Omega}_g,
\end{align}
which act only on two gauge legs and one matter leg, and leave the gauged MPS generated by 
\begin{equation}
    \begin{tikzpicture}[baseline=1mm]
        \def\dx{0.1}
        \def\dxx{0.5}
        \def\dy{0.5}
    	\draw[thick] (-\dx,0) --++ (0,\dy);
        \draw[thick, dotted] (\dx,0) --++ (0,\dy);
        \draw[thick] (-\dxx, 0) --++ (2*\dxx,0);
        \node[widetensor] at (0,0) {};
    \end{tikzpicture}
    \equiv
    \sum_{g\in G}{\;\;
        \begin{tikzpicture}[baseline=1mm]
        \def\dx{0.2}
        \def\dxx{0.5}
        \def\dy{0.5}
        \draw[thick] (\dx,0) --++ (0,\dy);
        \draw[thick] (-\dxx, 0) --++ (2*\dxx,0);
        \node[square, fill=blue] at (-\dx,0) {};
        \node[tensor] at (\dx,0) {};
        \node[blue] at (-\dx,0.3) {$S_g$};
    \end{tikzpicture}\otimes \ket{g}}
\end{equation}
invariant. Since the left action commutes with the right one, $[P_j,P_k]=0$ for all positions $j,k$.

\subsubsection{Non-onsite $\Z_2$ symmetry}
\label{sec:CZ}
Consider now a genuine MPU symmetry, albeit the simplest example with the product of local $\mathsf{CZ}$ gates
\begin{equation}
    U = \prod_{j}{\mathsf{CZ_{j,j+1}}}.
\end{equation}
Using the following representation for the controlled-$\mathsf{Z}$ gate,
\begin{equation}
    \begin{tikzpicture}
         \draw[thick] (-0.3,-0.4)--++(0.0, 0.8);  \draw[thick] (0.3,-0.4)--++(0.0, 0.8);  
         \node[rectangle, fill=antiquewhite] at (0,0) {$\mathsf{CZ}$};

         \node[] at (0.75, 0.0) {$=$};

         \draw[thick] (1.0,-0.25) --++ (0.0,0.5);
         \draw[thick] (1.5,-0.25) --++ (0.0,0.5);
         \draw[thick,red] (1.0,0.0) --++ (0.5,0.0);
         \node[tensor, fill=blue] at (1.25,0.0) {};
         \node[] at (1.75,0.0) {$.$};   
    \end{tikzpicture}
    \label{eq:CZ}
\end{equation}
where 
\begin{equation}
    \begin{tikzpicture}
        \draw[red,thick] (0.0,0.0) --++ (0.5,0.0);
        \node[tensor, fill= blue] at (0.25,0.0) {};
        \node[] at (1.85,0.0) {$=\sqrt{2}H=\begin{pmatrix}
            1&1\\
            1&-1
        \end{pmatrix},$};
        \def\Sc{1.8};
        \draw[thick] (0.0+\Sc*\Shift, -0.5) --++(0.0,1.0);
        \draw[red,thick] (0.0+\Sc*\Shift, 0.0) --++(0.5,0.0);
        \node[] at (0.0+\Sc*\Shift, -0.65) {$i$};
        \node[] at (0.0+\Sc*\Shift, 0.65) {$k$};
        \node[] at (0.6+\Sc*\Shift, 0.0) {$j$};
        
        \node[] at (0.9+\Sc*\Shift, 0.0) {$=$};

        \def\co{1.475};

        \draw[thick] (0.0+\co*\Sc*\Shift, -0.5) --++(0.0,1.0);
        \draw[red,thick] (0.0+\co*\Sc*\Shift, 0.0) --++(-0.5,0.0);
        \node[] at (0.0+\co*\Sc*\Shift, -0.65) {$i$};
        \node[] at (0.0+\co*\Sc*\Shift, 0.65) {$k$};
        \node[] at (-0.6+\co*\Sc*\Shift, 0.0) {$j$};

        \node[] at (0.8+\co*\Sc*\Shift,0.0) {$=\ \delta_{i,j,k},$};
    \end{tikzpicture}
    \label{eq:deltas}
\end{equation}
the many-body unitary $U$ can be written as an MPU of bond dimension $2$,
\begin{equation}
U=
    \begin{tikzpicture}
        \def\dx{0.6}
        \def\ddx{0.4}
        \def\dy{0.15}
        \def\dyb{0.4}
    	\draw[thick] (0,-\dyb) --++ (0,2*\dyb);
        \draw[thick] (\dx,-\dyb) --++ (0,2*\dyb);
        \draw[thick] (2*\dx,-\dyb) --++ (0,2*\dyb);
        \draw[thick] (3*\dx,-\dyb) --++ (0,2*\dyb);
        \draw[thick, red] (-\ddx, -\dy) --++ (\ddx,0);
        \draw[thick, red] (0, \dy) --++ (\dx,0);
        \draw[thick, red] (\dx, -\dy) --++ (\dx,0);
        \draw[thick, red] (2*\dx, \dy) --++ (\dx,0);
        \draw[thick, red] (3*\dx, -\dy) --++ (\ddx,0);
        \node[tensor, blue] at (0.5*\dx, \dy) {};
        \node[tensor, blue] at (1.5*\dx, -\dy) {};
        \node[tensor, blue] at (2.5*\dx, \dy) {};
        \node[tensor, blue] at (3*\dx+0.5*\ddx, -\dy) {};
        \node[] at (0-\ddx-0.3,0) {$\ldots$};
        \node[] at (3*\dx+\ddx+0.3,0) {$\ldots$};
    \end{tikzpicture}
\end{equation}
and since it squares to identity, it yields a representation of $\Z_2=\langle g|g^2=e\rangle$ which can be checked to be nonanomalous. Now consider the MPS tensor given by 
\begin{equation}
    A^0 = \left(\begin{array}{ccc}
        1 & 0 & 0\\
        0 & 0 & 0\\
        0 & 1 & 1
    \end{array}\right), \qquad A^1 = \left(\begin{array}{ccc}
        0 & 1 & -1\\
        -1 & 0 & 0\\
        0 & 0 & 0
    \end{array}\right). 
\end{equation}
This is a normal MPS, which can thus be blocked to become injective. It turns out, however, that we can carry out the formalism without blocking in this case, since all the necessary objects exist before blocking. In particular, the defect tensor
\begin{equation}
    A_g^0 = \left(\begin{array}{ccc}
        0 & -1 & -1\\
        0 & 0 & 0\\
        -1 & 0 & 0
    \end{array}\right), \qquad A_g^1 = \left(\begin{array}{ccc}
        1 & 0 & 0\\
        0 & -1 & 1\\
        0 & 0 & 0
    \end{array}\right), 
\end{equation}
satisfies 
\begin{equation}
    (-1)^{ij}A^iA^j = A_g^iA_g^j,\qquad (-1)^{ij}A_g^iA^j = A^iA_g^j,
    \label{eq:Z2_def}
\end{equation}
or in diagrams
\begin{equation}
    \begin{tikzpicture}
        \def\dx{0.3}
        \def\dxb{0.6}
        \def\dy{0.5}
        \def\dyb{0.7}
    	\draw[thick] (-\dx,0) --++ (0,\dyb);
        \draw[thick] (\dx,0) --++ (0,\dyb);
        \draw[thick] (-\dxb, 0) --++ (2*\dxb,0);
        \draw[thick, red] (-\dx, \dy) --++ (2*\dx,0);
        \node[tensor] at (-\dx,0) {};
        \node[tensor] at (\dx, 0) {};
        \node[tensor, blue] at (0, \dy) {};
        \node[] at (-\dx,-0.3) {$A$};
        \node[] at (\dx,-0.3) {$A$};
    \end{tikzpicture}
    =  
    \begin{tikzpicture}
        \def\dx{0.3}
        \def\dxb{0.6}
        \def\dy{0.5}
        \def\dyb{0.7}
    	\draw[thick] (-\dx,0) --++ (0,\dyb);
        \draw[thick] (\dx,0) --++ (0,\dyb);
        \draw[thick] (-\dxb, 0) --++ (2*\dxb,0);
        \node[square] at (-\dx,0) {};
        \node[square] at (\dx, 0) {};
        \node[] at (-\dx,-0.35) {$A_g$};
        \node[] at (\dx,-0.35) {$A_g$};
    \end{tikzpicture}
    ,\qquad
    \begin{tikzpicture}
        \def\dx{0.3}
        \def\dxb{0.6}
        \def\dy{0.5}
        \def\dyb{0.7}
    	\draw[thick] (-\dx,0) --++ (0,\dyb);
        \draw[thick] (\dx,0) --++ (0,\dyb);
        \draw[thick] (-\dxb, 0) --++ (2*\dxb,0);
        \draw[thick, red] (-\dx, \dy) --++ (2*\dx,0);
        \node[square] at (-\dx,0) {};
        \node[tensor] at (\dx, 0) {};
        \node[tensor, blue] at (0, \dy) {};
        \node[] at (-\dx,-0.35) {$A_g$};
        \node[] at (\dx,-0.3) {$A$};
    \end{tikzpicture}
    =  
    \begin{tikzpicture}
        \def\dx{0.3}
        \def\dxb{0.6}
        \def\dy{0.5}
        \def\dyb{0.7}
    	\draw[thick] (-\dx,0) --++ (0,\dyb);
        \draw[thick] (\dx,0) --++ (0,\dyb);
        \draw[thick] (-\dxb, 0) --++ (2*\dxb,0);
        \node[tensor] at (-\dx,0) {};
        \node[square] at (\dx, 0) {};
        \node[] at (-\dx,-0.3) {$A$};
        \node[] at (\dx,-0.35) {$A_g$};
    \end{tikzpicture},
\end{equation}
that is, $\lambda_{g_1,g_2}=\mathsf{CZ}$ as long as $g_1\neq e$ or $g_2\neq e$. To gauge the symmetry, we define the gauged tensor
\begin{equation}
    C^{0e} = A^0,~~C^{1e} = A^1,~~C^{0g} = A_g^0,~~C^{1g} = A_g^1,
\end{equation}
and the resulting MPS can be checked to be invariant under the gauge constraint $\mcG_g = \mathsf{CZ} \otimes X\otimes X$, where the first factor acts on the matter indices and the others on the gauge indices; this is an easy consequence of Eq.~\eqref{eq:Z2_def}.

As a remark, this state also has an onsite $\Z_2$ symmetry $\prod_{j}{Z_j}$ and belongs to the nontrivial SPT phase of the combined $\Z_2\times\Z_2$ theory, as was pointed out in \cite{simps}. The defect and gauged tensors can be identified from the formalism in \cite{simps}, which shows a hidden structure in this example too. The MPS generated by $A^i$ can also be expressed as a split-index MPS (SIMPS),
\begin{equation}
    \begin{tikzpicture}
        \def\dx{0.8}
        \def\ddx{0.2}
        \def\ddxb{0.1}
        \def\dy{0.4}
        \def\dyb{0.7}
    	\draw[thick] (-\ddx,0) --++ (0,\dy);
        \draw[thick] (\ddx,0) --++ (0,\dy);
        \draw[thick] (\dx-\ddx,0) --++ (0,\dy);
        \draw[thick] (\dx+\ddx,0) --++ (0,\dy);
        \draw[thick] (-\dx-\ddx,0) --++ (0,\dy);
        \draw[thick] (\ddx-\dx,0) --++ (0,\dy);
        \draw[thick] (0.5*\dx, \dy) --++ (0,\dyb-\dy);
        \draw[thick] (-0.5*\dx, \dy) --++ (0,\dyb-\dy);
        \draw[thick] (1.5*\dx, \dy) --++ (0,\dyb-\dy);
        \draw[thick] (-1.5*\dx, \dy) --++ (0,\dyb-\dy);
        \draw[thick] (-1.5*\dx-\ddxb, 0) -- (1.5*\dx+\ddxb,0);
        \draw[thick] (-\dx+\ddx, \dy) -- (-\ddx,\dy);
        \draw[thick] (\ddx, \dy) -- (\dx-\ddx,\dy);
        \draw[thick] (-1.5*\dx-\ddxb, \dy) -- (-\dx-\ddx,\dy);
        \draw[thick] (\dx+\ddx, \dy) -- (1.5*\dx+\ddxb,\dy);
        \node[simpstensor] at (-\dx,0) {};
        \node[simpstensor] at (0,0) {};
        \node[simpstensor] at (\dx,0) {};
        \node[] at (-1.5*\dx-0.4,0) {$\ldots$};
        \node[] at (1.5*\dx+0.4,0) {$\ldots \ ,$};
    \end{tikzpicture}
\end{equation}
where the three-legged intersections of wires host copy tensors $\delta_{i,j,k}$ as in Eq. \eqref{eq:deltas} and the SIMPS tensor satisfies
\begin{equation}
    \begin{tikzpicture}[baseline = 2mm]
        \def\dx{1.0}
        \def\ddx{0.2}
        \def\dxb{0.7}
        \def\dy{0.5}
        \def\dyb{0.9}
    	\draw[thick] (-\ddx,0) --++ (0,\dy);
        \draw[thick] (\ddx,0) --++ (0,\dy);
        \draw[thick] (\dx-\ddx,0.12) --++ (0,\dy-0.12);
        \draw[thick] (\ddx-\dx,0.12) --++ (0,\dy-0.12);
        \draw[thick] (0.5*\dx, \dy) --++ (0,\dyb-\dy);
        \draw[thick] (-0.5*\dx, \dy) --++ (0,\dyb-\dy);
        \draw[thick] (-\dxb, 0) -- (\dxb,0);
        \draw[thick] (-\dx+\ddx, \dy) -- (-\ddx,\dy);
        \draw[thick] (\ddx, \dy) -- (\dx-\ddx,\dy);
        \draw[thick, red] (-0.5*\dx, 0.5*\dyb+0.5*\dy) --++ (\dx,0);
        \node[simpstensor] at (0,0) {};
        \node[tensor, blue] at (0, 0.5*\dyb+0.5*\dy) {};
    \end{tikzpicture}
    \ = \
\begin{tikzpicture}[baseline = 2mm]
        \def\dx{1.0}
        \def\ddx{0.2}
        \def\dxb{0.7}
        \def\dy{0.5}
        \def\dyb{0.9}
    	\draw[thick] (-\ddx,0) --++ (0,\dy);
        \draw[thick] (\ddx,0) --++ (0,\dy);
        \draw[thick] (\dx-\ddx,0.12) --++ (0,\dy-0.12);
        \draw[thick] (\ddx-\dx,0.12) --++ (0,\dy-0.12);
        \draw[thick] (0.5*\dx, \dy) --++ (0,\dyb-\dy);
        \draw[thick] (-0.5*\dx, \dy) --++ (0,\dyb-\dy);
        \draw[thick] (-\dxb, 0) -- (\dxb,0);
        \draw[thick] (-\dx+\ddx, \dy) -- (-\ddx,\dy);
        \draw[thick] (\ddx, \dy) -- (\dx-\ddx,\dy);
        \draw[thick, red] (-\ddx, .1+0.5*\dy) --++ (2*\ddx,0);
        \node[simpstensor] at (0,0) {};
        \node[tensor, blue] at (0, .1+0.5*\dy) {};
    \end{tikzpicture}
    \ = \
    \begin{tikzpicture}[baseline = 2mm]
        \def\dx{1.0}
        \def\ddx{0.2}
        \def\dxb{0.8}
        \def\dy{0.5}
        \def\dyb{0.9}
    	\draw[thick] (-\ddx,0) --++ (0,\dy);
        \draw[thick] (\ddx,0) --++ (0,\dy);
        \draw[thick] (\dx-\ddx,0.12) --++ (0,\dy-0.12);
        \draw[thick] (\ddx-\dx,0.12) --++ (0,\dy-0.12);
        \draw[thick] (0.5*\dx, \dy) --++ (0,\dyb-\dy);
        \draw[thick] (-0.5*\dx, \dy) --++ (0,\dyb-\dy);
        \draw[thick] (-\dxb, 0) -- (\dxb,0);
        \draw[thick] (-\dx+\ddx, \dy) -- (-\ddx,\dy);
        \draw[thick] (\ddx, \dy) -- (\dx-\ddx,\dy);
        \node[simpstensor] at (0,0) {};
        \node[square, fill=white] at (-0.5*\dx, 0) {};
        \node[square, fill=white] at (0.5*\dx, 0) {};
        \node[] at (-0.5*\dx, -0.3) {$X$};
        \node[] at (0.5*\dx, -0.3) {$X$};
        \node[] at (\dx, 0) {$.$};
    \end{tikzpicture}
\end{equation}
The SIMPS tensor is then straightforward to decompose into two tensors $L,R$ such that
\begin{equation}
    \begin{tikzpicture}[baseline = 2mm]
        \def\dx{.6}
        \def\ddx{0.2}
        \def\dy{0.5}
    	\draw[thick] (-\ddx,0) --++ (0,\dy);
        \draw[thick] (\ddx,0) --++ (0,\dy);
        \draw[thick] (-\dx, 0) -- (\dx,0);
        \node[simpstensor] at (0,0) {};
    \end{tikzpicture}
    =
    \begin{tikzpicture}[baseline = 2mm]
        \def\dx{.4}
        \def\dxb{0.8}
        \def\dy{0.5}
    	\draw[thick] (-\dx,0) --++ (0,\dy);
        \draw[thick] (\dx,0) --++ (0,\dy);
        \draw[thick] (-\dxb, 0) -- (\dxb,0);
        \draw[ultra thick] (-\dx, 0) -- (\dx,0);
        \draw[thick, fill=Goldenrod] (-\dx-0.2,-0.125) rectangle (-\dx +0.2, 0.125);
        \draw[thick, fill=Goldenrod] (\dx-0.2,-0.125) rectangle (\dx +0.2, 0.125);
        \node[] at (-\dx, -0.3) {$L$};
        \node[] at (\dx, -0.3) {$R$};
        \node[] at (2.5*\dx,0) {$,$};
    \end{tikzpicture}
\end{equation}
and we have
\begin{equation}
    \begin{tikzpicture}[baseline]
        \def\dx{0.4}
        \def\dy{0.5}
        \draw[thick] (0,0) --++ (0,\dy);
        \draw[thick] (-\dx, 0) -- (\dx,0);
        \node[tensor] at (0,0) {};
        \node[] at (0, -0.3) {$A$};
    \end{tikzpicture}
    =
    \begin{tikzpicture}[baseline]
        \def\dx{.45}
        \def\dxb{0.9}
        \def\dy{0.3}
        \def\dyb{0.5}
    	\draw[thick] (-\dx,0) --++ (0,\dy);
        \draw[thick] (\dx,0) --++ (0,\dy);
        \draw[thick] (0,\dy) -- (0,\dyb);
        \draw[thick] (-\dxb, 0) -- (\dxb,0);
        \draw[thick] (-\dx, \dy) --++ (2*\dx,0);
        \draw[ultra thick] (-\dxb, 0) -- (-\dx,0);
        \draw[ultra thick] (\dx, 0) -- (\dxb,0);
        \draw[thick, fill=Goldenrod] (-\dx-0.2,-0.125) rectangle (-\dx +0.2, 0.125);
        \draw[thick, fill=Goldenrod] (\dx-0.2,-0.125) rectangle (\dx +0.2, 0.125);
        \node[] at (-\dx, -0.3) {$R$};
        \node[] at (\dx, -0.3) {$L$};
    \end{tikzpicture}
    ,\quad
    \begin{tikzpicture}[baseline]
        \def\dx{0.4}
        \def\dy{0.5}
        \draw[thick] (0,0) --++ (0,\dy);
        \draw[thick] (-\dx, 0) -- (\dx,0);
        \node[square] at (0,0) {};
        \node[] at (0, -0.3) {$A_g$};
    \end{tikzpicture}
    =
    \begin{tikzpicture}[baseline]
        \def\dx{.45}
        \def\dxb{0.9}
        \def\dy{0.3}
        \def\dyb{0.5}
    	\draw[thick] (-\dx,0) --++ (0,\dy);
        \draw[thick] (\dx,0) --++ (0,\dy);
        \draw[thick] (0,\dy) -- (0,\dyb);
        \draw[thick] (-\dxb, 0) -- (\dxb,0);
        \draw[thick] (-\dx, \dy) --++ (2*\dx,0);
        \draw[ultra thick] (-\dxb, 0) -- (-\dx,0);
        \draw[ultra thick] (\dx, 0) -- (\dxb,0);
        \draw[thick, fill=Goldenrod] (-\dx-0.2,-0.125) rectangle (-\dx +0.2, 0.125);
        \draw[thick, fill=Goldenrod] (\dx-0.2,-0.125) rectangle (\dx +0.2, 0.125);
        \node[] at (-\dx, -0.3) {$R$};
        \node[] at (\dx, -0.3) {$L$};
        \node[] at (0, -0.3) {$X$};
        \node[square, fill=white] at (0, 0) {};
    \end{tikzpicture}
    ,
\end{equation}
and
\begin{equation}
    \begin{tikzpicture}[baseline]
        \def\dx{0.4}
        \def\dd{0.1}
        \def\dy{0.5}
        \draw[thick] (-\dd, 0) --++ (0,\dy);
        \draw[thick, dotted] (\dd, 0) --++ (0,\dy);
        \draw[thick] (-\dx, 0) -- (\dx,0);
        \node[widetensor] at (0,0) {};
        \node[] at (0, -0.3) {$C$};
    \end{tikzpicture}
    =
    \begin{tikzpicture}[baseline]
        \def\dx{.45}
        \def\dxb{0.9}
        \def\dy{0.3}
        \def\dyb{0.5}
    	\draw[thick] (-\dx,0) --++ (0,\dy);
        \draw[thick] (\dx,0) --++ (0,\dy);
        \draw[thick] (-0.3,\dy) --++ (0,\dyb-\dy);
        \draw[thick, dotted] (0, 0) --++ (0,\dyb);
        \draw[thick] (-\dxb, 0) -- (\dxb,0);
        \draw[thick] (-\dx, \dy) --++ (2*\dx,0);
        \draw[ultra thick] (-\dxb, 0) -- (-\dx,0);
        \draw[ultra thick] (\dx, 0) -- (\dxb,0);
        \draw[thick, fill=Goldenrod] (-\dx-0.2,-0.125) rectangle (-\dx +0.2, 0.125);
        \draw[thick, fill=Goldenrod] (\dx-0.2,-0.125) rectangle (\dx +0.2, 0.125);
        \node[] at (-\dx, -0.3) {$R$};
        \node[] at (\dx, -0.3) {$L$};
        \node[square, fill=white] at (0, 0) {};
    \end{tikzpicture}
    ,
\end{equation}
where the internal tensor with the gauge physical leg is defined as in \eqref{eq:gaugedoftensor} for the virtual representation $S_e=\mathds{1}$, $S_g = X$.

\section{State-level gauging}
\label{sec:state_level}

If we find ourselves in a block-dependent case, it is not obvious that we can modify the group representation $\mcG$ so that our candidate MPS is invariant. For this to be the case, the modified Gauss law should be able to account for different scalar factors on distinct blocks. The most straightforward way to do this is to manually remove the $L$-symbol factors by modifying the fusion operators $\lambda^R$, effectively ``enforcing'' block independence. This can be achieved if the injective blocks of our MPS are \textit{locally orthogonal}, which can be achieved by blocking until we approach a fixed point of the renormalization group.
\begin{lemma}
    Assume that all injective MPS tensors are locally orthogonal, i.e.
    \begin{equation}
    \begin{tikzpicture}
    \def\dx{0.5}
    \def\dy{0.6}
    	\draw[thick] (0,0) --++ (0,\dy);
        \draw[thick] (-\dx, 0) --++ (2*\dx,0);
        \draw[thick] (-\dx, \dy) --++ (2*\dx,0);
        \node[tensor] at (0.0, 0.0) {};
        \node[tensor] at (0.0, \dy) {};
        \node[] at (0.0,-0.25) {$x$};
        \node[] at (0.0,\dy+0.25) {$y$};
        \node[] at (0.15,\dy+0.1) {$*$};
    \end{tikzpicture}
    =0,\qquad x\neq y, 
\end{equation}
    and let
    \begin{equation}
        v(g, h, x) = 
    \begin{tikzpicture}
        \def\dx{0.6}
        \def\dxx{1.2}
        \def\dy{0.6}
    	\draw[thick] (\dx,0) --++ (0,\dy);
        \draw[thick] (-\dx,0) --++ (0,\dy);
        \draw[thick] (-\dxx, 0) --++ (2*\dxx,0);
        \node[square] at (\dx, 0) {};
        \node[square] at (-\dx, 0) {};
        \node[] at (\dx,-0.25) {$h[x]$};
        \node[] at (-\dx,-0.25) {$g[hx]$};
    \end{tikzpicture}
    \end{equation}
    Then for all $g,h\in G$ there exists a unitary $\tilde\lambda_{g,h}\in \text{\emph U}\left(\mc H^{\otimes 2}\right)$ such that 
    \begin{equation}
        \tilde\lambda_{g,h}v(g, h, x) = v(e, gh, x)
        \label{eq:withoutL}
    \end{equation}
    where $\tilde\lambda_{g,h}$ acts on the physical legs. That is, $\tilde\lambda_{g,h}$ satisfies Eq.~\eqref{eq:fusop2} without the $L$-symbol factor.
    \label{lemma:modif}
\end{lemma}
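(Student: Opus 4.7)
The plan is to construct $\tilde\lambda_{g,h}$ as a block-sensitive refinement of $\lambda^R_{g,h}$ that cancels the scalar $L^x_{g,h}$ appearing in Eq.~\eqref{eq:fusop2}. Since $\lambda^R_{g,h}\, v(g,h,x) = L^x_{g,h}\, v(e,gh,x)$, composing $\lambda^R_{g,h}$ with an operator that applies $(L^x_{g,h})^{-1}$ whenever the two sites encode block $x$ would immediately yield \eqref{eq:withoutL}. Local orthogonality is what makes this composition a single unitary on two physical sites: it guarantees that the possible branches $v(g,h,x)$ occupy mutually orthogonal subspaces of $\mathcal{H}^{\otimes 2}$, so a physical measurement at the two sites can in principle distinguish them.

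First, I would verify that for fixed $g,h$, the two-site tensors $\{v(g,h,x)\}_{x\in\mathsf{X}}$, regarded as linear maps from the pair of external virtual spaces into $\mathcal{H}^{\otimes 2}$, have pairwise orthogonal ranges. Each defect tensor $g[x]$ equals the injective tensor $A_x$ dressed by an action tensor on the virtual legs and a $Y_1$ gate on the physical leg (Proposition~\ref{prop:def_tens_anomalous}); the latter does not mix block labels. Hence the single-site local orthogonality assumption $\sum_j (A_x^j)^\ast \otimes A_y^j = 0$ propagates to the dressed tensors, and from there to $v(g,h,x)^\dagger v(g,h,y) = 0$ for $x\neq y$.

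Next, I would let $\Pi_x\in\mathcal{B}(\mathcal{H}^{\otimes 2})$ be the orthogonal projector onto the range of $v(g,h,x)$ and set $\Pi_\perp = \mathds{1} - \sum_{x\in\mathsf{X}}\Pi_x$. In the gauge fixed by Proposition~\ref{prop:unitarity_and_gauges} the $L$-symbols lie in $U(1)$, so
\begin{equation}
    \tilde\lambda_{g,h} \;\equiv\; \lambda^R_{g,h}\Biggl(\sum_{x\in\mathsf{X}} \frac{1}{L^x_{g,h}}\,\Pi_x \;+\; \Pi_\perp\Biggr)
\end{equation}
is a unitary operator on $\mathcal{H}^{\otimes 2}$, being the product of a unitary with a block-diagonal phase and the identity on the orthogonal complement. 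Using $\Pi_x\, v(g,h,x) = v(g,h,x)$, a one-line check then gives $\tilde\lambda_{g,h}\, v(g,h,x) = (L^x_{g,h})^{-1}\,\lambda^R_{g,h}\, v(g,h,x) = v(e,gh,x)$, which is the desired identity \eqref{eq:withoutL}.

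The hard part will be the orthogonal-range claim itself: the local orthogonality hypothesis is stated only for the bare injective tensors at a single site, so one must verify that attaching the action tensors and $X_1/Y_1$ decomposition tensors does not couple distinct blocks on the physical output. This reduces to a bookkeeping argument tracing the physical legs through the action-tensor dressings, and is expected to hold precisely once enough sites have been blocked together to reach the RG fixed-point regime stipulated before the statement.
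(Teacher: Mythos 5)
Your overall architecture is sound and matches the paper's: reduce the lemma to the claim that the two\mbox{-}site vectors $v(g,h,x)$ have pairwise orthogonal supports for distinct $x$, and then define $\tilde\lambda_{g,h}$ by composing $\lambda^R_{g,h}$ with a block-diagonal phase (the paper phrases this as ``absorbing the scalars $L^x_{g,h}$ into a redefinition of the unitary,'' and even makes your closing observation that $\tilde\lambda_{g,h}$ is not unique unless the $v$'s span $\mathcal H^{\otimes 2}$). The gap is in your proof of the orthogonality claim itself. You assert that single-site local orthogonality of the bare tensors ``propagates to the dressed tensors'' because the $X_1/Y_1$ piece ``does not mix block labels.'' But block labels live on the virtual legs, not the physical one; the issue is that the physical leg of a defect tensor $g[x]$ is the squiggly \emph{output} of $Y_1^g$, whose other legs are the bare physical index and the red MPU-virtual index (the latter contracted with an action tensor that differs between blocks). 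Contracting the squiggly outputs of $g[y]^*$ and $g[x]$ produces the Gram tensor $\sum_s \overline{(Y_1)^{s\beta}_j}(Y_1)^{s\alpha}_i$, which is the identity only after \emph{also} tracing the red indices $\alpha,\beta$ against each other (Corollary~\ref{cor:mpu}); with the red legs attached to different action tensors it is not of the form $\delta_{ij}\otimes(\cdots)$, so the hypothesis $\sum_j \overline{A^j_y}\otimes A^j_x=0$ does not imply vanishing of the dressed overlap. In general the \emph{single-site} defect tensors $g[x]$, $g[y]$ need not be locally orthogonal at all, so this is not bookkeeping but the point where the argument breaks.

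The paper's proof avoids this by working directly with the two-site overlap: insert $(\lambda^L_{g,h})^\dagger\lambda^L_{g,h}=\mathds 1$ between $v(g,h,y)^\dagger$ and $v(g,h,x)$ and use Eq.~\eqref{eq:fusop1} to rewrite each side (up to the unimodular $L$-symbols) as the pair $\bigl(gh[\cdot],\,A_\cdot\bigr)$, i.e.\ a full defect on one site and a \emph{bare} MPS tensor on the other. The overlap then factorizes over the two sites, and the bare-tensor factor is exactly $\sum_j \overline{A^j_y}\otimes A^j_x=0$ by hypothesis. You should replace your propagation step with this manoeuvre (or an equivalent one that genuinely undoes the $Y_1^g$ dressing on at least one site before invoking the hypothesis); once the orthogonality of the $v$'s is established this way, your projector construction of $\tilde\lambda_{g,h}$ goes through verbatim.
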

\begin{proof}
    It is sufficient to prove that $v(g, h, x)$ and $v(g, h, y)$ are orthogonal whenever $x\neq y$ so that we can absorb potentially different scalars $L^x_{g,h},L^y_{g,h}$ into a redefinition of the unitary $\lambda_{g,h}$. We compute the inner product
    \begin{equation}
    \begin{tikzpicture}
        \def\dx{0.6}
        \def\dxx{1.0}
        \def\dy{0.8}
    	\draw[thick] (\dx,0) --++ (0,\dy);
        \draw[thick] (-\dx,0) --++ (0,\dy);
        \draw[thick] (-\dxx, 0) --++ (2*\dxx,0);
        \draw[thick] (-\dxx, \dy) --++ (2*\dxx,0);
        \node[square] at (\dx, 0) {};
        \node[square] at (-\dx, 0) {};
        \node[square] at (\dx, \dy) {};
        \node[square] at (-\dx, \dy) {};
        \node[] at (\dx,-0.25) {$h[x]$};
        \node[] at (-\dx,-0.25) {$g[hx]$};
        \node[] at (\dx,\dy+0.35) {$h[y]$};
        \node[] at (-\dx,\dy+0.35) {$g[hy]$};
        \node[] at (\dx+0.25,\dy+0.15) {$*$};
        \node[] at (-\dx+0.25,\dy+0.15) {$*$};
    \end{tikzpicture}
    =
    \begin{tikzpicture}
        \def\dx{0.6}
        \def\dxx{1.0}
        \def\dy{1.8}
        \def\dd{0.25}
    	\draw[thick] (\dx,0) --++ (0,\dy);
        \draw[thick] (-\dx,0) --++ (0,\dy);
        \draw[thick] (-\dxx, 0) --++ (2*\dxx,0);
        \draw[thick] (-\dxx, \dy) --++ (2*\dxx,0);
        \node[square] at (\dx, 0) {};
        \node[square] at (-\dx, 0) {};
        \node[square] at (\dx, \dy) {};
        \node[square] at (-\dx, \dy) {};
        \draw (-\dx-0.1, \dy/3-\dd) rectangle (\dx+0.1, \dy/3+\dd) [fill=white, thick];
        \draw (-\dx-0.1, 2*\dy/3-\dd) rectangle (\dx+0.1, 2*\dy/3+\dd) [fill=white, thick];
        \node[] at (\dx,-0.25) {$h[x]$};
        \node[] at (-\dx,-0.25) {$g[hx]$};
        \node[] at (\dx,\dy+0.35) {$h[y]$};
        \node[] at (-\dx,\dy+0.35) {$g[hy]$};
        \node[] at (\dx+0.25,\dy+0.15) {$*$};
        \node[] at (-\dx+0.25,\dy+0.15) {$*$};
        \node[] at (0,\dy/1.5){$(\lambda^L_{g,h})^\dagger$};
        \node[] at (0,\dy/3) {$\lambda^L_{g,h}$};
    \end{tikzpicture}
    \propto 
    \begin{tikzpicture}
        \def\dx{0.4}
        \def\dxx{0.8}
        \def\dy{0.8}
    	\draw[thick] (\dx,0) --++ (0,\dy);
        \draw[thick] (-\dx,0) --++ (0,\dy);
        \draw[thick] (-\dxx, 0) --++ (2*\dxx,0);
        \draw[thick] (-\dxx, \dy) --++ (2*\dxx,0);
        \node[square] at (\dx, 0) {};
        \node[square] at (-\dx, 0) {};
        \node[square] at (\dx, \dy) {};
        \node[square] at (-\dx, \dy) {};
        \node[] at (\dx,-0.25) {$x$};
        \node[] at (-\dx,-0.25) {$gh[x]$};
        \node[] at (\dx,\dy+0.35) {$y$};
        \node[] at (-\dx,\dy+0.35) {$gh[y]$};
        \node[] at (\dx+0.25,\dy+0.15) {$*$};
        \node[] at (-\dx+0.25,\dy+0.15) {$*$};
    \end{tikzpicture}
    =0
    \end{equation}
    and conclude that it vanishes from the local orthogonality assumption. Note that, unless the physical legs of $\{v(g,h,x)\}_{x\in \mathsf{X}}$ span $\mc H^{\otimes 2}$, $\tilde \lambda_{g,h}$ is not fully fixed by Eq.~\eqref{eq:withoutL}, which can have infinitely many solutions.
\end{proof}
In terms of these modified fusion operators, we then define modified Gauss laws,
\begin{equation}
    \tilde{\mc G}_g = \sum_{a,b\in G}{\bigl(\tilde\lambda^R_{ag^{-1},gb}\bigr)^\dagger\tilde\lambda^R_{a,b}\otimes\ketbra{ag^{-1},gb}{a,b}}.
\end{equation}
It can be seen that these still form a representation of $G$, which now leaves the gauged MPS \eqref{eq:gauged_tensor_dw} invariant \footnote{Alternatively, we could have directly arrived at modified Gauss generators by precomposing $\mc G$ with a unitary $Q_{g}$ that satisfies 
\begin{equation}
    Q'_{g}v(x, h_1, h_2)=\dfrac{L^x_{h_1,h_2}}{L^x_{h_1g, g^{-1}h_2}}v(x, h_1, h_2)
\end{equation}
which can be seen to exist in the locally orthogonal case by a very similar argument to Lemma \ref{lemma:modif}.}. However, the proof in Appendix \ref{app:associativity} does no longer apply if we modify $\lambda\to\tilde\lambda$, since the new fusion operators may not associate up to a scalar. Hence, Proposition \ref{prop:comm_proj} does not necessarily hold for projectors derived from the modified Gauss laws $\tilde{\mc G}_g$, which may then not commute. Of course, there will be a nonempty subspace where they do, inhabited at least by the gauged MPS \eqref{eq:gauged_tensor_dw}. In this subspace, we have gauged the symmetry as much as we have obtained a locally invariant MPS by defect promotion, keeping the same bond dimension, such that projection of the gauge degrees of freedom onto $\ket{e}$ recovers the original invariant state and the Gauss law projectors commute. Since we have no guarantee that this subspace contains any other states than our gauged MPS, we call this procedure \textit{state-level} gauging.

\subsection{Examples}
\label{sec:examples_statelevel}
\subsubsection{$\Z_4\times\Z_2$}
\label{example:z4z2}
Let us extend the discussion about onsite symmetries from Subsection~\ref{subsec:onsite} into the noninjective case. We make the same choice \eqref{eq:XYonsite} for the $X,Y$ tensors so that \eqref{eq:ULonsite} remains true. As in the injective case, the onsite symmetry action with $u_g$ can be moved to the virtual level. Individual blocks transform as
\begin{equation}
    \begin{tikzpicture}
        \def\dx{0.5};
        \def\dy{0.5};
        \draw[thick] (-\dx,0) --++ (2*\dx,0) {};
        \draw[thick] (0,0)--++(0,1.5*\dy) {};  
        \node[tensor] at (0,0) {};
        \node[mpo] at (0,\dy) {};
        \node[] at (0,-0.5*\dy) {$x$};
        \node[] at (0.65*\dx, 1.15*\dy) {$u_g$};
        \node[] at (1.5*\dx,0) {$=$};
        \draw[thick] (2*\dx,0)--++(3*\dx,0) {};
        \draw[thick] (3.5*\dx,0)--++(0,1.5*\dy) {};
        \node[tensor] at (3.5*\dx,0) {};
        \node[square, fill=blue] at (2.5*\dx,0) {};
        \node[square, fill=blue] at (4.5*\dx,0) {};
        \node[] at (3.5*\dx,-0.5*\dy) {$gx$};
        \node[] at (2.5*\dx,-0.75*\dy) {$S_{g,x}^\dagger$};
        \node[] at (4.75*\dx,-0.75*\dy) {$S_{g,x}$};
        \node[] at (5.5*\dx,0) {$,$};
    \end{tikzpicture}
\end{equation}
with unitaries $S_{g,x}$ defined up to phases. These now satisfy
\begin{equation}
    S_{g,hx}S_{h,x}=L^x_{g,h} S_{gh,x},
\end{equation}
for scalars $L^x_{g,h}$, which we have already identified as the $L$-symbols, provided we make the choice analogous to \eqref{eq:onsitechoice} for the action tensors,
\begin{equation}
    A^\Gamma_{g,x}=S_{g,x}, \qquad A^{\ammaG}_{g,x}=S^\dagger_{g,x},
\end{equation}
which results in analogous defect tensors,
\begin{equation}
    \begin{tikzpicture}
        \def\dx{0.5};
        \def\dy{0.5};
        \draw[thick] (-\dx,0)--++(2*\dx,0) {};
        \draw[thick] (0,0)--++(0,\dy);
        \node[tensor] at (0,0) {};
        \node[square, fill=blue] at (-0.65*\dx,0) {};
        \node[] at (0,-0.5*\dy) {$x$};
        \node[] at (-0.75*\dx,-0.65*\dy) {$S_{g,x}$};
        \node[] at (1.25*\dx,0) {$.$};
    \end{tikzpicture}
\end{equation}
The fusion operators are again $\lambda^R_{g,h}=u_g\otimes\I$.

We can already find a minimal example of block dependence within this class. It is known that the SPT phases under this group symmetry are classified by a pair $(H, \psi)$, where $H\leq G$ is the unbroken symmetry group and $\psi\in H^2(H, U(1))$ is a 2-cocycle on $H$ \cite{Schuch11, GarreLootensMolnar}. We prove in Appendix \ref{app:block_indep} that block independence is equivalent to the extendability of $\psi$, that is, the existence of a 2-cocycle on $G$, $\Psi\in H^2(G, U(1))$ such that $\Psi$ restricts to $\psi$ on $H$.

We therefore take $G=\Z_4\times \Z_2$, with its subgroup $H=\Z_2\times \Z_2$ and $\psi\in H^2(H,U(1))\cong\Z_2$ being the non-trivial $2$-cocycle. It can be seen that $\psi$ cannot be extended to a $2$-cocycle on the whole group $G$ \cite{BlanikGarreSchuch}. Thus, a set of $L$-symbols labeled by $(\Z_2\times \Z_2, \psi)$ cannot be block-independent. We present one such example in the following.

The (unnormalized) GHZ state $\ket{00\ldots0}+\ket{11\ldots1}$ can be written as a bond dimension two MPS using the copy tensor from \eqref{eq:deltas}
\begin{equation}
    \begin{tikzpicture}
        \def\dx{0.5};
        \def\dy{0.5};
        \draw[thick] (-0.5*\dx,0)--++(3.0*\dx,0) {};
        \draw[thick] (0,0)--++(0,\dy);
        \draw[thick] (\dx,0)--++(0,\dy);
        \draw[thick] (2*\dx,0)--++(0,\dy);
        \node[] at (-0.65*\dx, 0.2) {$\ldots$};
        \node[] at (2.65*\dx, 0.2) {$\ldots$};
    \end{tikzpicture}.
\end{equation}
The cluster state, which is the ground state of the stabilizer Hamiltonian
\begin{equation}
H = -\sum_j{Z_jX_{j+1}Z_{j+2}},
\end{equation}
can also be written as a bond dimension two MPS, in terms of the copy tensor and the Hadamard gate $H$ from \eqref{eq:deltas}:
\begin{equation}
    \begin{tikzpicture}
        \def\dx{0.5};
        \def\dy{0.5};
        \draw[thick] (-0.5*\dx,0)--++(3.5*\dx,0) {};
        \draw[thick] (0,0)--++(0,\dy);
        \draw[thick] (\dx,0)--++(0,\dy);
        \draw[thick] (2*\dx,0)--++(0,\dy);
        \node[tensor, fill=blue] at (\dx/2,0) {};
        \node[tensor, fill=blue] at (3*\dx/2,0) {};
        \node[tensor, fill=blue] at (5*\dx/2,0) {};
        \node[] at (-0.65*\dx, 0.2) {$\ldots$};
        \node[] at (3.15*\dx, 0.2) {$\ldots$};
    \end{tikzpicture}.
\end{equation}
After blocking the cluster state once (merging two sites into one), the resulting state, symmetric with respect to the $\Z_2\times\Z_2$ representation generated by the $X$ operators on each of the qubits on a site, belongs to the non-trivial SPT phase labeled by $\psi$.

Let us take this blocked cluster state and tensor the GHZ state with it,
\begin{equation}
    \begin{tikzpicture}[baseline]
        \def\dx{0.5};
        \def\dy{0.5};
        \draw[thick] (-0.5*\dx,0)--++(\dx,0) {};
        \draw[thick] (0,0)--++(0,\dy);
    \end{tikzpicture}
    \otimes
    \begin{tikzpicture}[baseline]
        \def\dx{0.5};
        \def\dy{0.5};
        \draw[thick] (-0.5*\dx,0)--++(2.5*\dx,0) {};
        \draw[thick] (0,0)--++(0,\dy);
        \draw[thick] (\dx,0)--++(0,\dy);
        \node[tensor, fill=blue] at (\dx/2,0) {};
        \node[tensor, fill=blue] at (3*\dx/2,0) {};
    \end{tikzpicture}
    .
    \label{eq:z4z2MPS}
\end{equation}
The resulting state has bond dimension four and physical dimension eight, where for each site we denote the qubit of the GHZ state as $1$, and those of the cluster state by $2$ and $3$. The MPS tensor is the direct sum of two injective blocks labeled by the first qubit
\begin{equation}
    \begin{tikzpicture}[baseline]
        \def\dy{0.4};
        \node[] at (0, 0) {$\ket{0}\bra{0}$};
        \node[] at (0, \dy) {$\ket{0}$};
    \end{tikzpicture}\otimes
    \begin{tikzpicture}[baseline]
        \def\dx{0.5};
        \def\dy{0.5};
        \draw[thick] (-0.5*\dx,0)--++(2.5*\dx,0) {};
        \draw[thick] (0,0)--++(0,\dy);
        \draw[thick] (\dx,0)--++(0,\dy);
        \node[tensor, fill=blue] at (\dx/2,0) {};
        \node[tensor, fill=blue] at (3*\dx/2,0) {};
    \end{tikzpicture}
    ,\qquad
    \begin{tikzpicture}[baseline]
        \def\dy{0.4};
        \node[] at (0, 0) {$\ket{1}\bra{1}$};
        \node[] at (0, \dy) {$\ket{1}$};
    \end{tikzpicture}\otimes
    \begin{tikzpicture}[baseline]
        \def\dx{0.5};
        \def\dy{0.5};
        \draw[thick] (-0.5*\dx,0)--++(2.5*\dx,0) {};
        \draw[thick] (0,0)--++(0,\dy);
        \draw[thick] (\dx,0)--++(0,\dy);
        \node[tensor, fill=blue] at (\dx/2,0) {};
        \node[tensor, fill=blue] at (3*\dx/2,0) {};
    \end{tikzpicture}
    .
\end{equation}
Note that these are locally orthogonal because the first qubit supports are orthogonal. Although it has more symmetries, we look at it as an invariant MPS under the symmetry group $G=\Z_4\times \Z_2$ generated by $U_{(1,0)}=(X_1\otimes X_2)\cdot \mathrm{CNOT}_{1\rightarrow 2}$ and $U_{(0,1)}=X_3$. The action of an element $(a,b)\in \Z_4\times\Z_2$ ($a=0,\ldots,3$, $b=0,1$) can be expressed on the virtual level by the operators $A^\Gamma_{(a, b),x}$ given by
\begin{equation*}
    \begin{tabular}{c|c c c c c c c c}
     $(a,b)$ & $(0,0)$ & $(2,0)$ & $(0,1)$ & $(2,1)$ & $(1,0)$ & $(3,0)$ & $(1,1)$ & $(3,1)$ \\\hline
       $x=0$ & $\I$ & $X$ & $Z$ & $Y$ & $X$ & $\I$ & $Y$ & $Z$ \\
       $x=1$ & $\I$ & $X$ & $Z$ & $Y$ & $\I$ & $X$ & $Z$ & $Y$ \\
    \end{tabular}
\end{equation*}
where we have used the scalar freedom following the conventions of this paper, to get well-defined defect tensors (see Proposition~\ref{prop:def_tens_anomalous}: because the symmetry is onsite, it is nonanomalous and we do not need to worry about the $\sigma_g$ signs). When restricting to the unbroken symmetry subgroup (isomorphic to $\Z_2\times\Z_2$ and generated by $U_{(2,0)} = U^2_{(1,0)}=X_2$ and $U_{(0,1)}=X_3$) we recover the blocked cluster state, which represents the symmetry projectively on the virtual space with nontrivial cocycle $\psi$. Thus, this example is labeled by $(H, \psi)$, and from the argument above it violates block independence. It can be checked that, indeed, the local symmetries $\mc G_g$ defined from $\lambda^R_{g,h}=u_g\otimes\I$ do not leave \eqref{eq:z4z2MPS} invariant. Since the two injective blocks are locally orthogonal, the state-level gauging can be performed. We pick an easy modification of the fusion operators, $\tilde\lambda_{g,h}=u_g\otimes Q_{g,h}$ where $Q_{g,h}$ is just added a phase based on the value of the first qubit of the second site,
\begin{equation}
Q_{g,h}=\left(\begin{array}{cc}
    \dfrac{1}{L^{0}_{g,h}} & 0 \\
    0 & \dfrac{1}{L^{1}_{g,h}}
\end{array}\right)\otimes \I\otimes \I
\end{equation}
Because the $L$-symbols take values in $\{\pm1, \pm i\}$, the $Q$ operators are $Z$ gates or phase gates, up to global phases. Then it is straightforward to build the local symmetry operators $\tilde{\mc G_g}$ and check that they leave \eqref{eq:z4z2MPS} invariant. We also constructed the associated Gauss law projectors
\begin{equation}
    \tilde P_j=\dfrac{1}{|G|}\sum_{g\in G}{\tilde{\mcG}_g^{j,j+1}},
    \label{eq:gauge_projs_modif}
\end{equation}
and found that neither the $\tilde\lambda$ associate to a scalar (Eq.~\eqref{eq:assoc}) nor the $\tilde P_j$ commute for neighboring sites. 

\subsubsection{The CZX model}

We consider here the anomalous $\Z_2=\langle g\ |\ g^2=e\rangle$ representation specified by the following MPU tensor (blocked once to ensure injectivity) 
\begin{equation}
\label{eq:MPU_CZX}
    \begin{tikzpicture}
        \draw[red,thick] (-0.3,0.0)--++(0.6,0.0);
        \draw[thick] (0.0,-0.3)--++(0.0,0.6);
        \node[mpo] at (0.0,0.0) {};
        
        \node[] at (0.8,0.0) {$\equiv$};

        \draw[red,thick] (1.2,0.0) --++ (0.3, 0.0);
        \draw[red,thick] (2.5,0.0) --++ (0.5, 0.0);

        \draw[thick] (1.5,0.0) --++ (0.0, -0.6);
        \draw[thick] (1.5,0.0) --++ (0.0, 0.8);

        \draw[thick] (2.5,0.0) --++ (0.0, -0.6);
        \draw[thick] (2.5,0.0) --++ (0.0, 0.8);

        \draw[red, thick] (1.5, 0.25) --++ (1.0, 0.0);
        \node[tensor, fill=blue] at (2.0, 0.25) {}; 

        \node[square, fill= white] at (1.5, -0.25) {};

        \node[square, fill= white] at (2.5, -0.25) {};

        \node[square, fill= red] at (1.5, 0.5) {};

        \node[square, fill= red] at (2.5, 0.5) {};

        \node[] at (1.25, -0.45) {$X$};
        \node[] at (2.75, -0.45) {$X$};

        \node[] at (1.25, 0.55) {$Z$};
        \node[] at (2.75, 0.55) {$Z$};
        
        \node[tensor, fill=blue] at (2.75, 0.0) {}; 

        \node[] at (3.2,0.0) {$,$};       
    \end{tikzpicture}
\end{equation}
where the CZ gate is expressed as in \eqref{eq:CZ}. Its adjoint reads
\begin{equation}
     \begin{tikzpicture}
        \draw[thick,red] (-0.7,0.0)--++(1.4,0.0);
        \draw[thick] (0.0,-0.3)--++(0.0,0.6);
        \node[mpo] at (0.0,0.0) {};   
        \node[square, fill=yellow] at (-0.4,0.0) {};
        \node[square, fill=yellow] at (0.4,0.0) {};
        \node[] at (-0.4, -0.25) {$Y$};
        \node[] at (0.4,-0.25) {$Y$};
        \node[] at (0.9,0.0) {$,$};
    \end{tikzpicture}
\end{equation}
so that $T_g = Y$ and $\sigma_g=-1$ are nontrivial. We pick decompositions satisfying \eqref{eq:modif_XY},
\begin{subequations}
    \begin{equation}
\begin{tikzpicture}
        \draw[thick, red] (-0.5,0.0) --++ (0.5,0.0);
        \draw[thick] (0.0,0.0) --++ (0.0,0.5);
        \draw[ultra thick] (0.0,-0.5) --++ (0.0,0.5);
        \node[tensor] at (0.0,0.0) {};

        \node[] at (0.5,0.0) {$=$};

        \draw[thick, red] (1.0,0.0) --++ (0.5,0.0);
        \draw[thick] (1.5,-0.5) --++ (0.0, 1.2);
        \draw[thick] (2.0,-0.5) --++ (0.0, 1.2);
        \node[square, fill=red] at (1.5,0.5) {};
        \node[square, fill=red] at (2.0,0.5) {};
        \node[] at (1.25,0.55) {$Z$};
        \node[] at (2.25,0.55) {$Z$};

        \node[] at (2.5,0.0) {$,$};

        \def\Sc{1.4};

        \draw[thick, red] (0.5+\Sc*\Shift,0.0) --++ (-0.5,0.0);
        \draw[ultra thick] (0.0+\Sc*\Shift,0.0) --++ (0.0,0.5);
        \draw[thick] (0.0+\Sc*\Shift,-0.5) --++ (0.0,0.5);
        \node[tensor] at (0.0+\Sc*\Shift,0.0) {};

        \node[] at (1.0+\Sc*\Shift,0.0) {$=$};

        \draw[thick] (1.5+\Sc*\Shift,-0.5) --++ (0.0, 1.2);
        \draw[thick] (2.0+\Sc*\Shift,-0.5) --++ (0.0, 1.2);
        \node[square, fill=white] at (1.5+\Sc*\Shift,-0.25) {};
        \node[square, fill=white] at (2.0+\Sc*\Shift,-0.25) {};
        \node[] at (1.25+\Sc*\Shift,-0.3) {$X$};
        \node[] at (2.25+\Sc*\Shift,-0.3) {$X$};

        \draw[thick,red] (1.5+\Sc*\Shift,0.5) --++ (0.5, 0.0);

        \draw[thick,red] (2.0+\Sc*\Shift,0.25) --++ (0.5, 0.0);

        \node[tensor, fill=blue] at (1.75+\Sc*\Shift,0.5) {};
        \node[tensor, fill=blue] at (2.25+\Sc*\Shift,0.25) {};
        \node[] at (2.75+\Sc*\Shift,0.0) {$,$};
\end{tikzpicture}
    \end{equation}
    \begin{equation}
        \begin{tikzpicture}

         \tikzset{decoration={snake,amplitude=.4mm,segment length=2mm, post length=0mm,pre length=0mm}}
         
        \draw[thick, red] (-0.5,0.0) --++ (0.5,0.0);
        \draw[thick, decorate] (0.0,0.0) --++ (0.0,0.5);
        \draw[thick] (0.0,-0.5) --++ (0.0,0.5);
        \node[tensor, fill=white] at (0.0,0.0) {};

        \node[] at (0.5,0.0) {$=$};

        \draw[thick, red] (0.75,0.0) --++ (0.75,0.0);
        \draw[thick] (1.5,-0.7) --++ (0.0, 1.2);
        \draw[thick] (2.0,-0.7) --++ (0.0, 1.2);
        \node[square, fill=red] at (1.5,-0.5) {};
        \node[square, fill=red] at (2.0,-0.5) {};
        \node[] at (1.25,-0.55) {$Z$};
        \node[] at (2.25,-0.55) {$Z$};

        \node[square, fill=yellow] at (1.1,0.0) {};

        \node[] at (1.1,0.25) {$Y$};

        \node[] at (2.25,0.0) {$,$};

        \def\Sc{1.4};

         \draw[thick, red] (0.5+\Sc*\Shift,0.0) --++ (-0.5,0.0);
        \draw[thick] (0.0+\Sc*\Shift,0.0) --++ (0.0,0.5);
        \draw[thick,decorate] (0.0+\Sc*\Shift,-0.5) --++ (0.0,0.5);
        \node[tensor, fill=white] at (0.0+\Sc*\Shift,0.0) {};

        \node[] at (0.75+\Sc*\Shift,0.0) {$=$};

        \draw[thick] (1.5+\Sc*\Shift,-0.7) --++ (0.0, 1.2);
        \draw[thick] (2.0+\Sc*\Shift,-0.7) --++ (0.0, 1.2);
        \node[square, fill=white] at (1.5+\Sc*\Shift,0.25) {};
        \node[square, fill=white] at (2.0+\Sc*\Shift,0.25) {};
        \node[] at (1.25+\Sc*\Shift,0.3) {$X$};
        \node[] at (2.25+\Sc*\Shift,0.3) {$X$};

        \draw[thick,red] (1.5+\Sc*\Shift,-0.25) --++ (0.5, 0.0);

        \draw[thick,red] (2.0+\Sc*\Shift,-0.45) --++ (0.8, 0.0);

        \node[tensor, fill=blue] at (1.75+\Sc*\Shift,-0.25) {};
        \node[tensor, fill=blue] at (2.25+\Sc*\Shift,-0.45) {};
        \node[] at (3.0+\Sc*\Shift,0.0) {$.$};

        \node[square, fill=yellow] at (2.6+\Sc*\Shift,-0.45) {};
        \node[] at (2.6+\Sc*\Shift, -0.2) {$Y$};
        \end{tikzpicture}
    \end{equation}
\end{subequations}
This MPU group leaves the GHZ MPS invariant. After blocking once, the two injective blocks are given by
\begin{equation}
    \begin{tikzpicture}
        \draw[thick] (0.0,0.0) --++ (0.0,0.4) {};
        \node[tensor, fill=white] at (0.0,0.0) {};
        \node[] at (0.8,0.0) {$= \ \ket{00},$};
        \draw[thick] (1.8,0.0) --++ (0.0,0.4) {};
        \node[tensor] at (1.8,0.0) {};
        \node[] at (2.6,0.0) {$= \ \ket{11},$};
    \end{tikzpicture}
\end{equation}
and the only nontrivial fusion tensors are given by \eqref{eq:Fgg},
\begin{equation}
    \begin{tikzpicture}
        \node[] at (0.0,0.0) {$F_{g,g}^<=$}; 
        \draw[thick,red] (0.5,-0.3) --++ (0.0,0.6);
        \draw[thick,red] (0.5,-0.3) --++ (0.4,0.0);
        \draw[thick,red] (0.5,0.3) --++ (0.4,0.0);
        \node[square, fill=yellow] at (0.7, 0.3) {};
        \node[] at (0.7,0) {$Y$};
        \node[] at (1.2,0.0) {$,$};
        \node[] at (2.25,0.0) {$F_{g,g}^>=\frac{1}{2}$};
        \draw[thick,red] (3.25,-0.3) --++ (0.0,0.6);
        \draw[thick,red] (3.25,-0.3) --++ (-0.4,0.0);
        \draw[thick,red] (3.25,0.3) --++ (-0.4,0.0);
        \node[square, fill=yellow] at (3.05,0.3) {};
        \node[] at (3.05,0) {$Y$};
        \node[] at (3.5,0.0) {$,$};
    \end{tikzpicture}
\end{equation}
and we choose action tensors,
\begin{align}
    &A^\Gamma_{g,0}=-i\bra{1}, &A^\Gamma_{g,1}=\bra{0},\\
    &A^\ammaG_{g,0}=-\frac{i}{\sqrt 2}\ket{-}, &A^\ammaG_{g,1}=\frac{\ket{+}}{\sqrt 2}.
\end{align}
Therefore, from the very definition, we see that 
\begin{equation}
\omega(h_1,h_2,h_3)=(-1)^{\delta_{h_1,g}\delta_{h_2,g}\delta_{h_3,g}},
\end{equation}
and the only nontrivial $L$-symbol is $L^{0}_{g,g}=-1$ (because of the anomaly, we already know that block independence does not hold, hence no gauge transformation will make that last $L$-symbol equal to 1). The truncated symmetry operators
\begin{equation}
    \begin{tikzpicture}
        \node[] at (0.0,0.0) {$U_g^L \equiv $};
        \draw[thick,red] (0.6,0.0) --++ (0.8,0.0);
        \draw[thick] (0.6,0.0) --++ (0.0,-0.4);
        \draw[thick] (1.0,0.0) --++ (0.0,-0.4);
        \draw[thick] (1.4,0.0) --++ (0.0,-0.4);
        \draw[thick, ultra thick] (0.6,0.0) --++ (0.0,0.4);
        \draw[thick] (1.0,0.0) --++ (0.0,0.4);
        \draw[thick, decorate] (1.4,0.0) --++ (0.0,0.4);
        \node[tensor] at (0.6,0.0) {}; 
        \node[tensor, fill=white] at (1.4,0.0) {};
        \node[mpo] at (1.0,0.0) {};
        \node[] at (1.9,0.0) {$\equiv\ i$};
        \def\Sc{1.1};
        \draw[thick] (0.0+\Sc*\Shift,-0.7) --++ (0.0,1.2);
        \draw[thick] (0.4+\Sc*\Shift,-0.7) --++ (0.0,1.2);
        \draw[thick] (0.8+\Sc*\Shift,-0.7) --++ (0.0,1.4);
        \draw[thick] (1.2+\Sc*\Shift,-0.7) --++ (0.0,1.4);
        \draw[thick] (1.6+\Sc*\Shift,-0.7) --++ (0.0,1.4);
        \draw[thick] (2.0+\Sc*\Shift,-0.5) --++ (0.0,1.2);

        \draw[thick, red] (0.0+\Sc*\Shift,0.2) --++ (0.4,0.0);
        \draw[thick, red] (0.8+\Sc*\Shift,0.2) --++ (0.4,0.0);
        \draw[thick, red] (0.4+\Sc*\Shift,-0.2) --++ (0.4,0.0);
        \draw[thick, red] (1.2+\Sc*\Shift,-0.2) --++ (0.4,0.0);
        \node[square, fill=white] at (0.0+\Sc*\Shift,-0.5) {};
        \node[square, fill=white] at (0.4+\Sc*\Shift,-0.5) {};
        \node[square, fill=white] at (0.8+\Sc*\Shift,-0.5) {};
        \node[square, fill=white] at (1.2+\Sc*\Shift,-0.5) {};
        \node[square, fill=white] at (1.6+\Sc*\Shift,-0.5) {};
        \node[square, fill=red] at (0.8+\Sc*\Shift,0.5) {};
        \node[square, fill=red] at (1.2+\Sc*\Shift,0.5) {};
        \node[square, fill=white] at (1.6+\Sc*\Shift,0.5) {};
        \node[square, fill=red] at (2.0+\Sc*\Shift,0.5) {};

        \node[tensor, fill =blue] at (0.2+\Sc*\Shift, 0.2) {};
        \node[tensor, fill =blue] at (1.0+\Sc*\Shift, 0.2) {};
        \node[tensor, fill =blue] at (0.6+\Sc*\Shift, -0.2) {};
        \node[tensor, fill =blue] at (1.4+\Sc*\Shift, -0.2) {};

        \node[] at (-0.15+\Sc*\Shift,-0.75) {$X$};
        \node[] at (0.25+\Sc*\Shift,-0.75) {$X$};
        \node[] at (0.65+\Sc*\Shift,-0.75) {$X$};
        \node[] at (1.05+\Sc*\Shift,-0.75) {$X$};
        \node[] at (1.45+\Sc*\Shift,-0.75) {$X$};

        \node[] at (1.75+\Sc*\Shift,0.75) {$X$};
        \node[] at (2.15+\Sc*\Shift,0.75) {$Z$};
        \node[] at (1.35+\Sc*\Shift,0.75) {$Z$};
        \node[] at (0.95+\Sc*\Shift,0.75) {$Z$};

        \def\co{2.2};

        \node[] at (4.75,0.0) {$=$};

        \draw[thick] (0.0+\co*\Sc*\Shift,-0.7) --++ (0.0,1.2);
        \draw[thick] (0.4+\co*\Sc*\Shift,-0.7) --++ (0.0,1.2);
        \draw[thick] (0.8+\co*\Sc*\Shift,-0.7) --++ (0.0,1.4);
        \draw[thick] (1.2+\co*\Sc*\Shift,-0.7) --++ (0.0,1.4);
        \draw[thick] (1.6+\co*\Sc*\Shift,-0.7) --++ (0.0,1.4);
        \draw[thick] (2.0+\co*\Sc*\Shift,-0.7) --++ (0.0,1.4);

        \draw[thick, red] (0.0+\co*\Sc*\Shift,0.2) --++ (0.4,0.0);
        \draw[thick, red] (0.8+\co*\Sc*\Shift,0.2) --++ (0.4,0.0);
        \draw[thick, red] (0.4+\co*\Sc*\Shift,-0.2) --++ (0.4,0.0);
        \draw[thick, red] (1.2+\co*\Sc*\Shift,-0.2) --++ (0.4,0.0);
        \node[square, fill=white] at (0.0+\co*\Sc*\Shift,-0.5) {};
        \node[square, fill=white] at (0.4+\co*\Sc*\Shift,-0.5) {};
        \node[square, fill=white] at (0.8+\co*\Sc*\Shift,-0.5) {};
        \node[square, fill=white] at (1.2+\co*\Sc*\Shift,-0.5) {};
        \node[square, fill=white] at (1.6+\co*\Sc*\Shift,-0.5) {};
        \node[square, fill=yellow] at (2.0+\co*\Sc*\Shift,-0.5) {};
        \node[square, fill=red] at (0.8+\co*\Sc*\Shift,0.5) {};
        \node[square, fill=red] at (1.2+\co*\Sc*\Shift,0.5) {};
        \node[square, fill=white] at (1.6+\co*\Sc*\Shift,0.5) {};
        \node[square, fill=white] at (2.0+\co*\Sc*\Shift,0.5) {};

        \node[tensor, fill =blue] at (0.2+\co*\Sc*\Shift, 0.2) {};
        \node[tensor, fill =blue] at (1.0+\co*\Sc*\Shift, 0.2) {};
        \node[tensor, fill =blue] at (0.6+\co*\Sc*\Shift, -0.2) {};
        \node[tensor, fill =blue] at (1.4+\co*\Sc*\Shift, -0.2) {};

        \node[] at (-0.15+\co*\Sc*\Shift,-0.75) {$X$};
        \node[] at (0.25+\co*\Sc*\Shift,-0.75) {$X$};
        \node[] at (0.65+\co*\Sc*\Shift,-0.75) {$X$};
        \node[] at (1.05+\co*\Sc*\Shift,-0.75) {$X$};
        \node[] at (1.45+\co*\Sc*\Shift,-0.75) {$X$};
        \node[] at (1.85+\co*\Sc*\Shift,-0.75) {$Y$};

        \node[] at (1.75+\co*\Sc*\Shift,0.75) {$X$};
        \node[] at (2.15+\co*\Sc*\Shift,0.75) {$X$};
        \node[] at (1.35+\co*\Sc*\Shift,0.75) {$Z$};
        \node[] at (0.95+\co*\Sc*\Shift,0.75) {$Z$};

    \end{tikzpicture}
\end{equation}
satisfy $U_g^L U_g^{\infty}=- U_g^{\infty-L}$. Next, we concentrate on the corresponding defects:
\begin{subequations}
    \begin{equation}
        \begin{tikzpicture}
          \tikzset{decoration={snake,amplitude=.4mm,segment length=2mm, post length=0mm,pre length=0mm}}
            \node[] at (0.0,-0.1) {$g[1]\equiv 0|1 \ = $};
         
            \draw[thick] (1.5,-0.5) --++ (0.0,0.5);
            \draw[decorate, thick] (1.5,0.0) --++ (0.0,0.5);

            \draw[thick,red] (1.5,0.0) --++ (-0.5,0.0);
            \draw[thick,red] (1.0,0.0) --++ (0.0,-0.55);
            
            \node[tensor] at (1.5,-0.5) {};
            \node[tensor, fill=white] at (1.5,0.0) {};

           \node[] at (2.1,-0.1) {$=\ -i $};
           \node[tensor] at (2.6, -0.4) {};
           \draw[thick] (2.6,-0.4) --++ (0.0, 0.6);

           \node[] at (3.1,-0.1) {$=\ - $};
           \draw[ultra thick] (3.6,0.0) --++ (0.0,0.5);
           \draw[thick,red] (3.6,0.0) --++ (0.5,0.0);
           \draw[thick] (3.6,0.0) --++ (0.0,-0.5);
           \draw[thick,red] (4.1,0,0) --++ (0.0,-0.55);
           \node[tensor, fill=white] at (3.6, -0.5) {};
           \node[tensor] at (3.6, 0.0) {};

            \node[] at (4.3, 0.0) {$,$};
        \end{tikzpicture}
    \end{equation}
    \begin{equation}
        \begin{tikzpicture}
            \tikzset{decoration={snake,amplitude=.4mm,segment length=2mm, post length=0mm,pre length=0mm}}
            \node[] at (0.0,-0.1) {$g[0]\equiv 1|0 \ = $};
          
            \draw[thick] (1.5,-0.5) --++ (0.0,0.5);
            \draw[decorate, thick] (1.5,0.0) --++ (0.0,0.5);

            \draw[thick,red] (1.5,0.0) --++ (-0.5,0.0);
            \draw[thick,red] (1.0,0.0) --++ (0.0,-0.55);

            \node[tensor, fill=white] at (1.5,0.0) {};
            \node[tensor, fill=white] at (1.5,-0.5) {};

           \node[] at (2.1,-0.1) {$=$};
           
           \draw[thick] (2.6,-0.4) --++ (0.0, 0.6);
           \node[tensor, fill=white] at (2.6, -0.4) {};

           \node[] at (3.1,-0.1) {$=$};
           \draw[ultra thick] (3.6,0.0) --++ (0.0,0.5);
           \draw[thick,red] (3.6,0.0) --++ (0.5,0.0);
           \draw[thick] (3.6,0.0) --++ (0.0,-0.5);
           \draw[thick,red] (4.1,0,0) --++ (0.0,-0.55);
           \node[tensor] at (3.6, -0.5) {};
           \node[tensor] at (3.6, 0.0) {};

            \node[] at (4.3, 0.0) {$,$};
        \end{tikzpicture}
    \end{equation}
\end{subequations}
on which the movement operators, 
\begin{equation}
    \begin{tikzpicture}
        \node[] at (0.0,0.0) {$w_R=$};
        \draw[thick] (0.5,-0.6) --++ (0.0,1.6);
        \draw[thick] (1.0,-0.6) --++ (0.0,1.6);
        \draw[thick] (1.5,-0.8) --++ (0.0,1.4);
        \draw[thick] (2.0,-0.8) --++ (0.0,1.4);
        \node[square, fill=white] at (0.5,0.8) {};
        \node[square, fill=white] at (1.0,0.8) {};
        \node[square, fill=red] at (1.5,-0.6) {};
        \node[square, fill=red] at (2.0,-0.6) {};
        \draw[thick,red] (0.5,0.3) --++ (0.5,0.0);
        \draw[thick,red] (1.0,-0.2) --++ (0.5,0.0);
        \node[tensor, fill=blue] at (0.75, 0.3) {};
        \node[tensor, fill=blue] at (1.25, -0.2) {};
        \node[] at (0.65,1.05) {$X$};
        \node[] at (1.15,1.05) {$X$};
        \node[] at (1.35,-0.85) {$Z$};
        \node[] at (1.85,-0.85) {$Z$};
        \node[] at (2.2,0.0) {$,$};

        \def\Sc{1.8};

        \node[] at (0.0+\Sc*\Shift,0.0) {$w_L=w_R^\dagger$};
    \end{tikzpicture}
\end{equation}
satisfy
\begin{subequations}
\begin{equation}
    \begin{tikzpicture}
           \draw[thick] (-0.2,-0.65) --++ (0,0.9);
            \draw[thick] (0.2,-0.65) --++ (0,0.9);

            \node[rectangle, fill=citrine] at (0,-0.2) {$w_R$};
            \node[tensor, fill=white] at (-0.2,-0.65) {};
            \node[tensor, fill=white] at (0.2,-0.65) {};
            \node at (-0.2,-0.9) {$1|0$};
            \node at (0.2,-0.9) {$0$};

            \node[] at (0.6, -0.65) {$=$};

            \def\Sc{0.6};
             
            \draw[thick] (-0.2+\Sc*\Shift,-0.65) --++ (0,0.4);
            \draw[thick] (0.2+\Sc*\Shift,-0.65) --++ (0,0.4);
            \node[tensor] at (-0.2+\Sc*\Shift,-0.65) {};
            \node[tensor, fill=white] at (0.2+\Sc*\Shift,-0.65) {};
            \node at (-0.2+\Sc*\Shift,-0.9) {$1$};
            \node at (0.2+\Sc*\Shift,-0.9) {$1|0$};

            \node[] at (0.4+\Sc*\Shift,-0.65) {$,$};

            \def\co{2.5};
            \draw[thick] (-0.2+\co*+\Sc*\Shift,-0.65) --++ (0,0.9);
            \draw[thick] (0.2+\co*+\Sc*\Shift,-0.65) --++ (0,0.9);

            \node[] at (-0.6+\co*\Sc*\Shift, -0.2) {$-i$};
            \node[rectangle, fill=citrine] at (0+\co*+\Sc*\Shift,-0.2) {$w_R$};
            \node[tensor] at (-0.2+\co*+\Sc*\Shift,-0.65) {};
            \node[tensor] at (0.2+\co*+\Sc*\Shift,-0.65) {};
            \node at (-0.2+\co*+\Sc*\Shift,-0.9) {$0|1$};
            \node at (0.2+\co*+\Sc*\Shift,-0.9) {$1$};

            \node[] at (0.6+\co*+\Sc*\Shift, -0.65) {$=$};

            \def\coo{1.4};
             
            \draw[thick] (-0.2+\co*\coo*\Sc*\Shift,-0.65) --++ (0,0.4);
            \draw[thick] (0.4+\co*\coo*\Sc*\Shift,-0.65) --++ (0,0.4);
            \node[] at (0.2+\co*\coo*\Sc*\Shift,-0.45) {$-i$};
            \node[tensor, fill=white] at (-0.2+\co*\coo*\Sc*\Shift,-0.65) {};
            \node[tensor] at (0.4+\co*\coo*\Sc*\Shift,-0.65) {};
            \node at (-0.2+\co*\coo*\Sc*\Shift,-0.9) {$0$};
            \node at (0.4+\co*\coo*\Sc*\Shift,-0.9) {$0|1$};
            \node[] at (0.6+\co*\coo*\Sc*\Shift,-0.65) {$,$};
    \end{tikzpicture}
\end{equation}
\begin{equation}
    \begin{tikzpicture}
        \draw[thick] (-0.2,-0.65) --++ (0,0.9);
            \draw[thick] (0.2,-0.65) --++ (0,0.9);

            \node[rectangle, fill=carrotorange] at (0,-0.2) {$w_L$};
            \node[tensor] at (-0.2,-0.65) {};
            \node[tensor, fill=white] at (0.2,-0.65) {};
            \node at (-0.2,-0.9) {$1$};
            \node at (0.2,-0.9) {$1|0$};

            \node[] at (0.6, -0.65) {$=$};

            \def\Sc{0.6};
             
            \draw[thick] (-0.2+\Sc*\Shift,-0.65) --++ (0,0.4);
            \draw[thick] (0.2+\Sc*\Shift,-0.65) --++ (0,0.4);
            \node[tensor, fill=white] at (-0.2+\Sc*\Shift,-0.65) {};
            \node[tensor, fill=white] at (0.2+\Sc*\Shift,-0.65) {};
            \node at (-0.2+\Sc*\Shift,-0.9) {$1|0$};
            \node at (0.2+\Sc*\Shift,-0.9) {$0$};

            \node[] at (0.4+\Sc*\Shift,-0.65) {$,$};

            \def\co{2.5};
            \draw[thick] (-0.2+\co*+\Sc*\Shift,-0.65) --++ (0,0.9);
            \draw[thick] (0.2+\co*+\Sc*\Shift,-0.65) --++ (0,0.9);

            \node[rectangle, fill=carrotorange] at (0+\co*+\Sc*\Shift,-0.2) {$w_L$};
            \node[] at (0.03+\co*\Sc*\Shift, -0.575) {$\!\text{-} i$};
            \node[tensor, fill=white] at (-0.2+\co*+\Sc*\Shift,-0.65) {};
            \node[tensor] at (0.2+\co*+\Sc*\Shift,-0.65) {};
            \node at (-0.2+\co*+\Sc*\Shift,-0.9) {$0$};
            \node at (0.2+\co*+\Sc*\Shift,-0.9) {$0|1$};

            \node[] at (0.6+\co*+\Sc*\Shift, -0.65) {$=$};

            \def\coo{1.4};
             
            \draw[thick] (-0.2+\co*\coo*\Sc*\Shift,-0.65) --++ (0,0.4);
            \draw[thick] (0.2+\co*\coo*\Sc*\Shift,-0.65) --++ (0,0.4);
            \node[] at (-0.4+\co*\coo*\Sc*\Shift,-0.45) {$-i$};
            \node[tensor] at (-0.2+\co*\coo*\Sc*\Shift,-0.65) {};
            \node[tensor] at (0.2+\co*\coo*\Sc*\Shift,-0.65) {};
            \node at (-0.2+\co*\coo*\Sc*\Shift,-0.9) {$0|1$};
            \node at (0.2+\co*\coo*\Sc*\Shift,-0.9) {$1$};
            \node[] at (0.6+\co*\coo*\Sc*\Shift,-0.65) {$.$};
    \end{tikzpicture}
\end{equation}
\end{subequations}
The fusion operators read $\lambda^R_{e,e}=\lambda^R_{e,g}=\I$, $\lambda^R_{g,e}=w_R$ and
\begin{equation}
    \begin{tikzpicture}
        \node[] at (-0.3,0.0) {$\lambda^R_{g,g}=-i$};
        \draw[thick] (0.5,-0.6) --++ (0.0,1.6);
        \draw[thick] (1.0,-0.6) --++ (0.0,1.6);
        \draw[thick] (1.5,-0.8) --++ (0.0,1.8);
        \draw[thick] (2.0,-0.8) --++ (0.0,1.8);
        \node[square, fill=white] at (0.5,0.8) {};
        \node[square, fill=white] at (1.0,0.8) {};
        \node[square, fill=white] at (1.5,-0.6) {};
        \node[square, fill=white] at (1.5,0.8) {};
        \node[square, fill=red] at (2.0,0.8) {};
        \draw[thick,red] (0.5,0.3) --++ (0.5,0.0);
        \draw[thick,red] (1.0,-0.2) --++ (0.5,0.0);
        \node[tensor, fill=blue] at (0.75, 0.3) {};
        \node[tensor, fill=blue] at (1.25, -0.2) {};
        \node[] at (0.65,1.05) {$X$};
        \node[] at (1.15,1.05) {$X$};
        \node[] at (1.65,1.05) {$X$};
        \node[] at (2.15,1.05) {$Z$};
        \node[] at (1.35,-0.85) {$X$};
        \node[] at (2.2,0.0) {$.$};
    \end{tikzpicture}
\end{equation}
The last one satisfies
\begin{equation}
    \begin{tikzpicture}
           \draw[thick] (-0.2,-0.65) --++ (0,1.0);
            \draw[thick] (0.2,-0.65) --++ (0,1.0);

            \node[rectangle, fill=white] at (0,-0.1) {$\lambda^R_{g,g}$};
            \node[tensor] at (-0.2,-0.65) {};
            \node[tensor, fill=white] at (0.2,-0.65) {};
            \node at (-0.25,-0.9) {$0|1$};
            \node at (0.25,-0.9) {$1|0$};
            \node at (-0.6,-0.65) {$-i$};

            \node[] at (0.6, -0.65) {$=$};

            \def\Sc{0.6};
             
            \draw[thick] (-0.2+\Sc*\Shift,-0.65) --++ (0,0.4);
            \draw[thick] (0.2+\Sc*\Shift,-0.65) --++ (0,0.4);
            \node[tensor, fill=white] at (-0.2+\Sc*\Shift,-0.65) {};
            \node[tensor, fill=white] at (0.2+\Sc*\Shift,-0.65) {};
            \node at (-0.2+\Sc*\Shift,-0.9) {$0$};
            \node at (0.2+\Sc*\Shift,-0.9) {$0$};
            \node[] at (-0.4+\Sc*\Shift,-0.45) {$-$};

            \node[] at (0.4+\Sc*\Shift,-0.65) {$,$};

            \def\co{2.5};
            \draw[thick] (-0.2+\co*+\Sc*\Shift,-0.65) --++ (0,1.0);
            \draw[thick] (0.2+\co*+\Sc*\Shift,-0.65) --++ (0,1.0);

            \node[] at (0.02+\co*\Sc*\Shift, -0.525) {$\text{-}i$};
            \node[rectangle, fill=white] at (0+\co*+\Sc*\Shift,-0.1) {$\lambda^R_{g,g}$};
            \node[tensor, fill=white] at (-0.2+\co*+\Sc*\Shift,-0.65) {};
            \node[tensor] at (0.2+\co*+\Sc*\Shift,-0.65) {};
            \node at (-0.25+\co*+\Sc*\Shift,-0.9) {$1|0$};
            \node at (0.25+\co*+\Sc*\Shift,-0.9) {$0|1$};

            \node[] at (0.6+\co*+\Sc*\Shift, -0.65) {$=$};

            \def\coo{1.4};
             
            \draw[thick] (-0.2+\co*\coo*\Sc*\Shift,-0.65) --++ (0,0.4);
            \draw[thick] (0.2+\co*\coo*\Sc*\Shift,-0.65) --++ (0,0.4);
            
            \node[tensor] at (-0.2+\co*\coo*\Sc*\Shift,-0.65) {};
            \node[tensor] at (0.2+\co*\coo*\Sc*\Shift,-0.65) {};
            \node at (-0.2+\co*\coo*\Sc*\Shift,-0.9) {$1$};
            \node at (0.2+\co*\coo*\Sc*\Shift,-0.9) {$1$};
            \node[] at (0.4+\co*\coo*\Sc*\Shift,-0.65) {$,$};
    \end{tikzpicture}
\end{equation}
what shows that $L_{g,g}^1=-L^0_{g,g}=1$. By a straightforward computation, we see that
\begin{equation}
    \begin{tikzpicture}
            \draw[thick] (-0.2,-0.65) --++ (0,1.7);
            \draw[thick] (0.2,-0.65) --++ (0,1.7);
            \draw[thick] (0.6,-0.65) --++ (0,1.7);

            \node[rectangle, fill=white] at (0,-0.1) {$\lambda^R_{g,g}$};
            \node[rectangle, fill=white] at (0.4,0.6) {$\lambda^R_{e,g}$};
            \node[tensor, fill=white] at (-0.2,-0.65) {};
            \node[tensor] at (0.2,-0.65) {};
            \node[tensor, fill=white] at (0.6,-0.65) {};
            \node at (-0.25,-0.9) {$1|0$};
            \node at (0.2,-0.9) {$0|1$};
            \node at (0.02,-0.525) {$\text{-}i$};
            \node at (0.65,-0.9) {$1|0$};
            
            \node at (1.0,-0.2) {$=$};
            
            \def\Sc{0.9};

            \node at (-0.5+\Sc*\Shift,-0.2) {$-$};
            \draw[thick] (-0.2+\Sc*\Shift,-0.65) --++ (0,0.8);
            \draw[thick] (0.2+\Sc*\Shift,-0.65) --++ (0,0.8);
            \draw[thick] (0.6+\Sc*\Shift, -0.65) --++ (0,0.8);
            \node[tensor] at (-0.2+\Sc*\Shift, -0.65) {};
            \node[tensor] at (0.2+\Sc*\Shift,-0.65) {};
            \node[tensor, fill=white] at (0.6+\Sc*\Shift, -0.65) {};
            \node[] at (-0.2+\Sc*\Shift,-0.9) {$1$};
            \node[] at (0.2+\Sc*\Shift,-0.9) {$1$};
            \node[] at (0.6+\Sc*\Shift,-0.9) {$1|0$};

            \node at (1.0+\Sc*\Shift,-0.2) {$=$};

            \def\Scc{1.9};
            \node at (-0.7+\Scc*\Shift,-0.2) {$-$};
            \draw[thick] (-0.2+\Scc*\Shift,-1.0) --++ (0,2.3);
            \draw[thick] (0.2+\Scc*\Shift,-1.0) --++ (0,2.3);
            \draw[thick] (0.6+\Scc*\Shift,-1.0) --++ (0,2.3);

            \node[tensor, fill=white] at (-0.2+\Scc*\Shift,-1.0) {};
            \node[tensor] at (0.2+\Scc*\Shift,-1.0) {};
            \node[tensor, fill=white] at (0.6+\Scc*\Shift,-1.0) {};
            \node[] at (-0.25+\Scc*\Shift,-1.25) {$1|0$};
            \node[] at (0.2+\Scc*\Shift,-1.25) {$0|1$};
            \node[] at (0.65+\Scc*\Shift,-1.25) {$1|0$};
            \node[] at (0.01+\Scc*\Shift,-0.9) {$\text{-}i$};
            \node[rectangle, fill=white] at (0.4+\Scc*\Shift,-0.5) {$\lambda^R_{g,g}$};
            \node[rectangle, fill=white] at (0.0+\Scc*\Shift, 0.2) {$\lambda^R_{g,e}$};
            \node[rectangle, fill=white] at (0.4+\Scc*\Shift,0.9) {$\lambda^R_{g,e}$};

            \node[] at (1.0+\Scc*\Shift,-0.2) {$,$};
    \end{tikzpicture} 
\end{equation}
hence $\omega(g,g,g)=\frac{L^0_{g,g} L^0_{g,e}}{L^1_{g,g} L^{0}_{e,g}}=-1$, demonstrating the presence of the anomaly. The local symmetry representation is given by $\mathcal{G}_e=\I$ and
\begin{equation}
\begin{split}
    \mathcal{G}_g=&\bigl(\lambda^R_{g,g}\bigr)^\dagger\otimes|g,g\rangle\langle e,e | +w_R^\dagger\otimes| g,e\rangle \langle e,g| \\
    &+w_R\otimes|e,g\rangle \langle g,e| + \lambda^R_{g,g}\otimes|e,e\rangle\langle g,g |.
    \label{eq:mcGCZX}
\end{split}
\end{equation}
We can give the gauged tensor explicitly, using \eqref{eq:gauged_tensor_dw},
\begin{equation}
\label{CZX_gaugedMPS}
        \begin{tikzpicture}[baseline]
        \def\dx{0.1}
        \def\dxx{0.5}
        \def\dy{0.5}
    	\draw[thick] (-\dx,0) --++ (0,\dy);
        \draw[thick, dotted] (\dx,0) --++ (0,\dy);
        \draw[thick] (-\dxx, 0) --++ (2*\dxx,0);
        \node[widetensor] at (0,0) {};
    \end{tikzpicture}
    =
    \begin{tikzpicture}[baseline]
    \def\dx{0.4}
    \def\dd{0.075}
    \def\dxb{0.6}
    \def\dy{0.5}
        \draw[thick] (-\dxb,0) --++ (2*\dxb,0);
        \draw[thick] (0-\dd,0) -- (0-\dd,\dy);
        \draw[thick] (0+\dd,0) -- (0+\dd,\dy);
        \draw[thick, dotted] (-\dx,0) -- (-\dx,\dy/2)--(\dx, \dy/2)--(\dx,\dy);
        \node[square, fill=green] at (-\dx,0) {};
    \end{tikzpicture}
    ,
\end{equation}
where
\begin{equation}
    \begin{tikzpicture}[baseline]
    \def\dx{0.4}
    \def\dy{0.4}
        \draw[thick] (-\dx,0) --++ (2*\dx,0);
        \draw[thick, dotted] (0,0) --++ (0,\dy);
        \node[square, fill=green] at (0,0) {};
        \node[] at (0, \dy+0.2) {0};
    \end{tikzpicture}
    =\I,\qquad
    \begin{tikzpicture}[baseline]
    \def\dx{0.4}
    \def\dy{0.4}
        \draw[thick] (-\dx,0) --++ (2*\dx,0);
        \draw[thick, dotted] (0,0) --++ (0,\dy);
        \node[square, fill=green] at (0,0) {};
        \node[] at (0, \dy+0.2) {1};
    \end{tikzpicture}
    =\left(\begin{array}{cc}
        0 & 1 \\
        -i & 0
    \end{array}\right).
\end{equation}
This MPS is not invariant under \eqref{eq:mcGCZX} due to the nontrivial $L$-symbol. A simple modification of the fusion operators that compensates for this is
\begin{equation}
    \lambda^R_{g,g}\to\tilde\lambda^R_{g,g}\equiv \lambda^R_{g,g}Z_1
\end{equation}
where $Z_1$ is the Pauli $Z$ operator on the first qubit, and $\lambda^R_{e,e}, \lambda^R_{e,g}, \lambda^R_{g,e}$ do not change. The modified Gauss law $\tilde {\mc G}_g$ arising from the redefined fusion operators leaves \eqref{CZX_gaugedMPS} invariant, at the cost once more of the resulting projectors not commuting.

\section{Relation to projective gauging}
\label{sec:projHam}
The gauging method introduced above is based on the promotion of symmetry defects to gauge degrees of freedom and yields an MPS of the same bond dimension as the original one. This departs from the strategy presented in the seminal paper \cite{PhysRevX.5.011024}, where a state is gauged by tensoring it with a product state of the gauge degrees of freedom and projecting it onto the gauge-invariant subspace. Since the Gauss law projectors commute, this can be easily accomplished by applying each of them successively. This strategy presents the advantage that it only requires the knowledge of the local gauge symmetry representation, specifically of the Gauss law projectors, to be applied to any state, regardless of whether it has an MPS form. On the other hand, the original paper only proposed these local representations for the case of onsite symmetry, and \cite{GarreKull} proposed a way to generate such a representation for MPO group algebras, in terms of MPO tensors and their associated fusion tensors. However, that construction only works whenever the MPO algebras satisfy stronger constraints than general MPU groups (including, as we mentioned above, that Eq.~\eqref{eq:fusion_1} holds for $m=0$). For example, it can be checked that the $\mathsf{CZ}$ MPU representation of $\Z_2$ in Section \ref{sec:CZ} does not produce a group representation under an analogous construction. Our proposal yields a local symmetry representation for any finite MPU group representation, although acting on two physical and two gauge degrees of freedom (as opposed to one physical and two gauge degrees of freedom), given by \eqref{eq:mcG}. This construction is independent of the globally symmetric state we are trying to gauge, and its associated Gauss law projectors commute in the nonanomalous case; see Proposition~\ref{prop:comm_proj}. Thus, we could see our construction as providing an input for a generalization of the projective strategy of \cite{PhysRevX.5.011024} to nonanomalous MPU symmetries, regardless of whether the state is block independent or even an MPS at all.

Therefore, when block independence holds, we have two choices to gauge an MPU symmetric MPS, which give potentially different outcomes. Choosing to promote the symmetry defects to degrees of freedom, that is, using the formalism introduced in this paper, the gauge configurations that can appear in the resulting superposition need to satisfy $g_1g_2\ldots g_n\cdot x = x$ for some $x\in\mathsf{X}$. However, if we tensor the initial MPS with the gauge configuration made of neutral elements $\ket{ee\ldots e}$ and then apply the projectors, we will be enforcing $g_1g_2\ldots g_n = e$, which is a stronger constraint (unless $g\cdot x = x$ for some $x$ implies $g=e$, that is, we are in the maximally symmetry broken case). This additional constraint on the global charge sector, which is nonlocal in nature, explains the potential increase in bond dimension.

\section{Discussion and outlook}
\label{sec:outlook}

In this work, we have studied MPS symmetric under MPU representations of finite groups. We have found that given certain assumptions, we can define defect tensors (symmetry twists) that can be moved and fused by local unitaries. Using these defects, we have introduced a formalism analogous to \cite{Seifnashri}, which allows us to gauge the symmetry, recovering a locally symmetric MPS of the same bond dimension whenever the block independence condition holds. If the latter does not happen, we can still use our method to obtain a local gauge group representation, giving rise to Gauss law projectors that will commute in the absence of an anomaly.

It is interesting to consider the relationship between the gauging at the Hamiltonian and the state levels. Suppose that the ground state of the Hamiltonian for our model is unique and has an MPS structure. Then, one can gauge the ground state and find a parent Hamiltonian for the resulting MPS. The question that arises naturally is: How is the new Hamiltonian related to the one obtained by gauging the original Hamiltonian? Conversely, one can start by gauging at the Hamiltonian level and then compare the corresponding ground-state subspaces. We leave the discussion of this problem for future work. 

When block independence does not hold, we can still find a local symmetry representation that leavee the gauged MPS invariant, by modifying the defect fusion operators. We have called this process state-level gauging. However, there is no guarantee that the resulting projectors will commute on the whole Hilbert space. They will do so in their common $+1$-eigenspace, the gauge-invariant subspace, which we know is nonempty, since it contains at least the gauged MPS. We leave for future work the study of the structure of this subspace to see if it allows for nontrivial physics, and in particular whether its dimension is bounded or grows with system size. For the examples in Section \ref{sec:examples_statelevel}, preliminary numerics for small system sizes leave the door open for the latter to be true.

Gauging procedures can also be seen as means of mapping between different phases of matter (see \cite{Lootens_2023, rubio2024, BlanikGarreSchuch} for recent tensor network-based examples). Our formalism should thus play a similar role for phases characterized by MPU symmetries. Yet another intriguing potential extension is to consider MPSs invariant under generic MPO symmetries, not necessarily those possessing the MPU structure. The defect formalism from \cite{Seifnashri} has been generalized to non-invertible symmetries in \cite{SSS}. Since we have made intensive use of the structural theorem for MPUs, however, it is at first glance a nontrivial problem how to proceed in the case of MPO algebra symmetries, where we expect the connection to weak Hopf algebras \cite{weakHopf} and fusion categories to play an essential role. 

\section{Acknowledgements}
We thank David Blanik, Jos\'{e} Garre-Rubio, Laurens Lootens, András Molnár, Gerardo Ortiz, Norbert Schuch, Nathan Seiberg, Sahand Seifnashri, Alex Turzillo and Frank Verstraete for illuminating discussions. Part of this work was conducted at the Institute for Advanced Study, Princeton. A.B. is supported by the Alexander von Humboldt Foundation. The work was partially funded by the Deutsche Forschungsgemeinschaft (DFG, German Research Foundation) under Germany’s Excellence Strategy - EXC-2111 – 390814868 and the Österreichischer Wissenschaftsfonds (FWF, Austrian Science Fund) under Grant DOI 10.55776/F71.

\appendix

\section{$\Z_2$ anomalies}
\label{app:Z2}
In this appendix, we will show the meaning of having $\sigma_g=-1$ (and thus $T_g\neq\I$) for a group element $g$, which from our definitions (see Section \ref{sec:assumptions}) necessarily must be of order two, $g=g^{-1}$. 

First, we prove Eq. \eqref{eq:wggg}.
Note that from the definition of the anomaly cocycle,
\begin{equation}
   \begin{tikzpicture}[baseline]
        \def\dxl{0.4}
        \def\dxr{0.5}
        \def\dy{0.6}
        \def\dd{0.2}

        \draw[thick] (0,-\dy-\dd) -- (0, \dy+\dd);
        
        \draw[thick,red] (-\dxl, 0) -- (\dxr, 0);
        \draw[thick,red] (-\dxl, \dy) -- (\dxr, \dy);
        \draw[thick,red] (-\dxl, -\dy) -- (\dxr, -\dy);

        \draw[fusion] (-\dxl, 0) --++ ( 0, \dy);
        \draw[fusion] (\dxr, -\dy) --++ ( 0, \dy);

        \node[mpo] at (0, 0) {};
        \node[mpo] at (0, -\dy) {};
        \node[mpo] at (0, \dy) {};
        \node[irrep, anchor=south west] at (0.1, \dy) {$g$};
        \node[irrep, anchor=south west] at (0.1, 0) {$\inv g$};
        \node[irrep, anchor=south west] at (0.1, -\dy) {$g$};
     \end{tikzpicture}
        =\omega(g, g^{-1}, g)\;
    \begin{tikzpicture}[baseline]
            \draw[thick,red] (-0.4, 0) -- (0.4, 0);
            \draw[thick] (0, -0.4) -- (0, 0.4);
            \node[mpo]  at (0,0) {};
    \end{tikzpicture},
\end{equation}
while, using our choice of fusion operators, cf. Eq. \eqref{eq:Fgg}, 
\begin{equation}
    \begin{tikzpicture}[baseline]
        \def\dxl{0.4}
        \def\dxr{0.5}
        \def\dy{0.6}
        \def\dd{0.2}

        \draw[thick] (0,-\dy-\dd) -- (0, \dy+\dd);
        
        \draw[thick,red] (-\dxl, 0) -- (\dxr, 0);
        \draw[thick,red] (-\dxl, \dy) -- (\dxr, \dy);
        \draw[thick,red] (-\dxl, -\dy) -- (\dxr, -\dy);

        \draw[fusion] (-\dxl, 0) --++ ( 0, \dy);
        \draw[fusion] (\dxr, -\dy) --++ ( 0, \dy);

        \node[mpo] at (0, 0) {};
        \node[mpo] at (0, -\dy) {};
        \node[mpo] at (0, \dy) {};
        \node[irrep, anchor=south west] at (0.1, \dy) {$g$};
        \node[irrep, anchor=south west] at (0.1, 0) {$\inv g$};
        \node[irrep, anchor=south west] at (0.1, -\dy) {$g$};
     \end{tikzpicture}
        =
    \begin{tikzpicture}[baseline]
        \def\dxl{0.8}
        \def\dxr{1.2}
        \def\dy{0.6}
        \def\dd{0.2}

        \draw[thick] (0,-\dy-\dd) -- (0, \dy+\dd);
        
        \draw[thick,red] (-\dxl, 0) -- (\dxr, 0);
        \draw[thick,red] (-\dxl, \dy) -- (\dxr, \dy);
        \draw[thick,red] (-\dxl, -\dy) -- (\dxr, -\dy);

        \draw[thick,red] (-\dxl, 0) --++ ( 0, \dy);
        \draw[thick,red] (\dxr, -\dy) --++ ( 0, \dy);

        \node[mpo] at (0, 0) {};
        \node[mpo] at (0, -\dy) {};
        \node[mpo] at (0, \dy) {};
        \node[tensor, red] at (-\dxl/2, 0) {};
        \node[tensor, red] at (-\dxl/2, \dy) {};
        \node[tensor, red] at (\dxr*0.45, 0) {};
        \node[tensor, red] at (\dxr*0.8, 0) {};
        \node[square, fill=white] at (\dxr, -\dy/2) {};
        \node[] at (0.18,0.15) {$^\dagger$};
        \node[red] at (-\dxl/2, \dy+0.3) {$T_g^\dagger$};
        \node[red] at (-\dxl/2, 0.25) {$T_g$};
        \node[red] at (\dxr*0.45, 0.3) {$T^\dagger_g$};
        \node[red] at (\dxr*0.8, 0.25) {$T_g$};
    \end{tikzpicture}
    =\sigma_g\;
    \begin{tikzpicture}[baseline]
        \def\dx{0.4}
        \def\dy{0.6}
        \def\dd{0.2}

        \draw[thick] (0,-\dy-\dd) -- (0, \dy+\dd);
        
        \draw[thick,red] (-\dx, 0) -- (\dx, 0);
        \draw[thick,red] (-\dx, \dy) -- (\dx, \dy);
        \draw[thick,red] (-\dx, -\dy) -- (\dx, -\dy);

        \draw[thick,red] (-\dx, 0) --++ ( 0, \dy);
        \draw[thick,red] (\dx, -\dy) --++ ( 0, \dy);
        \node[] at (0.18,0.15) {$^\dagger$};

        \node[mpo] at (0, 0) {};
        \node[mpo] at (0, -\dy) {};
        \node[mpo] at (0, \dy) {};
        \node[square, fill=white] at (\dx, -\dy/2) {};
    \end{tikzpicture}
    =\sigma_g\;\begin{tikzpicture}[baseline]
            \draw[thick,red] (-0.4, 0) -- (0.4, 0);
            \draw[thick] (0, -0.4) -- (0, 0.4);
            \node[mpo]  at (0,0) {};
    \end{tikzpicture},
    \label{eq:sigmadiag}
\end{equation}
where the last equality appears in the proof of Proposition~\ref{prop:MPUsplus}. Thus we have $\omega(g, g^{-1}, g)=\sigma_g$ as intended. We also check
\begin{equation}
    \omega(g, g^{-1}, g) = \dfrac{L^x_{\inv g, g}L^x_{g, e}}{L^{gx}_{g, \inv g}L^x_{e, g}} = \dfrac{L^x_{\inv g, g}}{L^{gx}_{g, \inv g}},
\end{equation}
which follows from Eqs. \eqref{eq:omega_and_Ls} and \eqref{eq:triv_Ls}.

Now consider an MPU representation of $\Z_2=\langle g|g^2=e\rangle$ (which could be part of a representation of a larger group). The MPU associated with $g$ that squares to the identity, in other words, it is time-reversal invariant, $U_g=U_g^\dagger$. Recall that this implies the existence of a gauge transformation between the MPU tensor and its adjoint (which can be taken to be unitary since the tensors are in canonical form)
\begin{equation}
    \mc U_g = T_g\mc U_g^\dagger T_g^\dagger,
\end{equation}
where it can be seen that, for consistency, $T_gT_g^\ast \equiv \sigma_g \mathds{1}$, $\sigma_g=\pm 1$. This sign labels two SPT phases associated with time-reversal invariance.
\begin{appxprop}
    An MPU representation of $\Z_2$ is anomalous (i.e. $\omega$ is a nontrivial 3-cocycle) if and only if the MPU belongs to the nontrivial SPT class for time-reversal invariance \footnote{This anomaly index coincides with what is called the Frobenius-Schur indicator in category theory. The authors thank Sahand Seifnashri for informing us of this and thus motivating this proposition.}.
\end{appxprop}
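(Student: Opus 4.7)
The plan is to leverage the identity $\omega(g,g^{-1},g)=\sigma_g$ derived in the preceding discussion (Eq.~\eqref{eq:sigmadiag}) and observe that for $\Z_2=\langle g\mid g^2=e\rangle$ this reads simply $\omega(g,g,g)=\sigma_g$. Since $H^3(\Z_2,U(1))\cong\Z_2$, with representative nontrivial cocycle characterized by $\omega(g,g,g)=-1$, the proposition will follow once we verify that $\omega(g,g,g)$ is a genuine cohomological invariant within our gauge conventions, and that $\sigma_g$ is itself gauge-invariant.

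First I would show that $\omega(g,g,g)$ is invariant under the gauge freedom of the fusion tensors respecting the normalization of Section~\ref{sec:assumptions}. Under $\omega(g,h,k)\to\omega(g,h,k)\,\beta_{h,k}\beta_{g,hk}/(\beta_{gh,k}\beta_{g,h})$, specializing to $h=k=g$ and using $g^2=e$ together with $\beta_{g,e}=\beta_{e,g}=1$, the multiplicative correction collapses to $\beta_{g,e}/\beta_{e,g}=1$. Thus $\omega(g,g,g)$ is a cohomology-class invariant within our allowed gauge choices, taking values in $\{\pm1\}$ thanks to the identification with $\sigma_g$.

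Next I would verify that $\sigma_g$ is intrinsic to the MPU tensor $\mathcal{U}_g$ (not an artifact of the gauge choice for $T_g$). The only freedom in $T_g$, once $\mathcal{U}_g$ is fixed in canonical form, is a scalar rescaling $T_g\to\lambda T_g$ with $|\lambda|=1$; then $T_g T_g^*\to\lambda\lambda^* T_g T_g^* = T_g T_g^*$, so $\sigma_g$ is unchanged. This identifies $\sigma_g\in\{\pm1\}$ with the standard $\Z_2$ topological index of a time-reversal-invariant MPU, which classifies the two SPT phases under time-reversal symmetry (see \cite[Section~III.A.1]{Review}).

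Finally I would combine these observations with the classification of $H^3(\Z_2,U(1))$. Any normalized 3-cocycle $\omega$ on $\Z_2$ is cohomologically trivial iff $\omega(g,g,g)=1$, because for $\Z_2$ all entries of $\omega$ on non-identity arguments are forced equal to $\omega(g,g,g)$ by the cocycle condition applied to $(g,g,g,g)$ together with the normalization \eqref{eq:triv_omegas}. Chaining:
\begin{equation*}
\text{nonanomalous}\;\Longleftrightarrow\;\omega(g,g,g)=1\;\Longleftrightarrow\;\sigma_g=1\;\Longleftrightarrow\;\text{trivial TR SPT class},
\end{equation*}
which is the claim. The only delicate step is the verification that $\omega(g,g,g)$ actually detects the cohomology class (rather than a gauge-dependent phase), which is why I would treat the coboundary computation with care; the rest is a direct combination of Eq.~\eqref{eq:sigmadiag} with the classification of $H^3(\Z_2,U(1))$.
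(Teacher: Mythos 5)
Your proposal is correct and follows essentially the same route as the paper's own proof: it rests on the identity $\omega(g,g^{-1},g)=\sigma_g$ from Eq.~\eqref{eq:sigmadiag}, the normalization $\omega(h_1,h_2,h_3)=1$ whenever some $h_i=e$, and the identification of the two values $\omega(g,g,g)=\pm1$ with the two classes of $H^3(\Z_2,\C^\times)\cong\Z_2$. The extra verifications you add (invariance of $\omega(g,g,g)$ under normalized coboundaries and of $\sigma_g$ under the scalar freedom in $T_g$) are sound and merely make explicit what the paper leaves implicit.
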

\begin{proof}
    Consider the 3-cocycle $\omega$ that defines the anomaly. As in Section~\ref{sec:assumptions}, we can pick all fusion tensors involving the identity to be trivial, which automatically leads to $\omega(h_1,h_2,h_3)=1$ whenever $h_i=e$ for some $i$. On the other hand, from the argument immediately above, we have $\omega(g,g,g)=\sigma_g$. The two possible signs thus correspond to the two cohomology classes in $H^3(\Z_2, \C^\times)\cong \Z_2$, which are therefore labeled by the index of the SPT phase.
    \end{proof}
In the anomalous $\Z_2$ case, whenever $\mc U_g\neq\mc U^\dagger_g$, it is not obvious that we can find a decomposition of the form \eqref{eq:3-leg} such that the compatibility conditions \eqref{eq:compat} hold. Rather, we generalize these conditions to read \eqref{eq:modif_XY}. We next prove that there exists a decomposition that satisfies the generalized compatibility conditions \eqref{eq:modif_XY}, as well as all the pleasant conditions that were shown for the $X, Y$ tensors in Section~\ref{sec:MPUsstr}.
\begin{appxprop}
    Consider an MPU that squares to identity, with a time-reversal SPT index $\sigma$, generated by a simple injective tensor $\mc U$ whose adjoint is also simple. Then there exist decompositions $\mc U=X_1Y_1=X_2Y_2$, satisfying \eqref{eq:modif_XY}, such that Eqs.~\eqref{eq:3-leg}-\eqref{eq:thm1c}, \eqref{eq:wLR}-\eqref{eq:MPUnice2} and \eqref{eq:MPUnice3}-\eqref{eq:MPUnice4} (the ``pleasant properties'') all hold. Furthermore, for the truncated symmetries introduced in Section~\ref{sec:defs}, Eq.~\eqref{eq:complementary} holds up to a factor of $\sigma$,
    \begin{equation}
        U^L U^{\infty} = \sigma U^{\infty - L}.
        \label{eq:UUU}
    \end{equation}
\end{appxprop}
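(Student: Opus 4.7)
The plan is to bootstrap an arbitrary decomposition of $\mc U_g$ into one satisfying the modified compatibility~\eqref{eq:modif_XY} via Lemma~\ref{lem:deco}, and then verify~\eqref{eq:UUU} by tracking the $T_g$ factors introduced at the two boundaries of $U_g^L$ as they wind around the periodic chain.

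\textbf{Existence of a compatible decomposition.} Starting from any decomposition $(X_i, Y_i)_{i=1,2}$ of $\mc U_g$ from Theorem~\ref{thm:mpu}, define candidate tensors $\tilde X_1 \equiv (Y_2)^\dagger T_g^\dagger$ and $\tilde Y_1 \equiv T_g (X_2)^\dagger$, and analogously $\tilde X_2, \tilde Y_2$ by daggering the first decomposition. Using $\mc U_g = T_g \mc U_g^\dagger T_g^\dagger$ (valid because $g = \inv g$) together with $T_g T_g^\dagger = \I$ and the identification of the curly and ultra-thick internal legs (which for finite-group MPUs have equal dimensions $d_l = d_r = d$), one checks that $\tilde X_i \tilde Y_i = \mc U_g$ and these form a second decomposition of the type covered by Theorem~\ref{thm:mpu}. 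Invoking Lemma~\ref{lem:deco}, there exist scalars $\delta_i$ with $\delta_1\delta_2 = 1$ and unitaries $W_i$ such that $\tilde X_i = \delta_i X_i W_i$. Performing an internal gauge transformation $X_i \mapsto X_i V_i$, $Y_i \mapsto V_i^\dagger Y_i$ with $V_i$ determined by a fixed-point equation whose closure is guaranteed because iterating the dagger-plus-$T_g$ operation twice returns the original decomposition up to the scalar $\sigma_g = T_g T_g^\ast$ (a sign which the $\delta_i$ are free to absorb), produces a new decomposition for which $\tilde X_i = X_i$ and $\tilde Y_i = Y_i$ tautologically; this is exactly~\eqref{eq:modif_XY}.

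\textbf{Pleasant properties.} All the identities of Theorem~\ref{thm:mpu}, Corollary~\ref{cor:mpu} and Proposition~\ref{prop:MPUsplus} are intrinsic to the three-legged decomposition and follow from simplicity and CFII alone, both of which are preserved by internal gauge transformations on the middle legs. Hence the unitarity of the gates $u, v, w_L, w_R$ in~\eqref{eq:uv}--\eqref{eq:wLR} and the projection-type identities~\eqref{eq:thm1c},~\eqref{eq:MPUnice2}--\eqref{eq:MPUnice4} transfer automatically to the new decomposition.

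\textbf{Proof of~\eqref{eq:UUU} and main obstacle.} I would stack $U_g^L$ above $U_g^\infty$ on the PBC chain. In the bulk of the overlap region, each pair of adjacent $\mc U_g$ layers can be processed using $\mc U_g = T_g \mc U_g^\dagger T_g^\dagger$ together with the CFII identity~\eqref{eq:cfii}, so after absorbing $\rho$'s into the virtual bonds and cancelling physical legs (which collapse since $U_g^2 = \I$), only the virtual loop survives. At the two boundaries, the tensors $Y_2^g$ and $Y_1^g$ of $U_g^L$ meet the corresponding $\mc U_g$'s of $U_g^\infty$; applying~\eqref{eq:modif_XY} converts these boundary contractions into ordinary MPU tensors of $U_g^{\infty-L}$ plus two decorations $T_g$ and $T_g^\dagger$ dangling on the surviving virtual loop. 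Sliding both decorations along the loop until they collide, they fuse via~\eqref{eq:intro_sigma} as $T_g T_g^\ast = \sigma_g \I$, producing the overall factor $\sigma_g$ in front of $U_g^{\infty - L}$. The main obstacle is the careful orientational tracking of the two $T_g$ insertions: depending on which boundary each originates from, one naturally meets a complex-conjugated rather than transposed copy, and showing that the combination collapses to the gauge-invariant $\sigma_g$ rather than an orientation-dependent phase is precisely where the consistency of the conventions in Section~\ref{sec:assumptions} and the placement of daggers in~\eqref{eq:modif_XY} becomes essential.
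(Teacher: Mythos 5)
Your opening move---defining the candidate $X_1$, $Y_1$ as the $T_g$-decorated daggers of $Y_2$, $X_2$---is the same as the paper's, but the way you then try to install these candidates breaks down. The condition \eqref{eq:modif_XY} is \emph{one-directional}: it expresses the tensors $X_1^{\inv g},Y_1^{\inv g}$ in terms of $Y_2^{g},X_2^{g}$ and imposes nothing in the reverse direction. The paper therefore simply keeps $X_2,Y_2$ and \emph{redefines} $X_1,Y_1$ by that formula, so \eqref{eq:modif_XY} holds by construction and no fixed point is needed. You instead dagger \emph{both} decompositions and demand $\tilde X_i=X_i$ for $i=1,2$, which is the symmetric condition $D(\mathcal D)=\mathcal D$ for the dagger-plus-$T_g$ involution $D$. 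Since $D^2=\sigma_g\,\mathrm{id}$ by \eqref{eq:intro_sigma}, a fixed point of $D$ can only exist when $\sigma_g=1$; your parenthetical claim that the sign is ``a scalar which the $\delta_i$ are free to absorb'' is exactly what fails, because $\sigma_g$ is the gauge-invariant $\Z_2$ index of the time-reversal SPT phase (this is the whole reason $T_g$ cannot be gauged to $\I$ when $\sigma_g=-1$). As written, your construction excludes precisely the anomalous case the proposition must cover.

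There is a second, smaller gap. Even if you drop the fixed-point step and just take the candidate $(\tilde X_1,\tilde Y_1)$ together with the original $(X_2,Y_2)$, your appeal to Lemma~\ref{lem:deco} only delivers $\tilde X_1=\delta_1 X_1 W_1$ with $W_1$ unitary and $\delta_1>0$ \emph{not necessarily equal to }$1$ (only $\delta_1\delta_2=1$ is guaranteed). The pleasant properties that are not scale-invariant, e.g.\ $X_1^\dagger X_1\rho=\I$ in \eqref{eq:thm1c} and the normalizations in \eqref{eq:MPUnice3}--\eqref{eq:MPUnice4}, would acquire a factor $\delta_1^2$. The paper closes this by computing the $u$ gate explicitly as $\sigma(K^\dagger\otimes K)v^\dagger$ with $K=\tilde X_1^\dagger X_1$, from which $KK^\dagger\otimes K^\dagger K=\I$ and a trace argument force $K$ to be exactly unitary; you need some version of this computation. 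Your sketch of \eqref{eq:UUU} (sliding the two $T_g$ insertions around the virtual loop and fusing them via $T_gT_g^\ast=\sigma_g\I$) is in the right spirit---the paper leaves this as an exercise---but you correctly identify the orientation bookkeeping as the delicate point and then do not resolve it, so that part remains a sketch rather than a proof.
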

\begin{proof}
From the simplicity of the tensor and its adjoint, and from the results in Section \ref{sec:MPUsstr}, we know that there exists a decomposition
\begin{equation}
\begin{tikzpicture}
            \draw[thick,red] (-0.4, 0) -- (0.4, 0);
            \draw[thick] (0, -0.4) -- (0, 0.4);
            \node[mpo]  at (0,0) {};
        \end{tikzpicture}
        = 
        \begin{tikzpicture}
            \def\dy{0.2}
            \def\dz{0.3}
            \draw[thick,red] (-0.35, -\dy) -- (0, -\dy);
            \draw[thick,red] (0, \dy) -- (0.35, \dy);
            \draw[thick,decorate] (0, -\dz) -- (0, \dz);
            \draw[thick] (0,-\dz)--(0,-1.5*\dz);
            \draw[thick] (0,\dz)--(0,1.5*\dz);
        
            \node[tensor, fill = white]  at (0,\dy) {};
            \node[tensor, fill = white]  at (0,-\dy) {};

            \node[] at (-0.3, \dy) {$\tilde X_1$};
            \node[] at (0.3, -\dy) {$\tilde Y_1$};
        \end{tikzpicture}
        =
         \begin{tikzpicture}
            \def\dy{0.2}
            \def\dz{0.3}
            \draw[thick,red] (0.35, -\dy) -- (0, -\dy);
            \draw[thick,red] (0, \dy) -- (-0.35, \dy);
            \draw[ultra thick] (0, -\dz) -- (0, \dz);
            \draw[thick] (0,\dz)--(0,1.5*\dz);
            \draw[thick] (0,-\dz)--(0,-1.5*\dz);
        
            \node[tensor]  at (0,\dy) {};
            \node[tensor]  at (0,-\dy) {};

            \node[] at (0.3, \dy) {$X_2$};
            \node[] at (-0.3, -\dy) {$Y_2$};
        \end{tikzpicture}\ ,
\end{equation}
such that all the {\it pleasant properties} hold (we have added a tilde to $\tilde X_1, \tilde Y_1$ to differentiate them from the tensors we are about to define). By assumption, the tensor can also be decomposed as 
\begin{equation}
\begin{tikzpicture}
            \draw[thick,red] (-0.4, 0) -- (0.4, 0);
            \draw[thick] (0, -0.4) -- (0, 0.4);
            \node[mpo]  at (0,0) {};
        \end{tikzpicture}
        = 
\begin{tikzpicture}
            \draw[thick,red] (-0.6, 0) -- (0.6, 0);
            \draw[thick] (0, -0.4) -- (0, 0.4);

            \node[mpo]  at (0,0) {};
            \node[tensor, red] at (0.4,0) {};
            \node[tensor, red] at (-0.4,0) {};
            \node[] at (0.2, 0.1) {$^{\dagger}$};
            \node[red] at (-0.45, -0.25) {$^{\phantom\dagger}T$};
            \node[red] at (0.45, -0.25) {$T^{\dagger}$};
        \end{tikzpicture}        
        =
\begin{tikzpicture}
            \def\dy{0.2}
            \def\dz{0.3}
            \draw[thick,red] (-0.6, -\dy) -- (0, -\dy);
            \draw[thick,red] (0, \dy) -- (0.6, \dy);
            \draw[ultra thick] (0, -\dz) -- (0, \dz);
            \draw[thick] (0,\dz)--(0,1.5*\dz);
            \draw[thick] (0,-\dz)--(0,-1.5*\dz);
        
            \node[tensor]  at (0,\dy) {};
            \node[tensor]  at (0,-\dy) {};
            \node[tensor, red] at (0.4,\dy) {};
            \node[tensor, red] at (-0.4,-\dy) {};

            \node[] at (-0.35, \dy+0.15) {$Y^\dagger_2$};
            \node[] at (0.35, -\dy-0.15) {$X^\dagger_2$};
            
            \node[red] at (-0.45, -\dy-0.25) {$^{\phantom\dagger}T$};
            \node[red] at (0.45, \dy+0.25) {$T^{\dagger}$};
\end{tikzpicture}\ ,
\end{equation}
thus we define our candidate decomposition as $X_1 \equiv (\mathds{1}\otimes T^*) Y_2^\dagger$, $Y_1 \equiv X_2^\dagger(\mathds{1}\otimes T^T)$. Because of the left (resp. right) invertibility of $X_1$, $\tilde X_1$ (resp. $Y_1$, $\tilde Y_1$) we know there exists an invertible matrix $K$ such that 
\begin{equation}
  \tilde Y_1 = K  Y_1,\qquad  \tilde X_1 = X_1 K^{\dagger}.
\end{equation}
Indeed, we only need to define
\begin{equation}
    K= \tilde X_1^{\dagger} X^{\phantom{1}}_1 = ^{\phantom{1}} \tilde Y_1  Y_1^{\dagger}.
\end{equation}
Now we can use
\begin{equation}
    Y_1 = K(\mathds{1}\otimes T^T)X_2^\dagger,\qquad Y_2 = K^\dagger(\mathds{1}\otimes T^*)\tilde X_1^\dagger,
\end{equation}
to check
\begin{equation}
    u = 
    \begin{tikzpicture}
    \def\dx{0.3}
    \def\dy{0.4}
    \draw[thick] (-\dx,-\dy) --++ (0, \dy);
    \draw[thick] (\dx,-\dy) --++ (0, \dy);
    \draw[ultra thick] (-\dx,0)--++(0,\dy);
    \draw[thick, decorate] (\dx,0)--++(0,\dy);
    \draw[thick,red] (-\dx,0) --++ (2*\dx, 0);
    \node[tensor] at (-\dx, 0) {};
    \node[tensor, fill = white] at (\dx, 0) {};
    \node[] at (\dx +0.3, 0) {$\tilde Y_1$};
    \node[] at (-\dx -0.3, 0) {$Y_2$};
    \end{tikzpicture}=
    \begin{tikzpicture}
    \def\dx{0.5}
    \def\dy{0.4}
    \def\dd{0.3}
    \draw[thick] (-\dx,-\dy) --++ (0, \dy);
    \draw[thick] (\dx,-\dy) --++ (0, \dy);
    \draw[ultra thick] (-\dx,\dy) --++(0,\dd);
    \draw[thick,decorate] (-\dx,0)--++(0,\dy);
    \draw[ultra thick] (\dx,0)--++(0,\dy);
    \draw[thick, decorate] (\dx,\dy)--++(0,\dd);
    \draw[thick,red] (-\dx,0) --++ (2*\dx, 0);
    \node[tensor] at (\dx, 0) {};
    \node[tensor, fill = white] at (-\dx, 0) {};
    \node[tensor, red] at (-\dx/3, 0) {};
    \node[tensor, red] at (\dx/3, 0) {};
    \node[tensor, fill = blue] at (-\dx, \dy) {};
    \node[tensor, fill = blue] at (\dx, \dy) {};
    \node[red] at (\dx/3,0.3) {$^{\phantom{*}}T$};
    \node[red] at (-\dx/3,0.3) {$T^*$};
    \node[blue] at (\dx+0.2, \dy+0.2) {$K$};
    \node[blue] at (-\dx-0.25, \dy+0.2) {$K^\dagger$};
    \node[] at (\dx +0.3, 0) {$X_2^\dagger$};
    \node[] at (-\dx -0.3, 0) {$\tilde X_1^\dagger$};
    \end{tikzpicture}
    =\sigma (K^\dagger\otimes K)v^\dagger,
\end{equation}
which proves that $K^\dagger\otimes K$ is unitary and hence so is $K$, thanks to the unitarity of the $u$, $v$ gates defined from $\{\tilde X_1, \tilde Y_1, X_2, Y_2\}$. Thus, the gates built from $X_1$, $Y_1$ will be unitary, since they differ by $K$ from the ones defined from $\tilde X_1$, $\tilde Y_1$, and all the other {\it pleasant properties} follow similarly. Finally, Eq. \eqref{eq:UUU} can be checked by writing the associated tensor network diagram, which we leave to the reader as an exercise.
\end{proof}

\section{Proof of Proposition \ref{prop:unitarity_and_gauges}}
\label{app:proofprops}

In this appendix, we will give a proof of Proposition~\ref{prop:unitarity_and_gauges}, which if restricted to the injective case gives Proposition~\ref{prop:3}. First, we will show that for any gauge of the fusion and action tensors, Eqs.~\eqref{eq:fusop1}-\eqref{eq:fusop2} hold with the corresponding $L$-symbols for said gauge, and the $\lambda$ operators are unitary up to a scalar. Then, in the second part, we will show that a gauge can be found where this scalar is one, and all other properties listed in the proposition, depending on the particular situation, together with the general assumptions of Section \ref{sec:assumptions}, hold.

\begin{lemma}
    With our assumptions for the MPU group representation permuting injective MPS, the following identities hold 
   \begin{subequations}
       \begin{equation}
           \begin{tikzpicture}
           \def\dx{0.5}
           \def\ddx{0.3}
           \def\dddx{0.2}
           \def\dya{0.4}
           \def\dyb{0.8}
           \def\dyc{1.0}
           \draw[thick] (-\ddx,0) --  (3*\dx+\ddx,0);
           \draw[thick, red] (0,\dyb) --  (3*\dx+\dddx,\dyb);
           \draw[thick, red] (2*\dx,\dya) --  (3*\dx+\dddx,\dya);
           \draw[thick] (\dx,0) --++ (0,\dyc);
           \draw[thick] (3*\dx,0) --++  (0,\dyc);
            \node[tensor] at (\dx, 0) {};
           \node[tensor] at (3*\dx,0) {};
           \node[mpo] at (\dx, \dyb) {};
           \node[mpo] at (3*\dx, \dyb) {};
           \node[mpo] at (3*\dx, \dya) {};
           \node[] at (\dx,-0.25) {$hx$};
           \node[] at (3*\dx,-0.25) {$x$};
           \pic[] at (0, 0) {actL=\dyb//g/};
           \pic[] at (2*\dx, 0) {actL=\dya//h/};
           \end{tikzpicture}
           =
           L^x_{g,h}\;
           \begin{tikzpicture}
           \def\dx{0.5}
           \def\ddx{0.3}
           \def\dddx{0.2}
           \def\dya{0.4}
           \def\dyb{0.8}
           \def\dyc{1.0}
           \draw[thick] (-\ddx,0) --  (3*\dx+\ddx,0);
           \draw[thick, red] (0,0.5*\dya+0.5*\dyb) --  (\dx,0.5*\dya+0.5*\dyb);
           \draw[thick, red] (\dx,\dyb) --  (3*\dx+\dddx,\dyb);
           \draw[thick, red] (\dx,\dya) --  (3*\dx+\dddx,\dya);
           \draw[thick] (2*\dx,0) --++ (0,\dyc);
           \draw[thick] (3*\dx,0) --++  (0,\dyc);
            \node[tensor] at (2*\dx, 0) {};
           \node[tensor] at (3*\dx,0) {};
           \node[mpo] at (2*\dx, \dyb) {};
           \node[mpo] at (3*\dx, \dyb) {};
           \node[mpo] at (3*\dx, \dya) {};
           \node[mpo] at (2*\dx, \dya) {};
           \node[] at (2*\dx,-0.25) {$x$};
           \node[] at (3*\dx,-0.25) {$x$};
           \pic[] at (0, 0) {actL=0.5*\dya+0.5*\dyb//gh/};
           \pic[] at (\dx, \dya) {fusL=\dyb-\dya//g/h};
           \end{tikzpicture},
       \end{equation}
       \begin{equation}
           \begin{tikzpicture}
           \def\dx{-0.5}
           \def\ddx{-0.3}
           \def\dddx{-0.2}
           \def\dya{0.4}
           \def\dyb{0.8}
           \def\dyc{1.0}
           \draw[thick] (-\ddx,0) --  (3*\dx+\ddx,0);
           \draw[thick, red] (0,\dyb) --  (3*\dx+\dddx,\dyb);
           \draw[thick, red] (2*\dx,\dya) --  (3*\dx+\dddx,\dya);
           \draw[thick] (\dx,0) --++ (0,\dyc);
           \draw[thick] (3*\dx,0) --++  (0,\dyc);
            \node[tensor] at (\dx, 0) {};
           \node[tensor] at (3*\dx,0) {};
           \node[mpo] at (\dx, \dyb) {};
           \node[mpo] at (3*\dx, \dyb) {};
           \node[mpo] at (3*\dx, \dya) {};
           \node[] at (\dx,-0.25) {$hx$};
           \node[] at (3*\dx,-0.25) {$x$};
           \pic[] at (0, 0) {actR=\dyb/g//};
           \pic[] at (2*\dx, 0) {actR=\dya/h//};
           \end{tikzpicture}
           =
           \dfrac{1}{L^x_{g,h}}\;
           \begin{tikzpicture}
           \def\dx{-0.5}
           \def\ddx{-0.3}
           \def\dddx{-0.2}
           \def\dya{0.4}
           \def\dyb{0.8}
           \def\dyc{1.0}
           \draw[thick] (-\ddx,0) --  (3*\dx+\ddx,0);
           \draw[thick, red] (0,0.5*\dya+0.5*\dyb) --  (\dx,0.5*\dya+0.5*\dyb);
           \draw[thick, red] (\dx,\dyb) --  (3*\dx+\dddx,\dyb);
           \draw[thick, red] (\dx,\dya) --  (3*\dx+\dddx,\dya);
           \draw[thick] (2*\dx,0) --++ (0,\dyc);
           \draw[thick] (3*\dx,0) --++  (0,\dyc);
            \node[tensor] at (2*\dx, 0) {};
           \node[tensor] at (3*\dx,0) {};
           \node[mpo] at (2*\dx, \dyb) {};
           \node[mpo] at (3*\dx, \dyb) {};
           \node[mpo] at (3*\dx, \dya) {};
           \node[mpo] at (2*\dx, \dya) {};
           \node[] at (2*\dx,-0.25) {$x$};
           \node[] at (3*\dx,-0.25) {$x$};
           \pic[] at (0, 0) {actR=0.5*\dya+0.5*\dyb/gh//};
           \pic[] at (\dx, \dya) {fusR=\dyb-\dya/g/h/};
           \end{tikzpicture}.
       \end{equation}
   \end{subequations}
   \label{lemma:action_L}
\end{lemma}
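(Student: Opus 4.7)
The plan is to reduce both identities to the defining relation \eqref{eq:defL} of the $L$-symbols by local tensor-network manipulations.

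For the first identity, I begin with the left-hand side and use the exterior action relation \eqref{eq:action_exterior} to push the two left-action tensors inward: this replaces the boundary action tensors for $g$ and for $h$ by full two-layer MPU regions supported on the bulk sites, acting on the MPSs labeled $hx$ and $x$. Next, in the region where both the $g$ and $h$ MPU layers overlap, I apply the fusion relation \eqref{eq:fusion_1} to merge the two layers into a single $gh$ MPU, producing $F^<_{g,h}$ on the left and $F^>_{g,h}$ on the right of the merged segment. Finally, the interior action relation \eqref{eq:action_interior} allows me to absorb the merged $gh$ MPU back into a $gh$-transformed MPS, leaving a single left-action tensor for $gh$ together with an extra fusion tensor at the interface. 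Identifying this fusion-plus-action configuration with its canonical $gh$ left-action form is precisely the content of the defining equation \eqref{eq:defL}, which contributes the scalar $L^x_{g,h}$.

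For the second identity, the argument is mirror-symmetric: right-action tensors replace the left-actions, and \eqref{eq:fusion_2} is used in place of \eqref{eq:fusion_1}, so that $F^>_{g,h}$ appears on the right boundary of the fused region. The inverse $1/L^x_{g,h}$ arises (rather than $L^x_{g,h}$) because \eqref{eq:defL} places the $L$-symbol on the unfused side of the comparison between right-action configurations; here the fusion is implemented on the side opposite to where the analogous rewriting in the first identity took place, so rearranging back to the standard form inverts the accumulated prefactor. The consistency of the two signs can also be read off from the scalar gauge freedoms \eqref{eq:scalar_fus_ten} and \eqref{eq:scalar_act_ten}, under which $F^<_{g,h}$ and $F^>_{g,h}$ transform reciprocally.

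The main obstacle will be the careful bookkeeping of fusion-tensor orientations, action-tensor positions, and accumulating scalar factors across the chain of rewrites, together with verifying that nothing further than the defining relation \eqref{eq:defL} is needed to close the argument. The nilpotency-length-one assumption made in Section~\ref{sec:MPUonMPS} is essential, as it ensures that the fusion relations \eqref{eq:fusion_1}--\eqref{eq:fusion_2} and the action relations \eqref{eq:action_exterior}--\eqref{eq:action_interior} can all be applied locally on the few sites depicted in the lemma, without having to propagate across the remainder of the chain. The one-sided invertibility of the fusion and action tensors (from \cite{semiinjective}) is then what allows cancellation of the bulk portions of the diagrams and extraction of the scalar factor as cleanly as claimed.
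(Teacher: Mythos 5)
Your high-level strategy---reduce everything to the defining relation \eqref{eq:defL} by diagrammatic rewriting and strip off auxiliary tensors using invertibility---is the same as the paper's, but the specific chain of moves you describe does not close. The relations \eqref{eq:action_exterior}, \eqref{eq:fusion_1} and \eqref{eq:action_interior} are statements about regions delimited by \emph{both} of their interface tensors (a matched pair $A^{\ammaG}_{g,x}$, $A^{\Gamma}_{g,x}$, or a matched pair $F^{<}_{g,h}$, $F^{>}_{g,h}$). The diagrams in the lemma are open: each action tensor appears without its partner and the MPU virtual legs dangle on one side. You therefore cannot ``replace the boundary action tensors by full two-layer MPU regions'', nor later ``absorb the merged $gh$ MPU back into a $gh$-transformed MPS'', as local moves---each such step presupposes the complementary boundary tensor that is not in the diagram. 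Making these relations applicable is precisely why the paper's proof first attaches an additional MPS tensor, repositions the action tensors through the padded sites, and at the very end removes the padding using the inverse of the \emph{injective MPS tensor}; your appeal to the one-sided invertibility of the fusion and action tensors is not the cancellation that is actually needed.

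The second, more serious issue is where the scalar $L^x_{g,h}$ enters. If your chain [expand $\to$ fuse $\to$ reabsorb] consisted of exact equalities, it would turn the left-hand side into the fusion-plus-single-action configuration with \emph{no} prefactor, contradicting the statement; the scalar can only come from \eqref{eq:defL} itself (or from an injectivity argument \`a la Lemma~\ref{lemma:1} producing an undetermined constant that must then be identified), so it cannot appear only as a final ``identification with the canonical form'' after the rewriting is already complete. Moreover, \eqref{eq:defL} is formulated for the tensors $A^{\ammaG}$ and $F^{>}$: rearranged, it is essentially the \emph{second} identity, which is why the factor there is $1/L^x_{g,h}$. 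The first identity concerns $A^{\Gamma}$ and $F^{<}$ and carries the reciprocal power $L^x_{g,h}$; this inversion is not a matter of ``which side the fusion is implemented on'' but is forced either by converting the $A^{\Gamma}$ configuration into an $A^{\ammaG}$ one through padded MPS tensors before \eqref{eq:defL} can be invoked (the paper's route), or by the constraint that the two local scalars must multiply to one so as to be consistent with the scalar-free global relations \eqref{eq:fusion_1} and \eqref{eq:action_exterior}. As written, your argument would assign the same power of $L^x_{g,h}$ to both identities, or none at all.
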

\begin{proof}
    These are merely slightly stronger versions of the equations that define the $L$-symbols (see \eqref{eq:defL}). To prove the first one, we first attach an additional MPS tensor to the left, move the leftmost action tensor through two MPS tensors to the left, the second action tensor through the corresponding three MPS tensors, and finally apply the definition of the $L$-symbols. Finally, we remove the tensor we added by applying its inverse which exists because of injectivity. The second equation follows analogously.
\end{proof}
Eqs.~\eqref{eq:fusop1}-\eqref{eq:fusop2} can then be seen to hold as a corollary of this lemma, using the definition of the fusion operators and Eq.~\eqref{eq:thm1c}, while Eq.~\eqref{eq:simplecases} is an easy consequence of the assumptions made in Section \eqref{sec:assumptions}.
Next, we observe that, also in any gauge,
\begin{equation}
    \left(\lambda^{L}_{h^{-1},g^{-1}}\otimes\lambda^R_{g,h}\right)U^{L-2}_{g}U^{L}_{h}=U^{L}_{gh},
\end{equation}
where the relative location of the operators is given as:
\begin{equation}
    \begin{tikzpicture}
        \tikzset{decoration={snake,amplitude=.4mm,segment length=2mm, post length=0mm,pre length=0mm}}

            \draw[ultra thick] (-0.3,0.3)--++(0.0,0.4);
            \draw[thick] (0.3,0.3)--++(0.0,0.4);
            \draw[ultra thick] (-0.3,-0.3)--++(0.0,-1.0);
            \draw[ultra thick] (0.3,-0.3)--++(0.0,-0.4);
            
            \node[rectangle, fill=white] at (0,0) {$\lambda^L_{g,h}$};

            \draw[thick,red] (-0.3,-1.2)--++(0.9,0.0);
            \draw[thick,red] (0.3,-0.8)--++(0.3,0.0);
            \draw[thick] (0.3,-0.8)--++(0.0,-0.7);
            \draw[thick] (-0.3,-1.2)--++(0.0,-0.3);
            
            \node[tensor] at (-0.3,-1.2) {};
            \node[tensor] at (0.3,-0.8) {};
            \node[tensor, fill=white] at (0.3, -1.2) {};

            \node[] at (0.9,-0.8) {$\ldots$};
            \node[] at (0.9,-1.2) {$\ldots$};

            \draw[thick,red] (1.2,-1.2)--++(0.9,0.0);
            \draw[thick,red] (1.2,-0.8)--++(0.3,0.0);

            \draw[thick] (1.5,-0.8)--++(0.0,-0.7);
            \draw[thick] (2.1,-1.2)--++(0.0,-0.3);
            \draw[decorate,thick] (1.5,-0.8)--++(0.0,0.5);
            \draw[decorate,thick] (2.1,-1.2)--++(0.0,0.9);
            \draw[thick] (1.5,0.3)--++(0.0,0.4);
            \draw[decorate,thick] (2.1,0.3)--++(0.0,0.4);
            \node[rectangle, fill=white] at (1.8,0.0) {$\lambda^R_{g,h}$};

            \node[tensor,fill=white] at (1.5,-1.2) {};
            \node[tensor,fill=white] at (2.1,-1.2) {};
            \node[tensor, fill=white] at (1.5,-0.8) {};

            \node[irrep] at (0.1,-0.9) {$g$};
            \node[irrep] at (-0.5,-1.3) {$h$};

            \node[] at (2.6,-1.2) {$=$};

            \draw[red, thick] (3.0,-1.2)--++(0.9,0.0);

            \draw[ultra thick] (3.0,-1.2)--++(0.0,0.3);
            \draw[decorate, thick] (5.4,-1.2)--++(0.0,0.3);
            \draw[thick] (3.0,-1.2)--++(0.0,-0.3);
            \draw[thick] (5.4,-1.2)--++(0.0,-0.3);
            \draw[thick] (3.6,-1.5)--++(0.0,0.6);
            \draw[thick] (4.8,-1.5)--++(0.0,0.6);
            
            \node[tensor] at (3.0,-1.2) {};
            \node[tensor, fill=white] at (3.6,-1.2) {};
            \node[] at (4.2,-1.2) {$\ldots$};
            \draw[red, thick] (4.5,-1.2)--++(0.9,0.0);
            \node[tensor, fill=white] at (4.8,-1.2) {};
            \node[tensor, fill=white] at (5.4,-1.2) {};
            \node[irrep] at (5.1,-1.2) {$gh$};

            \node[] at (5.6,-1.2) {$.$};
    \end{tikzpicture}
\end{equation}
This implies the unitarity of $\lambda^{L}_{h^{-1},g^{-1}}\otimes\lambda^R_{g,h}$ for all $g,h$, and thus that of $\lambda^R_{g,h}, \lambda^{L}_{g,h}$ up to scalars. To compute this scalar, we take the trace:
\begin{align}\dfrac{1}{d^2}
    \begin{tikzpicture}
        \def\dx{0.3}
        \def\dya{0.4}
        \def\dyb{1.0}
        \def\dd{0.2}
        \draw[thick, decorate] (\dx,-\dyb) --++ (0,2*\dyb);
        \draw[thick, decorate] (-\dx,-\dya) --++ (0,2*\dya);
        \draw[thick] (-\dx,-\dyb) -- (-\dx,-\dya);
        \draw[thick] (-\dx,\dya) -- (-\dx,\dyb);
        \draw[fill=white] (-\dx-0.1,\dya-\dd) rectangle (\dx+0.1,\dya+\dd);
        \draw[fill=white] (-\dx-0.1,-\dya-\dd) rectangle (\dx+0.1,-\dya+\dd);
        \pic[] at (-\dx,\dyb-0.1) {ddash};
        \pic[] at (-\dx,-\dyb+0.1) {ddash};
        \pic[] at (\dx,\dyb-0.1) {ddash};
        \pic[] at (\dx,-\dyb+0.1) {ddash};
        \node[] at (-\dx-0.5, \dya) {$\lambda^R_{g,h}$};
        \node[] at (-\dx-0.5, -\dya) {${\lambda^R_{g,h}}^\dagger$};
    \end{tikzpicture}
=\dfrac{1}{d^2}
    \begin{tikzpicture}
        \def\dxa{0.8};
        \def\dxb{1.8};
        \def\dxc{2.3};
        \def\dya{0.6};
        \def\dyb{1.2};
        \def\dyc{1.8};
        \def\dyd{2.2};
        \draw[thick] (0,\dyb) -- (0,\dyd);
        \draw[thick] (\dxa,\dya) -- (\dxa,\dyc);
        \draw[thick, decorate] (0,-\dyb) -- (0,\dyb);
        \draw[thick, decorate] (\dxa,-\dya) -- (\dxa,\dya);
        \draw[thick, decorate] (\dxa,\dyc) -- (\dxa,\dyd);
        \draw[thick, decorate] (\dxa,-\dyc) -- (\dxa,-\dyd);
        \draw[thick] (0,-\dyb) -- (0,-\dyd);
        \draw[thick] (\dxa,-\dya) -- (\dxa,-\dyc);
        \draw[thick, red] (\dxa,\dya) -- (\dxb,\dya);
        \draw[thick, red] (0,\dyb) -- (\dxb,\dyb);
        \draw[thick, red] (\dxa,\dyc) -- (\dxc,\dyc);
        \draw[thick, red] (\dxb,\dya/2 +\dyb/2) -- (\dxc,\dya/2 +\dyb/2 );
        \draw[thick, red] (\dxa,-\dya) -- (\dxb,-\dya);
        \draw[thick, red] (0,-\dyb) -- (\dxb,-\dyb);
        \draw[thick, red] (\dxa,-\dyc) -- (\dxc,-\dyc);
        \draw[thick, red] (\dxb,-\dya/2 -\dyb/2) -- (\dxc,-\dya/2-\dyb/2);
        \draw[thick, red] (\dxc,0.5*\dya+0.5*\dyb) -- (\dxc,\dyc);
        \draw[thick, red] (\dxc,-0.5*\dya-0.5*\dyb) -- (\dxc,-\dyc);
        \pic[] at (\dxb, \dya) {fusR=\dyb-\dya/g/h/};
        \pic[] at (\dxb, -\dya) {fusRdag=-\dyb+\dya/\inv{g}/\inv{h}/};
        %
        \node[mpo] at (\dxa,\dyb) {};
        \node[mpo] at (\dxa,-\dyb) {};
        \node[square, fill=white] at (\dxc,.25*\dya+.25*\dyb+.5*\dyc) {};
        \node[square, fill=white] at (\dxc,-.25*\dya-.25*\dyb-.5*\dyc) {};
        \node[] at (\dxc+0.4,.25*\dya+.25*\dyb+.5*\dyc) {$\rho_{gh}$};
        \node[] at (\dxc+0.4,-.25*\dya-.25*\dyb-.5*\dyc) {$\rho_{gh}$};
        \node[tensor, fill=white] at (0,\dyb) {};
        \node[tensor, fill=white] at (\dxa,\dya) {};
        \node[tensor, fill=white] at (\dxa,\dyc) {};
        \node[tensor, fill=white] at (0,-\dyb) {};
        \node[tensor, fill=white] at (\dxa,-\dya) {};
        \node[tensor, fill=white] at (\dxa,-\dyc) {};
        \pic[] at (0,\dyd-0.1) {ddash};
        \pic[] at (0,-\dyd+0.1) {ddash};
        \pic[] at (\dxa,\dyd-0.1) {ddash};
        \pic[] at (\dxa,-\dyd+0.1) {ddash};
        \node[] at (-0.4, \dyb) {$X_1^{g}$};
        \node[] at (-0.4, -\dyb) {${X_1^{g}}^\dagger$};
        \node[] at (\dxa-0.4, \dya) {$X_1^{h}$};
        \node[] at (\dxa-0.4, \dyc) {${X_1^{gh}}^\dagger$};
        \node[] at (\dxa-0.4, -\dya) {${X_1^{h}}^\dagger$};
        \node[] at (\dxa-0.4, -\dyc) {$X_1^{gh}$};
        \node[] at (\dxa+0.18, -\dyb+0.15) {$\dagger$};
    \end{tikzpicture}\nonumber\\
    =\dfrac{1}{d^2}\dfrac{\sigma_{g}\sigma_{h}}{\zeta_{\inv{h},\inv{g}}}\begin{tikzpicture}
        \def\dxa{0.5};
        \def\dxb{1.0};
        \def\dxc{1.7};
        \def\dxd{2.3};
        \def\dxe{3.0};
        \def\dxf{3.9};
        \def\dya{0.3};
        \def\dyb{0.6};
        \def\dyc{0.9};
        \def\dyd{1.5};
        \def\dye{2.0};
        \draw[white] (-0.3,0) --++ (0,\dya); 
        \draw[thick, red] (0,-\dyd) -- (0,\dyd);
        \draw[thick, red] (\dxa,-\dyc) -- (\dxa,\dyc);
        \draw[thick, red] (\dxb,-\dya) -- (\dxb,\dya);
        \draw[thick] (\dxc,-\dye) -- (\dxc,\dye);
        \draw[thick] (\dxd,-\dye) -- (\dxd,\dye);
        \draw[thick, red] (\dxf,\dyb) -- (\dxf,\dyd);
        \draw[thick, red] (\dxf,-\dyd) -- (\dxf,-\dyb);
        \draw[thick, red] (\dxb,\dya) -- (\dxe,\dya);
        \draw[thick, red] (\dxe,\dyb) -- (\dxf,\dyb);
        \draw[thick, red] (\dxa,\dyc) -- (\dxe,\dyc);
        \draw[thick, red] (0,\dyd) -- (\dxf,\dyd);
        \draw[thick, red] (\dxb,-\dya) -- (\dxe,-\dya);
        \draw[thick, red] (\dxe,-\dyb) -- (\dxf,-\dyb);
        \draw[thick, red] (\dxa,-\dyc) -- (\dxe,-\dyc);
        \draw[thick, red] (0,-\dyd) -- (\dxf,-\dyd);
        \draw[white] (0,-\dye-0.2) -- (0.1,-\dye-0.2);
        \draw[white] (0,\dye+0.2) -- (0.1,\dye+0.2);
        \pic[] at (\dxe, \dya) {fusR=\dyc-\dya/g/h/};
        \pic[] at (\dxe, -\dyc) {fusR=\dyc-\dya/\inv{h}/\inv{g}/};
        \pic[] at (\dxf, \dyb) {fusR=\dyd-\dyb/\inv{(gh)}/gh/};
        \pic[] at (\dxf, -\dyd) {fusR=\dyd-\dyb/\inv{(gh)}/gh/};
        \pic[] at (0, -\dyd) {fusL=2*\dyd//\inv{(gh)}/gh};
        \pic[] at (\dxa, -\dyc) {fusL=2*\dyc//g/\inv{g}};
        \pic[] at (\dxb, -\dya) {fusL=2*\dya//h/\inv{h}};
        \foreach \y in {-\dyd,-\dyc,-\dya,\dya,\dyc,\dyd}{
            \node[mpo] at (\dxc,\y) {};
            \node[mpo] at (\dxd,\y) {};}
        \pic[] at (\dxc,\dye-0.1) {ddash};
        \pic[] at (\dxc,-\dye+0.1) {ddash};
        \pic[] at (\dxd,\dye-0.1) {ddash};
        \pic[] at (\dxd,-\dye+0.1) {ddash};
    \end{tikzpicture}
    \nonumber\\
    =\dfrac{1}{d^2}\dfrac{\sigma_{g}\sigma_{h}}{\zeta_{\inv{h},\inv{g}}} \dfrac{\omega_{gh,\inv h,\inv g}\,\omega_{gh,\inv{(gh)},gh}}{\omega_{g,h,h^{-1}}} d^2\nonumber\\
    =\dfrac{\sigma_{g}\sigma_{h}}{\sigma_{gh}}\dfrac{1}{\zeta_{\inv{h},\inv{g}}} \dfrac{\omega_{gh,\inv h,\inv g}}{\omega_{g,h,h^{-1}}}>0,
    \label{eq:trlambda}
\end{align}
where $d$ is the physical dimension of the MPU and MPS, and we have used
\begin{equation}
\label{eq:red1}
    \begin{tikzpicture}
     \tikzset{decoration={snake,amplitude=.4mm,segment length=2mm, post length=0mm,pre length=0mm}}

     \def\sx{0.75};
     \def\dx{0.6};
     \def\dy{0.6};

     \draw[thick] (0.0,0.5*\dy) --++ (0.0,0.5*\dy);
     \draw[thick, decorate] (0.0,-0.5*\dy)--++(0.0,\dy);
     \draw[thick] (0.0,-0.5*\dy)--++(0.0,-0.5*\dy);
     \draw[thick,red] (0.0,0.5*\dy) --++ (0.5*\dx,0.0);
     \draw[thick,red] (0.0,-0.5*\dy) --++ (0.5*\dx,0.0);
     \node[tensor, fill=white] at (0.0,0.5*\dy) {};
     \node[tensor, fill=white] at (0.0,-0.5*\dy) {};
     \node[] at (0.0-0.6*\dx, -0.75*\dy) {$X_1^{g\dagger}$};
     \node[] at (0.0-0.5*\dx, 0.75*\dy) {$X_1^g$};
     
     \node[] at (0.0+\sx,0.0) {$=$};

     \draw[thick,decorate] (0.0+2*\sx,0.5*\dy) --++ (0.0,\dy);
     \draw[thick] (0.0+2*\sx,-0.5*\dy)--++(0.0,\dy);
     \draw[thick,decorate] (0.0+2*\sx,-0.5*\dy)--++(0.0,-\dy);
     \draw[thick] (0.0+2*\sx, 1.5*\dy) --++(0.0,0.5*\dy);
     \draw[thick] (0.0+2*\sx, -1.5*\dy) --++(0.0,-0.5*\dy);
     \draw[thick,red] (0.0+2*\sx,0.5*\dy) --++(-0.75*\dx,0.0);
     \draw[thick,red] (0.0+2*\sx,-0.5*\dy) --++ (-0.75*\dx,0.0);
     \draw[thick,red] (0.0+2*\sx-0.75*\dx,0.5*\dy)--++(0.0,-\dy);
     \draw[thick,red] (0.0+2*\sx,1.5*\dy)--++(0.5*\dx,0.0);
     \draw[thick,red] (0.0+2*\sx,-1.5*\dy)--++(0.5*\dx,0.0);

     \node[tensor, fill=white] at (0.0+2*\sx,0.5*\dy) {};
     \node[tensor,fill=white] at (0.0+2*\sx,-0.5*\dy) {};
     \node[tensor, fill=white] at (0.0+2*\sx, 1.5*\dy) {};
     \node[tensor, fill=white] at (0.0+2*\sx, -1.5*\dy) {};
     \node[] at (0.0+2*\sx-0.6*\dx, -1.75*\dy) {$X_1^{g\dagger}$};
     \node[] at (0.0+2*\sx-0.5*\dx, 1.75*\dy) {$X_1^g$};
     \node[] at (0.0+2*\sx+0.5*\dx, -0.75*\dy) {$Y_1^{g\dagger}$};
     \node[] at (0.0+2*\sx+0.5*\dx, 0.75*\dy) {$Y_1^g$};

     \node[] at (0.0+3*\sx,0.0) {$=$};

     \draw[thick] (0.0+4*\sx,-\dy)--++(0.0,2*\dy);
     \draw[thick,red] (0.0+4*\sx-0.75*\dx,0.5*\dy)--++(1.25*\dx,0.0);
     \draw[thick,red] (0.0+4*\sx-0.75*\dx,-0.5*\dy)--++(1.25*\dx,0.0);
     \draw[thick,red] (0.0+4*\sx-0.75*\dx,-0.5*\dy)--++(0.0,\dy);
     \node[mpo] at (0.0+4*\sx,0.5*\dy) {};
     \node[mpo] at (0.0+4*\sx,-0.5*\dy) {};
     \node[] at (0.0+4*\sx+0.25*\dx,-0.25*\dy) {$\dagger$};
     \node[] at (0.0+4*\sx+0.5*\dx,0.75*\dy) {$g$};
     \node[] at (0.0+4*\sx+0.5*\dx,-0.25*\dy) {$g$};

     \node[] at (0.0+5*\sx,0.0) {$=$};

     \draw[thick] (0.0+6*\sx,-\dy)--++(0.0,2*\dy);
     \draw[thick,red] (0.0+6*\sx-0.75*\dx,0.5*\dy)--++(1.75*\dx,0.0);
     \draw[thick,red] (0.0+6*\sx-0.75*\dx,-0.5*\dy)--++(1.75*\dx,0.0);
     \draw[thick,red] (0.0+6*\sx-0.75*\dx,-0.5*\dy)--++(0.0,\dy);
     \node[mpo] at (0.0+6*\sx,0.5*\dy) {};
     \node[mpo] at (0.0+6*\sx,-0.5*\dy) {};
     \node[] at (0.0+6*\sx+0.5*\dx,-0.2*\dy) {$\inv{g}$};
     \node[] at (0.0+6*\sx+0.35*\dx,0.75*\dy) {$g$};
     \node[tensor, red] at (0.0+6*\sx-0.5*\dx,-0.5*\dy) {};
     \node[tensor, red] at (0.0+6*\sx+0.75*\dx,-0.5*\dy) {};
     \node[red] at (0.0+6*\sx-0.5*\dx,-0.9*\dy) {$T_g^\dagger$};
     \node[red] at (0.0+6*\sx+0.75*\dx,-0.9*\dy) {$T_g$}; 

     \node[] at (0.0+7.25*\sx,0.0) {$=\, \sigma_g$};

     \draw[thick] (0.0+8.25*\sx,-\dy)--++(0.0,2*\dy);
     \draw[thick,red] (0.0+8.25*\sx-0.5*\dx,0.5*\dy)--++(1.5*\dx,0.0);
     \draw[thick,red] (0.0+8.25*\sx-0.5*\dx,-0.5*\dy)--++(1.5*\dx,0.0);
     \draw[fusion] (0.0+8.25*\sx-0.5*\dx, -0.5*\dy) -- (0.0+8.25*\sx-0.5*\dx, 0.5*\dy);
     \node[mpo] at (0.0+8.25*\sx,0.5*\dy) {};
     \node[mpo] at (0.0+8.25*\sx,-0.5*\dy) {};
     \node[] at (0.0+8.25*\sx+0.5*\dx,-0.2*\dy) {$\inv{g}$};
     \node[] at (0.0+8.25*\sx+0.35*\dx,0.75*\dy) {$g$};
     \node[tensor, red] at (0.0+8.25*\sx+0.75*\dx,-0.5*\dy) {};
     \node[red] at (0.0+8.25*\sx+0.75*\dx,-0.9*\dy) {$T_g$}; 
    
    \end{tikzpicture}
\end{equation}
as well as Eqs. \eqref{eq:defzeta} and \eqref{eq:wggg}. The resulting quantity is positive since it is the trace of a positive operator.

We now head towards the second part of the proof, showing the existence of a gauge choice that makes the $\lambda$ operators unitary (i.e. the scalar in \eqref{eq:trlambda} equal to 1) while preserving all other conventions. We will first set some notation to ease the computations below, which will involve elements with group indices, labeled $g, h, k, \ldots\in G$ and potentially a block label $x\in\mathsf{X}$, such as
\begin{equation}
    \chi_g, \gamma_g^x, \beta_{g,h}, L^x_{g,h},\ldots .
\end{equation}
Consider a general such function $f_{g_1,\ldots,g_n}^x$. We define
\begin{equation}
    \hat f^x_{g_1,\ldots,g_n} := f^{g_1\cdot\ldots\cdot g_n\cdot x}_{\inv{g_n}\ldots\inv{g_1}}
\end{equation}
which is an involution, $\hat{\hat f}=f$. We also define the coboundary operator for an object with one group index, $f^x_g$,
\begin{equation}
    [df]^x_{g,h}=\dfrac{f^{hx}_gf^x_h}{f^x_{gh}},
\end{equation}
and two group indices, $f^x_{g,h}$,
\begin{equation}
    [df]^x_{g,h,k}=\dfrac{f^{x}_{g, hk}f^{x}_{h,k}}{f^{kx}_{g,h}f^{x}_{gh,x}},
\end{equation}
which in fact satisfies $d(df)=1$. When applied to objects with no block index, these operations are defined similarly, ignoring all block indices (as if $|\mathsf{X}|=1$ with a trivial action). With this notation, we get the highly compact versions of many expressions, for instance, Eqs.~\eqref{eq:inv_omega}, \eqref{eq:Lsymbgauge} and \eqref{eq:omega_and_Ls},
\begin{equation}
    \hat\omega^*\omega = d\zeta,\qquad
    L \rightarrow L \dfrac{\beta}{d\gamma},\qquad
    \omega=dL.
\end{equation}

We will also need the following result on the triviality of positive (generalized) cocycles:

\begin{lemma}
    Let $\Omega^x_{g,h}$ be such that $d\Omega =1$ for $G$ being a finite group. Then $\Omega^{|G|}$ is a coboundary, that is, there exist $\chi^x_g$ such that 
    \begin{equation}
        \Omega^{|G|}=d\chi.
    \end{equation}
    \label{lemma:G_cocycle}
\end{lemma}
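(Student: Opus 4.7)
The plan is to reduce the lemma to the classical fact that the Schur multiplier $H^2(H,\mathbb{C}^\times)$ of any finite group $H$ is annihilated by $|H|$, and then transport this from a stabilizer subgroup back to the action groupoid $G\ltimes\mathsf{X}$. First I would decompose $\mathsf{X}$ into $G$-orbits: since both the cocycle condition $d\Omega=1$ and the desired coboundary equation $\Omega^{|G|}=d\chi$ are local in the orbit, the problem splits, and I may assume that $G$ acts transitively on $\mathsf{X}$. I then fix a basepoint $x_0\in\mathsf{X}$ with stabilizer $H=G_{x_0}\leq G$, and consider the ordinary $2$-cocycle $\psi\in Z^2(H,\mathbb{C}^\times)$ defined by $\psi(h_1,h_2):=\Omega^{x_0}_{h_1,h_2}$; that $\psi$ is genuinely a cocycle on $H$ follows from evaluating $d\Omega=1$ at $x=x_0$ and using that $H$ stabilizes $x_0$, so the factor $\Omega^{kx_0}_{\cdot,\cdot}$ becomes $\Omega^{x_0}_{\cdot,\cdot}$ whenever $k\in H$.

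Next I would invoke the explicit transfer formula on $H$: the $1$-cochain $\phi\colon H\to\mathbb{C}^\times$ given by $\phi(h)=\prod_{k\in H}\psi(h,k)$ satisfies $\delta\phi=\psi^{|H|}$. A direct check using the $\psi$-cocycle identity to rewrite $\psi(h_1h_2,k)$, followed by reindexing the resulting product over $k\in H$, yields this identity in a few lines. Since $\mathbb{C}^\times$ is divisible and $|H|$ divides $|G|$, I can then extract a $1$-cochain $\tilde\phi$ on $H$ with $\delta\tilde\phi=\psi^{|G|}$ simply by taking $|G|/|H|$-th roots (which exist in $\mathbb{C}^\times$).

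Finally I would lift $\tilde\phi$ to an equivariant $1$-cochain $\chi^y_g$ on the action groupoid. Choosing coset representatives $\{t_y\}_{y\in\mathsf{X}}$ with $y=t_yx_0$ and $t_{x_0}=e$, each pair $(g,y)$ factors uniquely as $gt_y=t_{gy}h_{g,y}$ with $h_{g,y}\in H$, and I would set $\chi^y_g=\tilde\phi(h_{g,y})\cdot c^y_g$ for a correction $c^y_g$ assembled from $\Omega$-values involving the coset representatives. The main obstacle is the explicit determination of $c^y_g$ and the verification that $d\chi=\Omega^{|G|}$: this requires several applications of the cocycle identity to absorb the change-of-coset boundary terms, and it is the one genuinely technical computation. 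Conceptually the result is forced by the Morita equivalence of groupoids $G\ltimes\mathsf{X}\simeq H$, which induces the isomorphism $H^2(G\ltimes\mathsf{X},\mathbb{C}^\times)\cong H^2(H,\mathbb{C}^\times)$, so the algebra is guaranteed to close.
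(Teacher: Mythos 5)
Your argument is correct in outline but takes a genuinely different route from the paper. The paper's proof is a three-line regular-representation trick performed directly on the transformation groupoid: it defines operators $X^x_g$ on the group algebra $\C G$ by $X^x_g\ket{h}=\Omega^{\inv{h}x}_{g,h}\ket{gh}$, observes that $d\Omega=1$ makes these an $\Omega$-projective representation, $X^{hx}_gX^x_h=\Omega^x_{g,h}X^x_{gh}$, and takes determinants on the $|G|$-dimensional space to get $\Omega^{|G|}=d\chi$ with $\chi^x_g=\det X^x_g$ --- no orbit decomposition, no stabilizers, no coset bookkeeping. Your route (restrict to the isotropy group $H$ of a basepoint, apply the classical transfer formula $\delta\phi=\psi^{|H|}$ with $\phi(h)=\prod_{k\in H}\psi(h,k)$, then lift back along the Morita equivalence $G\ltimes\mathsf{X}\simeq H$) is sound and even proves something stronger: since $\mathrm{Ind}$ intertwines $\delta$ on $H$ with $d$ on the groupoid (because $h_{g_1,g_2y}h_{g_2,y}=h_{g_1g_2,y}$), you actually get that $\Omega^{|H|}$ is already a coboundary, with $H$ the stabilizer of a point in each orbit --- a sharper exponent than $|G|$, though the paper only needs \emph{some} annihilating exponent for Corollary 2. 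Two caveats. First, the step you defer --- exhibiting the $1$-cochain $\mu$ with $\Omega=(d\mu)\cdot\mathrm{Ind}(\psi)$, equivalently your correction $c^y_g$ --- is the only genuinely technical part of your proof and is comparable in length to the paper's entire argument; it does close (it is essentially the chain-level Shapiro computation the paper itself performs in Lemma~\ref{lemma:h1h2}), but as written your proof is a reduction to an unverified identity rather than a complete proof. Second, a small slip: from $\delta\phi=\psi^{|H|}$ you should take the \emph{power} $\tilde\phi=\phi^{|G|/|H|}$ to get $\delta\tilde\phi=\psi^{|G|}$; taking $|G|/|H|$-th roots goes in the wrong direction.
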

\begin{proof}
    Consider the group algebra $\C G$ as a Hilbert space and define the following operators $X^x_g, g\in G$,
    \begin{equation}
        X^x_g\ket{h}=\Omega^{\inv{h}x}_{g,h}\ket{gh}.
    \end{equation}
    Then it can be checked that $d\Omega=1$ implies $X^{hx}_gX^{x}_h = \Omega^x_{g,h}X^{x}_{gh}$. Thus, by taking the determinants: 
    \begin{equation}
        \left(\Omega^x_{g,h}\right)^{|G|} = \dfrac{\det{X^{hx}_{g}}\det{X^{x}_{g}}}{\det{X^{x}_{gh}}};
    \end{equation}
    and $\Omega^{|G|}=d\chi$ with $\chi^x_g:=\det{X^x_g}$. 
\end{proof}

\begin{corollary}
    Let $\Omega^x_{g,h}>0$ be such that $d\Omega =1$ for $G$ being a finite group. Then there exists $\chi_g>0$ such that $\Omega=d\chi$. In particular, any positive $2$-cocycle of a finite group is cohomologically trivial. Also, if $\Omega\hat\Omega=1$, $\chi_g$ can be chosen such that $\chi\hat\chi=1$.
    \label{cor:positive_trivial}
\end{corollary}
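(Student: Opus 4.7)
The plan is to invoke Lemma~\ref{lemma:G_cocycle} and then extract a suitable positive root. First, I would apply Lemma~\ref{lemma:G_cocycle} to obtain (possibly complex) scalars $\chi^x_g$ with $\Omega^{|G|} = d\chi$. Next, I would pass to absolute values: since $d$ is built from products and quotients, one has $d|\chi| = |d\chi| = |\Omega^{|G|}| = \Omega^{|G|}$, where the last equality uses the positivity of $\Omega$. This already gives a positive coboundary presentation of $\Omega^{|G|}$.

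Since raising to a real positive power commutes with $d$ (again because $d$ only involves products and quotients), defining $\tilde\chi^x_g := |\chi^x_g|^{1/|G|} > 0$ yields
\begin{equation*}
    d\tilde\chi = (d|\chi|)^{1/|G|} = \Omega,
\end{equation*}
proving the first claim. Specializing to $|\mathsf{X}|=1$ and $G$ acting trivially (so the block index disappears) gives the ``in particular'' statement that positive $2$-cocycles of finite groups are cohomologically trivial.

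For the second part, assuming $\Omega\hat\Omega = 1$, I would modify $\tilde\chi$ to
\begin{equation*}
  \chi' \;:=\; \dfrac{\tilde\chi}{\sqrt{\tilde\chi\,\hat{\tilde\chi}}}.
\end{equation*}
Because $\hat{\hat f} = f$, the quantity $\tilde\chi\,\hat{\tilde\chi}$ is $\hat{\,}$-invariant, so $\chi'\hat{\chi'} = (\tilde\chi\hat{\tilde\chi})/(\tilde\chi\hat{\tilde\chi}) = 1$, which is the desired normalization. To see that $d\chi' = \Omega$ still holds, the key input is that $d$ and $\hat{\,}$ commute on cochains (a routine index computation using $(gh)^{-1}=\inv{h}\inv{g}$ and the associativity of the $G$-action on $\mathsf{X}$). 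Granted this, $d(\tilde\chi\hat{\tilde\chi}) = d\tilde\chi \cdot d\hat{\tilde\chi} = d\tilde\chi\cdot \widehat{d\tilde\chi} = \Omega\hat\Omega = 1$, whence $d\sqrt{\tilde\chi\hat{\tilde\chi}} = 1$ and $d\chi' = d\tilde\chi = \Omega$.

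The only potential obstacle is verifying the commutation of $d$ and $\hat{\,}$ and ensuring that all roots are taken over positive reals (so that positivity is preserved at each step); both are bookkeeping rather than conceptual issues, so no substantive difficulty is expected.
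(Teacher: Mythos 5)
Your proposal is correct and follows essentially the same route as the paper: invoke Lemma~\ref{lemma:G_cocycle}, pass to absolute values so that the positive $|G|$-th root gives $\Omega=d\tilde\chi$ with $\tilde\chi>0$, and then renormalize to enforce $\tilde\chi\hat{\tilde\chi}=1$. Note that your $\chi'=\tilde\chi/\sqrt{\tilde\chi\hat{\tilde\chi}}$ is exactly the paper's choice $\sqrt{\tilde\chi/\hat{\tilde\chi}}$, and your verification that $d$ commutes with $\hat{\ }$ just makes explicit a step the paper leaves implicit.
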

\begin{proof}
    From the previous lemma it follows that $\Omega^{|G|}$ is a coboundary, and since $x\mapsto x^{\frac{1}{|G|}}$ is a bijection on $\R_{>0}$, we have $\Omega=d\chi$ with $\chi$ positive. If $\Omega\hat\Omega=1$,
    \begin{equation}
        \Omega=\dfrac{\sqrt{\Omega}}{\sqrt{\hat\Omega}} = d\left(\sqrt{\dfrac{\chi}{\hat\chi}}\right).
    \end{equation}
    where the square roots do not pose any problem, since we are working with positive numbers. Thus, we can redefine $\chi$ to $\sqrt{\chi/\hat\chi}$, which satisfies $\chi\hat\chi=1$.
\end{proof}
As a remark, note that for $|G|<\infty$ these results imply
\begin{equation}
    H^2(G, \C^*)\cong H^2(G, U(1)) \cong H^2(G, \Z_{|G|}),
\end{equation}
with the cyclic group in the last term represented multiplicatively by the $|G|$-th roots of unity. 

Yet another useful lemma is
\begin{lemma}
    Let $\omega$ be a normalized 3-cocycle, and let
    \begin{equation}
        \Xi_{g,h}:=\dfrac{\omega_{\inv h, \inv g, g}}{\omega_{\inv h \inv g, g, h}}=\dfrac{\omega_{\inv h, \inv g, gh}}{\omega_{\inv g, g, h}},\qquad \sigma_g:=\omega_{g,\inv g, g}.
        \label{eq:introXi}
    \end{equation}
    Then $\hat\omega=\omega\,d\Xi$, with $\Xi=\hat \Xi\,d\sigma$.
    \label{lemma:omegaXi}
\end{lemma}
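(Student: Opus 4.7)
The plan is to derive both identities from repeated applications of the 3-cocycle condition $d\omega=1$ combined with the normalization assumption $\omega(g,h,e)=\omega(g,e,h)=\omega(e,g,h)=1$. As a preliminary sanity check, I would verify that the two expressions for $\Xi_{g,h}$ in \eqref{eq:introXi} really agree: applying the cocycle condition to the $4$-tuple $(\inv h, \inv g, g, h)$ and invoking normalization at $\omega(\inv h, e, h)$ yields
\[
\omega(\inv g, g, h)\,\omega(\inv h, \inv g, g) = \omega(\inv h\,\inv g, g, h)\,\omega(\inv h, \inv g, gh),
\]
which is exactly the claimed equality. This warm-up already indicates the right moves for the main statements.

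For the identity $\hat\omega = \omega\,d\Xi$, unfolded as
\[
\omega(\inv k, \inv h, \inv g) = \omega(g,h,k)\,\frac{\Xi_{g,hk}\,\Xi_{h,k}}{\Xi_{g,h}\,\Xi_{gh,k}},
\]
I plan to substitute the defining ratio for each $\Xi$, landing on a quotient of eight $\omega$-factors that must collapse to $\omega(\inv k, \inv h, \inv g)/\omega(g,h,k)$. The necessary cancellations come from applying the cocycle relation to carefully chosen $4$-tuples such as $(\inv k,\inv h,\inv g, g)$, $(\inv k\,\inv h,\inv g, g, h)$, and $(\inv k\,\inv h\,\inv g, g, h, k)$, systematically exploiting the normalization whenever an element meets its inverse adjacently. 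Choosing freely between the two equivalent expressions for each $\Xi$ (depending on whether it sits in the numerator or denominator) should make the pattern of cancellations as transparent as possible.

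For the second identity $\Xi = \hat\Xi\,d\sigma$, equivalently
\[
\frac{\Xi_{g,h}}{\Xi_{\inv h, \inv g}} = \frac{\sigma_g\,\sigma_h}{\sigma_{gh}},
\]
I would again unfold both sides into a product of $\omega$'s evaluated at arguments drawn from $\{g, \inv g, h, \inv h, gh, \inv{gh}\}$, and reduce using the cocycle relation on $4$-tuples like $(g, \inv g, g, h)$, $(g, h, \inv h, \inv g)$ and $(gh, \inv h, \inv g, g)$. The right-hand side points directly at the three $\sigma$-factors, so each one should emerge from a single cocycle identity in which a triple $(a, \inv a, a)$ is cleanly isolated after normalizing away the remaining factors.

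The entire argument reduces to patient bookkeeping; no conceptually new input beyond $d\omega=1$ and normalization is required. The hard part is not any individual manipulation but rather organizing them in an order that makes the cancellations transparent, since a brute-force expansion produces a long product of $\omega$-factors that is easy to mismanage. I expect each identity to require on the order of four to six applications of the cocycle relation, and the cleanest presentation is likely to proceed by induction on the length of the word $(\inv k, \inv h, \inv g)$, reducing each identity to the one obtained by setting one of $g, h, k$ to the identity, for which normalization trivializes the claim.
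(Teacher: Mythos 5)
Your proposal is correct and takes essentially the same route as the paper's proof: both identities are obtained by unfolding the definitions of $\Xi$, $\sigma$, $\hat{(\cdot)}$ and $d$, and then combining a small number of instances of the 3-cocycle condition with the normalization $\omega_{g,h,e}=\omega_{g,e,h}=\omega_{e,g,h}=1$. The paper's exact tuples are $(\inv{k}, \inv{h}, \inv{g}, g)$, $(\inv{k}, \inv{h}\inv{g}, g, h)$, $(\inv{k}\inv{h}\inv{g}, g, h, k)$ for the first identity and $(\inv{h}\inv{g}, g, h, \inv{h}\inv{g})$, $(\inv{h}, h, \inv{h}, \inv{g})$, $(\inv{h}, \inv{g}, g, \inv{g})$, $(g, \inv{g}, g, \inv{g})$ for the second; your hedged lists differ in a couple of entries, but since you permit switching between the two equivalent expressions for each $\Xi$ (each switch being itself a cocycle identity, as your warm-up check shows), the bookkeeping closes either way.
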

\begin{proof}
    $\hat\omega=\omega\,d\Xi$ follows from combining the 3-cocycle condition of $\omega$, Eq.~\eqref{eq:3-cocycle}, for the tuples $(k^{-1}, h^{-1}, g^{-1}, g)$, $(k^{-1}, h^{-1} g^{-1}, g, h)$ and $(k^{-1}h^{-1}g^{-1}, g, h, k)$, while ${\Xi=\hat \Xi\,d\sigma}$ follows from combining the 3-cocycle conditions for tuples $(h^{-1}g^{-1}, g,h,h^{-1}g^{-1})$, $(h^{-1}, h,h^{-1}, g^{-1})$, $(h^{-1},g^{-1}, g,g^{-1})$ and $(g, g^{-1}, g, g^{-1})$. In both cases, we use the normalization of the cocycle, that is, $\omega_{g, h, k}=1$ whenever either of $g,h,k$ equals the neutral element, which was one of the conditions imposed in Section \ref{sec:assumptions}.
\end{proof}
Note that having introduced $\Xi$, Eq. \eqref{eq:trlambda} reads
\begin{equation}
    d\sigma \dfrac{1}{\hat\zeta}\dfrac{1}{\hat\Xi}>0\implies \zeta\,\Xi\,d\sigma=\zeta\,\hat\Xi>0
    \label{eq:positive_zeta}
\end{equation}
where we have made use of $\sigma_g = \sigma_{g^{-1}}=1/\sigma_g$.

Finally, we recall that in the main text we introduced the condition of block independence (BI), for a family of $L$-symbols, as the existence of a gauge where the combinations $\ell_{h_1,h_2;g}^x$ given by \eqref{eq:scalarfactor} are independent of $x$. In Appendix~\ref{app:block_indep} we show that this condition is equivalent to the stronger constraint of the existence of a gauge where $L=1$, i.e., all $L$-symbols are 1. This can be seen as a constraint on the same footing as the system being nonanomalous (NA), which corresponds to the existence of a gauge where $\omega = 1$. BI is strictly stronger than NA, as the example in Section~\ref{example:z4z2} shows.

For the final part of the proof, we will show that we can find gauge transformations 
\begin{equation}
    F^>_{g,h}\to \beta_{g,h}F^>_{g,h},\qquad A^\ammaG_{g,x}\to \gamma^x_g A^\ammaG_{g,x}, 
\end{equation}
such that, in the new gauge,
\begin{enumerate}
    \item $\omega = \zeta = L = 1$, if BI is satisfied,
    \item $\omega = \zeta = 1,  |L| = 1$, if BI does not hold, but NA does,
    \item $|\omega| = |\zeta| = |L| = 1$, if NA does not hold.
\end{enumerate}
Thus, in all cases, the scalar in \eqref{eq:trlambda} equals $1$. These gauge transformations should also be constrained to satisfy
\begin{align}
    \beta_{g,e} = \beta_{e,g} = \beta_{g,g^{-1}} &= 1,\qquad \forall g,\label{eq:lig1}\\
    \gamma^x_{e} = \gamma^x_{g}\gamma^{gx}_{g^{-1}} &= 1,\qquad \forall g,x,\label{eq:lig2}
\end{align}
so that our assumptions from Section \ref{sec:assumptions} and Proposition \ref{prop:def_tens_anomalous} are preserved. Note that the last condition, in our compact notation, reads $\gamma\hat\gamma=1$.

\textit{Case 1 (BI)}: From block independence follows the existence of gauge transformations $\beta, \gamma$ such that $L=1$ (and hence $\omega=1$). This does not immediately imply that they satisfy the constraints \eqref{eq:lig1}-\eqref{eq:lig2}. However, we know that they preserve the $L$-symbols $L_{e,g}^x,L_{g,e}^x,L_{g,\inv{g}}^x$, which were already set to 1 (recall Section~\ref{sec:assumptions}). Thus we deduce
\begin{equation}
    \dfrac{\beta_{g,e}}{\gamma^x_e} = \dfrac{\beta_{e,g}}{\gamma^{gx}_e} = \dfrac{\beta_{g,\inv{g}}\gamma^x_e}{\gamma_{\inv g}^{gx}\gamma^x_g}=1,\qquad\forall g,x,
\end{equation}
that is,
\begin{equation}
    \beta_{g,e} = \beta_{e,g} = \gamma^{gx}_e = \beta_{e,e},\quad \beta_{g,\inv{g}}=\beta_{\inv{g},g}=\dfrac{\gamma_{\inv g}^{gx}\gamma^x_g}{\beta_{e,e}}.
\end{equation}
We then define $\theta_g:=(\beta_{e,e}\beta_{g,g^{-1}})^{\frac{1}{2}}$, with an adequate choice of square root so that $\theta_e=\beta_{e,e}$ and $\theta_g\theta_{\inv g}=\beta_{g,\inv{g}}\beta_{e,e}$. It is then easy to check that we can redefine $\beta,\gamma$ as $\beta/d\theta, \gamma/\theta$ giving rise to the same $L$-symbols and satisfying conditions \eqref{eq:lig1}-\eqref{eq:lig2}. Applying this gauge transformation leaves us in a gauge where, by \eqref{eq:positive_zeta}, $\zeta>0$, while \eqref{eq:zeta} and \eqref{eq:inv_omega} read
\begin{equation}
    \zeta\hat\zeta=1,\qquad d\zeta =1.
\end{equation}
Thus, by Corollary~\ref{cor:positive_trivial}, there exists $\chi>0, \chi\hat\chi=1$ such that $\zeta=d\chi$. Then it can be checked that $\beta=(d\chi)^{\frac{1}{2}}, \gamma=\chi^{\frac{1}{2}}$ is a gauge transformation that preserves all the previous properties and makes $\zeta=1$.

\textit{Case 2 (NA, not BI)}:
Due to the absence of an anomaly, there exists a gauge transformation $\beta$ of the fusion tensors that trivializes $\omega$. This means that it preserves $\omega_{g, e, e}$, $\omega_{e, e, g}$, and $\omega_{g, \inv{g}, g}$, which are all equal to $1$ for our choices for the fusion tensors. This implies that $\beta_{g,e}=\beta_{g,e}=\beta_{e,e}$ and $\beta_{g,\inv{g}}=\beta_{\inv g,g}$. Defining $\theta$ exactly as above, we see that $\beta/d\theta$ is a gauge transformation that preserves our choices and still leads to $\omega=1$. In this new gauge, we have
\begin{align}
    d\zeta&=1, & \zeta&>0, & \zeta\hat\zeta &=1,\\
        d|L|&=1, & |L|&>0, & |L||\hat L| &=1,\label{eq:absLcocycle}
\end{align}
which imply
\begin{align}
\zeta&=d\chi, &\chi&>0,& \chi\hat\chi&=1,\\
|L|&=d\eta, &\eta&>0,& \eta\hat\eta&=1.\label{eq:absLtrivial}
\end{align}
The gauge transformation $\beta=(d\chi)^{\frac{1}{2}}, \gamma=\chi^{\frac{1}{2}}\eta$ then satisfies all constraints and leads to the desired outcome.

\textit{Case 3 (anomalous)}:
From Proposition \ref{prop:zetas} and Lemma \ref{lemma:omegaXi} it follows that 
\begin{equation}
    |\omega|=d\left|\dfrac{\zeta}{\Xi}\right|^{\frac{1}{2}},
\end{equation}
thus it can be checked that the gauge transformation $\beta=\left|\dfrac{\Xi}{\zeta}\right|^{\frac{1}{2}}$ leaves $|\omega|=|\zeta|=1$. Thus in this new gauge, Eqs.~\eqref{eq:absLcocycle} hold and imply Eqs.~\eqref{eq:absLtrivial}. The gauge transformation $\gamma=\eta$ finishes the job without altering what we previously achieved.

\section{Associativity of the fusion tensors}
\label{app:associativity}
In this Appendix, we demonstrate that the obstruction to the associativity of the fusion operators $\lambda^R_{g,h}$ defined in \eqref{eq:lambda_R}, is given by a scalar function $F$, i.e. they satisfy Eq.~\eqref{eq:assoc}. We will prove this, thanks to the unitarity of the fusion tensors, by computing
\begin{equation}
\label{eq:Fghk}
    (\I\otimes \lambda^R_{gh,k})(\lambda^R_{g,h}\otimes \I)\bigl[(\I\otimes \lambda^R_{g,hk})(w_R^g\otimes \I)(\I\otimes \lambda^R_{h,k})\bigr]^\dagger,
\end{equation}
and showing that it is a scalar multiple of the identity. We will use the following representation of the fusion operators
 \begin{equation}
 \lambda^R_{g,h}=\sigma_{gh}~
        \begin{tikzpicture}
        \def\dxa{1.2}
        \def\dxb{1.8}
        \def\dxc{2.4}
        \def\dya{0.4}
        \def\dyb{1.0}
        \def\dyc{1.6}
        \def\dyd{2.0}
        \tikzset{decoration={snake,amplitude=.4mm,segment length=2mm, post length=0mm,pre length=0mm}}
            \draw[thick, red] (\dxa,\dya)--++(\dxb-\dxa,0.0);
            \draw[thick, red] (0,\dyb)--++(\dxb,0.);
            \draw[thick, red] (\dxa,\dyc)--++(\dxc-\dxa,0.);
            \draw[thick, red] (\dxb,0.5*\dya+0.5*\dyb)--++(\dxc-\dxb,0.);
            \draw[thick] (\dxa,\dya)--++(0,\dyc-\dya);
            \draw[thick] (0,\dyb)--++(0,\dyd-\dyb);
            \draw[decorate, thick] (\dxa,0) --++ (0, \dya);
            \draw[decorate, thick] (\dxa,\dyc) --++ (0., \dyd-\dyc);
            \draw[decorate, thick] (0,0) --++ (0., \dyb);
            \pic () at (\dxb,\dya) {fusR=\dyb-\dya/g/h/};
            \pic () at (\dxc,0.5*\dya+0.5*\dyb) {fusR=\dyc-0.5*\dya-0.5*\dyb/\inv{(gh)}/gh/};
            \node[tensor, fill=white] at (0.,\dyb) {};
            \node[tensor, fill=white] at (\dxa,\dya) {};
            \node[tensor] at (\dxa,\dyc) {};
            \node[mpo] at (\dxa,\dyb) {};
            \node[] at (-0.32, \dyb) {$X_1^g$};
            \node[] at (\dxa-0.32, \dya) {$X_1^h$};
            \node[] at (\dxa-0.5, \dyc+0.1) {$Y^{\inv{(gh)}}_2$};
             \end{tikzpicture}
    \label{eq:lambda_R_new}
    \end{equation}
where we have used Eqs.~\eqref{eq:modif_XY} and \eqref{eq:Fgg}. Note that in this appendix 
we will unify the notation of the squiggle legs of $X_1, Y_1$ and the thick legs of $X_2, Y_2$ and show them all as squiggles since we will be contracting them together using
\begin{equation}
    \begin{tikzpicture}
    \tikzset{decoration={snake,amplitude=.4mm,segment length=2mm, post length=0mm,pre length=0mm}}

     \def\sx{1.0};
     \def\dx{0.8};
     \def\dy{0.6};

     \draw[thick] (0.0,0.5*\dy) --++ (0.0,0.5*\dy);
     \draw[thick, decorate] (0.0,-0.5*\dy)--++(0.0,\dy);
     \draw[thick] (0.0,-0.5*\dy)--++(0.0,-0.5*\dy);
     \draw[thick,red] (0.0,0.5*\dy) --++ (0.5*\dx,0.0);
     \draw[thick,red] (0.0,-0.5*\dy) --++ (0.5*\dx,0.0);
     \node[tensor, fill=white] at (0.0,0.5*\dy) {};
     \node[tensor] at (0.0,-0.5*\dy) {};
     \node[] at (0.0-0.6*\dx, -0.75*\dy) {$Y_2^{\inv{g}}$};
     \node[] at (0.0-0.5*\dx, 0.75*\dy) {$X_1^g$};

     \node[] at (-0.2+\sx,0.0) {$=$};

     \draw[thick] (0.0+2*\sx,-\dy)--++(0.0,2*\dy);
     \draw[thick,red] (0.0+2*\sx-0.75*\dx,0.5*\dy)--++(1.25*\dx,0.0);
     \draw[thick,red] (0.0+2*\sx-0.75*\dx,-0.5*\dy)--++(1.25*\dx,0.0);
     \pic () at (0.0+2*\sx-0.75*\dx,-0.5*\dy) {fusL=\dy//g/\inv{g}};
     \node[mpo] at (0.0+2*\sx,0.5*\dy) {};
     \node[mpo] at (0.0+2*\sx,-0.5*\dy) {};

     \node[] at (0.0+2.75*\sx,0.0) {$,$};
    \end{tikzpicture}
    \label{eq:XYvsMPO}
\end{equation}
which follows from Eqs.~\eqref{eq:modif_XY} and \eqref{eq:red1}. The fusion tree $(\I\otimes \lambda^R_{gh,k})(\lambda^R_{g,h}\otimes \I)$ can then be expressed as
\begin{equation}
    \begin{tikzpicture}
     \tikzset{decoration={snake,amplitude=.4mm,segment length=2mm, post length=0mm,pre length=0mm}}

     \def\dx{0.6};
     \def\dy{0.6};

     \draw[thick,red] (0.0,0.0)--++(0.75*\dx,0.0);
     \draw[thick,red] (0.0,\dy)--++(0.75*\dx,0.0);
     \draw[thick] (0.0,0.0)--++(0.0,\dy);
     \draw[thick] (0.0,\dy)--++(0.0,\dy);
     \draw[thick,decorate] (0.0,2*\dy)--++(0.0,\dy);
     \draw[thick,red] (0.0,2*\dy)--++(1.5*\dx,0.0);
     \draw[thick,red] (1.5*\dx,2*\dy)--++(0.0,-1.5*\dy);
     \draw[thick,red] (1.5*\dx,0.5*\dy)--++(-0.75*\dx,0.0);
     \draw[thick,red] (0.0,\dy)--++(-2*\dx,0.0);
     \draw[thick] (-2*\dx,\dy)--++(0.0,2*\dy);
     \draw[thick,decorate] (-2*\dx,\dy)--++(0.0,-\dy);
     \draw[thick] (-2*\dx,0.0)--++(0.0,-2*\dy);
     \draw[thick,decorate] (-2*\dx,-2*\dy)--++(0.0,-\dy);
     \draw[thick,red] (-3*\dx,-\dy)--++(2*\dx,0.0);
     \draw[thick,red] (-2*\dx,-2*\dy)--++(\dx,0.0);
     \draw[thick,red] (-2*\dx,0.0)--++(1.5*\dx,0.0);
     \draw[thick,red] (-0.5*\dx,0.0)--++(0.0,-1.5*\dy);
     \draw[thick,red] (-\dx,-1.5*\dy)--++(0.5*\dx,0.0);
     \draw[thick,decorate] (0.0,0.0)--++(0.0,-3*\dy);
     \draw[thick] (-3*\dx,-\dy)--++(0.0,4*\dy);
     \draw[thick,decorate] (-3*\dx,-\dy)--++(0.0,-2*\dy);

     \draw[fusion] (0.75*\dx,0.0)--++(0.0,\dy);
     \draw[fusion] (-\dx,-2*\dy)--++(0.0,\dy);
     \pic () at (1.5*\dx,0.5*\dy) {fusR=1.5*\dy///};
     \pic () at (-0.5*\dx,-1.5*\dy) {fusR=1.5*\dy///};

     \node[tensor] at (0.0,2*\dy) {};
     \node[mpo] at (0.0,\dy) {};
     \node[tensor, fill=white] at (0.0,0.0) {};
     \node[tensor,fill=white] at (-2*\dx,\dy) {};
     \node[tensor] at (-2*\dx,0.0) {};
     \node[mpo] at (-2*\dx,-\dy) {};
     \node[tensor, fill=white] at (-2*\dx,-2*\dy) {};
     \node[tensor, fill=white] at (-3*\dx, -\dy) {};

     \node[] at (-4.5*\dx,0.0) {$\sigma_{gh}\sigma_{ghk}$};
     \node[] at (-1.5*\dx,1.35*\dy) {$X_1^{gh}$};
     \node[] at (-1.1*\dx,0.4*\dy) {$Y_2^{\inv{(gh)}}$};
       
    \end{tikzpicture}
\end{equation}
and, using \eqref{eq:XYvsMPO}, simplified into
\begin{equation}
    \begin{tikzpicture}
     \tikzset{decoration={snake,amplitude=.4mm,segment length=2mm, post length=0mm,pre length=0mm}}

     \def\sx{0.75};
     \def\dx{0.6};
     \def\dy{0.6};

     \draw[thick,red] (0.0,0.0)--++(0.75*\dx,0.0);
     \draw[thick,red] (0.0,\dy)--++(0.75*\dx,0.0);
     \draw[thick] (0.0,0.0)--++(0.0,\dy);
     \draw[thick] (0.0,\dy)--++(0.0,\dy);
     \draw[thick,decorate] (0.0,2*\dy)--++(0.0,\dy);
     \draw[thick,red] (0.0,2*\dy)--++(1.5*\dx,0.0);
     \draw[thick,red] (1.5*\dx,2*\dy)--++(0.0,-1.5*\dy);
     \draw[thick,red] (1.5*\dx,0.5*\dy)--++(-0.75*\dx,0.0);
     \draw[thick,decorate] (0.0,0.0)--++(0.0,-3*\dy);
     \draw[thick,red] (0.0,\dy)--++(-2*\dx,0.0);
     \draw[thick] (-2*\dx,\dy)--++(0.0,2*\dy);
     \draw[thick] (-2*\dx,\dy)--++(0.0,-\dy);
     \draw[thick] (-2*\dx,0.0)--++(0.0,-2*\dy);
     \draw[thick,red] (-2*\dx,0.0) --++ (-0.5*\dx,0.0);
     \draw[thick,red] (-2*\dx,\dy)--++(-0.5*\dx,0.0);
     \draw[thick,red] (-2.5*\dx,0.0)--++(0.0,\dy);
     \draw[thick,decorate] (-2*\dx,-2*\dy)--++(0.0,-\dy);
     \draw[thick,red] (-3*\dx,-\dy)--++(2*\dx,0.0);
     \draw[thick,red] (-2*\dx,-2*\dy)--++(\dx,0.0);
     \draw[thick,red] (-2*\dx,0.0)--++(1.5*\dx,0.0);
     \draw[thick,red] (-0.5*\dx,0.0)--++(0.0,-1.5*\dy);
     \draw[thick,red] (-\dx,-1.5*\dy)--++(0.5*\dx,0.0);
     \draw[thick] (-3*\dx,-\dy)--++(0.0,4*\dy);
     \draw[thick,decorate] (-3*\dx,-\dy)--++(0.0,-2*\dy);

     \draw[fusion] (0.75*\dx,0.0)--++(0.0,\dy);
     \draw[fusion] (-\dx,-2*\dy)--++(0.0,\dy);
     \pic () at (1.5*\dx,0.5*\dy) {fusR=1.5*\dy///};
     \pic () at (-0.5*\dx,-1.5*\dy) {fusR=1.5*\dy///};
     \pic () at (-2.5*\dx,0) {fusL=\dy///};
     
     \node[tensor] at (0.0,2*\dy) {};
     \node[mpo] at (0.0,\dy) {};
     \node[tensor, fill=white] at (0.0,0.0) {};
     \node[mpo] at (-2*\dx,0.0) {};
     \node[mpo] at (-2*\dx,\dy) {};
     \node[mpo] at (-2*\dx,-\dy) {};
     \node[tensor,fill=white] at (-2*\dx,-2*\dy) {};
     \node[tensor,fill=white] at (-3*\dx,-\dy) {};
     
     \node[] at (-4.5*\dx,0.0) {$\sigma_{gh}\sigma_{ghk}$};

     \node[] at (2*\dx,0.0) {$.$};
    \end{tikzpicture}
\end{equation}
Similarly, we write $(\I\otimes \lambda^R_{g,hk})(w_R^g\otimes \I)(\I\otimes \lambda^R_{h,k})$ in the form

\begin{equation}
    \begin{tikzpicture}
        \tikzset{decoration={snake,amplitude=.4mm,segment length=2mm, post length=0mm,pre length=0mm}}

        \def\sx{0.75};
        \def\dx{0.8};
        \def\dy{0.6};

        \draw[thick,decorate] (0.0,-0.5*\dy)--++(0.0,\dy);
        \draw[thick] (0.0,0.5*\dy)--++(0.0,2*\dy);
        \draw[thick,decorate] (0.0,2.5*\dy)--++(0.0,\dy);
        \draw[thick] (0.0,-0.5*\dy)--++(0.0,-2*\dy);
        \draw[thick, decorate] (0.0,-2.5*\dy)--++(0.0,-\dy);
        
        \draw[thick] (-\dx,3.5*\dy)--++(0.0,-2*\dy);
        \draw[thick,decorate] (-\dx,1.5*\dy)--++(0.0,-1.5*\dy);
        \draw[thick] (-\dx,0.0)--++(0.0,-1.5*\dy);
        \draw[thick,decorate] (-\dx,-1.5*\dy)--++(0.0,-2*\dy);

        \draw[thick] (-2*\dx,0.0)--++(0.0,3.5*\dy);
        \draw[thick,decorate] (-2*\dx,0.0)--++(0.0,-3.5*\dy);

        \draw[thick,red] (-2*\dx,0.0)--++(\dx,0.0);
        \draw[thick,red] (-\dx,1.5*\dy)--++(1.75*\dx,0.0);
        \draw[thick,red] (-\dx,-1.5*\dy)--++(1.75*\dx,0.0);
        \draw[thick,red] (0.0,-2.5*\dy)--++(0.75*\dx,0.0);
        \draw[thick,red] (0.0,-0.5*\dy)--++(1.5*\dx,0.0);
        \draw[thick,red] (1.5*\dx,-0.5*\dy)--++(0.0,-1.5*\dy);
        \draw[thick,red] (1.5*\dx,-2*\dy)--++(-0.75*\dx,0.0);
        \draw[thick,red] (0.0,0.5*\dy)--++(0.75*\dx,0.0);
        \draw[thick,red] (0.75*\dx,\dy)--++(0.75*\dx,0.0);
        \draw[thick,red] (1.5*\dx,\dy)--++(0.0,1.5*\dy);
        \draw[thick,red] (0.0,2.5*\dy)--++(1.5*\dx,0.0);

        \draw[fusion] (0.75*\dx,-2.5*\dy)--++(0.0,\dy);
        \draw[fusion] (0.75*\dx,1.5*\dy)--++(0.0,-\dy);
        \pic () at (1.5*\dx,-2*\dy) {fusR=1.5*\dy///};
        \pic () at (1.5*\dx,\dy) {fusR=1.5*\dy///};

        \node[tensor, fill=white] at (-2*\dx,0.0) {};
        \node[tensor, fill=white] at (-\dx,0.0) {};
        \node[tensor, fill=white] at (-\dx,1.5*\dy) {};
        \node[tensor, fill=white] at (-\dx,-1.5*\dy) {};
        \node[tensor,fill=white] at (0.0,-2.5*\dy) {};
        \node[mpo] at (0.0,-1.5*\dy) {};
        \node[tensor] at (0.0,-0.5*\dy) {};
        \node[tensor, fill=white] at (0.0,0.5*\dy) {};
        \node[mpo] at (0.0,1.5*\dy) {};
        \node[tensor] at (0.0,2.5*\dy) {};

        \node[] at (-3.5*\dx,0.0) {$\sigma_{ghk}\sigma_{hk}$};
        \node[] at (-0.5*\dx,0.75*\dy) {$X_1^{hk}$};
        \node[] at (0.85*\dx,-0.15*\dy) {$Y_2^{\inv{(hk)}}$};
        
    \end{tikzpicture}
\end{equation}
and again use \eqref{eq:XYvsMPO} to express it as
\begin{equation}
    \begin{tikzpicture}
        \tikzset{decoration={snake,amplitude=.4mm,segment length=2mm, post length=0mm,pre length=0mm}}
        
        \def\sx{0.75};
        \def\dx{0.8};
        \def\dy{0.6};

        \draw[thick] (0.0,-0.5*\dy)--++(0.0,\dy);
        \draw[thick] (0.0,0.5*\dy)--++(0.0,2*\dy);
        \draw[thick,decorate] (0.0,2.5*\dy)--++(0.0,\dy);
        \draw[thick] (0.0,-0.5*\dy)--++(0.0,-2*\dy);
        \draw[thick, decorate] (0.0,-2.5*\dy)--++(0.0,-\dy);
        
        \draw[thick] (-\dx,3.5*\dy)--++(0.0,-2*\dy);
        \draw[thick] (-\dx,1.5*\dy)--++(0.0,-1.5*\dy);
        \draw[thick] (-\dx,0.0)--++(0.0,-1.5*\dy);
        \draw[thick,decorate] (-\dx,-1.5*\dy)--++(0.0,-2*\dy);

        \draw[thick] (-2*\dx,1.5*\dy)--++(0.0,2*\dy);
        \draw[thick,decorate] (-2*\dx,1.5*\dy)--++(0.0,-5*\dy);

        \draw[thick,red] (-2*\dx,1.5*\dy)--++(\dx,0.0);
        \draw[thick,red] (-\dx,1.5*\dy)--++(1.75*\dx,0.0);
        \draw[thick,red] (-\dx,-1.5*\dy)--++(1.75*\dx,0.0);
        \draw[thick,red] (0.0,-2.5*\dy)--++(0.75*\dx,0.0);
        \draw[thick,red] (0.0,-0.5*\dy)--++(1.5*\dx,0.0);
        \draw[thick,red] (1.5*\dx,-2*\dy)--++(-0.75*\dx,0.0);
        \draw[thick,red] (0.0,0.5*\dy)--++(0.75*\dx,0.0);
        \draw[thick,red] (0.75*\dx,\dy)--++(0.75*\dx,0.0);
        \draw[thick,red] (0.0,2.5*\dy)--++(1.5*\dx,0.0);
        \draw[thick,red] (0.0,0.5*\dy)--++(-0.5*\dx,0.0);
        \draw[thick,red] (0.0,-0.5*\dy)--++(-0.5*\dx,0.0);

        \draw[fusion] (0.75*\dx,-2.5*\dy)--++(0.0,\dy);
        \draw[fusion] (0.75*\dx,1.5*\dy)--++(0.0,-\dy);
        \pic () at (1.5*\dx,-2*\dy) {fusR=1.5*\dy///};
        \pic () at (1.5*\dx,\dy) {fusR=1.5*\dy///};
        \pic () at (-0.5*\dx,-0.5*\dy) {fusL=\dy///};

        \node[tensor, fill=white] at (-2*\dx,1.5*\dy) {};
        \node[mpo] at (-\dx,1.5*\dy) {};
        \node[tensor, fill=white] at (-\dx,-1.5*\dy) {};
        \node[tensor,fill=white] at (0.0,-2.5*\dy) {};
        \node[mpo] at (0.0,-1.5*\dy) {};
        \node[mpo] at (0.0,-0.5*\dy) {};
        \node[mpo] at (0.0,0.5*\dy) {};
        \node[mpo] at (0.0,1.5*\dy) {};
        \node[tensor] at (0.0,2.5*\dy) {};

        \node[] at (-3.5*\dx,0.0) {$\sigma_{ghk}\sigma_{hk}$};
        \node[] at (2.0*\dx,0.0) {$.$};

    \end{tikzpicture}
\end{equation}
We now compute the expression \eqref{eq:Fghk}. After every step, we will denote the successive scalar factors by $\mathcal{F}^i_{g,h,k}, i=1,2\ldots$. Note that we need not keep track of them at this point since they can be easily computed a posteriori by acting with the two unitaries in \eqref{eq:assoc} on a suitable MPS and using \eqref{eq:fusop2}. Nevertheless, we will, just as a check. We start by writing \eqref{eq:Fghk} as
\begin{equation}
    \begin{tikzpicture}
        \tikzset{decoration={snake,amplitude=.4mm,segment length=2mm, post length=0mm,pre length=0mm}}
    
     \def\sx{0.75};
     \def\dx{0.8};
     \def\dy{0.6};

     \draw[thick,red] (0.0,0.0)--++(0.75*\dx,0.0);
     \draw[thick,red] (0.0,\dy)--++(0.75*\dx,0.0);
     \draw[thick] (0.0,0.0)--++(0.0,\dy);
     \draw[thick] (0.0,\dy)--++(0.0,\dy);
     \draw[thick,decorate] (0.0,2*\dy)--++(0.0,\dy);
     \draw[thick,red] (0.0,2*\dy)--++(1.5*\dx,0.0);
     \draw[thick,red] (1.5*\dx,0.5*\dy)--++(-0.75*\dx,0.0);
     \draw[thick,decorate] (0.0,0.0)--++(0.0,-3*\dy);
     \draw[thick,red] (0.0,\dy)--++(-2*\dx,0.0);
     \draw[thick] (-2*\dx,\dy)--++(0.0,2*\dy);
     \draw[thick] (-2*\dx,\dy)--++(0.0,-\dy);
     \draw[thick] (-2*\dx,0.0)--++(0.0,-2*\dy);
     \draw[thick,red] (-2*\dx,0.0) --++ (-0.5*\dx,0.0);
     \draw[thick,red] (-2*\dx,\dy)--++(-0.5*\dx,0.0);
     \draw[thick,decorate] (-2*\dx,-2*\dy)--++(0.0,-2*\dy);
     \draw[thick,red] (-3*\dx,-\dy)--++(2*\dx,0.0);
     \draw[thick,red] (-2*\dx,-2*\dy)--++(\dx,0.0);
     \draw[thick,red] (-2*\dx,0.0)--++(1.5*\dx,0.0);
     \draw[thick,red] (-\dx,-1.5*\dy)--++(0.5*\dx,0.0);
     \draw[thick] (-3*\dx,-\dy)--++(0.0,4*\dy);
     \draw[thick,decorate] (-3*\dx,-\dy)--++(0.0,-6*\dy);
     
     \draw[thick] (0.0,-3*\dy)--++(0.0,-5*\dy);
     \draw[thick, decorate] (0.0,-8*\dy)--++(0.0,-\dy);
     \draw[thick] (-3*\dx,-7*\dy)--++(0.0,-2*\dy);
     \draw[thick] (-2*\dx,-4*\dy)--++(0.0,-5*\dy);

     \draw[thick,red] (0.0,-3*\dy)--++(0.75*\dx,0.0);
     \draw[thick,red] (0.0,-4*\dy)--++(0.75*\dx,0.0);
     \draw[thick,red] (0.75*\dx,-3.5*\dy)--++(0.75*\dx,0.0);
     \draw[thick,red] (0.0,-4*\dy)--++(-2*\dx,0.0);
     \draw[thick,red] (-0.5*\dx,-5*\dy)--++(2*\dx,0.0);
     \draw[thick,red] (-0.5*\dx,-6*\dy)--++(1.25*\dx,0.0);
     \draw[thick,red] (-3*\dx,-7*\dy)--++(3.75*\dx,0.0);
     \draw[thick,red] (0.75*\dx,-6.5*\dy)--++(0.75*\dx,0.0);
     \draw[thick,red] (1.5*\dx,-8*\dy)--++(-1.5*\dx,0.0);

     \draw[fusion] (0.75*\dx,0.0)--++(0.0,\dy);
     \draw[fusion] (-\dx,-2*\dy)--++(0.0,\dy);
     \draw[fusion_dagger,brightube] (0.75*\dx,-3*\dy)--++(0.0,-\dy);
     \draw[fusion_dagger,brightube] (0.75*\dx,-6*\dy)--++(0.0,-\dy);
     \pic () at (1.5*\dx,0.5*\dy) {fusR=1.5*\dy///};
     \pic () at (1.5*\dx,-5*\dy) {fusRdag=1.5*\dy///};
     \pic () at (1.5*\dx,-8*\dy) {fusRdag=1.5*\dy///};
     \pic () at (-0.5*\dx,-6*\dy) {fusLdag=\dy///};
     \pic () at (-0.5*\dx,-1.5*\dy) {fusR=1.5*\dy///};
     \pic () at (-2.5*\dx,0) {fusL=\dy///};
     
     \node[tensor] at (0.0,2*\dy) {};
     \node[mpo] at (0.0,\dy) {};
     \node[tensor, fill=white] at (0.0,0.0) {};
     \node[mpo] at (-2*\dx,0.0) {};
     \node[mpo] at (-2*\dx,\dy) {};
     \node[mpo] at (-2*\dx,-\dy) {};
     \node[tensor,fill=white] at (-2*\dx,-2*\dy) {};
     \node[tensor,fill=white] at (-3*\dx,-\dy) {};
     \node[tensor, fill=white] at (0.0,-3*\dy) {};
     \node[mpo] at (0.0,-4*\dy) {};
     \node[mpo] at (0.0,-5*\dy) {};
     \node[mpo] at (0.0,-6*\dy) {};
     \node[mpo] at (0.0,-7*\dy) {};
     \node[tensor] at (0.0,-8*\dy) {};
     \node[tensor,fill=white] at (-3*\dx,-7*\dy) {};
     \node[tensor,fill=white] at (-2*\dx,-4*\dy) {};
     \node[mpo] at (-2*\dx,-7*\dy) {};

     \node[] at (-4.5*\dx,-3.5*\dy) {$\mathcal{F}^1_{g,h,k}$};

     \node[] at (-3.65*\dx,-7*\dy) {$X_1^{g\dagger}$};
     \node[] at (-3.5*\dx,-\dy) {$X_1^{g}$};
     \node[] at (-1.5*\dx,-4.35*\dy) {$X_1^{h\dagger}$};
     \node[] at (-1.5*\dx,-2.35*\dy) {$X_1^h$};
     \node[] at (-0.6*\dx,-3*\dy) {$X_1^{k\dagger}$};
     \node[] at (-1.75*\dx,-6.75*\dy) {$\dagger$};
     \node[] at (0.25*\dx,-6.75*\dy) {$\dagger$};
     \node[] at (0.25*\dx,-5.75*\dy) {$\dagger$};
     \node[] at (0.25*\dx,-4.75*\dy) {$\dagger$};
     \node[] at (0.25*\dx,-3.75*\dy) {$\dagger$};
     \node[] at (-0.5*\dx,-3.75*\dy) {$h$};
     \node[] at (0.95*\dx,-4.75*\dy) {$\inv{(hk)}$};
     \node[] at (0.7*\dx,-5.6*\dy) {$hk$};
     \node[] at (0.4*\dx,-7.3*\dy) {$g$};
     \node[] at (1.3*\dx,-8.5*\dy) {$\bigl(Y_2^{\inv{(ghk)}}\bigr)^\dagger$};
     \node[] at (0.5*\dx,-0.4*\dy) {$X_1^k$};
     \node[] at (\dx,2.4*\dy) {$Y_2^{\inv{(ghk)}}$};

    \end{tikzpicture}
\end{equation}
with $\mathcal{F}^1_{g,h,k}=\sigma_{gh}\sigma_{ghk}\sigma_{hk}\sigma_{ghk}=\sigma_{gh}\sigma_{hk}$. Using \eqref{eq:red1}, the above expression can be rephrased as

\begin{equation}
    \begin{tikzpicture}
        \tikzset{decoration={snake,amplitude=.4mm,segment length=2mm, post length=0mm,pre length=0mm}}
    
     \def\sx{0.75};
     \def\dx{0.8};
     \def\dy{0.6};

     \draw[thick,red] (0.0,0.0)--++(0.75*\dx,0.0);
     \draw[thick,red] (0.0,\dy)--++(0.75*\dx,0.0);
     \draw[thick] (0.0,0.0)--++(0.0,\dy);
     \draw[thick] (0.0,\dy)--++(0.0,\dy);
     \draw[thick,decorate] (0.0,2*\dy)--++(0.0,\dy);
     \draw[thick,red] (0.0,2*\dy)--++(1.5*\dx,0.0);
     \draw[thick,red] (1.5*\dx,2*\dy)--++(0.0,-1.5*\dy);
     \draw[thick,red] (1.5*\dx,0.5*\dy)--++(-0.75*\dx,0.0);
     \draw[thick] (0.0,0.0)--++(0.0,-3*\dy);
     \draw[thick,red] (0.0,\dy)--++(-2*\dx,0.0);
     \draw[thick] (-2*\dx,\dy)--++(0.0,2*\dy);
     \draw[thick] (-2*\dx,\dy)--++(0.0,-\dy);
     \draw[thick] (-2*\dx,0.0)--++(0.0,-2*\dy);
     \draw[thick,red] (-2*\dx,0.0) --++ (-0.5*\dx,0.0);
     \draw[thick,red] (-2*\dx,\dy)--++(-0.5*\dx,0.0);
     \draw[thick,red] (-2.5*\dx,0.0)--++(0.0,\dy);
     \draw[thick] (-2*\dx,-2*\dy)--++(0.0,-2*\dy);
     \draw[thick,red] (-3*\dx,-\dy)--++(1.5*\dx,0.0);
     \draw[thick,red] (-2*\dx,-2*\dy)--++(.5*\dx,0.0);
     \draw[thick,red] (-2*\dx,0.0)--++(1.0*\dx,0.0);
     \draw[thick,red] (-0.5*\dx,0.0)--++(0.0,-1.5*\dy);
     \draw[thick,red] (-1.5*\dx,-1.5*\dy)--++(0.5*\dx,0.0);
     \draw[thick] (-3*\dx,-\dy)--++(0.0,4*\dy);
     \draw[thick] (-3*\dx,-\dy)--++(0.0,-6*\dy);
     
     \draw[thick] (0.0,-3*\dy)--++(0.0,-5*\dy);
     \draw[thick, decorate] (0.0,-8*\dy)--++(0.0,-\dy);
     \draw[thick] (-3*\dx,-7*\dy)--++(0.0,-2*\dy);
     \draw[thick] (-2*\dx,-4*\dy)--++(0.0,-5*\dy);

     \draw[thick,red] (0.0,-3*\dy)--++(0.75*\dx,0.0);
     \draw[thick,red] (0.0,-4*\dy)--++(0.75*\dx,0.0);
     \draw[thick,red] (0.75*\dx,-3.5*\dy)--++(0.75*\dx,0.0);
     \draw[thick,red] (0.0,-4*\dy)--++(-2*\dx,0.0);
     \draw[thick,brightube] (1.5*\dx,-3.5*\dy)--++(0.0,-1.5*\dy);
     \draw[thick,red] (-0.5*\dx,-5*\dy)--++(2*\dx,0.0);
     \draw[thick,red] (-0.5*\dx,-5*\dy)--++(0.0,-\dy);
     \draw[thick,red] (-0.5*\dx,-6*\dy)--++(1.25*\dx,0.0);
     \draw[thick,red] (-3*\dx,-7*\dy)--++(3.75*\dx,0.0);
     \draw[thick,red] (0.75*\dx,-6.5*\dy)--++(0.75*\dx,0.0);
     \draw[thick,brightube] (1.5*\dx,-6.5*\dy)--++(0.0,-1.5*\dy);
     \draw[thick,red] (1.5*\dx,-8*\dy)--++(-1.5*\dx,0.0);
     \draw[thick,red] (-3*\dx,-\dy)--++(-0.5*\dx,0.0);
    \draw[thick,red] (-3*\dx,-7*\dy)--++(-0.5*\dx,0.0);
    \draw[thick,red] (-2*\dx,-2*\dy)--++(-0.5*\dx,0.0);
    \draw[thick,red] (-2*\dx,-4*\dy)--++(-0.5*\dx,0.0);
    \draw[thick,red] (-2.5*\dx,-2*\dy)--++(0.0,-2*\dy);
    \draw[thick,red] (0.0,0.0)--++(-0.5*\dx,0.0);
    \draw[thick,red] (0.0,-3*\dy)--++(-0.5*\dx,0.0);

     \draw[fusion] (0.75*\dx,0.0)--++(0.0,\dy);
     \draw[fusion] (-1.5*\dx,-2*\dy)--++(0.0,\dy);
     \draw[fusion_dagger,brightube] (0.75*\dx,-3*\dy)--++(0.0,-\dy);
     \draw[fusion_dagger,brightube] (0.75*\dx,-6*\dy)--++(0.0,-\dy);
     \pic () at (1.5*\dx,0.5*\dy) {fusR=1.5*\dy///};
     \pic () at (1.5*\dx,-5*\dy) {fusRdag=1.5*\dy///};
     \pic () at (1.5*\dx,-8*\dy) {fusRdag=1.5*\dy///};
     \pic () at (-0.5*\dx,-6*\dy) {fusLdag=\dy///};
     \pic () at (-\dx,-1.5*\dy) {fusR=1.5*\dy///};
     \pic () at (-2.5*\dx,0) {fusL=\dy///};
     \pic () at (-0.5*\dx,-3*\dy) {fusL=3*\dy///};
     \pic () at (-2.5*\dx,-4*\dy) {fusL=2*\dy///};
     \pic () at (-3.5*\dx,-7*\dy) {fusL=6*\dy///};

     \node[tensor] at (0.0,2*\dy) {};
     \node[mpo] at (0.0,\dy) {};
     \node[mpo] at (0.0,0.0) {};
     \node[mpo] at (-2*\dx,0.0) {};
     \node[mpo] at (-2*\dx,\dy) {};
     \node[mpo] at (-2*\dx,-\dy) {};
     \node[mpo] at (-2*\dx,-2*\dy) {};
     \node[mpo] at (-3*\dx,-\dy) {};
     \node[mpo] at (0.0,-3*\dy) {};
     \node[mpo] at (0.0,-4*\dy) {};
     \node[mpo] at (0.0,-5*\dy) {};
     \node[mpo] at (0.0,-6*\dy) {};
     \node[mpo] at (0.0,-7*\dy) {};
     \node[tensor] at (0.0,-8*\dy) {};
     \node[mpo] at (-3*\dx,-7*\dy) {};
     \node[mpo] at (-2*\dx,-4*\dy) {};
     \node[mpo] at (-2*\dx,-7*\dy) {};
     \node[tensor,red] at (0.4*\dx,-3*\dy) {};
     \node[tensor,red] at (-1.5*\dx,-4*\dy) {};
     \node[tensor,red] at (-2.5*\dx,-7*\dy) {};

     \node[] at (-4.5*\dx,-3.5*\dy) {$\mathcal{F}^2_{g,h,k}$};

     \node[] at (0.25*\dx,-6.75*\dy) {$\dagger$};
     \node[] at (0.25*\dx,-5.75*\dy) {$\dagger$};
     \node[] at (0.25*\dx,-4.75*\dy) {$\dagger$};
     \node[] at (0.25*\dx,-3.75*\dy) {$\dagger$};
     \node[] at (-0.5*\dx,-3.75*\dy) {$h$};
     \node[] at (0.95*\dx,-4.75*\dy) {$\inv{(hk)}$};
     \node[] at (0.7*\dx,-5.6*\dy) {$hk$};
     \node[] at (0.4*\dx,-7.3*\dy) {$g$};
     \node[] at (1.3*\dx,-8.5*\dy) {$\bigl(Y_2^{\inv{(ghk)}}\bigr)^\dagger$};
     \node[] at (\dx,2.4*\dy) {$Y_2^{\inv{(ghk)}}$};
     \node[] at (-3.5*\dx,-7.3*\dy) {$\inv{g}$};
     \node[] at (-3.5*\dx,-0.75*\dy) {$g$};
     \node[] at (-2.5*\dx,-4.25*\dy) {$\inv{h}$};
     \node[red] at (0.45*\dx,-2.6*\dy) {$T_{k}$};
     \node[red] at (-1.35*\dx,-3.6*\dy) {$T_{h}$};
     \node[red] at (-2.5*\dx,-6.6*\dy) {$T_g$};
     \node[] at (-2.65*\dx, 1.25*\dy) {$gh$};

    \end{tikzpicture}
\end{equation}
with $\mathcal{F}^2_{g,h,k}=\sigma_g\sigma_h\sigma_k\sigma_{gh}\sigma_{hk}$. Next, we use the relations between the fusion tensors and their adjoints, cf. Proposition~\ref{prop:zetas}, as well as between the MPO tensors and their adjoints. As a result, the expression in \eqref{eq:Fghk} can be presented in the following form
\begin{equation}
    \begin{tikzpicture}
        \tikzset{decoration={snake,amplitude=.4mm,segment length=2mm, post length=0mm,pre length=0mm}}
    
     \def\sx{0.75};
     \def\dx{0.8};
     \def\dy{0.6};

     \draw[thick,red] (0.0,0.0)--++(0.75*\dx,0.0);
     \draw[thick,red] (0.0,\dy)--++(0.75*\dx,0.0);
     \draw[thick] (0.0,0.0)--++(0.0,\dy);
     \draw[thick] (0.0,\dy)--++(0.0,\dy);
     \draw[thick,decorate] (0.0,2*\dy)--++(0.0,\dy);
     \draw[thick,red] (0.0,2*\dy)--++(1.5*\dx,0.0);
     \draw[thick,red] (1.5*\dx,2*\dy)--++(0.0,-1.5*\dy);
     \draw[thick,red] (1.5*\dx,0.5*\dy)--++(-0.75*\dx,0.0);
     \draw[thick] (0.0,0.0)--++(0.0,-3*\dy);
     \draw[thick,red] (0.0,\dy)--++(-2*\dx,0.0);
     \draw[thick] (-2*\dx,\dy)--++(0.0,2*\dy);
     \draw[thick] (-2*\dx,\dy)--++(0.0,-\dy);
     \draw[thick] (-2*\dx,0.0)--++(0.0,-2*\dy);
     \draw[thick,red] (-2*\dx,0.0) --++ (-0.5*\dx,0.0);
     \draw[thick,red] (-2*\dx,\dy)--++(-0.5*\dx,0.0);
     \draw[thick,red] (-2.5*\dx,0.0)--++(0.0,\dy);
     \draw[thick] (-2*\dx,-2*\dy)--++(0.0,-2*\dy);
     \draw[thick,red] (-3*\dx,-\dy)--++(1.5*\dx,0.0);
     \draw[thick,red] (-2*\dx,-2*\dy)--++(0.5*\dx,0.0);
     \draw[thick,red] (-2*\dx,0.0)--++(1.*\dx,0.0);
     \draw[thick,red] (-0.5*\dx,0.0)--++(0.0,-1.5*\dy);
     \draw[thick,red] (-1.5*\dx,-1.5*\dy)--++(0.5*\dx,0.0);
     \draw[thick] (-3*\dx,-\dy)--++(0.0,4*\dy);
     \draw[thick] (-3*\dx,-\dy)--++(0.0,-6*\dy);
     
     \draw[thick] (0.0,-3*\dy)--++(0.0,-5*\dy);
     \draw[thick, decorate] (0.0,-8*\dy)--++(0.0,-\dy);
     \draw[thick] (-3*\dx,-7*\dy)--++(0.0,-2*\dy);
     \draw[thick] (-2*\dx,-4*\dy)--++(0.0,-5*\dy);

     \draw[thick,red] (0.0,-3*\dy)--++(0.75*\dx,0.0);
     \draw[thick,red] (0.0,-4*\dy)--++(0.75*\dx,0.0);
     \draw[thick,red] (0.75*\dx,-3.5*\dy)--++(0.75*\dx,0.0);
     \draw[thick,red] (0.0,-4*\dy)--++(-2*\dx,0.0);
     \draw[thick,red] (1.5*\dx,-3.5*\dy)--++(0.0,-1.5*\dy);
     \draw[thick,red] (-0.5*\dx,-5*\dy)--++(2*\dx,0.0);
     \draw[thick,red] (-0.5*\dx,-5*\dy)--++(0.0,-\dy);
     \draw[thick,red] (-0.5*\dx,-6*\dy)--++(1.25*\dx,0.0);
     \draw[thick,red] (-3*\dx,-7*\dy)--++(3.75*\dx,0.0);
     \draw[thick,red] (0.75*\dx,-6.5*\dy)--++(0.75*\dx,0.0);
     \draw[thick,red] (1.5*\dx,-6.5*\dy)--++(0.0,-1.5*\dy);
     \draw[thick,red] (1.5*\dx,-8*\dy)--++(-1.5*\dx,0.0);
     \draw[thick,red] (-3*\dx,-\dy)--++(-0.5*\dx,0.0);
    \draw[thick,red] (-3*\dx,-7*\dy)--++(-0.5*\dx,0.0);
    \draw[thick,red] (-3.5*\dx,-\dy)--++(0.0,-6*\dy);
    \draw[thick,red] (-2*\dx,-2*\dy)--++(-0.5*\dx,0.0);
    \draw[thick,red] (-2*\dx,-4*\dy)--++(-0.5*\dx,0.0);
    \draw[thick,red] (-2.5*\dx,-2*\dy)--++(0.0,-2*\dy);
    \draw[thick,red] (0.0,0.0)--++(-0.5*\dx,0.0);
    \draw[thick,red] (0.0,-3*\dy)--++(-0.5*\dx,0.0);
    \draw[thick,red] (-0.5*\dx,0.0)--++(0.0,-3*\dy);

     \draw[fusion] (0.75*\dx,0.0)--++(0.0,\dy);
     \draw[fusion] (-1.5*\dx,-2*\dy)--++(0.0,\dy);
     \draw[fusion] (0.75*\dx,-3*\dy)--++(0.0,-\dy);
     \draw[fusion] (0.75*\dx,-6*\dy)--++(0.0,-\dy);
     \pic () at (1.5*\dx,0.5*\dy) {fusR=1.5*\dy///};
     \pic () at (1.5*\dx,-5*\dy) {fusR=1.5*\dy///};
     \pic () at (1.5*\dx,-8*\dy) {fusR=1.5*\dy///};
     \pic () at (-0.5*\dx,-6*\dy) {fusL=\dy///};
     \pic () at (-\dx,-1.5*\dy) {fusR=1.5*\dy///};
     \pic () at (-2.5*\dx,0) {fusL=\dy///};
     \pic () at (-0.5*\dx,-3*\dy) {fusL=3*\dy///};
     \pic () at (-2.5*\dx,-4*\dy) {fusL=2*\dy///};
     \pic () at (-3.5*\dx,-7*\dy) {fusL=6*\dy///};
     
     \node[tensor] at (0.0,2*\dy) {};
     \node[mpo] at (0.0,\dy) {};
     \node[mpo] at (0.0,0.0) {};
     \node[mpo] at (-2*\dx,0.0) {};
     \node[mpo] at (-2*\dx,\dy) {};
     \node[mpo] at (-2*\dx,-\dy) {};
     \node[mpo] at (-2*\dx,-2*\dy) {};
     \node[mpo] at (-3*\dx,-\dy) {};
     \node[mpo] at (0.0,-3*\dy) {};
     \node[mpo] at (0.0,-4*\dy) {};
     \node[mpo] at (0.0,-5*\dy) {};
     \node[mpo] at (0.0,-6*\dy) {};
     \node[mpo] at (0.0,-7*\dy) {};
     \node[tensor, fill=white] at (0.0,-8*\dy) {};
     \node[mpo] at (-3*\dx,-7*\dy) {};
     \node[mpo] at (-2*\dx,-4*\dy) {};
     \node[mpo] at (-2*\dx,-7*\dy) {};
     
     \node[] at (-4.5*\dx,-3.5*\dy) {$\mathcal{F}^3_{g,h,k}$};

     \node[] at (0.95*\dx,-4.75*\dy) {$hk$};
     \node[] at (0.7*\dx,-5.6*\dy) {$\inv{(hk)}$};
     \node[] at (0.8*\dx,-8.35*\dy) {$X_1^{ghk}$};
     \node[] at (\dx,2.4*\dy) {$Y_2^{\inv{(ghk)}}$};
     \node[] at (-3.5*\dx,-7.3*\dy) {$\inv{g}$};
     \node[] at (-3.5*\dx,-0.75*\dy) {$g$};
     \node[] at (-2.5*\dx,-4.25*\dy) {$\inv{h}$};
     \node[] at (-2.65*\dx, 1.25*\dy) {$gh$};        
    \end{tikzpicture}
\end{equation}
with
\begin{align}
    \mathcal{F}^3_{g,h,k}&=\frac{\sigma_g\sigma_h\sigma_k\sigma_{gh}\sigma_{hk}}{\zeta_{\inv{k},\inv{h}}\zeta_{hk,\inv{(hk)}}\zeta_{\inv{(hk)},hk}\zeta_{\inv{(hk)},\inv{g}}\zeta_{\inv{(ghk)},ghk}}\nonumber\\ &=\frac{\sigma_g\sigma_h\sigma_k\sigma_{gh}\sigma_{hk}\sigma_{ghk}}{\zeta_{\inv{k},\inv{h}}\zeta_{\inv{(hk)},\inv{g}}},\phantom{\dfrac{\Big|}{1}}
\end{align}
where we have used Eq.~\eqref{eq:zetas_in_gauge}. Using the properties of the fusion operators we can write the latter diagram as
\begin{equation}
    \begin{tikzpicture}
    \tikzset{decoration={snake,amplitude=.4mm,segment length=2mm, post length=0mm,pre length=0mm}}
    
     \def\sx{0.75};
     \def\dx{0.8};
     \def\dy{0.6};

     \draw[thick,decorate] (0.0,0.0)--++(0.0,\dy);
     \draw[thick] (0.0,\dy)--++(0.0,10*\dy);
     \draw[thick,decorate] (0.0,11*\dy)--++(0.0,\dy);
     \draw[thick,red] (0.0,\dy)--++(1.5*\dx,0.0);
     \draw[thick,red] (1.5*\dx,2.5*\dy)--++(-0.75*\dx,0.0);
     \draw[thick,red] (0.75*\dx,2*\dy)--++(-6.25*\dx,0.0);
     \draw[thick,red] (0.75*\dx,3*\dy)--++(-3.75*\dx,0.0);
     \draw[thick] (-2*\dx,0.0)--++(0.0,12*\dy);
     \draw[thick] (-4*\dx,0.0)--++(0.0,12*\dy);
     \draw[thick,red] (-3*\dx,4*\dy)--++(2*\dx,0.0);
     \draw[thick,red] (-\dx,5*\dy)--++(-2*\dx,0.0);
     \draw[thick,red] (-3*\dx,4.5*\dy)--++(-2*\dx,0.0);
     \draw[thick,red] (-3*\dx,5.5*\dy)--++(-1.5*\dx,0.0);
     \draw[thick,red] (-4.5*\dx,6*\dy)--++(5*\dx,0.0);
     \draw[thick,red] (-5*\dx,7*\dy)--++(2*\dx,0.0);
     \draw[thick,red] (-5.5*\dx,8*\dy)--++(2.5*\dx,0.0);
     \draw[thick,red] (-3*\dx,7.5*\dy)--++(2*\dx,0.0);
     \draw[thick,red] (-\dx,9*\dy)--++(-2*\dx,0.0);
     \draw[thick,red] (-3*\dx,10*\dy)--++(3.5*\dx,0.0);
     \draw[thick,red] (0.5*\dx,8*\dy)--++(0.5*\dx,0.0);
     \draw[thick,red] (0.0,11*\dy)--++(\dx,0.0);

     \draw[fusion] (1.5*\dx,\dy)--++(0.0,1.5*\dy);
     \draw[fusion] (-5.5*\dx,2*\dy)--++(0.0,6*\dy);
     \draw[fusion] (-3*\dx,3*\dy)--++(0.0,\dy);
     \draw[fusion] (-\dx,4*\dy)--++(0.0,\dy);
     \draw[fusion] (-4.5*\dx,5.5*\dy)--++(0.0,0.5*\dy);
     \draw[fusion] (-5*\dx,4.5*\dy)--++(0.0,2.5*\dy);
     \draw[fusion] (-\dx,7.5*\dy)--++(0.0,1.5*\dy);
     \draw[fusion] (-3*\dx,9*\dy)--++(0.0,\dy);
     \draw[fusion] (\dx,8*\dy)--++(0.0,3*\dy);
     \draw[fusion] (0.75*\dx,2*\dy)--++(0.0,\dy);
     \draw[fusion] (-3*\dx,4.5*\dy)--++(0.0,\dy);
     \draw[fusion] (-3*\dx, 7*\dy)--++(0.0,\dy);
     \draw[fusion] (0.5*\dx,6*\dy)--++(0.0,4*\dy);
     
     \node[tensor, fill=white] at (0.0,\dy) {};
     \node[mpo] at (0.0,2*\dy) {};
     \node[mpo] at (0.0,3*\dy) {};
     \node[mpo] at (0.0,6*\dy) {};
     \node[mpo] at (0.0,10*\dy) {};
     \node[tensor] at (0.0,11*\dy) {};
     \node[mpo] at (-2*\dx,2*\dy) {};
     \node[mpo] at (-2*\dx,3*\dy) {};
     \node[mpo] at (-2*\dx,4*\dy) {};
     \node[mpo] at (-2*\dx,5*\dy) {};
     \node[mpo] at (-2*\dx,6*\dy) {};
     \node[mpo] at (-2*\dx,7.5*\dy) {};
     \node[mpo] at (-2*\dx,9*\dy) {};
     \node[mpo] at (-2*\dx,10*\dy) {};
     \node[mpo] at (-4*\dx,2*\dy) {};
     \node[mpo] at (-4*\dx,4.5*\dy) {};
     \node[mpo] at (-4*\dx,5.5*\dy) {};
     \node[mpo] at (-4*\dx,6*\dy) {};
     \node[mpo] at (-4*\dx,7*\dy) {};
     \node[mpo] at (-4*\dx,8*\dy) {};

     \node[] at (-6.5*\dx, 6*\dy) {$\mathcal{F}^3_{g,h,k}$};
     \node[] at (-0.8*\dx,0.8*\dy) {$X_1^{ghk}$};
     \node[] at (-0.9*\dx,11.2*\dy) {$Y_2^{\inv{(ghk)}}$};
     \node[] at (-5*\dx,2.35*\dy) {$\inv{g}$};
     \node[] at (-5*\dx,8.35*\dy) {$g$};
     \node[] at (-3.5*\dx,4.25*\dy) {$\inv{h}$};
     \node[] at (-3.5*\dx,5.25*\dy) {$\inv{k}$};
     \node[] at (-3.5*\dx,6.25*\dy) {$k$};
     \node[] at (-3.5*\dx,7.25*\dy) {$h$};
     \node[] at (-2.75*\dx,10.25*\dy) {$gh$};
     \node[] at (-2.75*\dx,8.65*\dy) {$\inv{(gh)}$};
     \node[] at (-1.3*\dx,5.3*\dy) {$\inv{(hk)}$};
     \node[] at (-1.4*\dx,4.25*\dy) {$hk$};
     \node[] at (-1.25*\dx,3.25*\dy) {$\inv{(hk)}$};
     \node[] at (1.5*\dx,6*\dy) {$.$};
      
    \end{tikzpicture}
    \label{eq:big_diagramm_4}
\end{equation}
Using Eq.~\eqref{eq:sigmadiag}, the diagram in \eqref{eq:big_diagramm_4} can be reduced into
\begin{equation}
    \begin{tikzpicture}
        \tikzset{decoration={snake,amplitude=.4mm,segment length=2mm, post length=0mm,pre length=0mm}}
    
     \def\sx{0.75};
     \def\dx{0.6};
     \def\dy{0.6};

     \draw[thick,decorate] (0.0,0.0)--++(0.0,-\dy);
     \draw[thick] (0.0,0.0)--++(0.0,9*\dy);
     \draw[thick,decorate] (0.0,9*\dy)--++(0.0,\dy);

     \draw[thick] (-\dx,-\dy)--++(0.0,11*\dy);

     \draw[thick] (-3*\dx,-\dy)--++(0.0,11*\dy);
     
     \draw[thick,red] (0.0,0.0)--++(2*\dx,0.0);
     \draw[thick,red] (-6*\dx,\dy)--++(7*\dx,0.0);
     \draw[thick,red] (\dx,2.5*\dy)--++(\dx,0.0);
     \draw[thick,red] (-5*\dx,3*\dy)--++(3*\dx,0.0);
     \draw[thick,red] (-4*\dx,5*\dy)--++(2*\dx,0.0);
     \draw[thick,red] (-2*\dx,4*\dy)--++(3*\dx,0.0);
     \draw[thick,red] (-4*\dx,6*\dy)--++(5*\dx,0.0);
     \draw[thick,red] (-2*\dx,8*\dy)--++(3*\dx,0.0);
     \draw[thick,red] (-5*\dx,7*\dy)--++(3*\dx,0.0);
     \draw[thick,red] (-6*\dx,9*\dy)--++(4*\dx,0.0);
     \draw[thick,red] (0.0,9*\dy)--++(2*\dx,0.0);
     \draw[thick,red] (\dx,7*\dy)--++(\dx,0.0);
     
     \draw[fusion] (\dx,\dy)--++(0.0,3*\dy);
     \draw[fusion] (2*\dx,0.0)--++(0.0,2.5*\dy);
     \draw[fusion] (-2*\dx,3*\dy)--++(0.0,2*\dy);
     \draw[fusion] (-4*\dx,5*\dy)--++(0.0,\dy);
     \draw[fusion] (-5*\dx,3*\dy)--++(0.0,4*\dy);
     \draw[fusion] (-6*\dx,\dy)--++(0.0,8*\dy);
     \draw[fusion] (-2*\dx,7*\dy)--++(0.0,2*\dy);
     \draw[fusion] (\dx,6*\dy)--++(0.0,2*\dy);
     \draw[fusion] (2*\dx,7*\dy)--++(0.0,2*\dy);

     \node[tensor, fill=white] at (0.0,0.0) {};
     \node[mpo] at (0.0,\dy) {};
     \node[mpo] at (0.0,4*\dy) {};
     \node[mpo] at (0.0,6*\dy) {};
     \node[mpo] at (0.0,8*\dy) {};
     \node[tensor] at (0.0,9*\dy) {};

     \node[mpo] at (-\dx,\dy) {};
     \node[mpo] at (-\dx,4*\dy) {};
     \node[mpo] at (-\dx,6*\dy) {};
     \node[mpo] at (-\dx,8*\dy) {};

     \node[mpo] at (-3*\dx,\dy) {};
     \node[mpo] at (-3*\dx, 3*\dy) {};
     \node[mpo] at (-3*\dx, 5*\dy) {};
     \node[mpo] at (-3*\dx, 6*\dy) {};
     \node[mpo] at (-3*\dx, 7*\dy) {};
     \node[mpo] at (-3*\dx, 9*\dy) {};

     \node[] at (-5.5*\dx, 1.4*\dy) {$\inv{g}$};
     \node[] at (0.5*\dx, 1.4*\dy) {$\inv{g}$};
     \node[] at (0.65*\dx,-0.35*\dy) {$X_1^{ghk}$};
     \node[] at (\dx, 9.4*\dy) {$Y_2^{\inv{(ghk)}}$};
     \node[] at (0.45*\dx, 6.3*\dy) {$k$};
     \node[] at (-2.4*\dx, 5.3*\dy) {$\inv{k}$};
     \node[] at (-2.4*\dx, 3.3*\dy) {$\inv{k}$};
     \node[] at (0.7*\dx, 4.4*\dy) {$\inv{(hk)}$};
     \node[] at (-1.5*\dx, 8.35*\dy) {$gh$};

     \node[] at (-7.5*\dx,5.5*\dy) {$\mathcal{F}^4_{g,h,k}$};
     
    \end{tikzpicture}
\end{equation}
where
\begin{equation}
    \mathcal{F}^4_{g,h,k}=\frac{\sigma_g \sigma_h \sigma_k\sigma_{ghk}}{\zeta_{k^{-1},h^{-1}}\zeta_{(hk)^{-1},g^{-1}}}.
\end{equation}
It can be equivalently written as
\begin{equation}
    \begin{tikzpicture}
     \tikzset{decoration={snake,amplitude=.4mm,segment length=2mm, post length=0mm,pre length=0mm}}
    
     \def\sx{0.75};
     \def\dx{0.6};
     \def\dy{0.6};

     \draw[thick, decorate] (0.0,-\dy)--++(0.0,\dy);
     \draw[thick] (0.0,0.0)--++(0.0,10*\dy);
     \draw[thick,decorate] (0.0,10*\dy)--++(0.0,\dy);
     \draw[thick] (-\dx,-\dy)--++(0.0,12*\dy);
     \draw[thick] (-2*\dx,-\dy)--++(0.0,12*\dy);
     \draw[thick,red] (0.0,0.0)--++(3*\dx,0.0);
     \draw[thick,red] (2*\dx,2.5*\dy)--++(\dx,0.0);
     \draw[thick,red] (-5*\dx,\dy)--++(7*\dx,0.0);
     \draw[thick,red] (-4*\dx,3*\dy)--++(5*\dx,0.0);
     \draw[thick,red] (\dx,4*\dy)--++(\dx,0.0);
     \draw[thick, red] (-3*\dx,5*\dy)--++(4*\dx,0.0);
     \draw[thick, red] (-3*\dx,6*\dy)--++(5*\dx,0.0);
     \draw[thick, red] (\dx,8*\dy)--++(\dx,0.0);
     \draw[thick,red] (2*\dx,7*\dy)--++(\dx,0.0);
     \draw[thick,red] (0.0,10*\dy)--++(3*\dx,0.0);
     \draw[thick,red] (-5*\dx,9*\dy)--++(6*\dx,0.0);
     \draw[thick,red] (-4*\dx,7*\dy)--++(5*\dx,0.0);

     \node[tensor, fill=white] at (0.0,0.0) {};  
     \node[tensor] at (0.0,10*\dy) {};

     \node[mpo] at (0.0,\dy) {};
     \node[mpo] at (0.0,3*\dy) {};
     \node[mpo] at (0.0,5*\dy) {};
     \node[mpo] at (0.0,6*\dy) {};
     \node[mpo] at (0.0,7*\dy) {};
     \node[mpo] at (0.0,9*\dy) {};

     \node[mpo] at (-\dx,\dy) {};
     \node[mpo] at (-\dx, 3*\dy) {};
     \node[mpo] at (-\dx,5*\dy) {};
     \node[mpo] at (-\dx,6*\dy) {};
     \node[mpo] at (-\dx,7*\dy) {};
     \node[mpo] at (-\dx,9*\dy) {};

     \node[mpo] at (-2*\dx,\dy) {};
     \node[mpo] at (-2*\dx, 3*\dy) {};
     \node[mpo] at (-2*\dx,5*\dy) {};
     \node[mpo] at (-2*\dx,6*\dy) {};
     \node[mpo] at (-2*\dx,7*\dy) {};
     \node[mpo] at (-2*\dx,9*\dy) {}; 
     
      \draw[fusion] (3*\dx,0.0)--++(0.0,2.5*\dy);
      \draw[fusion] (2*\dx,\dy)--++(0.0,3*\dy);
      \draw[fusion] (\dx,3*\dy)--++(0.0,2*\dy);

      \draw[fusion] (\dx,7*\dy)--++(0.0,2*\dy);
      \draw[fusion] (2*\dx,6*\dy)--++(0.0,2*\dy);
      \draw[fusion] (3*\dx,7*\dy)--++(0.0,3*\dy);

      \draw[fusion] (-3*\dx,5*\dy)--++(0.0,\dy);
      \draw[fusion] (-4*\dx,3*\dy)--++(0.0,4*\dy);
      \draw[fusion] (-5*\dx,\dy)--++(0.0,8*\dy);

      \node[] at (4*\dx, 5.5*\dy) {$.$};
      \node[] at (-6.5*\dx,5.5*\dy) {$\mathcal{F}^4_{g,h,k}$};
      \node[] at (0.65*\dx,-0.4*\dy) {$X_1^{ghk}$};
      \node[] at (\dx,10.4*\dy) {$Y_2^{\inv{(ghk)}}$};
      \node[] at (-1.5*\dx,9.3*\dy) {$g$};
      \node[] at (-1.5*\dx,7.3*\dy) {$h$};
      \node[] at (-1.5*\dx,6.3*\dy) {$k$};
      \node[] at (-1.4*\dx,5.3*\dy) {$\inv{k}$};
      \node[] at (-1.4*\dx,3.3*\dy) {$\inv{h}$};
      \node[] at (-1.4*\dx,1.3*\dy) {$\inv{g}$};
      \node[] at (1.5*\dx,8.3*\dy) {$gh$};
      \node[] at (1.75*\dx,4.4*\dy) {$\inv{(hk)}$};
    \end{tikzpicture}
\end{equation}
From the definition of the cocycle $\omega$, we then get that the above expression takes the following form
\begin{equation}
    \begin{tikzpicture}
        \tikzset{decoration={snake,amplitude=.4mm,segment length=2mm, post length=0mm,pre length=0mm}}
    
     \def\sx{0.75};
     \def\dx{0.6};
     \def\dy{0.6};

     \draw[thick,decorate] (0.0,0.0)--++(0.0,-\dy);
     \draw[thick] (0.0,0.0)--++(0.0,3*\dy);
     \draw[thick,decorate] (0.0,3*\dy)--++(0.0,\dy);
     \draw[thick] (-2*\dx,-\dy)--++(0.0,5*\dy);
     \draw[thick] (-3*\dx,-\dy)--++(0.0,5*\dy);
     \draw[thick,red] (0.0,0.0)--++(\dx,0.0);
     \draw[thick,red] (-\dx,\dy)--++(2*\dx,0.0);
     \draw[thick,red] (-\dx,2*\dy)--++(2*\dx,0.0);
     \draw[thick,red] (0.0,3*\dy)--++(\dx,0.0);

     \draw[fusion] (\dx,2*\dy)--++(0.0,\dy);
     \draw[fusion] (\dx,0.0)--++(0.0,\dy);
     \draw[fusion] (-\dx,\dy)--++(0.0,\dy);

     \node[tensor, fill=white] at (0.0,0.0) {};
     \node[mpo] at (0.0,\dy) {};
     \node[mpo] at (0.0,2*\dy) {};
     \node[tensor] at (0.0,3*\dy) {};
     
     \node[] at (-0.65*\dx,-0.5*\dy) {$X_1^{ghk}$};

     \node[] at (-\dx,3.3*\dy) {$Y_2^{\inv{(ghk)}}$};

     \node[] at (-4.5*\dx,1.5*\dy) {$\mathcal{F}^5_{g,h,k}$};

     \node[] at (1.5*\dx,1.5*\dy) {$,$};
    \end{tikzpicture}
    \label{eq:last_diag}
\end{equation}
with
\begin{align}
    \mathcal{F}^5_{g,h,k}&=\frac{\mathcal{F}^4_{g,h,k}}{\omega_{g,h,k}}\frac{\omega_{k,\inv{k},\inv{h}}\;\omega_{ghk,\inv{(hk)},\inv{g}}}{\omega_{h,k,\inv{(hk)}}\;\omega_{g,hk,\inv{(hk)}}}\nonumber\\
    &=\frac{\mathcal{F}^4_{g,h,k}}{\omega_{g,h,k}}\dfrac{1}{\hat\Xi_{h,k}\;\hat\Xi_{g, hk}}\nonumber\\
    &=\frac{1}{\omega_{g,h,k}}\dfrac{\sigma_h\sigma_k\sigma_{hk}}{\hat\zeta_{h,k}\,\hat\Xi_{h,k}\;}\dfrac{\sigma_g\sigma_{hk}\sigma_{ghk}}{\hat\zeta_{g,hk}\,\hat\Xi_{g, hk}}\nonumber\\
    &=\frac{1}{\omega_{g,h,k}}
\end{align}
where $\Xi$ is defined as in \eqref{eq:introXi}, and in the last step we have used that in our gauge the $\lambda$ operators are unitary, so that, cf.~Eq.~\eqref{eq:trlambda},
\begin{equation}
    \dfrac{d\sigma}{\hat\zeta\,\hat\Xi}=1.
\end{equation}
Finally, using successively \eqref{eq:XYvsMPO} and \eqref{eq:red1}, we find that the last tensor leg in \eqref{eq:last_diag} is just the identity operator. Hence the associator of the two fusion trees is indeed given by a function $F$, whose value is given by the scalar factor we computed,
\begin{equation}
F(g,h,k) = \dfrac{1}{\mathcal{F}^5_{g,h,k}}=\omega_{g,h,k}.
\end{equation}

\section{Characterizations of block independence}
\label{app:block_indep}
In this Appendix, we prove two characterizations of the block independence (BI) condition, defined in Section~\ref{subsec:true_gauging_noninj} as the existence of a gauge for which
\begin{equation}
    \ell_{a,b;g}^x = \ell_{a,b;g}^y\qquad \forall g, a, b\in G, \forall x,y\in \mathsf{X},
\end{equation}
where 
\begin{equation}
    \ell_{a,b;g}^x \equiv\dfrac{L^x_{ag,\inv{g}b}}{L^x_{a,b}}.
\end{equation}
The next proposition shows that this is equivalent to a much stronger condition:
\begin{proposition}
\label{prop:technical01}
    The following conditions are equivalent: \begin{enumerate}
        \item There exists a choice of gauge such that $L^x_{g,h}=1$, for all $g,h\in G$ and $x\in\mathsf{X}$.

        \item There exists a choice of gauge such that $L_{a,b}^x = L_{a,b}^y$, for all $a,b\in G$ and $x,y\in \mathsf{X}$. 
          
        \item There exists a choice of gauge such that $\ell_{a,b;g}^x=1$, for all $a,b,g\in G$ and $x\in \mathsf{X}$.
        
        \item The pair (MPS, MPU) satisfies the BI condition.
    \end{enumerate}
\end{proposition}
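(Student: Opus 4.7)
The plan is to close the loop of implications via $(1)\!\Rightarrow\!(2)$, $(1)\!\Rightarrow\!(3)\!\Rightarrow\!(4)$, and $(4)\!\Rightarrow\!(2)\!\Rightarrow\!(1)$, where the first three are immediate from the definitions: if $L^x_{g,h}=1$ then the symbols are trivially $x$-independent and $\ell^x_{a,b;g}=1/1=1$; and $\ell^x_{a,b;g}=1$ is in particular $x$-independent. So the substance of the proposition reduces to proving $(4)\!\Rightarrow\!(2)$ and $(2)\!\Rightarrow\!(1)$.

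For $(4)\!\Rightarrow\!(2)$ I would work in a gauge realizing the BI condition, i.e.\ one in which $\ell^x_{a,b;g}=\Theta_{a,b;g}$ does not depend on $x$. Specializing to $a=e$ and using the standing normalization $L^x_{e,b}=1$ recorded in \eqref{eq:triv_Ls} gives
\begin{equation}
L^x_{g,\inv{g}b}=\ell^x_{e,b;g}=\Theta_{e,b;g},
\end{equation}
whose right-hand side is independent of $x$. Setting $h=\inv{g}b$ shows that $L^x_{g,h}$ is $x$-independent for every $g,h\in G$, which is exactly condition (2).

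For $(2)\!\Rightarrow\!(1)$, denote the common $x$-independent value by $\Lambda_{g,h}$. Then the compatibility relation \eqref{eq:omega_and_Ls} collapses to $\Lambda_{g,hk}\Lambda_{h,k}=\omega(g,h,k)\Lambda_{g,h}\Lambda_{gh,k}$, i.e.\ $\omega=d\Lambda$, so the anomaly 3-cocycle is a coboundary. Applying the gauge transformation $\beta_{g,h}=\Lambda_{g,h}$ (with $\gamma^x_g=1$) sends $L^x_{g,h}\mapsto\Lambda_{g,h}/\Lambda_{g,h}=1$, giving (1). The normalizations $\beta_{g,e}=\beta_{e,g}=1$ come for free from $\Lambda_{g,e}=\Lambda_{e,g}=1$, and the only mild nuisance is that $\beta_{g,\inv g}=\Lambda_{g,\inv g}$ may fail to equal $1$; this can be repaired by a further cohomologically trivial adjustment of the kind carried out in the BI case of the proof of Proposition~\ref{prop:unitarity_and_gauges}, which preserves $L=1$. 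This bookkeeping of how the gauge transformation interacts with the ancillary normalizations from Section~\ref{sec:assumptions} is the only delicate point; the core of the argument is a direct manipulation of \eqref{eq:omega_and_Ls} and of the definition of $\ell$, with no conceptual obstacle.
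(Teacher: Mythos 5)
Your proof is correct in substance but follows a different route from the paper's. The paper proves the equivalence by showing $(4)\Rightarrow(1)$ directly: it factorizes $L^x_{g,h}=\tfrac{\ell_{g,h;h}}{\omega(gh,e,e)}L^x_{e,e}$ using \eqref{eq:omega_and_Ls} with $h=k=e$, shows that $x\mapsto L^x_{e,e}$ is a $G$-homomorphism $\rho$, and then exhibits an explicit gauge transformation $\bigl(\beta_{g,h},\gamma^x_g\bigr)$ built from $\rho$ and $L^{gx}_{e,e}$ that trivializes all $L$-symbols. Crucially, that argument never invokes the normalization $L^x_{e,g}=1$, which is why the homomorphism step is needed at all. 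You instead route through condition (2), and your step $(4)\Rightarrow(2)$ collapses to one line precisely because you use the standing normalization $L^x_{e,b}=1$ from \eqref{eq:triv_Ls}; this is legitimate given that Section~\ref{sec:assumptions} restricts all gauge transformations to preserve that normalization, but it makes your argument convention-dependent where the paper's is not (if the gauge realizing BI were allowed to violate the normalization, your identification $\ell^x_{e,b;g}=L^x_{g,\inv{g}b}$ would fail and you would need the paper's $L^x_{e,e}$-homomorphism argument to repair it). Your $(2)\Rightarrow(1)$ step is sound and is essentially the observation $\omega=d\Lambda$ plus a pure $\beta$-regauging, with the compatibility bookkeeping correctly deferred to Appendix~\ref{app:proofprops} just as the paper does; note only that with the convention of \eqref{eq:Lsymbgauge}, which \emph{multiplies} $L^x_{g,h}$ by $\beta_{g,h}$, you must take $\beta_{g,h}=\Lambda_{g,h}^{-1}$ rather than $\Lambda_{g,h}$ --- a trivial inversion slip. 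On balance your route is shorter; the paper's buys independence from the normalization conventions and produces the trivializing gauge transformation in closed form.
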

\begin{proof}
    It is enough to demonstrate that the last condition implies the first one. The existence of a gauge transformation of the form \eqref{eq:Lsymbgauge} that maps every $L$-symbol to 1 requires them to be of the form
    \begin{equation}
        L_{g,h}^x = \beta_{g,h}\dfrac{\gamma^x_{gh}}{\gamma^{hx}_{g}\gamma^x_g},
        \label{eq:Lgauge}
    \end{equation}
    for some $\beta_{g,h}, \gamma^{x}_g$. In the following, we prove that this is the case.
    Assuming $\ell$ is independent of $x$, we can factorize the $(g,h)$ and $x$ dependences 
    \begin{equation}
        L_{g,h}^x = \ell_{g,h;h}L^x_{gh,e}= \dfrac{\ell_{g,h;h}}{\omega(gh,e,e)}L^x_{e,e}
        \label{eq:Lfact}
    \end{equation}
    where we have used \eqref{eq:omega_and_Ls} in the particular case $h=k=e$,
    \begin{equation}
        L^x_{g,e} = \dfrac{L^x_{e,e}}{\omega(g,e,e)}.
        \label{eq:aux1}
    \end{equation}
    Next we will show that $x\mapsto L^x_{e,e}$ is a $G$-homomorphism, that is, there exists a one-dimensional representation $\rho\in\hat G$ such that 
    \begin{equation}
        L^{gx}_{e,e}=\rho(g)L^x_{e,e}.
        \label{eq:defrho}
    \end{equation}
    For that we need another particular case of \eqref{eq:omega_and_Ls}, namely
    \begin{equation}
        L^x_{e, g} = \omega(e, e, g)L^{gx}_{e,e}.
        \label{eq:aux2}
    \end{equation}
    Indeed, combining \eqref{eq:aux1} and \eqref{eq:aux2} yields \eqref{eq:defrho} with
    \begin{equation}
        \rho(g)=\dfrac{\ell_{e,g;g}}{\omega(g, e, e)\omega(e,e,g)}.
    \end{equation}
    Such a $G$-homomorphism can always be written as a gauge transformation of the form \eqref{eq:Lgauge}, we pick, for instance,
    \begin{equation}
        \beta_{g,h}\equiv\dfrac{1}{\rho(h)},\qquad \gamma^x_g\equiv \dfrac{1}{L^{gx}_{e,e}}, 
    \end{equation}
    which yields
    \begin{equation}
        L^x_{e,e} = \beta_{g,h}\dfrac{\gamma^x_{gh}}{\gamma^{hx}_{g}\gamma^x_g}
    \end{equation}
    for any $g,h\in G$. By redefining $\beta_{g,h}$ to include the $(g,h)$-dependent prefactor in \eqref{eq:Lfact}, we have found a gauge transformation that trivializes the $L$-symbols, and the result follows.
\end{proof}
Next, we will use the classification of SPT phases under MPO group symmetries \cite[Section~4.1]{GarreLootensMolnar} to give a group cohomological characterization of the BI condition. Consider a block-injective MPS invariant under an MPU representation of a finite group $G$ and, as above, denote $\mathsf{X}$ the set of injective blocks, which are permuted by the transitive action of $G$. Now, choose $x_0\in\mathsf{X}$ and define $H\leq G$ by $H\equiv\text{Stab}(x_0)=\{g\in G|gx_0=x_0\}$. Due to transitivity, we have then an identification $\mathsf{k}:\mathsf{X}\leftrightarrow\frac{G}{H}$ as sets, and we can choose a set of representatives, $\{k_x\}_{x\in \mathsf{X}}\subseteq G$, such that $\mathsf{k}(x)=k_xH$ and $k_xx_0=x$. For convenience, we choose $k_{x_0}=e$. We will need the following
\begin{lemma}
\label{lemma:h1h2}
    In the above setting, if the MPU representation is nonanomalous, there exists a gauge of the fusion and action tensors such that $\omega=1$ and 
    \begin{equation}
        L^x_{g_1,g_2}=\psi(h_1,h_2),
        \label{eq:Lcocycle}
    \end{equation}
    where $\psi\in H^2(H, U(1))$ and $h_1, h_2\in H$ are given by
    \begin{equation}
        h_1=k^{-1}_{g_1g_2x}\,g_1\,k_{g_1x}, \quad h_2=k^{-1}_{g_2x}\,g_2\,k_x.
        \label{eq:h1h2}
    \end{equation}
\end{lemma}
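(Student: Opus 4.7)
My plan is to first exploit the absence of anomaly to pick a gauge in which $\omega=1$, turning the $L$-symbol relation \eqref{eq:omega_and_Ls} into the ``twisted cocycle'' condition
\begin{equation}
    L^x_{g,hk}\,L^x_{h,k}=L^{kx}_{g,h}\,L^x_{gh,k}.
    \label{eq:Ltwistcoc}
\end{equation}
Within this restricted class of gauges the residual gauge freedom is given by a pair $(\beta,\gamma)$ with $\beta$ a $2$-cocycle on $G$ (to preserve $\omega=1$), acting on the $L$-symbols as in \eqref{eq:Lsymbgauge}. I will use only the $\gamma$ part and keep $\beta=1$, which is enough for the reduction I need.

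Next, restricting \eqref{eq:Ltwistcoc} to $x=x_0$ and $g,h,k\in H$ (so that $kx_0=x_0$) immediately gives the ordinary $2$-cocycle identity
\begin{equation}
    \psi(h_1,h_2h_3)\,\psi(h_2,h_3)=\psi(h_1,h_2)\,\psi(h_1h_2,h_3)
\end{equation}
for the function $\psi(h_1,h_2):=L^{x_0}_{h_1,h_2}$. Thus $\psi\in Z^2(H,U(1))$, and I will check that its class in $H^2(H,U(1))$ does not depend on the chosen transversal $\{k_x\}$ and on the remaining gauge freedom (this is the usual Schreier ambiguity, which changes $\psi$ only by a $2$-coboundary).

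The heart of the argument is the construction of a $\gamma$ that reduces every $L^x_{g_1,g_2}$ to $\psi(h_1,h_2)$ with $h_1,h_2$ defined by \eqref{eq:h1h2}. The natural prescription is to set
\begin{equation}
    \gamma^x_g := L^{x_0}_{k_{gx}^{-1}g,\,k_x}^{\phantom{-1}}\cdot \text{(normalization)},
\end{equation}
built so that applying the gauge transformation \eqref{eq:Lsymbgauge} ``peels off'' the representatives $k_x,k_{g_2x},k_{g_1g_2x}$ on the two outer legs. More concretely, I will use \eqref{eq:Ltwistcoc} repeatedly to expand $L^x_{g_1,g_2}$ as a product of $L$-symbols of the form $L^{x_0}_{a,b}$ with $a,b\in H$ together with ``triangle'' factors of the form $L^x_{k_y^{-1}g,k_x}$, and then absorb the latter into the $\gamma^x_g$ just defined. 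Combined with \eqref{eq:Ltwistcoc} for the chain $g_1\cdot g_2\cdot k_x=k_{g_1g_2x}\cdot h_1h_2$, this telescopes to the factorization $L^x_{g_1,g_2}=\psi(h_1,h_2)$ after the gauge change; normalization of $\gamma^x_e=1$ and $\gamma^{x_0}_h=1$ for $h\in H$ is imposed to make $\psi$ itself normalized and unchanged.

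The main obstacle will be step three: verifying that the $\gamma$ I write down is consistent — i.e.\ that the many relations between $L$-symbols implied by \eqref{eq:Ltwistcoc} all cooperate so that no contradiction arises and the final result is exactly $\psi(h_1,h_2)$ with $h_1,h_2$ given by \eqref{eq:h1h2}. Concretely, this amounts to checking compatibility of the definition of $\gamma^x_g$ on overlaps, which in turn is equivalent to the statement that the $G$-extension of $\psi$ to a ``groupoid cocycle'' over the action groupoid $G\ltimes\mathsf{X}$ is well-defined up to coboundary. This is precisely the group-cohomological content of the classification in \cite{GarreLootensMolnar}, so the technical work reduces to translating their abstract classification of solutions of \eqref{eq:Ltwistcoc} modulo gauge into the explicit normal form \eqref{eq:Lcocycle}.
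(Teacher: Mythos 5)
The paper does not actually prove this lemma: it is stated and immediately attributed to the classification in \cite[Section~4.1]{GarreLootensMolnar}, so your proposal supplies an argument that the paper only cites. Your route is the right one, and it is essentially the standard ``Shapiro's lemma'' equivalence between the action groupoid $G\ltimes\mathsf{X}$ and the stabilizer $H$: fix a gauge with $\omega=1$ (note that Case~2 of the gauge-fixing in Appendix~\ref{app:proofprops} simultaneously gives $|L|=1$, which you need but do not mention for $\psi$ to be $U(1)$-valued rather than merely $\C^\times$-valued), observe that $L^{x_0}$ restricted to $H\times H$ is then an honest normalized $2$-cocycle $\psi$, and strip off the transversal by a $\gamma$-gauge transformation with $\beta=1$. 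Two concrete corrections. First, the explicit $\gamma$ you write down is not the one that works, but the telescoping you describe does close, with $\gamma^x_g=L^{x_0}_{g,k_x}\big/L^{x_0}_{k_{gx},\,k^{-1}_{gx}g k_x}$: applying the relation $L^x_{a,bc}L^x_{b,c}=L^{cx}_{a,b}L^x_{ab,c}$ three times, to the triples $(g_1,g_2,k_x)$, $(g_1,k_{g_2x},h_2)$ and $(k_{g_1g_2x},h_1,h_2)$ all at base point $x_0$, one finds $\frac{\gamma^x_{g_1g_2}}{\gamma^{g_2x}_{g_1}\gamma^x_{g_2}}L^x_{g_1,g_2}=L^{x_0}_{h_1,h_2}$ with all ``triangle'' factors cancelling exactly, and this $\gamma$ satisfies $\gamma^x_e=1$ and $\gamma^{x_0}_h=1$ for $h\in H$, so no separate well-definedness or coboundary argument is needed and the appeal to the abstract classification in your last step can be dropped entirely. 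Second, for $h_1$ to lie in $H$ one must take $h_1=k^{-1}_{g_1g_2x}\,g_1\,k_{g_2x}$ (so that $g_1$ carries the block $g_2x$ to $g_1g_2x$), not $k_{g_1x}$ as written in the statement; your chain identity $g_1g_2k_x=k_{g_1g_2x}h_1h_2$ already presupposes this corrected form, so your argument proves the evidently intended statement, but you should flag the discrepancy rather than silently absorb it.
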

The pair $(H, \psi)$ then classifies the SPT phase to which the MPS belongs \cite{GarreLootensMolnar}. We now prove the following result that characterizes block independence in terms of $\psi$:
\begin{proposition}
\label{prop:BI_psi}
    The set of $L$-symbols arising from an MPS invariant under a nonanomalous MPU group representation is BI if and only if the cocycle $\psi$ extends to $G$, that is, if there exists $\Psi\in H^2(G, U(1))$ such that its restriction to the subgroup $H$ is $\psi$.
\end{proposition}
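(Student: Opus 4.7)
My approach is to use Proposition~\ref{prop:technical01}, which equates BI with the existence of a gauge in which $L^x_{g,h}=1$ for all $g,h,x$. Starting from a gauge of Lemma~\ref{lemma:h1h2} (where $L^x_{g,h}=\psi(h_1,h_2)$ and $\omega=1$), I would translate BI into the existence of a gauge transformation $(\beta,\gamma)$ satisfying Eq.~\eqref{eq:Lsymbgauge} with target $L\equiv 1$. Since $L\equiv 1$ forces $\omega=1$ via Eq.~\eqref{eq:omega_and_Ls}, and $\omega$ transforms by $\omega\mapsto\omega\cdot d\beta$ through Eq.~\eqref{eq:omegagauge}, the required $\beta$ must lie in $Z^2(G,U(1))$. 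This cocycle constraint is the key observation driving the argument.

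For $(\Rightarrow)$: solving the $L\equiv1$ relation for $\beta$ gives
\begin{equation*}
\beta_{g,h}^{-1}=\frac{\gamma_{gh,x}\,\psi(h_1,h_2)}{\gamma_{g,hx}\,\gamma_{h,x}}.
\end{equation*}
Restricting to $g,h\in H$ and $x=x_0$, the conventions $k_{x_0}=e$ and $hx_0=x_0$ reduce $h_1,h_2$ to $g,h$, so the right-hand side becomes $\psi(g,h)/d\tilde\gamma|_H(g,h)$ with $\tilde\gamma_h:=\gamma_{h,x_0}$. Passing to cohomology in $H^2(H,U(1))$, the coboundary $d\tilde\gamma|_H$ is killed and I obtain $[\beta^{-1}|_H]=[\psi]$; thus $\Psi:=\beta^{-1}\in Z^2(G,U(1))$ is an extension of $\psi$.

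For $(\Leftarrow)$: assume $\Psi\in Z^2(G,U(1))$ satisfies $[\Psi|_H]=[\psi]$. After replacing $\Psi$ by a cohomologous cocycle, one may assume $\Psi|_H=\psi$ as functions. I would then set $\beta:=\Psi^{-1}\in Z^2(G,U(1))$ and construct
\begin{equation*}
\gamma_{g,x}:=\frac{\Psi\bigl(k_{gx},\,k_{gx}^{-1}\,g\,k_x\bigr)}{\Psi(g,k_x)},
\end{equation*}
noting that $k_{gx}^{-1}gk_x\in H$ since it stabilizes $x_0$. A direct computation applying the $\Psi$ cocycle condition in succession to the triples $(g_1,g_2,k_x)$, $(g_1,k_{g_2x},h_2)$, and $(k_{g_1g_2x},h_1,h_2)$, together with $\Psi|_H=\psi$ on the last, would yield $[d\gamma]^x_{g,h}=\psi(h_1,h_2)/\Psi(g,h)$---exactly the factor needed for $(\beta,\gamma)$ to carry the Lemma form to $L\equiv 1$, establishing BI via Proposition~\ref{prop:technical01}.

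The main obstacle is the bookkeeping of this backward computation: the three cocycle manipulations must be chained in the correct order while tracking the decomposition $g_ik_x=k_{g_ix}h_i$ at each step, and the simplifications only collapse after all three are applied. A cleaner but non-constructive alternative is abstract: $f(g,h;x):=\psi(h_1,h_2)/\Psi(g,h)$ is a $2$-cocycle in the complex defined by the coboundary operator $d$ (a routine verification from the $\psi$ and $\Psi$ cocycle conditions), and $f(h_1,h_2;x_0)|_{h_i\in H}=1$; a Shapiro-type isomorphism $H^2(G,U(1)^{\mathsf{X}})\cong H^2(H,U(1))$ realized by this restriction then forces $f$ to be a coboundary $d\gamma$, giving the existence of $\gamma$ without an explicit formula.
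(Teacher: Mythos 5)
Your proposal is correct and follows essentially the same route as the paper's proof: both directions reduce BI to gauge-trivializability of the $L$-symbols via Proposition~\ref{prop:technical01}, the forward direction identifies the extending cocycle with the $\beta$ part of the trivializing gauge transformation (forced into $Z^2(G,U(1))$ by preservation of $\omega=1$) and restricts to $H$ at $x=x_0$, and the backward direction builds an explicit $\gamma^x_g$ from $\Psi$ and chains the cocycle condition over the triples $(g_1,g_2,k_x)$, $(g_1,k_{g_2x},h_2)$, $(k_{g_1g_2x},h_1,h_2)$. Your explicit formula for $\gamma_{g,x}$ differs from the paper's but the verification goes through, and working with cohomology classes rather than function-level equality in the forward direction is exactly what the statement requires.
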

\begin{proof}
Suppose that there exists $\Psi\in H^2(G,U(1))$ such that $\Psi\left.\right|_{H\times H}=\psi$. For $g_1,g_2\in G$ and $x\in X$, take $h_1,h_2$ as in \eqref{eq:h1h2}. Then, applying the cocycle condition for $\Psi$ a few times, we can prove that
\begin{equation}
    \psi(h_1, h_2)=\Psi(h_1, h_2)=\Psi(g_1, g_2)\dfrac{\gamma^x_{g_1g_2}}{\gamma^{g_2x}_{g_1}\gamma^x_{g_2}},
\end{equation}
where 
\begin{equation}
    \gamma^x_{g}\equiv\dfrac{\Psi(k^{-1}_{gx}, gk_x)\Psi(g, k_x)}{\Psi(e,e)\Psi(k_x,k_x^{-1})}.
\end{equation} Thus, the $L$-symbols \eqref{eq:Lcocycle} are of the form \eqref{eq:Lgauge} and can be made trivial by regauging. Conversely, if BI holds, there exist $\beta_{g_1, g_2}$ and $\gamma_g^x$ such that 
\begin{equation}
    \psi(h_1, h_2) = \beta_{g_1,g_2}\dfrac{\gamma^x_{g_1g_2}}{\gamma^{g_2x}_{g_1}\gamma^x_{g_2}}.
\end{equation}
$\beta_{g_1,g_2}$ has to be a 2-cocycle, since the gauge transformation leaves $\omega=1$ invariant. Then, defining
\begin{equation}
    \Psi(g_1,g_2) \equiv \beta_{g_1,g_2}\dfrac{\gamma^{x_0}_{g_1g_2}}{\gamma^{x_0}_{g_1}\gamma^{x_0}_{g_2}},
\end{equation}
we find that $\Psi$ is a 2-cocycle that restricts to $\psi$ on $H$ (if we set $x=x_0$, $g_1,g_2\in H\implies h_1=g_1, h_2=g_2$ since we picked $k_{x_0}=e$) and thus $\psi$ extends to $G$.
\end{proof}

\bibliography{refs.bib}

\end{document}